\definecolor{Red}{rgb}{1,0,0}
\definecolor{Blue}{rgb}{0,0,1}
\definecolor{Olive}{rgb}{0.41,0.55,0.13}
\definecolor{Yarok}{rgb}{0,0.5,0}
\definecolor{Green}{rgb}{0,1,0}
\definecolor{MGreen}{rgb}{0,0.8,0}
\definecolor{DGreen}{rgb}{0,0.55,0}
\definecolor{Yellow}{rgb}{1,1,0}
\definecolor{Cyan}{rgb}{0,1,1}
\definecolor{Magenta}{rgb}{1,0,1}
\definecolor{Orange}{rgb}{1,.5,0}
\definecolor{Violet}{rgb}{.5,0,.5}
\definecolor{Purple}{rgb}{.75,0,.25}
\definecolor{Brown}{rgb}{.75,.5,.25}
\definecolor{Grey}{rgb}{.5,.5,.5}
\newcommand{\ind}{\mathbbm{1}}
\newcommand{\R}{\mathbb{R}}
\newcommand{\N}{\mathbb{N}}
\newcommand{\SBP}{\texttt{SBP }}
\newcommand{\ip}[2]{\langle{#1},{#2}\rangle} 
\newcommand{\KRA}{\mathcal{A}_{\rm KR}}
\newcommand{\ones}{\mathbf{e}}
\renewcommand{\R}{\mathbb{R}}
\newcommand{\distr}{\stackrel{d}{=}}
\newcommand{\A}{\mathcal{A}}
\newcommand{\bincube}{\mathcal{B}_n}
\newcommand{\M}{\mathcal{M}}
\newcommand{\cN}{{\bf \mathcal{N}}}
\newcommand{\Overlap}[2]{\mathcal{O}\left(#1,#2\right)}
\newcommand{\ignore}[1]{\relax}
\newlength\myindent
\newtheorem{theorem}{Theorem}[section]
\newtheorem{lemma}[theorem]{Lemma}
\newtheorem{conjecture}[theorem]{Conjecture}
\newtheorem{proposition}[theorem]{Proposition}
\newtheorem{coro}[theorem]{Corollary}
\newtheorem{definition}[theorem]{Definition}
\renewcommand{\ip}[2]{\left\langle#1,#2\right\rangle}
\newcounter{parentnumber}
\def\BState{\State\hskip-\ALG@thistlm}
\definecolor{Red}{rgb}{1,0,0}
\definecolor{Blue}{rgb}{0,0,1}
\definecolor{Olive}{rgb}{0.41,0.55,0.13}
\definecolor{Green}{rgb}{0,1,0}
\definecolor{MGreen}{rgb}{0,0.8,0}
\definecolor{DGreen}{rgb}{0,0.55,0}
\definecolor{Yellow}{rgb}{1,1,0}
\definecolor{Cyan}{rgb}{0,1,1}
\definecolor{Magenta}{rgb}{1,0,1}
\definecolor{Orange}{rgb}{1,.5,0}
\definecolor{Violet}{rgb}{.5,0,.5}
\definecolor{Purple}{rgb}{.75,0,.25}
\definecolor{Brown}{rgb}{.75,.5,.25}
\definecolor{Grey}{rgb}{.5,.5,.5}
\definecolor{Pink}{rgb}{1,0,1}
\definecolor{DBrown}{rgb}{.5,.34,.16}
\definecolor{Black}{rgb}{0,0,0}
\title{Algorithms and Barriers in the Symmetric Binary Perceptron Model}
\author{{\sf David Gamarnik}\thanks{Sloan School of Management, Massachusetts Institute of Technology; e-mail: {\tt gamarnik@mit.edu}.} 
\and
{\sf Eren C. K{\i}z{\i}lda\u{g}}\thanks{Laboratory for Information and Decision Systems (LIDS), Massachusetts Institute of Technology; e-mail: {\tt kizildag@mit.edu}.}
\and
{\sf Will Perkins}\thanks{Department of Mathematics, Statistics, and Computer Science, University of Illinois at Chicago; e-mail: {\tt math@willperkins.org}.}
\and 
{\sf Changji Xu}\thanks{Center of Mathematical Sciences and Applications, Harvard University; e-mail: {\tt cxu@cmsa.fas.harvard.edu}.}
}
\begin{document}
\maketitle
\begin{abstract}
The binary (or Ising) perceptron is a toy model of a single-layer neural network and can be viewed as a random constraint satisfaction problem with a high degree of connectivity.  The model and its symmetric variant, the symmetric binary perceptron (\texttt{SBP}), have been studied widely in statistical physics, mathematics, and machine learning. 

The \texttt{SBP} exhibits a dramatic \emph{statistical-to-computational gap}: the densities at which known efficient algorithms find solutions are far below the  threshold for the existence of solutions.
Furthermore, the \texttt{SBP} exhibits a striking structural property: at all positive constraint densities almost all of its solutions are `totally frozen' singletons separated by large Hamming distance~\cite{perkins2021frozen,abbe2021proof}. This suggests that finding a solution to the \texttt{SBP} may be computationally intractable. At the same time, however, the \texttt{SBP} does admit polynomial-time search algorithms at low enough densities. A conjectural explanation for this conundrum was put forth in~\cite{baldassi2020clustering}:  efficient algorithms  succeed in the face of freezing by finding exponentially rare clusters of large size. However, it was discovered recently that such rare large clusters exist at all subcritical densities, even at those well above the limits of known efficient algorithms~\cite{abbe2021binary}. Thus the driver of the statistical-to-computational gap exhibited by this model remains a mystery.

In this paper, we conduct a different landscape analysis to explain the algorithmic tractability of this problem. We show that at high enough densities the \texttt{SBP} exhibits the multi Overlap Gap Property ($m-$OGP), an intricate geometrical property known to be a rigorous barrier for large classes of algorithms. Our analysis shows that the $m-$OGP threshold (a) is well below the satisfiability threshold; and (b) matches the best known algorithmic threshold up to logarithmic factors as $m\to\infty$.  We then prove that the $m-$OGP rules out the class of stable algorithms for the \texttt{SBP} above this threshold. We conjecture that the $m \to \infty$ limit of the $m$-OGP threshold marks the  algorithmic threshold for the problem. Furthermore, we investigate the stability of known efficient algorithms for perceptron models and show that the Kim-Roche algorithm~\cite{kim1998covering}, devised for the \emph{asymmetric binary perceptron}, is stable in the sense we consider.
\end{abstract}
\newpage
\tableofcontents
\newpage
\section{Introduction}
In this paper, we study the \emph{perceptron model}. Proposed initially in the 1960's~\cite{joseph1960number,winder1961single,wendel1962problem,cover1965geometrical}, this is a toy model of one-layer neural network storing random patterns as well as a very natural model in high-dimensional probability. Let $X_i\in\R^n$, $1\le i\le M$, be i.i.d.\;random patterns to be stored. \textit{Storage} of these patterns is achieved if one finds a vector of synaptic weights $\sigma \in \R^n$ consistent with all $X_i$: that is, $\ip{X_i}{\sigma}\ge 0$ for $1\le i\le M$.  There are two main variants of the perceptron: when the vector $\sigma$ lies on on sphere in $\R^n$ (the spherical perceptron) and when $\sigma\in\bincube\triangleq \{-1,1\}^n$ (the binary or Ising perceptron).  For more on the spherical perceptron see~\cite{gardner1988space,shcherbina2003rigorous,stojnic2013another,talagrand2011mean,alaoui2020algorithmic}; in this paper we will focus only on the binary perceptron. 

A key quantity associated to the perceptron is the \emph{storage capacity}: the maximum number $M^*$ of such patterns for which there exists a vector of weights $\sigma\in\bincube$ that is consistent with all $X_i$, $1\le i\le M^*$.  Investigations beginning with Gardner~\cite{gardner1987maximum,gardner1988space} and Gardner-Derrida~\cite{gardner1988optimal} in the statistical physics literature provided a detailed, yet non-rigorous, picture for the storage capacity in the case of patterns distributed as $n$-dimensional Gaussian vectors.

More general perceptron models are defined by an activation function $U:\R \to \{0,1\}$\footnote{For an even more general setting see~\cite{bolthausen2021gardner}.}. We say a pattern $X_i$ is stored by $\sigma$ with respect to $U$ if $U(\ip{X_i}{\sigma})=1$.  Much recent work on these models have focused on two classes of activity functions: $U(x) = \mathbf 1_{x \ge \kappa \sqrt{n}}$ and $U(x) = \mathbf 1_{|x| \le \kappa \sqrt{n}}$.  The first defines the \textit{asymmetric binary perceptron}, the second the \textit{symmetric binary perceptron}.  We now detail some of the previous work on these models.
\subsection{Perceptron models}

\subsubsection{Asymmetric Binary Perceptron} 

We now define the classic binary perceptron, which we call the \textit{asymmetric binary perceptron} (\texttt{ABP}) throughout. Fix $\kappa\in\mathbb{R}$, $\alpha>0$; and set $M=\lfloor n\alpha \rfloor\in\mathbb{N}$. Let $X_i\distr \mathcal{N}(0,I_n)$, $1\le i\le M$, be i.i.d.\,random vectors, where $\mathcal{N}(0,I_n)$ denotes the $n-$dimensional multivariate normal distribution with zero mean and identity covariance. Consider  the (random) set
\begin{equation}\label{eq:asym-model}
S^{{\rm A}}_\alpha(\kappa) \triangleq \bigcap_{1\le i\le M}\Bigl\{\sigma\in \bincube:\ip{\sigma}{X_i}\ge \kappa\sqrt{n}\Bigr\}.
\end{equation}
The vectors $X_i\in\R^n$, $1\le i\le M$, are collectively referred to as the \emph{disorder}. In what follows, we slightly abuse the terminology and use ``disorder" to refer to both the vectors $X_i$, $1\le i\le M$; as well as the matrix $\M\in\R^{M\times n}$ whose rows are $X_i$.   The set $S^{{\rm A}}_\alpha(\kappa)$ is the \emph{solution space}, a random subset of $\bincube$. 

The computer science take on the perceptron model is to view it as an instance of a \emph{random constraint satisfaction problem}. Indeed, observe that $S_\alpha^{{\rm A}}(\kappa)$ is an intersection of $M$ random halfspaces, each defined by the \emph{constraint vector} $X_i$ (and threshold $\kappa$). Each constraint rules out certain solutions in the space $\mathcal{B}_n$ of all possible solutions; and the parameter $\alpha$ plays a role akin to the constraint density in the literature on random $k-$SAT, see e.g.~\cite{aubin2019storage,perkins2021frozen,abbe2021proof} for more discussion. For these reasons, we refer to $\alpha$ as the \emph{constraint density} in the sequel.

Perhaps the most important \emph{structural} question is whether $S_\alpha^{{\rm A}}(\kappa)$ is empty/non-empty (w.h.p., as $n\to\infty$). Krauth and M{\'e}zard conjectured in~\cite{krauth1989storage} that the event, $\bigl\{S^{{\rm A}}_\alpha(\kappa)\ne\varnothing\bigr\}$, exhibits what is known as a \emph{sharp  threshold}: there is an explicit threshold $\alpha_{{\rm KM}}(\kappa)$  such that
\begin{equation}\label{eq:asym-phase-transition}
\lim_{n\to\infty}\mathbb{P}\bigl[S_\alpha^{{\rm A}}(\kappa)\ne\varnothing\bigr] =\begin{cases}0,&\text{if } \alpha>\alpha_{{\rm KM}}(\kappa)\\1,&\text{if } \alpha<\alpha_{{\rm KM}}(\kappa)
\end{cases}.
\end{equation}
Using non-rigorous calculations based on the so-called \emph{replica method}, Krauth and M{\'e}zard~\cite{krauth1989storage} conjecture a precise  value of $\alpha_{{\rm KM}}(0)$ around 0.833. It is worth noting that this value deviates significantly from the \emph{first moment threshold}: note that for $\kappa=0$, $\mathbb{E}\bigl[\bigl|S_\alpha(\kappa)\bigr|\bigr] = \exp_2\bigl(n-n\alpha\bigr)$, which is exponentially small (in $n$) only for $\alpha>1$. 

The structure of $S_\alpha^{{\rm A}}(\kappa)$ and the aforementioned phase transition still (largely) remain as  open problems. Even the very existence of such a sharp phase transition point remains open, though Xu\footnote{Xu establishes this in a slightly different setting, where the disorder $X_i$ consists of i.i.d.\,Rademacher entries.}~\cite{xu2019sharp} has shown sharpness of the threshold around a possibly $n$-dependent value $\alpha^{(n)}_c(\kappa)$, as in~\cite{friedgut1999sharp} in the setting of random CSP's.  With that in mind, we can define 
\[ \alpha_c^*(\kappa) = \inf  \left \{ \alpha : \lim_{n\to\infty}\mathbb{P}\Bigl[S_\alpha^{{\rm A}}(0)= \varnothing\Bigr] =1  \right \}. \] 
The work by Ding and Sun~\cite{ding2019capacity} establishes, using an elegant second-moment argument, that for every $\alpha\le \alpha_{{\rm KM}}(0)$, 
\[
\liminf_{n\to\infty}\mathbb{P}\Bigl[S_\alpha^{{\rm A}}(0)\ne \varnothing\Bigr]>0.
\]
Hence, $\alpha_c^*(0)\ge \alpha_{\rm KM}(0)$. However, a matching upper bound is still missing: the best known bound is due to Kim and Roche~\cite[Theorem~1.2]{kim1998covering}, which show $\alpha_c^*(0)\le 0.9963$. More precisely, they show for any $\epsilon<0.0037$, $\mathbb{P}\Bigl[S_{1-\epsilon}^{A}(0)\ne\varnothing\Bigr]=o(1)$. For a similar negative result with a stronger convergence guarantee; that is a guarantee of form
\[
\mathbb{P}\Bigl[S_{1-\delta}^{A}(0)\ne\varnothing\Bigr]\le \exp\bigl(-\delta n\bigr)
\]
for some small $\delta>0$ (though potentially worse than $0.0037$), see Talagrand~\cite{talagrand1999intersecting}. 

When $S_\alpha^{{\rm A}}(\kappa)\ne \varnothing$ (w.h.p.), a follow-up algorithmic question is whether such a satisfying $\sigma\in\bincube$ can  be found algorithmically (in polynomial time). Regarding such positive results, the best known guarantee is again due to Kim and Roche. They devise in~\cite{kim1998covering} an (multi-stage majority) algorithm that w.h.p. returns a solution $\sigma\in S_\alpha^{{\rm A}}(0)$ as long as $\alpha<0.005$. (In particular, their algorithm is a constructive proof that $S_\alpha^{{\rm A}}(0)\ne\varnothing$ w.h.p.\,for $\alpha<0.005$.)  Later in Section~\ref{sec:kim-roche1}, we informally describe the implementation of their algorithm and establish that it is stable in an appropriate sense.

\subsubsection{Symmetric Binary Perceptron} 
Proposed initially by Aubin, Perkins, and Zdeborov{\'a} in~\cite{aubin2019storage}; the symmetric binary perceptron (\texttt{SBP}) model is our main focus in the present paper. Similar to the asymmetric case, fix a $\kappa>0$,  $\alpha>0$; and set $M=\lfloor n\alpha\rfloor$. Let $X_i\distr \mathcal{N}(0,I_n)$, $1\le i\le M$, be i.i.d.\,random vectors, and consider
\begin{equation}\label{eq:S_alpha}
S_\alpha(\kappa) \triangleq \bigcap_{1\le i\le M}\Bigl\{\sigma\in \bincube:\bigl|\ip{\sigma}{X_i}\bigr|\le \kappa\sqrt{n}\Bigr\} = \Bigl\{\sigma\in \bincube:\bigl\|\M\sigma\bigr\|_\infty \le \kappa\sqrt{n}\Bigr\},
\end{equation}
where $\M\in\R^{M\times n}$ with rows $X_1,\dots,X_M$. This model is called \emph{symmetric} since $\sigma\in S_\alpha(\kappa)$ iff $-\sigma\in S_\alpha(\kappa)$. It turns out that the symmetry makes the \SBP more amenable to analysis compared to its asymmetric counterpart, while retaining the relevant conjectural structural properties nearly intact, see~\cite{baldassi2020clustering}. Though not our focus here, it is worth mentioning that this is analogous to the random $k-$SAT model. Its symmetric variant, NAE $k-$SAT, is mathematically more tractable, yet at the same time exhibits similar structural properties. 

As its asymmetric counterpart,  it was conjectured that the \texttt{SBP} also undergoes a \emph{sharp phase transition}. More concretely, it was conjectured that there exists a $\alpha_c(\kappa)$ such that the event, $\{S_\alpha(\kappa)\ne\varnothing\}$, undergoes a sharp phase transition as $\alpha$ crosses $\alpha_c(\kappa)$. Notably, $\alpha_c(\kappa)$ matches with the first moment prediction:
\begin{equation}\label{eq:critical-alpha}
    \alpha_c(\kappa) \triangleq -\frac{1}{\log_2\mathbb{P}\bigl[|Z|\le \kappa\bigr]},\quad\text{where}\quad Z\sim \mathcal{N}(0,1).
\end{equation}
It was established in~\cite{aubin2019storage}  that (a) $\lim_{n\to\infty}\mathbb{P}\bigl[S_\alpha(\kappa)\ne\varnothing\bigr]=0$ for $\alpha>\alpha_c(\kappa)$; and (b) $\liminf_{n\to\infty}\mathbb{P}\bigl[S_\alpha(\kappa)\ne\varnothing\bigr]>0$ for $\alpha<\alpha_c(\kappa)$. The latter guarantee uses the so-called \emph{second moment method}, though falling short of establishing the high probability guarantee. Subsequent works by Perkins and Xu~\cite{perkins2021frozen}; and Abbe, Li, and Sly~\cite{abbe2021proof} establish that $\mathbb{P}\bigl[S_\alpha(\kappa)\ne\varnothing\bigr]=1-o(1)$ for all $\alpha<\alpha_c(\kappa)$. Namely, $\alpha_c(\kappa)$ is indeed a sharp threshold for the \texttt{SBP}. Having established the existence and the location of such a sharp phase transition; the next question, once again, is whether such a $\sigma\in S_\alpha(\kappa)$ can be found \emph{efficiently}; that is, by means of polynomial-time algorithms. This is our main focus in the present paper.

The \texttt{SBP} is closely related to \emph{combinatorial discrepancy theory}~\cite{spencer1985six,matousek1999geometric}. Given a matrix $\M\in\R^{M\times n}$, a central problem in discrepancy theory is to compute, approximate, or bound its \emph{discrepancy} $\mathcal{D}(\M)$:
\[
\mathcal{D}(\M) \triangleq \min_{\sigma\in\bincube} \bigl\|\M\sigma\bigr\|_\infty.
\]
Several different settings are considered in the discrepancy literature: \emph{worst-case} $\M$ and \emph{average-case} $\M$ (where the entries of $\M$ either i.i.d.\,Rademacher or i.i.d.\,Gaussian); and both existential and algorithmic results are sought. In the \emph{proportional regime}, the discrepancy perspective is to fix the aspect ratio $\alpha=M/n$ and find a solution $\sigma$ with small $\|\M\sigma\|_\infty$. This is  the inverse of the perceptron perspective: fixing  $\kappa>0$ and finding the largest $\alpha$ for which a solution $\sigma$ exists. In particular, the sharp threshold result for the \texttt{SBP} described above settles the question of discrepancy in the random proportional regime: for  $\M\in\R^{M\times n}$ with i.i.d. $\mathcal{N}(0,1)$ entries, $\mathcal{D}(\M) = (1+o(1))f(\alpha)\sqrt{n}$ w.h.p.\,where $f(\cdot)$ is the inverse function of $\alpha_c$. The first and second moment methods can also be employed to establish the value of discrepancy in the random setting  in other regimes, e.g.~\cite{potukuchi2018discrepancy,turner2020balancing,altschuler2021discrepancy}.  Moreover, as we describe below, discrepancy algorithms (e.g.~\cite{bansal2010constructive,lovett2015constructive,chandrasekaran2014integer,bansalspenceronline,potukuchi2020}) can be employed for the \texttt{SBP}. 
\subsection{Main Results}

From an algorithmic point of view, the most striking fact about the \texttt{SBP} is the existence of a large \textit{statistical-to-computational gap}.  Explanations for both the algorithmic hardness of the model and for the success of efficient algorithms at low densities have been put forth recently. 
\paragraph{A Statistical-to-Computational Gap.}

A random constraint satisfaction problem like the \texttt{SBP} is said to exhibit a \textit{statistical-to-computational gap} if the density below which solutions are known to exist w.h.p.\,is higher than the densities at which known efficient algorithms can find a solution.  As we now demonstrate, the \texttt{SBP} exhibits a statistical-to-computational gap for all $\kappa>0$, but this gap is most pronounced in the regime of small $\kappa$. 
 In this regime, the best known  algorithmic guarantee for finding a solution in the  \texttt{SBP} is due to Bansal and Spencer~\cite{bansalspenceronline} from the literature on combinatorial discrepancy. As we detail in Section~\ref{sec:alg-conj} and show in Corollary~\ref{thm:alpha-star-positive}, their algorithm works for $\alpha=O(\kappa^2)$ as $\kappa\to 0$. This stands in stark contrast to the threshold for the existence of solutions. From~\eqref{eq:critical-alpha},  $\alpha_c(\kappa)$  behaves  like $\frac{1}{\log_2(1/\kappa)}$: 
\[
\alpha_c(\kappa) = -\frac{1}{\log_2\mathbb{P}\bigl[|Z|\le \kappa \bigr]} = -\frac{1}{\frac12\log_2\frac{2}{\pi} + \log_2\bigl(1+o_\kappa(1)\bigr)+\log_2\kappa} = \frac{1}{\log_2(1/\kappa)}\bigl(1+o_\kappa(1)\bigr).
\]
Namely, $\alpha_c(\kappa)$ is asymptotically much larger than the algorithmic $\kappa^2$ threshold. The main motivation of the present paper is to inquire into the origins of this gap in the \texttt{SBP} by leveraging insights from  statistical physics. In particular, we will establish the presence of a geometric property known as the \emph{Overlap Gap Property} (OGP), and use it to rule out classes of \emph{stable algorithms}, appropriately defined.
\paragraph{Freezing, rare clusters, and algorithms.}

The \texttt{SBP} exhibits  striking structural properties which are thought to contribute to both the success of polynomial-time algorithms at low densities and the failure of efficient algorithms at higher densities.

On one hand, the model exhibits the ``frozen one-step Replica Symmetry Breaking (1-RSB)" scenario at all positive densities $\alpha<\alpha_c$.
 This states that whp over the instance, almost every solution $\sigma$ is \textit{totally frozen} and isolated: the nearest other solution is at linear Hamming distance to $\sigma$. This extreme form of clustering was conjectured to hold for the \texttt{ABP} and \texttt{SBP}
 in~\cite{krauth1989storage,huang2013entropy,aubin2019storage,baldassi2020clustering}, and subsequently established for the \texttt{SBP} in~\cite{perkins2021frozen,abbe2021proof}. In light of the earlier works by M\'ezard, Mora, and Zecchina~\cite{mezard2005clustering} and Achlioptas and Ricci-Tersenghi~\cite{achlioptas2006solution} positing a link between  clustering, freezing, and  algorithmic hardness, it is  tempting to postulate that finding a solution $\sigma$ for the \texttt{SBP}  is hard for every $\alpha\in(0,\alpha_c(\kappa))$, but this is contradicted by the existence of efficient algorithms at low densities such as that of~\cite{bansalspenceronline,braunstein2006learning,baldassi2007efficient,baldassi2009generalization,baldassi2015max} including the algorithm by Bansal and Spencer discussed above. Combining these facts, we arrive at the conclusion that the \SBP exhibits an intriguing phenomenon: the existence of polynomial-time algorithms can coexist with the  frozen 1-RSB phenomenon. This conundrum challenges the view that  clustering and freezing necessarily lead to algorithmic hardness.

In an attempt to explain this apparent conundrum, it was conjectured in~\cite{baldassi2015subdominant} that while a  $1-o(1)$ fraction of all solutions are totally frozen, an exponentially small fraction of solutions appear in clusters of exponential (in $n$) size; and the efficient learning algorithms that manage to find solutions find solutions belonging to such rare clusters, see~\cite{perkins2021frozen} for further discussion. In this direction, Abbe, Li, and Sly~\cite{abbe2021binary} established very recently that whp a  connected cluster of solutions of linear diameter does indeed exist at \emph{all} densities $\alpha< \alpha_c$. Furthermore, they show that an efficient multi-stage majority algorithm (based on that of~\cite{kim1998covering})  can find such a large cluster at  densities $\alpha =O(\kappa^{10})$ in the $\kappa \to 0$ regime\footnote{See in particular $\alpha_0$ appearing in~\cite[Page~6]{abbe2021binary}.}. 

These results and conjectures prompt several questions regarding the statistical-to-computational gap exhibited by the \texttt{SBP}. If large connected clusters exist at all subcritical densities, what is the reason for the apparent algorithmic hardness?  Do the efficient algorithms for densities $\alpha = O(\kappa^2)$ also find solutions lying in one of these large connected clusters?  At what densities are these large clusters algorithmically accessible?   In particular, while we now know  detailed structural information about the \texttt{SBP}, its statistical-to-computational gap remains a mystery. 
\paragraph{Our results on the Overlap Gap Property and failure of stable algorithms.}

We investigate the statistical-to-computational gap in the \texttt{SBP} via the \textit{Overlap Gap Property (OGP)}, an intricate geometrical property of the solution space that has been used to rigorously rule out large classes of search algorithms for many important random computational problems including  random $k-$SAT~\cite{gamarnik2017performance,coja2017walksat,bresler2021algorithmic} and  independent sets in sparse random graphs~\cite{gamarnik2014limits,rahman2017local,wein2020optimal}, see also the survey paper by Gamarnik~\cite{gamarnik2021overlap}. We will describe the OGP in more detail below. At a high level, it asserts the non-existence of tuples of solutions at prescribed distances in the solution space.

Our first main result establishes the OGP for $m-$tuples of solutions (dubbed as $m-$OGP) at densities $\Omega(\kappa^2\log_2\frac1\kappa)$:
\begin{theorem}[Informal, see Theorem~\ref{thm:m-OGP-small-kappa}]
For densities $\alpha = \Omega(\kappa^2\log_2\frac1\kappa)$, the \texttt{SBP} exhibits the $m-$OGP for appropriately chosen parameters. 
\end{theorem}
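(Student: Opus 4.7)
My strategy is the first-moment method applied to $m$-tuples of solutions whose pairwise normalized overlaps all lie in a carefully chosen forbidden band $[\beta_1,\beta_2] \subset (0,1)$. Let $Z_m$ count the ordered $m$-tuples $(\sigma^{(1)},\dots,\sigma^{(m)}) \in S_\alpha(\kappa)^m$ with $\ip{\sigma^{(i)}}{\sigma^{(j)}}/n \in [\beta_1,\beta_2]$ for every $i<j$. Since the rows $X_i$ of $\M$ are i.i.d.\ $\mathcal{N}(0,I_n)$, the expectation factorizes as
\[
\E{Z_m} = \sum_{\bs} \mathbbm{1}\{\text{overlaps in band}\}\, p_\kappa(\bs)^{M},
\]
where $p_\kappa(\bs) = \pr\bigl[|\ip{X}{\sigma^{(j)}}|\le \kappa\sqrt n,\ \forall j\bigr]$ depends on $\bs$ only through the $m \times m$ Gram matrix $\eta(\bs) = (\ip{\sigma^{(i)}}{\sigma^{(j)}}/n)_{i,j}$ of normalized overlaps; by rotational invariance $p_\kappa(\bs) = \pr[W \in [-\kappa,\kappa]^m]$ for $W \sim \mathcal{N}(0,\eta(\bs))$. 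A Markov inequality concludes the OGP once the right-hand side is shown to be $o(1)$ at the claimed densities.

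The analysis then splits into two classical ingredients. For the probabilistic upper bound on $p_\kappa(\eta)$, I work first at the equicorrelated profile $\eta = (1-\beta)I + \beta\mathbf{1}\mathbf{1}^\top$ via the decomposition $W_j = \sqrt{\beta}\,Z_0 + \sqrt{1-\beta}\,V_j$ with $Z_0, V_1,\dots,V_m$ independent standard normals; conditioning on $Z_0$ and integrating the one-dimensional Gaussian mass on a small interval yields a bound of the form $p_\kappa(\eta) \le (C\kappa)^m (1-\beta)^{-m/2}$ in the small-$\kappa$ regime. For the combinatorial count, a standard type/entropy estimate gives $\#\{\bs : \eta(\bs) = \eta\} \le \exp_2(n\,h_m(\eta))$, where $h_m(\eta)$ is the Shannon entropy of the coordinate-wise joint distribution of $m$ correlated $\pm 1$ variables whose pairwise correlations realize $\eta$. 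Combining these two estimates together with an $\varepsilon$-net over the compact overlap region yields
\[
\tfrac{1}{n}\log_2 \E{Z_m} \;\le\; \sup_{\eta \in [\beta_1,\beta_2]^{\binom{m}{2}}}\Bigl(h_m(\eta) + \alpha\, m \log_2 \kappa + O(m)\Bigr) + o_n(1).
\]
Optimizing parameters --- placing the band at a scale of $\beta$ where the entropy deficit of $h_m$ relative to its unconstrained maximum $m$ exactly balances the Gaussian-probability gain --- gives the threshold $\alpha \ge C\,\kappa^2\log_2(1/\kappa)$ as claimed, with the matching algorithmic rate $\kappa^2$ recovered in the $m\to\infty$ limit up to the stated logarithmic factor.

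The main technical obstacle is controlling $p_\kappa(\eta)$ uniformly over the full forbidden region and not merely at the equicorrelated point, since the classical Gaussian comparison inequalities (Slepian, Gordon, Gaussian correlation) produce bounds in the wrong direction for the upper bound I need. I would circumvent this via an explicit diagonalization of $\eta$ combined with a Taylor expansion in the positive-definite cone around the equicorrelated center, reducing to finitely many representative profiles by compactness. A secondary, delicate, point is placing $[\beta_1,\beta_2]$ so that it simultaneously (i) lies strictly above the typical near-zero pairwise overlap of a random pair of solutions, (ii) lies strictly below $1$ so the $m$-tuple is not merely $m$ copies of a single frozen solution, and (iii) still admits the entropy-vs-probability balance above. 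Here the frozen-1RSB structure of the \texttt{SBP} established in \cite{perkins2021frozen, abbe2021proof} is useful: it confirms that typical pairs of solutions are mutually well-separated, so the ``below-band'' regime is indeed the typical geometry and there is enough room to place the forbidden band strictly in between.
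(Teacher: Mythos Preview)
Your overall strategy---first moment on $m$-tuples with pairwise overlaps in a narrow band, counting bound times a Gaussian-box probability, then parameter optimization---is exactly what the paper does. However, several of the complications you anticipate are not actually needed, and one key step is missing.

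First, the ``main technical obstacle'' you identify is a non-issue. The paper does not use a Gaussian comparison inequality to control $p_\kappa(\eta)$ uniformly over the band. Instead it uses the trivial density bound
\[
\pr\bigl[W\in[-\kappa,\kappa]^m\bigr]\;\le\;(2\pi)^{-m/2}\,|\Sigma(\eta)|^{-1/2}\,(2\kappa)^m,
\]
valid for any positive-definite $\Sigma$. The only $\eta$-dependence is through the determinant, which is continuous, so taking the band width small and using compactness is all that is required---no diagonalization or Taylor expansion. (Sid\'ak's inequality, which \emph{does} go the right way, is used only in the ensemble version of the theorem to remove the interpolation angles $\tau_i$; it plays no role in the step you are worried about.)

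Second, your entropy quantity $h_m(\eta)$ is not well-defined as written: many joint $\pm1$ laws share the same pairwise correlations, so you would need to maximize over compatible types. The paper sidesteps this entirely with the crude star-shaped overcount $M(m,\beta,\eta)\le 2^n\binom{n}{n(1-\beta+\eta)/2}^{m-1}$, constraining each $\sigma^{(i)}$ only against $\sigma^{(1)}$. This is enough.

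Third, the appeal to frozen-1RSB for placing the band is unnecessary: the first-moment argument only needs $\E[Z_m]=o(1)$ and is indifferent to the typical geometry of solutions.

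Finally, the crucial missing ingredient is the explicit choice $\beta=1-4\kappa^2$. With this, $\alpha\log_2(2\kappa)-\tfrac{\alpha}{2}\log_2(1-\beta)$ cancels exactly, and the per-$m$ free energy reduces to $h(2\kappa^2)-\tfrac{\alpha}{2}\log_2(2\pi)\approx(4-5\log_2 2\pi)\,\kappa^2\log_2\tfrac{1}{\kappa}<0$ for $\alpha=10\kappa^2\log_2\tfrac{1}{\kappa}$. Your displayed inequality with ``$+\,O(m)$'' is too coarse to see this cancellation; without locating $\beta$ at distance $\Theta(\kappa^2)$ from $1$, the optimization does not close.
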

We also establish the presence of $2-$OGP and the $3-$OGP for the \texttt{SBP} in the high $\kappa$ regime, i.e. when $\kappa=1$, respectively in Theorem~\ref{thm:2-OGP} and Theorem~\ref{thm:3-OGP}. As we show in Theorem~\ref{thm:ogp-universality} through the multi-dimensional version of Berry-Esseen Theorem, our OGP results enjoy \emph{universality}: they remain valid under milder distributional assumptions on the entries of $\M$. 

Our next main result shows that the $m-$OGP rules out the class of \emph{stable algorithms} formalized in Definition~\ref{def:admissible-alg}. At a high level, an algorithm is stable if a small perturbation of its input results in a small perturbation of the solution $\sigma$ it outputs. In the literature on other random computational problems, it has been shown that the class of stable algorithms captures powerful classes of algorithms including  Approximate Message Passing algorithms~\cite{gamarnikjagannath2021overlap}, low-degree polynomials~\cite{gamarnik2020low,bresler2021algorithmic}, and low-depth circuits~\cite{gamarnik2021circuit}.
\begin{theorem}[Informal, see Theorem~\ref{thm:stable-hardness}]
The $m-$OGP implies the failure of stable algorithms for the \texttt{SBP}.
\end{theorem}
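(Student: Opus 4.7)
The plan is a proof by contradiction built around an interpolation-plus-stability argument. Assume some stable algorithm $\mathcal{A}$ outputs a member of $S_\alpha(\kappa)$ with probability at least $\delta>0$ (over the disorder and its internal randomness) at a density $\alpha$ above the $m$-OGP threshold of Theorem~\ref{thm:m-OGP-small-kappa}. The goal is to manufacture, with positive probability, an $m$-tuple of valid solutions whose pairwise normalized overlaps all lie in the forbidden interval provided by the $m$-OGP, thereby contradicting that statement.

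First I would fix $m$ independent disorder matrices $\mathcal{M}^{(0)},\ldots,\mathcal{M}^{(m-1)}\in\mathbb{R}^{M\times n}$ and build an $(m-1)$-dimensional discrete lattice of instances $\mathcal{M}(t_1,\ldots,t_{m-1})$ with $t_k\in\{0,1,\ldots,T\}$, using sequential Gaussian rotation (or row-swap) interpolation: advancing the $k$-th coordinate continuously transitions the instance from $\mathcal{M}^{(k-1)}$ to $\mathcal{M}^{(k)}$. By construction, every lattice node has the same marginal law as $\mathcal{M}^{(0)}$, adjacent nodes differ by a perturbation of the type controlled by the stability hypothesis, and motion along disjoint axes acts on disjoint independent copies. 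I would then apply $\mathcal{A}$ at every lattice point with coupled internal randomness to obtain a map $\sigma(\cdot):\{0,\ldots,T\}^{m-1}\to \bincube$.

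Next, I would extract the forbidden $m$-tuple from the family $\{\sigma(\cdot)\}$. Stability makes each normalized overlap $\tfrac{1}{n}\langle \sigma(s),\sigma(s')\rangle$ Lipschitz in the interpolation parameters, and at opposite ends of any coordinate axis the corresponding outputs depend on disjoint independent copies, so their overlap concentrates near the typical inter-independent-solution value (close to $0$ by symmetry). A multi-dimensional intermediate-value / pigeonhole argument should then pin down parameters $t_1^\star,\ldots,t_{m-1}^\star$ such that the $m$ consecutive outputs $\sigma(0,\ldots,0)$, $\sigma(t_1^\star,0,\ldots,0)$, $\ldots$, $\sigma(t_1^\star,\ldots,t_{m-1}^\star)$ all lie in $S_\alpha(\kappa)$ and have all $\binom{m}{2}$ pairwise normalized overlaps inside the forbidden OGP interval, contradicting the $m$-OGP.

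The hard part will be the probabilistic bookkeeping. The $m$-OGP holds with probability $1-e^{-\Omega(n)}$ (via the first-moment bound behind Theorem~\ref{thm:m-OGP-small-kappa}), so the per-instance success probability $\delta$ of $\mathcal{A}$ only needs to be boosted, by independent repetitions, enough to survive a union bound over the $T^{m-1}$ lattice nodes; taking $T$ polynomial in $n$ keeps this affordable while still letting edge-wise stability translate into uniform Lipschitz control of overlaps. The other delicate point is that the interpolation must be \emph{matched} to the OGP interval: the overlap profile along each axis has to genuinely sweep through the forbidden regime, and, more importantly, one must control \emph{all} pairwise overlaps simultaneously rather than just consecutive ones---this is the main combinatorial obstacle underlying the multi-dimensional pigeonhole step. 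Gaussian rotation among rows of $\mathcal{M}$ is the natural interpolation choice since it preserves the marginal law, decorrelates continuously in the rotation angle, and gives precise Lipschitz control on the perturbation magnitude used in the formal definition of stability.
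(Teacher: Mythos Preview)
Your overall framework—Gaussian-rotation interpolation, stability giving Lipschitz control on overlaps, contradiction with the $m$-OGP—matches the paper's. The gap is precisely at the step you yourself call ``the main combinatorial obstacle'': extracting an $m$-tuple with \emph{all} $\binom{m}{2}$ pairwise overlaps simultaneously in the forbidden window $(\beta-\eta,\beta)$.

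Your scheme uses only $m$ independent instances and an $(m{-}1)$-dimensional lattice, and proposes to locate $t_1^\star,\ldots,t_{m-1}^\star$ by a ``multi-dimensional intermediate-value / pigeonhole argument.'' But with $m-1$ free parameters you are trying to satisfy $\binom{m}{2}$ scalar constraints, and the IVT only traps one function at a time. A sequential IVT along your staircase can force each \emph{consecutive} overlap into the window, but gives no control over non-consecutive pairs: once $t_1^\star$ is chosen so that $\tfrac{1}{n}\langle\sigma(0),\sigma(t_1^\star,0,\ldots)\rangle\in(\beta-\eta,\beta)$, advancing $t_2$ moves $\sigma(t_1^\star,t_2,\ldots)$ with nothing forcing $\tfrac{1}{n}\langle\sigma(0),\sigma(t_1^\star,t_2^\star,\ldots)\rangle$ to remain in the window. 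Sequential constructions of this kind \emph{do} work for ladder or branching OGP structures (as in Wein or Huang--Sellke) where the forbidden pattern only constrains a chain of overlaps; the $m$-OGP in Theorem~\ref{thm:m-OGP-small-kappa} is symmetric (near-equilateral $m$-tuples), so every pair must be trapped at once, and your sketch provides no mechanism for that.

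The paper's resolution is a genuinely different combinatorial device. It interpolates not $m$ but $T$ trajectories from a common base $\mathcal{M}_0$, with $T$ doubly exponential in $m$ and the time-discretization $Q$. Stability plus a chaos lemma guarantee that for \emph{every} $m$-subset $A\subset[T]$, \emph{some} pair in $A$ has overlap in $(\beta-\eta,\beta)$ at \emph{some} time. Hence the graph on $[T]$ with an edge whenever such a time exists has independence number at most $m-1$; two-color Ramsey then forces a huge clique, and after coloring each edge by the first time its overlap enters the window, multicolor Ramsey produces a monochromatic $K_m$. That monochromatic clique is exactly $m$ trajectories whose pairwise overlaps are all in the forbidden interval \emph{at the same discretized time}—the simultaneous trapping your IVT sketch cannot deliver. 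The Ramsey step, not a multi-dimensional pigeonhole, is what closes the argument.
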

Thus, we obtain the following corollary:
\begin{coro}[Informal, see Theorem~\ref{thm:stable-hardness}]
Stable algorithms (with appropriate parameters) fail to find a solution for the \texttt{SBP} for densities $\alpha=\Omega(\kappa^2\log_2\frac1\kappa)$. 
\end{coro}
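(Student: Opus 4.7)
The plan is to derive the corollary by chaining together the two preceding theorems. First I would invoke Theorem~\ref{thm:m-OGP-small-kappa} to obtain, for any fixed integer $m$ and an appropriately chosen window of overlap values, the presence of the $m$-OGP for the \texttt{SBP} at every density $\alpha = \Omega(\kappa^2 \log_2(1/\kappa))$. The hidden constant in the $\Omega(\cdot)$ may depend on $m$ and on the overlap parameters, but this dependence is immaterial because the class of stable algorithms is defined independently of $m$; one simply fixes $m$ first and then tracks the resulting constant.

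Next I would apply Theorem~\ref{thm:stable-hardness}, which promotes the geometric $m$-OGP into an algorithmic obstruction: any algorithm satisfying the stability condition of Definition~\ref{def:admissible-alg} must fail to produce a valid $\sigma \in S_\alpha(\kappa)$ with probability $1-o(1)$. Composing the two implications yields the claimed conclusion throughout the regime $\alpha = \Omega(\kappa^2 \log_2(1/\kappa))$, giving the informal corollary.

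The main obstacle is essentially bookkeeping: the integer $m$, the overlap window, and the Hamming-distance thresholds produced by the $m$-OGP construction of Theorem~\ref{thm:m-OGP-small-kappa} must fall inside the parameter regime under which the interpolation/perturbation argument of Theorem~\ref{thm:stable-hardness} operates. One therefore has to select the free parameters consistently between the two theorems before taking $\kappa\to 0$; the stability radius and Lipschitz constant of the algorithm enter only through the quantitative form of Theorem~\ref{thm:stable-hardness}, and a union bound over the finitely many steps of the interpolation completes the argument. Once this alignment is stated explicitly, the proof is a direct composition and requires no further probabilistic input beyond what is already contained in the two theorems.
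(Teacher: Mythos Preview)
Your proposal is essentially correct and matches the paper's approach: the corollary is simply an informal restatement of Theorem~\ref{thm:stable-hardness}, which already internalizes the composition with Theorem~\ref{thm:m-OGP-small-kappa} (note that the statement of Theorem~\ref{thm:stable-hardness} explicitly takes the $m$-OGP parameters \emph{prescribed by} Theorem~\ref{thm:m-OGP-small-kappa}). One small correction to your quantifier ordering: you do not ``fix $m$ first and track the resulting constant''; rather, the density threshold $\alpha_{\rm OGP}(\kappa)=10\kappa^2\log_2(1/\kappa)$ is absolute, and for each $\kappa$ the theorem \emph{produces} the parameters $m,\beta,\eta$ for which the $m$-OGP holds---these in turn feed into the stability parameters $(C,Q,T)$ of Theorem~\ref{thm:stable-hardness}.
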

In particular, this hardness result matches the algorithmic $\kappa^2$ threshold up to a logarithmic factor. Hence, while the view that freezing implies algorithmic hardness for the \texttt{SBP} breaks down, the rigorous link between the OGP and  algorithmic hardness remains intact.

In addition to stable algorithms; we also consider the class of \emph{online algorithms} which includes the Bansal-Spencer algorithm~\cite{bansalspenceronline}. Informally, an algorithm $\A$ is online if the $t^{\rm th}$ coordinate of the solution it outputs depends only on the first $t$ columns of $\M$. 
\begin{theorem}[Informal, see Theorem~\ref{thm:online-alg-fail}]
Online algorithms fail to find a solution for the \texttt{SBP} for sufficiently high densities.
\end{theorem}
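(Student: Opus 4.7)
The plan is to view an online algorithm as performing online vector balancing: writing $V_j\in\R^M$ for the $j$-th column of $\M$ (so the $V_j$'s are i.i.d.\ $\mathcal{N}(0,I_M)$), the algorithm commits at step $j$ to $\sigma_j\in\{-1,+1\}$ as a measurable function of $V_1,\ldots,V_j$, and succeeds when $\|S_n\|_\infty\le\kappa\sqrt{n}$, where $S_t:=\sum_{j=1}^t\sigma_j V_j\in\R^M$. The key tool will be the quadratic potential $\Phi_t:=\|S_t\|_2^2$, analyzed in the filtration $\mathcal{F}_t:=\sigma(V_1,\ldots,V_t)$.

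Expanding $S_t=S_{t-1}+\sigma_t V_t$ gives $\Phi_t=\Phi_{t-1}+2\sigma_t\ip{S_{t-1}}{V_t}+\|V_t\|_2^2$. Conditional on $\mathcal{F}_{t-1}$ the scalar $Y_t:=\ip{S_{t-1}}{V_t}$ is $\mathcal{N}(0,\|S_{t-1}\|_2^2)$-distributed, and for any $\{-1,+1\}$-valued rule $\sigma_t=g(V_t)$ the greedy choice $g(V_t)=-\mathrm{sign}(Y_t)$ minimises $\mathbb{E}[\sigma_t Y_t\mid\mathcal{F}_{t-1}]$ at the value $-\mathbb{E}[|Y_t|\mid\mathcal{F}_{t-1}]=-\sqrt{2/\pi}\,\|S_{t-1}\|_2$; hence $\mathbb{E}[\sigma_t Y_t\mid\mathcal{F}_{t-1}]\ge-\sqrt{2/\pi}\,\|S_{t-1}\|_2$ for \emph{every} online algorithm. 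Combined with $\mathbb{E}\|V_t\|_2^2=M$ this yields
\[
\mathbb{E}[\Phi_t\mid\mathcal{F}_{t-1}]\ \ge\ \Phi_{t-1}-2\sqrt{2/\pi}\,\sqrt{\Phi_{t-1}}+M.
\]
Taking expectations, applying Jensen to the concave square root, and iterating from $\Phi_0=0$, the sequence $\mathbb{E}\Phi_t$ monotonically approaches the unique fixed point $\phi^{*}=\pi M^2/8$; in particular $\mathbb{E}\Phi_n\ \ge\ \min\{Mn,\pi M^2/8\}-o(M^2)$.

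On the success event $\|S_n\|_\infty\le\kappa\sqrt n$ one has $\Phi_n\le M\kappa^2 n$; comparing with the drift lower bound, failure in expectation occurs once $\alpha>8\kappa^2/\pi$ in the regime $\kappa\le 1$, matching the Bansal--Spencer algorithmic threshold~\cite{bansalspenceronline} up to a constant factor. The last step is to upgrade ``in expectation'' to ``with high probability'' via concentration of $\Phi_n$ around its mean, which I would do by bounding the conditional one-step variance $\mathrm{Var}[\Phi_t-\Phi_{t-1}\mid\mathcal{F}_{t-1}]=O(M+\|S_{t-1}\|_2^2)$ and applying Chebyshev's inequality to the martingale $\Phi_t-\sum_{s\le t}\mathbb{E}[\Phi_s-\Phi_{s-1}\mid\mathcal{F}_{s-1}]$; summed over $t\le n$ one obtains $\mathrm{std}(\Phi_n)=O(M\sqrt n)=o(M^2)$ whenever $n\alpha^2\to\infty$, which dominates the gap between $\mathbb{E}\Phi_n$ and the success threshold. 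The main anticipated obstacle is controlling the variance bound along atypical trajectories where $\|S_t\|_2\gg M$; I would handle this by restricting to the high-probability event $\{\|S_t\|_2\le CM\text{ for all }t\le n\}$ via a separate stopping-time argument, and showing the complement contributes negligibly to $\mathbb{E}\Phi_n$ and to the concentration bound.
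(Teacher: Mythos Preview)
Your potential-function argument is a genuinely different route from the paper's. The paper exploits the online structure via a \emph{resampling} trick: form $\M_\Delta$ from $\M$ by independently resampling the last $\Delta n$ columns; since $\A$ is online, $\A(\M)$ and $\A(\M_\Delta)$ must agree on the first $(1-\Delta)n$ coordinates, hence have overlap at least $1-2\Delta$; a first-moment bound (Proposition~\ref{prop:online-alg-fail}) then shows that for $\kappa=1$ and $\alpha\ge 1.77$ no such high-overlap pair of solutions to the two correlated instances exists w.h.p., giving the contradiction. Your approach instead tracks the $\ell_2$ drift of $S_t$ and arrives at the threshold $\alpha>8\kappa^2/\pi$. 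In the small-$\kappa$ regime this is remarkable: it matches the Bansal--Spencer threshold up to a constant and actually undercuts the paper's $m$-OGP threshold $\Theta(\kappa^2\log\frac1\kappa)$, so for small $\kappa$ your method gives a \emph{stronger} online lower bound than anything the paper proves.

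However, for the paper's formal setting $\kappa=1$ your threshold $8/\pi\approx 2.55$ exceeds $\alpha_c(1)\approx 1.82$ and is vacuous. This is intrinsic to the $\ell_2$ potential: a vector with $\|S_n\|_2^2\approx \pi M^2/8$ can easily satisfy $\|S_n\|_\infty\le \sqrt{n}$ when $M=\Theta(n)$, so $\Phi_n$ cannot separate success from failure at $\kappa=1$. A related technical point: your assertion $\mathbb{E}\Phi_n\ge \min\{Mn,\pi M^2/8\}-o(M^2)$ is overstated in the proportional regime $\alpha=\Theta(1)$. Convergence of the iteration $a_t\mapsto a_t-c\sqrt{a_t}+M$ to its fixed point $\phi^*$ takes $\Theta(M)$ steps, but you only have $n=M/\alpha$ steps, so you reach only $(1-e^{-\Theta(1/\alpha)})\phi^*$, a fixed constant fraction. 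This does not damage the small-$\kappa$ conclusion (there $\alpha\to 0$, hence $n\gg M$ and you do approach $\phi^*$), but it confirms that the $\ell_2$ drift cannot recover the paper's $\alpha\ge 1.77$ bound for $\kappa=1$.
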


Having established the hardness of stable algorithms for the \texttt{SBP} at the $m-$OGP threshold; a natural follow-up question is whether the known efficient algorithms for perceptron models are stable and whether the \texttt{ABP} also exhibits the $m-$OGP. To that end, we investigate the stability property of the Kim-Roche algorithm~\cite{kim1998covering} for the \texttt{ABP}.
\begin{theorem}[Informal, see Theorem~\ref{thm:kr-stable}]
The Kim-Roche algorithm~\cite{kim1998covering} for the \texttt{ABP} is stable in the sense of Definition~\ref{def:admissible-alg}.
\end{theorem}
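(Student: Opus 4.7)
The plan is to exploit the fact that the Kim-Roche algorithm is a deterministic, multi-stage majority procedure whose intermediate decisions are signs of Gaussian linear statistics with large ($\Theta(\sqrt{n})$) typical magnitude. I will first recall (in Section~\ref{sec:kim-roche1}) that $\KRA$ partitions the constraint vectors $X_1,\dots,X_M$ into a constant number $K$ of blocks $B_1,\dots,B_K$ and partitions the coordinates $[n]$ into blocks $I_1,\dots,I_K$, so that in stage $k$ each coordinate $\sigma_i$ for $i\in I_k$ is set to $\sign(S_i)$, where $S_i$ is an explicit linear functional of the constraints in $B_k$ and of the bits $\sigma_j$ already fixed in stages $1,\dots,k-1$. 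Given the notion of stability from Definition~\ref{def:admissible-alg}, which couples the disorder $\M$ to a perturbation $\tilde{\M}$ via Gaussian interpolation (or resampling a $\tau$-fraction of rows), the target statement is that with high probability
\[
d_H\bigl(\KRA(\M),\,\KRA(\tilde{\M})\bigr)\;\le\;\eta(\tau)\,n,\qquad \eta(\tau)\to 0 \text{ as } \tau\to 0.
\]

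The workhorse is a per-stage margin estimate. Conditioning on the history of stages $1,\dots,k-1$, the statistic $S_i$ for $i\in I_k$ is, by construction of $\KRA$, a Gaussian linear form in the fresh disorder $B_k$ of variance $\Theta(n)$; the Gaussian anti-concentration bound $\pr[|S_i|\le u\sqrt{n}]\le C u$ therefore applies. Under the perturbation, $S_i$ is shifted by an independent Gaussian of scale $O(\sqrt{\tau n})$, so for each $i\in I_k$ the sign of $S_i$ flips with probability $O(\sqrt{\tau})$. Summing over $i\in I_k$ and applying a Markov or Bernstein-type concentration bound yields that, conditional on no earlier flips, the fraction of stage-$k$ coordinates that disagree between $\KRA(\M)$ and $\KRA(\tilde{\M})$ is $O(\sqrt{\tau})$ with high probability. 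This is exactly the stability argument one would run if the algorithm were a single-stage majority rule.

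The principal obstacle is propagating the bound across stages: a coordinate flipped in stage $\le k-1$ perturbs the statistic $S_i$ in stage $k$ through the ``history'' argument of the linear form, so stage-$k$ disagreements come both from the direct perturbation on $B_k$ and from accumulated earlier flips. I will handle this by induction on $k$: if $\delta_{k-1}n$ bits have already been flipped, then the history-induced shift of $S_i$ is a Gaussian linear form in fresh $B_k$ entries whose coefficient vector has $\ell^2$-norm $O(\sqrt{\delta_{k-1} n})$, contributing an additional $O(\sqrt{\delta_{k-1}})$ to the stage-$k$ flip probability on top of the $O(\sqrt{\tau})$ direct contribution. This yields a recursion of the form
\[
\delta_k\;\le\; C\bigl(\sqrt{\tau}+\sqrt{\delta_{k-1}}\bigr),
\]
which, since $K$ is bounded independently of $n$, iterates to $\delta_K = o(1)$ as $\tau\to 0$. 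Two bookkeeping subtleties must be verified carefully: first, that the Gaussian anti-concentration used to lower-bound $|S_i|$ survives conditioning on stages $1,\dots,k-1$, which works because $B_1,\dots,B_K$ are independent blocks of disorder and the stage-$k$ rule accesses $B_k$ only; second, that the coupled disorder $(\M,\tilde{\M})$ is jointly Gaussian with an explicit covariance structure on each block, so that the ``perturbation shift'' of $S_i$ on $B_k$ is indeed an independent Gaussian of scale $O(\sqrt{\tau n})$ given the history. A union bound over $i\in[n]$ and $k\in[K]$ then delivers the stability statement with probability $1-o(1)$.
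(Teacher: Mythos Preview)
Your proposal rests on two structural claims about $\KRA$ that are both incorrect, and each one breaks the argument.

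First, the number of stages is not a constant $K$: Kim--Roche runs $N=\Theta(\log\log n)$ rounds. Your recursion $\delta_k\le C(\sqrt{\tau}+\sqrt{\delta_{k-1}})$ iterated $\Theta(\log\log n)$ times gives, at best, $\delta_N$ of order $\tau^{2^{-N}}$ times growing constants, which for $\tau=\Theta(1)$ is useless. The paper must therefore take $\tau=n^{-0.02}$ polynomially small and track a recursion of the form $|J_\ell|\le n^{1-\alpha_\ell}$ with $\alpha_\ell=\alpha_0\cdot 10^{-\ell}$; the exponent degrades by a constant factor each round, and the analysis only survives $c\log\log n$ rounds for small $c$, after which the remaining $\sum_{j>c\log\log n}n_j$ coordinates are shown separately to be $o(n)$.

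Second, the constraint vectors are \emph{not} partitioned into independent row-blocks $B_1,\dots,B_K$. The algorithm partitions the \emph{columns} into blocks $I_0,I_1,\dots$, but at stage $\ell$ it uses the row-set $\mathcal{I}_\ell$ consisting of the $k_\ell$ rows with the smallest partial inner products $\langle R_i,\sigma\rangle$---a data-dependent, adaptive selection. Your assertion that ``the stage-$k$ rule accesses $B_k$ only'' and that conditioning is benign ``because $B_1,\dots,B_K$ are independent blocks of disorder'' is therefore false for rows. The paper handles this with two dedicated lemmas: a concentration result (Lemma~\ref{lemma:main-card}) showing that thresholding the inner products at a deterministic level $x_\ell$ recovers $\mathcal{I}_\ell$ up to $o(k_\ell)$ indices, and a perturbation lemma (Lemma~\ref{lemma:almost-full-intesect}) showing $|\mathcal{I}_\ell\cap\overline{\mathcal{I}}_\ell|\ge k_\ell-O(n^{1-\alpha_{\ell-1}/4})$ on the inductive hypothesis $|J_{\ell-1}|\le n^{1-\alpha_{\ell-1}}$. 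Only after the adaptive index sets are shown to nearly coincide does the per-stage flip calculation (your anti-concentration step, formalized as Lemma~\ref{lemma:kr-n0-p-1-to-n0-p-n1}) become applicable. The high-level intuition you describe---signs of Gaussians with large margins are robust---is the right starting point, but the two issues above are exactly where the work lies.
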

Investigating the stability of the Bansal-Spencer algorithm~\cite{bansalspenceronline} and whether the \texttt{ABP} also exhibits the OGP are among several open  questions we discuss in Section~\ref{sec:open}.
\subsection{Background and Related Work} 
\paragraph{Statistical-to-Computational Gaps.} As we noted, the \texttt{SBP} model exhibits a \emph{statistical-to-computational gap} (\texttt{SCG}): a gap between what the existential results  
guarantee (and thus what can be found with unbounded computational power), and what algorithms with bounded computational power (such as polynomial-time algorithms) can promise. Such \texttt{SCG}s are a ubiquitous feature in many algorithmic problems (with \emph{random inputs}) appearing in high-dimensional statistical inference tasks and in the study of random combinatorial structures. A partial (and certainly incomplete) list of problems with an \texttt{SCG} includes constraint satisfaction problems~\cite{mezard2005clustering,achlioptas2006solution,achlioptas2008algorithmic}, optimization problems over random graphs~\cite{gamarnik2014limits,coja2015independent,gamarnik2017} and spin glass models~\cite{chen2019suboptimality,gamarnik2020low,gamarnikjagannath2021overlap,gamarnik2021circuit}, number partitioning problem~\cite{gamarnik2021algorithmic}, principal component analysis~\cite{berthet2013computational,lesieur2015mmse,lesieur2015phase}, and the ``infamous" planted clique problem~\cite{jerrum1992large,deshpande2015improved,barak2019nearly}; see also the introduction of~\cite{gamarnik2021algorithmic}, the recent survey~\cite{gamarnik2021overlap}; and the references therein. 

Unfortunately, due to the so-called \emph{average-case} nature of these problems, the standard NP-completeness theory often fails to establish hardness for those problems even under the assumption $P\ne NP$. (It is worth noting though that a notable exception to this is when the problem exhibits \emph{random self-reducibility}, see e.g.~\cite{GK-SK-AAP} for such a hardness result regarding a spin glass model, conditional on a weaker assumption $P\ne \#P$.) Nevertheless, a very fruitful (and still active) line of research proposed certain forms of \emph{rigorous evidences} of algorithmic hardness for such average-case problems. These approaches include the failure of Monte Carlo Markov Chain methods~\cite{jerrum1992large,dyer2002counting}, low-degree methods and failure of low-degree polynomials~\cite{hopkins2018statistical,kunisky2019notes,gamarnik2020low,wein2020optimal,bresler2021algorithmic}, Sum-of-Squares~\cite{hopkins2015tensor,hopkins2017power,raghavendra2018high,barak2019nearly} and Statistical Query~\cite{kearns1998efficient, diakonikolas2017statistical,feldman2017statistical,feldman2018complexity} lower bounds, failure of the approximate message passing algorithm (an algorithm that is information-theoretically optimal for certain important problems, see e.g.~\cite{deshpande2014information,deshpande2017asymptotic})~\cite{zdeborova2016statistical,bandeira2018notes}; and the reductions from the planted clique problem~\cite{berthet2013computational,brennan2018reducibility,brennan2019optimal}, just to name a few. Yet another very promising such approach is through the \emph{intricate geometry} of the problem, via the so-called \emph{Overlap Gap Property} (OGP).

\paragraph{Overlap Gap Property (OGP).} Implicitly discovered by M{\'e}zard, Mora, and Zecchina~\cite{mezard2005clustering} and Achlioptas and Ricci-Tersenghi~\cite{achlioptas2006solution} (though coined later in~\cite{gamarnik2018finding}), the OGP approach leverages insights from the statistical physics to form a rigorous link between the intricate geometry of the solution space and formal algorithmic hardness. Informally, the OGP is a topological disconnectivity property, and states (in the context of a random combinatorial optimization problem, say over $\bincube$) that (w.h.p.\,over the randomness) any two near-optimal $\sigma_1,\sigma_2\in\bincube$ are either ``close" or ``far" from each other: there exists $0<\nu_1<\nu_2<1$ such that $n^{-1}\ip{\sigma_1}{\sigma_2}\in[0,\nu_1]\cup [\nu_2,1]$. That is, their (normalized) overlaps do not admit \emph{intermediate values}; and no two near-optimal solutions of intermediate distance can be found. It has been shown (see below) that  the OGP is a rigorous barrier for large classes of algorithms. See~\cite{gamarnik2021overlap} for a survey on OGP.

\paragraph{Algorithmic Implications of OGP.} The line of research relating the OGP to algorithmic hardness was initiated by Gamarnik and Sudan~\cite{gamarnik2014limits,gamarnik2017}. They consider the problem of finding a large independent set in the sparse random graphs with average degree $d$. It is known, see e.g.~\cite{frieze1990independence,frieze1992independence,bayati2010combinatorial}, that in the double limit (first sending $n\to\infty$, then letting $d\to\infty$), the largest independent set of this graph is of size $2\frac{\log d}{d}n$. On the other hand, the best known polynomial-time algorithm~\cite{karp1976probabilistic} (a very simple greedy protocol) returns an independent set that is half optimal, namely of size $\frac{\log d}{d}n$. In order to reconcile this apparent \texttt{SCG}, Gamarnik and Sudan study the space of all large independent sets. They establish that any two independent sets of size greater than $(1+1/\sqrt{2})\frac{\log d}{d}n$ exhibit OGP. By leveraging this, they show, through a contradiction argument, that \emph{local algorithms} (known as the \emph{factors of i.i.d.}) fail to find an independent set of size greater than $(1+1/\sqrt{2})\frac{\log d}{d}n$. Subsequent research (again via the lens of OGP) extended this hardness result to the class of low-degree polynomials~\cite{gamarnik2020low}. The extra ``oversampling" factor, $1/\sqrt{2}$, was removed by inspecting instead the the overlap pattern of many large independent sets (rather than the pairs), therefore establishing hardness all the way down to the algorithmic threshold. This was done by Rahman and Vir{\'a}g~\cite{rahman2017local} for \emph{local algorithms}, and by Wein~\cite{wein2020optimal} for \emph{low-degree polynomials}; and is also our focus here (see below). A list of problems where the OGP is leveraged to rule out certain classes of algorithms includes optimization over random graphs and spin glass models~\cite{gamarnikjagannath2021overlap,gamarnik2020low,gamarnik2021circuit,huang2021tight}, number partitioning problem~\cite{gamarnik2021algorithmic}, random constraint satisfaction problems~\cite{gamarnik2017performance,bresler2021algorithmic}.
\paragraph{Multi OGP ($m-$OGP).} As we just mentioned,  it was previously observed that by considering more intricate overlap patterns, one can potentially lower the (algorithmic) phase transition points further. This idea was employed for the first time by Rahman and Vir{\'a}g~\cite{rahman2017local} in the context of the aforementioned independent set problem. They managed to ``shave off" the extra $1/\sqrt{2}$ factor present in the earlier result by Gamarnik and Sudan~\cite{gamarnik2014limits,gamarnik2017}, and reached all the way down to the algorithmic threshold, $\frac{\log d}{d}n$. In a similar vein, Gamarnik and Sudan~\cite{gamarnik2017performance} studied the overlap structure of $m-$tuples $\sigma^{(i)}\in\bincube$, $1\le i\le m$ of satisfying assignments in the context of the Not-All-Equal (NAE) $k-$SAT problem. By showing the presence of OGP for $m-$tuples of nearly equidistant points (in $\mathcal{B}_n$), they established nearly tight hardness for \emph{sequential local algorithms}: their results match the computational threshold modulo factors that are polylogarithmic (in $k$). A similar overlap pattern (for $m-$tuples consisting of nearly equidistant points) was also considered by Gamarnik and K{\i}z{\i}lda\u{g}~\cite{gamarnik2021algorithmic} in the context of random number partitioning problem (NPP), where they established hardness well below the existential threshold. (It is worth noting that~\cite{gamarnik2021algorithmic} considers $m-$tuples where $m$ itself also grows in $n$, $m=\omega_n(1)$.)

More recently, $m-$OGP for more intricate forbidden patterns were considered to establish formal hardness in other settings. In particular, by leveraging $m-$OGP, Wein~\cite{wein2020optimal} showed that low-degree polynomials fail to return a large independent set (in sparse random graphs) of size greater than $\frac{\log d}{d}n$, thereby strengthening the earlier result by Gamarnik, Jagannath, and Wein~\cite{gamarnik2020low}. Wein's work establishes the \emph{ensemble} variant of OGP (an idea emerged originally in~\cite{chen2019suboptimality}): he considers $m-$tuples of independent sets where each set do not necessarily come from the same random graph, but rather from correlated random graphs. The ensemble variant of OGP was also considered in~\cite{gamarnik2021algorithmic} for the NPP. While technically more involved to establish, it appears that the ensemble $m-$OGP can be leveraged to rule out virtually any \emph{stable algorithm} (appropriately defined); and will also be our focus here. More recently, by leveraging the ensemble $m-$OGP; Bresler and Huang~\cite{bresler2021algorithmic} established nearly tight low-degree hardness results for the random $k-$SAT problem: they show that low-degree polynomials fail to return a satisfying assignment when the clause density is only a constant factor off by the computational threshold. In yet another work, Huang and Sellke~\cite{huang2021tight} construct a very intricate forbidden structure consisting of an ultrametric tree of solutions, which they refer to as the \emph{branching OGP}. By leveraging this branching OGP, they rule out overlap concentrated algorithms\footnote{A class that captures $O(1)$ iterations of gradient descent, approximate message passing; and Langevin Dynamics run for $O(1)$ time.} at the algorithmic threshold for the problem of optimizing mixed, even $p-$spin model Hamiltonian. 

\subsection{Open Problems}\label{sec:open}

\paragraph{Location of the Algorithmic Threshold.} 
We establish in Theorem~\ref{thm:m-OGP-small-kappa} that the \texttt{SBP} exhibits  $m-$OGP if $\alpha = \Omega\bigl(\kappa^2\log_2\frac1\kappa\bigr)$. On the other hand, we have per Corollary~\ref{thm:alpha-star-positive} that the Bansal-Spencer algorithm~\cite{bansalspenceronline} works when $\alpha = O(\kappa^2)$. In light of these, we make the following conjecture:
\begin{conjecture}\label{conj:tilde}
As $\kappa\to 0$, the algorithmic threshold for the \texttt{SBP} is at $\widetilde{\Theta}(\kappa^2)$. 
\end{conjecture}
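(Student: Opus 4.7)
The plan is two-sided: establish matching $\widetilde{\Omega}(\kappa^2)$ and $\widetilde{O}(\kappa^2)$ bounds on the algorithmic threshold. The lower bound is immediate from Corollary~\ref{thm:alpha-star-positive}: the Bansal--Spencer algorithm~\cite{bansalspenceronline} already succeeds at $\alpha = O(\kappa^2)$, placing the threshold above $c\kappa^2$ for some absolute $c>0$. The burden is therefore the hardness side, namely to rule out every polynomial-time algorithm at densities $\alpha = \omega(\kappa^2 \cdot \mathrm{polylog}(1/\kappa))$.

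For this hardness side, the strategy is to extend the $m$-OGP machinery of the paper along two axes. First, I would push the OGP threshold itself down from $\Omega(\kappa^2 \log_2(1/\kappa))$ to $\widetilde{\Theta}(\kappa^2)$ by letting $m = m(\kappa) \to \infty$ as $\kappa \to 0$, in the spirit of Wein~\cite{wein2020optimal} for independent sets and Gamarnik--K{\i}z{\i}lda\u{g}~\cite{gamarnik2021algorithmic} for number partitioning: in the first-moment count of near-equidistant $m$-tuples, the per-solution entropy $1 - \alpha/\alpha_c(\kappa)$ competes with a cross-solution interaction term, and as $m$ grows the latter should dominate already at $\alpha \asymp \kappa^2$, erasing the logarithmic slack. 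A complementary route would be to construct a branching / ultrametric forbidden structure in the style of Huang--Sellke~\cite{huang2021tight}. Second, to broaden the class of algorithms beyond Definition~\ref{def:admissible-alg}, I would transport the ensemble $m$-OGP implication of Bresler--Huang~\cite{bresler2021algorithmic} to rule out low-degree polynomial algorithms, and would separately sharpen Theorem~\ref{thm:online-alg-fail} to track the density dependence quantitatively, so that Bansal--Spencer-style online procedures are also excluded at the matching threshold; universality via Theorem~\ref{thm:ogp-universality} would then extend the conclusion beyond the Gaussian disorder ensemble.

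The main obstacle is the logarithmic gap itself, and concretely the control of the multivariate slab probability
\[
\pr\Bigl[\bigl\|\M\sigma^{(j)}\bigr\|_\infty \le \kappa\sqrt n \text{ for every } 1 \le j \le m\Bigr]
\]
to second order in the overlaps $n^{-1}\ip{\sigma^{(i)}}{\sigma^{(j)}}$. In the proof of Theorem~\ref{thm:m-OGP-small-kappa} this probability is bounded only to leading order, which is precisely what produces the stray $\log_2(1/\kappa)$ factor. Removing it calls for a quantitative Gaussian comparison inequality for joint slab events on the annulus of admissible overlap profiles, uniformly in $m = m(\kappa)$. A secondary difficulty specific to the \SBP is that, unlike in random $k$-SAT or independent sets, each constraint is a symmetric slab rather than a half-space, so a direct appeal to Slepian's lemma is unavailable; a Gaussian interpolation along overlaps, in the spirit of the cavity method of Talagrand~\cite{talagrand2011mean}, appears to be the right substitute. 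If this analytical attack stalls, a fallback is to prove directly that the Bansal--Spencer algorithm is itself stable in the sense of Definition~\ref{def:admissible-alg} (an open question flagged in Section~\ref{sec:open}), since then Theorem~\ref{thm:stable-hardness} combined with the current $m$-OGP threshold would already pin down the algorithmic threshold within the stable class up to the conjectured polylogarithmic factors.
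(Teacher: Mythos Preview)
The statement you are attempting to prove is a \emph{conjecture} in the paper, not a theorem: it is explicitly listed under open problems in Section~\ref{sec:open}, and the paper offers no proof of it. What you have written is therefore a research strategy for an open problem, not a proof to be compared against the paper's argument, and you yourself correctly flag the logarithmic gap as the central obstruction.

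Even as a research plan, there are two genuine gaps. First, the conjecture asserts that \emph{no polynomial-time algorithm} succeeds above $\widetilde{\Theta}(\kappa^2)$, whereas every technique you invoke---$m$-OGP, branching OGP, low-degree hardness in the style of~\cite{bresler2021algorithmic}, online hardness---rules out only a \emph{specific class} of algorithms (stable, overlap-concentrated, low-degree, online). Establishing the OGP at $\widetilde{\Theta}(\kappa^2)$ would yield hardness for those classes, but there is no known implication from any form of OGP to hardness for arbitrary polynomial-time algorithms; your plan, if successful, would prove a weaker statement than the conjecture as written. Second, your proposal to shave off the $\log_2(1/\kappa)$ factor by sending $m\to\infty$ runs directly into the limitation the paper itself spells out in Section~\ref{sec:m-ogp-best-possible}: for the near-equidistant $m$-tuple structure underlying Theorem~\ref{thm:m-OGP-small-kappa}, the first-moment exponent $\Upsilon(\beta,\alpha)$ is shown to be \emph{strictly positive} whenever $\alpha = o(\kappa^2\log_2(1/\kappa))$, for \emph{every} choice of $\beta$ and $m$. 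So growing $m$ within that structure cannot help; one must change the forbidden structure itself (as the paper also notes), and no such construction has yet been carried out for the \texttt{SBP}.
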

In particular, we conjecture that up to factors that are polylogarithmic in $\frac1\kappa$, the Bansal-Spencer algorithm is the best possible within the class of efficient algorithms. That is, up to polylogarithmic factors no polynomial-time algorithm succeeds above the $m-$OGP threshold. An interesting question is whether the  $\log_2\frac1\kappa$ factor is necessary or it can be `shaved off'. We believe it might be possible to remove this factor by considering a more intricate overlap pattern, e.g.\,similar to those considered in~\cite{wein2020optimal,bresler2021algorithmic,huang2021tight}.

We now make Conjecture~\ref{conj:tilde} more precise. Given an $m\in\mathbb{N}$ and $\kappa>0$, let $\alpha_m^*(\kappa)$ be the smallest subcritical density such that the \texttt{SBP} exhibits $m-$OGP with appropriate parameters when $\alpha\ge \alpha_m^*$. 
We conjecture that the $m\to\infty$ limit of the $m-$OGP threshold marks the true algorithmic threshold: for every $\epsilon>0$ and $\kappa$ small enough, there do not exist polynomial-time algorithms for the \texttt{SBP} when $\alpha \ge (1+\epsilon)\lim_{m\to\infty}\alpha_m^*(\kappa)$. See Conjecture~\ref{conj-alpha-m} for details. This conjecture is backed up by the evidence that for many random computational problems including random $k-$SAT~\cite{bresler2021algorithmic},  independent sets in sparse random graphs~\cite{rahman2017local,wein2020optimal}, and mixed even $p-$spin model~\cite{huang2021tight}, the $m-$OGP matches or nearly matches the best known algorithmic threshold. 

Abbe, Li and Sly ask in~\cite[Question~1]{abbe2021binary} whether the algorithmic threshold for the \texttt{SBP}  coincides with the threshold for the existence of a `wide web': a cluster of solutions with maximum possible  diameter $n$.   One one hand, the existence of a wide web  rules out the 2-OGP: pairs of solutions of every possible overlap exist.  It would be very interesting to determine whether the threshold for existence of the wide web coincides with the conjectured  algorithmic threshold of $\tilde \Theta(\kappa^2)$ above, or even more precisely the limiting $m$-OGP threshold $\lim_{m\to\infty}\alpha_m^*$ (at least asymptotically as $\kappa \to 0$).

\paragraph{The Asymmetric Model.} As we noted earlier, the \texttt{ABP} is  more challenging from a mathematical perspective, and some of its basic properties are still far from being rigorously understood. In particular, even the very existence of a sharp phase transition and the frozen 1-RSB picture---both rigorously known to hold for the \texttt{SBP}---remain open. 

The \texttt{ABP} also exhibits a statistical-to-computational gap. On one hand, Kim-Roche algorithm~\cite{kim1998covering} finds solutions at low enough densities, specifically when $\alpha<0.005$. On the other hand, the result of Ding and Sun~\cite{ding2019capacity}  shows that solutions do exist (with probability bounded away from $0$) when $\alpha< \alpha_{\rm KM}(0) \approx 0.83$. It would be interesting to show that the \texttt{ABP}  exhibits $m-$OGP for some densities $\alpha < \alpha_{\rm KM}(0)$. To understand the statistical-to-computational gap of \texttt{ABP} further, it would be interesting to explore the model in the regime $\kappa\to\infty$ and investigate the $m-$OGP threshold and threshold for the existence of efficient algorithms. Further, there are other perceptron models one could explore in this regard, e.g.\,the \emph{$U-$function binary perceptron} introduced in~\cite{aubin2019storage}.

\paragraph{Stability of Other Algorithms.} We established in Theorem~\ref{thm:kr-stable} that the Kim-Roche algorithm for \texttt{ABP} is stable. In light of this, we make the following conjecture regarding the \texttt{SBP}:
\begin{conjecture}\label{conj:stable}
There exists a stable algorithm that finds a solution for the \texttt{SBP} w.h.p.\,when $\alpha=O(\kappa^2)$. 
\end{conjecture}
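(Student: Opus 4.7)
}
The natural candidate algorithm is the Bansal--Spencer online algorithm~\cite{bansalspenceronline}, which by Corollary~\ref{thm:alpha-star-positive} succeeds for the \texttt{SBP} when $\alpha = O(\kappa^2)$. My plan is to establish that this algorithm, or a mild smoothing of it, is stable in the sense of Definition~\ref{def:admissible-alg}. As in the stability proof for the Kim--Roche algorithm (Theorem~\ref{thm:kr-stable}), the strategy is to couple two executions on correlated disorders $\M$ and $\M'$, obtained by writing $\M' = \sqrt{1-\delta}\,\M + \sqrt{\delta}\,W$ with $W$ an independent Gaussian matrix, and then to bound the expected Hamming distance between the two output vectors by a function of $\delta$ that vanishes as $\delta\to 0$ (uniformly in $n$), before upgrading the bound to a high-probability statement.

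The first step is to set up the coupling: process the columns of $\M$ and $\M'$ in the same order, maintaining two running discrepancy vectors $s_t, s_t' \in \R^M$. At each step $t$ the Bansal--Spencer rule chooses $\sigma_t \in \{-1,+1\}$ so as to approximately balance $s_t$ (and similarly $\sigma_t'$ using $s_t'$). To avoid brittle tie-breaking, I would introduce a small explicit randomization (e.g.\ a random threshold, or a sign chosen with a smoothed probability depending on $s_t$) so that the conditional probability of $\sigma_t \ne \sigma_t'$ depends Lipschitz-continuously on $\|s_t - s_t'\|$.

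The second step is to control the growth of $\|s_t - s_t'\|$ along the online process via a Doob-martingale decomposition. Because each fresh column of $W$ is independent of the prior history, the increment of $s_t - s_t'$ has conditional variance controlled by $\delta$ together with the current discrepancy in decisions, suggesting a Gr\"onwall-type bound of the form $\E{\|s_t - s_t'\|^2} \le \mathrm{poly}(n)\cdot\delta$ for all $t \le n$. Combined with the Lipschitz smoothed decision rule, this would yield an expected disagreement of order $\sqrt{\delta}\cdot n$ between $\sigma$ and $\sigma'$, and a standard Azuma--Hoeffding argument applied to the martingale of decisions would upgrade this to a high-probability bound of the same order.

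The main obstacle lies squarely in the second step: quantifying and controlling the divergence of the two running discrepancy vectors through a sequential online process. Unlike multi-stage majority algorithms (where decisions aggregate many independent signals and are therefore naturally Lipschitz), Bansal--Spencer makes sequential decisions each of which may depend sensitively on the accumulated history, so that a single early disagreement can in principle cascade into exponentially many later disagreements. A promising route is to leverage the fact that the existing Gaussian-input analysis of Bansal--Spencer already establishes $\|s_t\|_\infty \lesssim \kappa\sqrt{n}$ with high probability throughout the run; a parallel control on $\|s_t - s_t'\|_\infty$, coupled to a union bound over $t \le n$, should suffice to close the loop. Should the direct stability proof of Bansal--Spencer fail, an alternative is to replace it with a genuinely parallel, noise-insensitive algorithm attaining the same $O(\kappa^2)$ density, such as a fixed number of smoothed gradient steps on a softened discrepancy energy, or a bounded-depth polynomial approximation of the Bansal--Spencer output; either construction would be stable by design at the cost of possibly worse polylogarithmic factors in $\kappa$.
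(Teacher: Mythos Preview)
This statement is a \emph{conjecture} in the paper, not a theorem: the paper provides no proof, and explicitly lists it among the open problems in Section~\ref{sec:open}. So there is no ``paper's own proof'' to compare against. What you have written is a research plan, not a proof, and you essentially acknowledge as much when you write ``The main obstacle lies squarely in the second step.''

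The paper in fact discusses precisely your proposed route and flags the difficulty: immediately after stating the conjecture it says that proving stability of the Bansal--Spencer algorithm would resolve it, ``but this seems challenging: the presence of a certain non-linear potential function \ldots\ renders the stability analysis difficult.'' Your diagnosis of the obstacle --- that sequential decisions depend sensitively on accumulated history, so a single early disagreement can cascade --- is consistent with this, but the paper is pointing at something more specific: the Bansal--Spencer sign rule is driven by the gradient of a convex potential $\Phi(s_t)$ that is highly nonlinear in $s_t$, so your hoped-for Lipschitz/Gr\"onwall control on $\|s_t - s_t'\|$ does not follow from any known analysis. In particular, the step where you assert ``the increment of $s_t - s_t'$ has conditional variance controlled by $\delta$ together with the current discrepancy in decisions, suggesting a Gr\"onwall-type bound'' is exactly the unproved (and possibly false) claim; a single sign flip changes $s_t - s_t'$ by a full column, not by something proportional to $\delta$, and there is no mechanism in the Bansal--Spencer analysis that prevents this from amplifying.

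Your fallback suggestions (smoothed gradient descent on a softened energy, or a low-degree polynomial surrogate) are also open: no such algorithm is currently known to reach density $\Theta(\kappa^2)$ for the \texttt{SBP}. The paper's own suggested intermediate step is instead to prove stability of the Abbe--Li--Sly variant of Kim--Roche, which works at the weaker density $O(\kappa^{10})$ but is structurally closer to the algorithm whose stability the paper does establish.
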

In particular, proving  stability of the Bansal-Spencer algorithm would resolve Conjecture~\ref{conj:stable}, but this seems challenging: the presence of a certain non-linear potential function (see~\cite[Equation~2.5]{bansalspenceronline}) renders the stability analysis difficult.

The algorithm of~\cite{abbe2021binary} is a  variant of the Kim-Roche algorithm that works for the \texttt{SBP} for  $\alpha=O\bigl(\kappa^{10}\bigr)$. Proving the stability of this algorithm would be an interesting first step towards resolving Conjecture~\ref{conj:stable}.

\paragraph{Broader Research Agendas on the OGP.} As mentioned above, the OGP is a provable barrier for a broad class of algorithms for many random computational problems. A list of such algorithms includes local/sequential local algorithms, Monte Carlo Markov Chain (MCMC) methods, low-degree polynomials, Langevin dynamics, approximate message passing type algorithms, low-depth circuits, and stable algorithms in general.  In many random computational problems (like $k$-SAT and independent sets) the OGP coincides with the threshold for the existence of known efficient search algorithms.  One might then conjecture (as we do here) that the OGP marks the true algorithmic threshold.  It would thus be very surprising and very interesting to find a case  where efficient algorithms succeed in the face of the OGP\footnote{Beyond those cases where algebraic techniques like Gaussian elimination can find solutions to `noiseless' problems like solving random linear equations.}. While random $k$-SAT, independent sets in random graphs, and other random CSP's have been studied for decades without finding such algorithms, algorithms for perceptron models have not been studied as extensively, especially not in the limiting regime $\kappa \to 0$ we focus on here, and thus this might be fruitful direction to pursue. 

\subsection{Paper Organization and Notation}
\paragraph{Paper Organization.} The rest of the paper is organized as follows. Our OGP results are stated in Section~\ref{sec:main-landscape}. In particular, 
we establish $2-$OGP and $3-$OGP for the high $\kappa$ case ($\kappa=1$) in Section~\ref{sec:high-kappa}; and the $m-$OGP for the regime $\kappa\to 0$ in Section~\ref{sec:low-kappa}. We then take an algorithmic route, and establish our main hardness result in Section~\ref{sec:alg-hard}; and formulate a conjecture pertaining the true algorithmic threshold in Section~\ref{sec:alg-conj}. In Section~\ref{sec:kim-roche1} we describe the Kim-Roche algorithm and show that it is stable. We record certain limitations of our approach in Section~\ref{sec:m-ogp-best-possible}. We show in Section~\ref{sec:future} that our OGP results enjoy \emph{universality} and extend beyond the Gaussian disorder. We provide complete proofs in Section~\ref{sec:proofs}. Finally in Appendix~\ref{appendix:matlab}, we provide a MATLAB code for verifying Lemma~\ref{asm:negativity} using which we establish $2-$OGP and $3-$OGP for $\kappa=1$.

\paragraph{Notation.} For any $n\in\N$, $[n]\triangleq \{1,2,\dots,n\}$. The binary cube $\{-1,1\}^n$ is denoted by $\bincube$. For any set $A$, $|A|$ denotes its cardinality. For any $r>0$ and $x\in\R$; $\exp_r(x)$ and $\log_r(x)$ denote respectively the exponential and logarithm functions base $r$. For any $v=(v_i:1\le i\le n)\in\R^n$ and $p>0$, $\|v\|_p\triangleq \bigl(\sum_{1\le i\le n}|v_i|^p\bigr)^{1/p}$, and $\|v\|_\infty=\max_{1\le i\le n}|v_i|$. For any $v,v'\in\R^n$, $\ip{v}{v'}\triangleq \sum_{1\le i\le n}v_iv_i'$ and $\Overlap{v}{v'}\triangleq n^{-1}\ip{v}{v'}$. For any $\sigma,\sigma'\in\bincube$, $d_H(\sigma,\sigma')$ denotes their Hamming distance. For $k\in\mathbb{N}$, $\ones\in \R^k$ denotes the vector of all ones (where the dimension will be clear from the context); and $I_k$ denotes the $k\times k$ identity matrix.  For any $x\in\R$, $\lfloor x\rfloor$ and $\lceil x\rceil$ respectively denote its floor and ceil. For $p\in[0,1]$, $h(p)\triangleq -p\log_2 p-(1-p)\log_2(1-p)$ is the binary entropy function (logarithm base two). For any $n\in\N$, $\boldsymbol{\mu}\in\R^n$ and $\boldsymbol{\Sigma}\in\R^{n\times n}$; $\cN(\boldsymbol{\mu},\boldsymbol{\Sigma})$  denotes the $n-$dimensional random vector having multivariate normal distribution with mean $\boldsymbol{\mu}$ and covariance $\boldsymbol{\Sigma}$. For any event $\mathcal{E}$, $\ind\{\mathcal{E}\}$ denotes its indicator. Given a matrix $\M$; $\|\M\|_F$, $\|\M\|_2$, $\sigma(\M)$, $\sigma_{\min}(\M)$, $\sigma_{\max}(\M)$, $|\M|$ and ${\rm trace}(\M)$ denote, respectively, its Frobenius norm, spectral norm, spectrum (that is, the set of its eigenvalues), smallest and largest singular values, determinant, and trace. A graph $\mathbb{G}=(V,E)$ is a collection of vertices $V$ together with some edges $(v,v')\in E$ between $v,v'\in V$. We consider herein only the simple graphs, namely those that are undirected with no loops. A graph $\mathbb{G}=(V,E)$ is called a \emph{clique} if for every distinct $v,v'\in V$, $(v,v')\in E$. We denote the clique on $m-$vertices ($m\in\mathbb{N}$) by $K_m$. A subset $S\subset V$ of vertices (of a $\mathbb{G}=(V,E)$) is called an \emph{independent set} if for every distinct $v,v'\in V$, $(v,v')\notin E$. The largest cardinality of such an independent set is called the \emph{independence number} of $\mathbb{G}$, denoted $\alpha(\mathbb{G})$. A $q-$coloring of a graph $\mathbb{G}=(V,E)$ is a function $\varphi:E\to\{1,2,\dots,q\}$ assigning to each $e\in E$ one of $q$ available colors. 

Throughout the paper, we employ the standard Bachmann-Landau asymptotic notation, e.g. $\Theta(\cdot),O(\cdot),o(\cdot)$, and $\Omega(\cdot)$. If there is no subscript, the asymptotic is with respect to $n\to\infty$. In the case where we consider asymptotics other than $n\to\infty$, we reflect this by a subscript: for instance, if $f$ is a function such that $f(\kappa)\to \infty$ as $\kappa\to 0$, we denote this by $f=\omega_\kappa(1)$. To keep our exposition clean, we omit floor/ceiling signs whenever appropriate. 
\section{OGP in the Symmetric Binary Perceptron}\label{sec:main-landscape}
In this section, we establish landscape results, dubbed as  \emph{ensemble $m-$OGP}, concerning the overlap structures of $m-$tuples $\left(\sigma^{(i)}:1\le i\le m\right)$, $\sigma^{(i)}\in\bincube$, that satisfy ``box constraints" with respect to potentially correlated instances of Gaussian disorder. 

\subsection{Technical Preliminaries}
We next formalize the notion of correlated instances through an appropriate interpolation scheme. 
\begin{definition}\label{def:ogp-set}
Fix a $\kappa>0$, and recall
\[
\alpha_c(\kappa) = -\frac{1}{\log_2 \mathbb{P}(|\mathcal{N}(0,1)|\le \kappa)}.
\]
Let $0<\alpha<\alpha_c(\kappa)$, $m\in\mathbb{N}$, $0<\eta<\beta<1$, and $\mathcal{I}\subset [0,\pi/2]$. Set $M=\lfloor n\alpha\rfloor$ and suppose that $\M_i\in\R^{M\times n}$, $0\le i\le m$, is a sequence of i.i.d.\,random matrices, each having i.i.d.\,$\mathcal{N}(0,1)$ coordinates. Denote by $\mathcal{S}_\kappa(\beta,\eta,m,\alpha,\mathcal{I})$ the set of all $m-$tuples $\bigl(\sigma^{(i)}:1\le i\le  m\bigr)$, $\sigma^{(i)}\in\bincube$, satisfying the following conditions.
\begin{itemize}
    \item[(a)] {\bf (Pairwise Overlap Condition)} For any $1\le i<j\le m$,
    \[
    \beta-\eta\le\Overlap{\sigma^{(i)}}{\sigma^{(j)}}\le \beta,
    \]
    where $\Overlap{\sigma^{(i)}}{\sigma^{(j)}}\triangleq  n^{-1}\ip{\sigma^{(i)}}{\sigma^{(j)}}$. 
    \item[(b)] {\bf (Rectangular Constraints)} There exists $\tau_i\in\mathcal{I}$, $1\le i\le m$, such that
    \[
   \bigl\|\M_i(\tau_i)\sigma^{(i)}\bigr\|_\infty\le \kappa\sqrt{n},\quad 1\le i\le m
    \]
    where  
    \begin{equation}\label{eq:interpolate-matrix}
        \M_i(\tau_i) = \cos(\tau_i)\M_0 + \sin(\tau_i)\M_i\in\R^{M\times n},\quad 1\le i\le m.
       \end{equation}
\end{itemize}
\end{definition}
The interpretations of the parameters appearing in Definition~\ref{def:ogp-set} are as follows. The parameter $m$ is the size of the tuples we inspect;  $\kappa$ is the constraint threshold; and $\alpha$ is the constraint density. That is, we consider $M=\lfloor n\alpha\rfloor$ random constraints. Parameters $\beta$ and $\eta$ control the (forbidden) region of pairwise overlaps. Finally, the index set, $\mathcal{I}$, is used for generating correlated instances of random constraints via interpolation $\M_i(\tau_i)$ in~\eqref{eq:interpolate-matrix}, $\tau_i\in\mathcal{I}$. This is necessary to study the ensemble OGP, see below.

As a concrete example to Definition~\ref{def:ogp-set}, consider the toy setting $m=2$ and $\mathcal{I}=\{0\}$. In this case, $S_\kappa\bigl(\beta,\eta,2,\alpha,\{0\}\bigr)$ is simply the set of all pairs $(\sigma_1,\sigma_2)\in\bincube\times \bincube$ such that (a) $\beta-\eta \le n^{-1}\ip{\sigma_1}{\sigma_2}\le \beta$ and (b) $\bigl\|\mathcal{M}\sigma_i\bigr\|_\infty\le \kappa\sqrt{n}$ for $i=1,2$; where $\mathcal{M}\in\R^{\lfloor \alpha n\rfloor\times n}$ is a random matrix with i.i.d.\,standard normal entries. 

\subsection{Landscape Results: High $\kappa$ Regime}\label{sec:high-kappa}
Our first focus is on the regime where $\kappa$ is \emph{large}. While we set $\kappa=1$ (thus $\alpha_c(\kappa)$ is approximately $1.8159$) for simplicity; our results extend easily to any fixed $\kappa>0$. In this case, we also drop the subscript $\kappa$ appearing in Definition~\ref{def:ogp-set}, and simply use the notation $\mathcal{S}(\beta,\eta,m,\alpha,\mathcal{I})$ to denote $\mathcal{S}_1(\beta,\eta,m,\alpha,\mathcal{I})$.

Our first result establishes $2-$OGP above $\alpha\ge 1.71$.
\begin{theorem}\label{thm:2-OGP}
Let $1.71\le \alpha\le \alpha_c(1)\approx 1.8159$. Then, there exists $0<\eta_2^*<\beta_2^*<1$ and a constant $c^*>0$ such that the following holds. Fix any $\mathcal{I}\subset [0,\pi/2]$ with $|\mathcal{I}|\le \exp_2\bigl(c^*n\bigr)$. Then,
\[
\mathbb{P}\Bigl[\mathcal{S}\bigl(\beta_2^*,\eta_2^*,2,\alpha,\mathcal{I}\bigr)\ne\varnothing\Bigr] \le \exp_2\bigl(-\Theta(n)\bigr).
\]
\end{theorem}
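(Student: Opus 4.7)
The plan is a first-moment (Markov) argument on $N := |\mathcal{S}(\beta_2^*, \eta_2^*, 2, \alpha, \mathcal{I})|$. Since $\mathbb{P}[N \ge 1] \le \mathbb{E}[N]$, it suffices to show $\mathbb{E}[N] \le 2^{-\Theta(n)}$, and the bound will come from a straightforward union bound over $(\tau_1, \tau_2) \in \mathcal{I}^2$ and over ordered pairs $(\sigma^{(1)}, \sigma^{(2)}) \in \bincube^2$ whose overlap lies in $[\beta_2^* - \eta_2^*, \beta_2^*]$.

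For the per-configuration probability, fix $\tau_1, \tau_2 \in [0, \pi/2]$ and $(\sigma^{(1)}, \sigma^{(2)})$ with $\Overlap{\sigma^{(1)}}{\sigma^{(2)}} = \beta \in [\beta_2^* - \eta_2^*, \beta_2^*]$. Because $\M_0, \M_1, \M_2$ are independent with i.i.d.\ $\mathcal{N}(0, I_n)$ rows, a short Gaussian computation shows that the $M$ pairs
\[
\bigl(n^{-1/2}\ip{(\M_1(\tau_1))_j}{\sigma^{(1)}},\, n^{-1/2}\ip{(\M_2(\tau_2))_j}{\sigma^{(2)}}\bigr), \quad j = 1,\dots,M,
\]
are i.i.d.\ centered bivariate Gaussians with unit marginals and correlation $\rho = \cos(\tau_1)\cos(\tau_2)\,\beta \in [0, \beta_2^*]$. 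Hence
\[
\mathbb{P}\bigl[\|\M_1(\tau_1)\sigma^{(1)}\|_\infty \le \sqrt{n},\, \|\M_2(\tau_2)\sigma^{(2)}\|_\infty \le \sqrt{n}\bigr] = q(\rho)^M,
\]
where $q(\rho) := \mathbb{P}[|Z_1| \le 1,\, |Z_2| \le 1]$ for $(Z_1, Z_2)$ bivariate standard normal of correlation $\rho$. Since $q$ is non-decreasing on $[0, 1]$ (Gaussian correlation inequality for symmetric slabs), one has $q(\rho) \le q(\beta_2^*)$ uniformly in the admissible parameters.

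Combined with the entropy estimate $|\{(\sigma^{(1)}, \sigma^{(2)}) \in \bincube^2 : \Overlap{\sigma^{(1)}}{\sigma^{(2)}} \in [\beta_2^* - \eta_2^*,\beta_2^*]\}| \le 2^{n(1 + h((1 - \beta_2^* + \eta_2^*)/2)) + o(n)}$ and the hypothesis $|\mathcal{I}|^2 \le 2^{2 c^* n}$, the above yields the master inequality
\[
\mathbb{E}[N] \le 2^{n\bigl(2 c^* + 1 + h((1 - \beta_2^* + \eta_2^*)/2) + \alpha \log_2 q(\beta_2^*)\bigr) + o(n)}.
\]

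What remains, and is expected to be the main obstacle, is to exhibit a single overlap $\beta_2^* \in (0, 1)$ at which
\[
F(\beta) \,:=\, 1 + h\bigl((1 - \beta)/2\bigr) + \alpha \log_2 q(\beta) \,<\, 0
\]
uniformly for $\alpha \in [1.71, \alpha_c(1)]$. Note that $F(0) = F(1) = 0$ exactly at $\alpha = \alpha_c(1)$, so what is being sought is a strict \emph{interior dip} of the second-moment integrand of the \texttt{SBP}, and the numerical value $1.71$ is precisely the threshold at which this dip first materialises. I would verify this negativity by a computer-assisted analysis of $F$ on a fine grid in $\beta$ (only a one-dimensional scan, since $\kappa=1$ is fixed and the dependence on $\alpha$ is monotone in the relevant regime), mirroring Lemma~\ref{asm:negativity} and the companion MATLAB code of Appendix~\ref{appendix:matlab}. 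Once $\beta_2^*$ is in hand, continuity of $F$ permits choosing $\eta_2^*, c^* > 0$ small enough that the bracketed exponent stays below some $-\delta < 0$ on the overlap window, giving $\mathbb{E}[N] \le 2^{-\delta n + o(n)}$ and finishing the proof via Markov.
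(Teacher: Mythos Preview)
Your proposal is correct and follows essentially the same approach as the paper: first-moment method, union bound over $\mathcal{I}^2$, entropy count on pairs with prescribed overlap, reduction of the correlated-disorder probability to the single-instance case via monotonicity of $q(\rho)$ in $\rho$ (the paper invokes Sid\'ak's inequality, Theorem~\ref{thm:sidak}, for this step), and then the numerical verification that $f_2(\beta,1.71)<0$ for some $\beta$ via Lemma~\ref{asm:negativity}(b), followed by a continuity argument to pass from $\eta=0$ to small $\eta_2^*,c^*>0$. The only cosmetic difference is that you bound $q(\rho)\le q(\beta_2^*)$ in one step using monotonicity over the full window $[\beta_2^*-\eta_2^*,\beta_2^*]$, whereas the paper (in the $m=3$ template) first strips the $\cos(\tau_i)$ factors via Sid\'ak and then handles the $\eta_{ij}$ perturbation through a separate continuity lemma; your shortcut is perfectly legitimate in the bivariate case.
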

By considering the overlap structure of triples, one can further reduce the threshold (on $\alpha$) to approximately $1.667$ above which the overlap gap property takes place.
\begin{theorem}\label{thm:3-OGP}
Let $1.667\le \alpha\le \alpha_c(1)\approx 1.8159$. Then, there exists $0<\eta_3^*<\beta_3^*<1$ and a constant $c^*>0$ such that the following holds. Fix any $\mathcal{I}\subset [0,\pi/2]$ with $|\mathcal{I}|\le \exp_2\bigl(c^*n\bigr)$. Then,
\[
\mathbb{P}\Bigl[\mathcal{S}\bigl(\beta_3^*,\eta_3^*,3,\alpha,\mathcal{I}\bigr)\ne\varnothing\Bigr] \le \exp_2\bigl(-\Theta(n)\bigr).
\]
\end{theorem}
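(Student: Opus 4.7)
The plan is to prove Theorem~\ref{thm:3-OGP} by a first moment argument on $|\mathcal{S}(\beta_3^*,\eta_3^*,3,\alpha,\mathcal{I})|$, in direct analogy with (but strictly refining) the argument for Theorem~\ref{thm:2-OGP}. By discretizing the overlap window $[\beta_3^*-\eta_3^*,\beta_3^*]^3$ at scale $1/n$ and union bounding over $\mathcal{I}^3$, it suffices to fix angles $\tau_1,\tau_2,\tau_3\in\mathcal{I}$ and pairwise overlaps $\beta_{12},\beta_{13},\beta_{23}$, bound the expected number of triples $(\sigma^{(1)},\sigma^{(2)},\sigma^{(3)})\in\bincube^3$ with $\Overlap{\sigma^{(i)}}{\sigma^{(j)}}\approx\beta_{ij}$ satisfying the three rectangular constraints, and multiply the resulting bound by the $\exp_2(3c^*n)$ factor coming from $|\mathcal{I}|^3$.

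The per-row probability is obtained as follows. For any such triple and any row index $k$, the vector $\bigl(\langle \M_i(\tau_i)_k,\sigma^{(i)}\rangle/\sqrt{n}:i=1,2,3\bigr)$ is a centered trivariate Gaussian with unit diagonal and off-diagonal entries $\cos(\tau_i)\cos(\tau_j)\beta_{ij}$, since the independent matrices $\M_i,\M_j$ ($i\ne j$) contribute no cross-covariance and only the shared $\M_0$ part survives. Let $p(\tau_\cdot,\beta_\cdot)$ denote the probability that all three coordinates lie in $[-1,1]$; by independence of the $M=\lfloor \alpha n\rfloor$ rows,
\[
\mathbb{P}\bigl[\|\M_i(\tau_i)\sigma^{(i)}\|_\infty\le \sqrt{n},\ i=1,2,3\bigr] = p(\tau_\cdot,\beta_\cdot)^M.
\]
A direct count of $3\times n$ sign arrays with prescribed pairwise column correlations gives at most $\exp_2\bigl(n[1+H(\beta_{12},\beta_{13},\beta_{23})]+o(n)\bigr)$ admissible triples, where $H$ is the entropy of a single column $(s_1,s_2,s_3)\in\{-1,1\}^3$ subject to $\mathbb{E}[s_is_j]=\beta_{ij}$. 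Putting these together, the first moment is bounded by $\exp_2(n\Phi+o(n))$ with
\[
\Phi(\tau_\cdot,\beta_\cdot,\alpha,c^*)\triangleq 3c^*+1+H(\beta_{12},\beta_{13},\beta_{23})+\alpha\log_2 p(\tau_\cdot,\beta_\cdot).
\]

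The main obstacle is to show that one can choose $\beta_3^*\in(0,1)$, $\eta_3^*>0$, and $c^*>0$ so that $\Phi\le -\delta$ for some $\delta>0$, uniformly over $\tau_i\in[0,\pi/2]$ and $\beta_{ij}\in[\beta_3^*-\eta_3^*,\beta_3^*]$, whenever $\alpha\ge 1.667$. The adversarial $\tau_i$ try to decorrelate the three constraints to inflate $p$ (with the extreme $\tau_i=\pi/2$ giving the independent product $\mathbb{P}[|Z|\le 1]^3$), while the entropy $H$ also shrinks as the $\beta_{ij}$ approach $1$; the improvement over the threshold $1.71$ of Theorem~\ref{thm:2-OGP} should come from the fact that the joint trivariate entropy and joint Gaussian box probability together shrink faster, relative to the analogous bivariate quantities, in an appropriate window of $\beta_3^*$. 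I expect to identify this window and verify strict negativity of $\Phi$ via a finite-dimensional numerical optimization over $(\tau_\cdot,\beta_\cdot)\in[0,\pi/2]^3\times[0,1]^3$, in the spirit of Lemma~\ref{asm:negativity} and the MATLAB verification of Appendix~\ref{appendix:matlab}. Once $\Phi\le -\delta$ is established on this compact set, choosing $c^*<\delta/3$ makes the exponent strictly negative by a constant, and an application of Markov's inequality yields the $\exp_2(-\Theta(n))$ bound claimed in the theorem.
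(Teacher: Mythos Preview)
Your first-moment approach is the same as the paper's, but there is a sign error in your intuition and a missing analytical step that the paper exploits to great effect.

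You write that ``the adversarial $\tau_i$ try to decorrelate the three constraints to inflate $p$.'' This is backwards: for centred Gaussians with nonnegative correlations, the symmetric box probability $\mathbb{P}\bigl[|Z_1|\le 1,|Z_2|\le 1,|Z_3|\le 1\bigr]$ is \emph{increasing} in the correlations, so decorrelating (sending $\tau_i\to\pi/2$) \emph{decreases} $p$, and the worst case for the first moment is $\tau_i=0$. The paper makes this rigorous via Sid\'ak's comparison inequality (Theorem~\ref{thm:sidak}): writing the off-diagonal covariances as $\cos(\tau_i)\cos(\tau_j)\beta_{ij}$, the box probability is nondecreasing in each $\lambda_i=\cos(\tau_i)\in[0,1]$, hence $p(\tau_\cdot,\beta_\cdot)\le p(0,\beta_\cdot)$ uniformly. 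This eliminates $\tau$ from the optimisation entirely; a continuity argument in $\eta$ (Lemma~\ref{lemma:prob-fnc-cts} and the continuity of $\varphi_{\rm Count}$) then reduces the numerical verification to the single-parameter check $f_3(\beta,1.667)<0$ of Lemma~\ref{asm:negativity}\,(c). Your proposed six-dimensional numerical search would in principle rediscover that the maximum in $\tau$ sits at the origin, but it is both unnecessary and fragile compared to the one-line reduction via Sid\'ak.

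A second, smaller issue: your count $\exp_2\bigl(n[1+H]+o(n)\bigr)$ overcounts by a factor of $2^n$ if $H$ denotes the entropy of the full column distribution on $\{\pm1\}^3$, since the number of triples with a given column type $P$ is $\exp_2\bigl(nH(P)+o(n)\bigr)$, not $\exp_2\bigl(n[1+H(P)]\bigr)$; at $\beta_{ij}\equiv 0$ your formula gives $2^{4n}$ while there are only $2^{3n}$ triples. The paper's explicit count in Lemma~\ref{lemma:n-equilateral} gives the correct exponent $\varphi_{\rm Count}(\beta,\eta)$, which at $\eta=0$ coincides with the maximum entropy of a column distribution on $\{\pm1\}^3$ with all pairwise correlations equal to $\beta$.
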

The proof of Theorem~\ref{thm:3-OGP} is provided  in Section~\ref{sec:proof-3-OGP}. The proof of Theorem~\ref{thm:2-OGP} is quite similar to that of Theorem~\ref{thm:3-OGP} (and in fact much simpler in terms of technical details); and is omitted. 

Theorem~\ref{thm:3-OGP} implies that $3-$OGP (with appropriate parameters) takes place for $\alpha\ge 1.667$, which is indeed  strictly smaller than the corresponding threshold of $\alpha\ge 1.71$ for $2-$OGP established in Theorem~\ref{thm:2-OGP}. An inspection of the proof reveals that our choice of $\eta^*$ satisfies $\eta^*\ll \beta^*$. That is, the structure that Theorem~\ref{thm:3-OGP} rules out corresponds essentially to (nearly) equilateral triangles in Hamming space.

 Theorem~\ref{thm:3-OGP} is established using the \emph{first-moment method}. More specifically, we let a certain random variable count the number of such triples. We then leverage Lemma~\ref{asm:negativity} to ensure that the exponent of the first moment of that random variable is negative under appropriate choices of parameters. That is, the  expectation is exponentially small (in $n$). Markov's inequality then yields Theorem~\ref{thm:3-OGP}. At a technical level, this amounts, in particular, to (a) counting the number of nearly equilateral triangles in the Hamming space;  and (b) applying a Gaussian comparison inequality by Sid{\'a}k~\cite{sidak1968multivariate} (reproduced herein as Theorem~\ref{thm:sidak} for completeness). It is worth noting though that unlike~\cite{gamarnik2021algorithmic}, our counting bound is  exact (up to lower-order terms). This appears necessary. Indeed, it appears not possible to improve upon Theorem~\ref{thm:2-OGP} if one  considers instead the relaxation to the ``star-shaped" forbidden structures (where the overlap constraint is relaxed to $\Overlap{\sigma^{(1)}}{\sigma^{(j)}}\in[\beta-\eta,\beta]$, $j\ge 2$) as in the counting step of~\cite[Theorems~2.3~and~2.6]{gamarnik2021algorithmic}. The aforementioned counting term appears more involved for $m\ge 4$.

As we noted earlier in the introduction, we do not pursue the $m-$OGP improvement for $m\ge 4$ in the high $\kappa$ regime. This is due to the following reason: the first moment method employed for establishing $m-$OGP actually fails as $m$ gets larger. That is, one can in fact (a) lower bound the first moment of the number $N$ of $m-$tuples corresponding to the forbidden structure that $m-$OGP deals with, and (b) show that for $m$ large, the value of $\alpha$ above which $\mathbb{E}[N]$ is $o(1)$ is actually strictly larger than $1.71$. This, of course, is only a failure of the first moment method, and does not necessarily imply that the $m-$OGP itself yields a worse threshold. In fact, given the previously mentioned prior work employing $m-$OGP as well as the fact that $m-$OGP deals with a more nested structure, it indeed makes sense that $m-$OGP (for $m\ge 4$) should hold for a much broader range of $\alpha$.\footnote{Here, it is worth noting that such a strict monotonicity in $m$ has also been conjectured by Ben Arous and Jagannath in the context of spherical spin glass models~\cite{arous2021shattering}.} For this reason, it is plausible to conjecture that considering $m-$OGP beyond $m\in\{2,3\}$ lowers the threshold on $\alpha$. We leave the formal verification of this for future investigation.

 Before we close this section, we remark that Baldassi, Della Vecchia, Lucibello, and Zecchina established in~\cite{baldassi2020clustering} similar OGP results for the high $\kappa$ case. To that end, fix any $x\in[0,1]$ and $K>0$. Using a first moment argument, they show the existence of a critical threshold $\alpha_{{\rm UB}}^{(m)}(x,K)$ such that the following holds: fix any $\alpha>\alpha_{{\rm UB}}^{(m)}(x,K)$; then w.h.p.\,there exists no $m-$tuple $\sigma_i\in S_\alpha(K)$ with fixed pairwise Hamming distances of $\lfloor nx\rfloor$. Namely, their results correspond to the case $\eta_2^*=\eta_3^*=0$. Furthermore, their results are rigorous for $m\in\{2,3,4\}$. However, they also suffer from technical difficulties similar to ours arising from the combinatorial terms for $m>4$. For this reason, they resort to non-rigorous calculations and a replica symmetric ansatz to study $m-$tuples beyond $m=4$.
 \subsection{Landscape Results: The Regime $\kappa\to 0$.}\label{sec:low-kappa}
We now turn to our results in the regime $\kappa\to 0$. Observe that for any fixed $\kappa>0$, the volume of the ``rectangular box" $[-\kappa,\kappa]^m$ (which eventually controls the probabilistic term) appearing in Definition~\ref{def:ogp-set} is $(2\kappa)^m$. When $\kappa\to 0$, this term actually shrinks further by increasing $m$. Thus, one can hope to pursue the $m-$OGP improvement. This is the subject of the present subsection.  Our main result to that end is as follows.
\begin{theorem}\label{thm:m-OGP-small-kappa}
Let
\begin{equation}\label{eq:OGP-threshold}
    \alpha_{{\rm OGP}}(\kappa)\triangleq 10\kappa^2\log \frac1\kappa.
\end{equation}
Then, for every sufficiently small $\kappa>0$ and $\alpha\ge \alpha_{{\rm OGP}}(\kappa)$, there exist $0<\eta<\beta<1$, $c>0$, and an $m\in\mathbb{N}$ such that the following holds. Fix any $\mathcal{I}\subset[0,\pi/2]$ with $|\mathcal{I}|\le \exp_2\bigl(cn\bigr)$. Then,
\[
\mathbb{P}\Bigl[\mathcal{S}_\kappa\bigl(\beta,\eta,m,\alpha,\mathcal{I}\bigr)\ne \varnothing\Bigr]\le \exp_2\bigl(-\Theta(n)\bigr).
\]
\end{theorem}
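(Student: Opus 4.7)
The plan is a first-moment computation on the random variable
\begin{equation*}
N \;=\; \sum_{(\tau_i)\in\mathcal{I}^m}\;\sum_{(\sigma^{(i)})\in\mathcal{B}_n^m} \prod_{i<j}\ind\bigl\{\Overlap{\sigma^{(i)}}{\sigma^{(j)}}\in[\beta-\eta,\beta]\bigr\}\prod_{i=1}^m \ind\bigl\{\|\mathcal{M}_i(\tau_i)\sigma^{(i)}\|_\infty \le \kappa\sqrt{n}\bigr\}.
\end{equation*}
Since $\{\mathcal{S}_\kappa(\beta,\eta,m,\alpha,\mathcal{I})\ne\varnothing\}\subset\{N\ge 1\}$, Markov reduces the theorem to showing $\mathbb{E}[N]\le \exp_2(-\Theta(n))$ for some admissible $(\beta,\eta,m,c)$. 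For each fixed $(\tau_i,\sigma^{(i)})$ the $M$ rows of $(\mathcal{M}_i(\tau_i))_{i=1}^m$ are independent across the row index $j$, and for any such $j$ the vector $(\langle (\mathcal{M}_i(\tau_i))_j,\sigma^{(i)}\rangle)_{i=1}^m\in\mathbb{R}^m$ is centered Gaussian with covariance (divided by $n$) $\bar\Sigma_{ii}=1$ and $\bar\Sigma_{ii'}=\cos(\tau_i)\cos(\tau_{i'})\Overlap{\sigma^{(i)}}{\sigma^{(i')}}$ for $i\ne i'$. Bounding the Gaussian density by its value at the origin yields the per-row estimate
\begin{equation*}
\mathbb{P}\bigl[\|\cdot\|_\infty\le \kappa\sqrt n\bigr]\;\le\;\frac{(2\kappa)^m}{(2\pi)^{m/2}\sqrt{\det\bar\Sigma}},
\end{equation*}
and the $M$-fold product bounds the probability of the full box event.

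I would take $\beta = 1-\kappa^2$ with a small window $\eta = o(\kappa^2)$. With this choice, the worst case (largest per-row bound) is $\tau_i = 0$ with all overlaps equal to $\beta$: shrinking any $c_i^2 = \cos^2(\tau_i)$ only enlarges $\det\bar\Sigma$, as a direct derivative computation using the matrix determinant lemma shows ($\partial \det/\partial c_k^2 \le 0$ for $\beta>0$). At this worst case $\bar\Sigma = (1-\beta)I_m + \beta\mathbf{1}\mathbf{1}^\top$ has eigenvalues $\kappa^2$ (multiplicity $m-1$) and $1+(m-1)\beta$ (simple), so $\det\bar\Sigma \asymp m\kappa^{2(m-1)}$. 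Substituting, the per-row bound simplifies to at most $\kappa(\sqrt{2/\pi})^m/\sqrt m$, so the probability of the entire box event is at most $\exp_2\bigl(-(\alpha nm/2)\log_2(\pi/2) - O(\alpha n \log(1/\kappa))\bigr)$.

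On the combinatorial side, the constraint $\Overlap{\sigma^{(1)}}{\sigma^{(i)}}\ge \beta-\eta$ forces every $\sigma^{(i)}$, $i\ge 2$, to lie in a Hamming ball of radius about $n\kappa^2/2$ around $\sigma^{(1)}$, so the number of valid $m$-tuples is at most $2^n\bigl[\binom{n}{n\kappa^2/2}\bigr]^{m-1}$, contributing $1+(m-1)H(\kappa^2/2) = 1+O(m\kappa^2\log(1/\kappa))$ to the per-$n$ exponent of $\mathbb{E}[N]$; the union bound over $\mathcal{I}^m$ adds another $cm$. Combining, $\tfrac{1}{n}\log_2\mathbb{E}[N]$ is at most
\begin{equation*}
\bigl(1+O(\kappa^2\log(1/\kappa))\bigr) + m\cdot\Bigl(c + O(\kappa^2\log(1/\kappa)) - \tfrac{\alpha}{2}\log_2\tfrac{\pi}{2}\Bigr).
\end{equation*}
The threshold $\alpha_{{\rm OGP}}(\kappa) = 10\kappa^2\log(1/\kappa)$ is tailored so that, for $\kappa$ sufficiently small and $c$ a small enough absolute constant (say $c\le\kappa^2$), the coefficient of $m$ is strictly negative; taking $m$ to be a large constant (depending on $\kappa$) then forces the entire exponent to be negative and concludes $\mathbb{E}[N]\le\exp_2(-\Theta(n))$.

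The main obstacle I expect is controlling $\det\bar\Sigma$ uniformly over the entire overlap window $[\beta-\eta,\beta]^{\binom m 2}$ as well as over $\tau_i \in\mathcal{I}\subset[0,\pi/2]$: the monotonicity in $\cos^2(\tau_i)$ above handles the $\tau$-direction cleanly, but small perturbations in the overlaps require a separate continuity argument, perhaps invoking a Sid\'ak-type Gaussian comparison (Theorem~\ref{thm:sidak}) to control the box event probability directly rather than through the determinant. A secondary technical point is the combinatorial count: while the crude ball bound suffices asymptotically, extracting the specific constant $10$ in $\alpha_{{\rm OGP}}$ requires tracking $H(\kappa^2/2)$ against its small-$\kappa$ expansion and absorbing all polynomial-in-$m$ and $\eta$-window prefactors into negligible lower-order terms.
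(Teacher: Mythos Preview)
Your approach is essentially identical to the paper's: first moment via Markov, the Hamming-ball count $2^n\binom{n}{n(1-\beta+\eta)/2}^{m-1}$, and the density-at-origin bound $(2\kappa)^m/((2\pi)^{m/2}|\bar\Sigma|^{1/2})$ on the box probability, with $\beta$ chosen at distance $\Theta(\kappa^2)$ from $1$ (the paper takes $\beta=1-4\kappa^2$ so that $(2\kappa)^2=1-\beta$ produces an exact cancellation, but your $1-\kappa^2$ works just as well). One clarification on your anticipated obstacles: the paper uses Sid\'ak's inequality (Theorem~\ref{thm:sidak}) precisely for the $\tau_i$-direction---your determinant-monotonicity argument (which follows cleanly from the Schur complement, $\det\bar\Sigma=\det A\cdot(1-c_k^2\,\tilde b^TA^{-1}\tilde b)$, not the matrix determinant lemma) is a correct alternative there---while the overlap window $\boldsymbol\eta\in[0,\eta]^{\binom{m}{2}}$ is handled simply by continuity of $\log_2|\Sigma(\boldsymbol\eta)|$ at $\boldsymbol\eta=0$; Sid\'ak does \emph{not} apply to that perturbation since it is not of the product form $\lambda_i\lambda_j\rho_{ij}$.
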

The proof of Theorem~\ref{thm:m-OGP-small-kappa} is in Section~\ref{sec:pf-m-OGP}.

Recall from our earlier discussion (also see Section~\ref{sec:alg-conj} and Corollary~\ref{thm:alpha-star-positive} therein) that the algorithm by Bansal and Spencer~\cite{bansalspenceronline} works for $\alpha= O(\kappa^2)$. On the other hand, no (efficient) algorithm is known for $\alpha \ge C\kappa^2$, where $C>0$ is a large absolute constant. Namely, the current known algorithmic threshold for the symmetric binary perceptron model is $\Theta(\kappa^2)$. In light of these facts, Theorem~\ref{thm:m-OGP-small-kappa} shows that the OGP threshold $\alpha_{{\rm OGP}}(\kappa)$ is \emph{nearly matching}: the onset of OGP  coincides up to polylogarithmic (in $\kappa$) factors with the  threshold (on $\alpha$) above which no polynomial-time algorithms are known to  work. The choice of the constant 10 appearing in~\eqref{eq:OGP-threshold} is for convenience and  can potentially be improved. 

We now comment on the extra $\log_2\frac1\kappa$ factor appearing in~\eqref{eq:OGP-threshold}. As we detail in Section~\ref{sec:m-ogp-best-possible}, the exponent of the first moment of the cardinality term, $\bigl|S_\kappa\bigl(\beta,\eta,m,\alpha,\mathcal{I}\bigr)\bigr|$, appears to be strictly positive (for every $\beta,\eta,m$) if $\alpha=O\bigl(\kappa^2\log_2(1/\kappa)\bigr)$. That is, Theorem~\ref{thm:m-OGP-small-kappa} is in a sense the best possible using our techniques. However, it is plausible that by considering a more delicate forbidden structure (akin to the ones studied in~\cite{wein2020optimal,bresler2021algorithmic,huang2021tight}), one may in fact be able to remove this logarithmic factor. This suggests two conjectures:  (a) in the regime $\kappa\to 0$, the algorithm by Bansal and Spencer~\cite{bansalspenceronline} is  best possible (up to constant factors); and that (b) the OGP  marks the onset of  algorithmic hardness.
\section{Algorithmic Barriers for the Perceptron Model}
\subsection{$m-$Overlap Gap Property Implies Failure of Stable Algorithms}\label{sec:alg-hard}
We commence this section by recalling our setup. We fix a $\kappa>0$, and an $\alpha<\alpha_c(\kappa)$ so that w.h.p.\,as $n\to\infty$, there exists a $\sigma \in S_\alpha(\kappa)$, where $S_\alpha(\kappa)$ is the (random) set introduced in~\eqref{eq:S_alpha}. Having ensured that $S_\alpha(\kappa)$ is (w.h.p.) non-empty; our focus in this section is  the problem of finding such a $\sigma$ by using \emph{stable algorithms}, formalized below.
\paragraph{Algorithmic Setting.} We interpret an algorithm $\A$ as a mapping from $\R^{M\times n}$ to $\mathcal{B}_n$. We allow $\A$ to be potentially randomized: we assume there exists an underlying probability space $(\Omega,\mathbb{P}_\omega)$ such that $\A:\R^{M\times n}\times \Omega\to \bincube$. That is, for any $\omega\in\Omega$ and disorder matrix $\M\in\R^{M\times n}$; $\A(\cdot,\omega)$ returns a $\sigma_{\rm ALG}\triangleq\A(\M,\omega)\in\bincube$; and we want $\sigma_{{\rm ALG}}$ to satisfy $\|\M\sigma_{{\rm ALG}}\|_\infty \le\kappa\sqrt{n}$.

We now formalize the class of stable algorithms that we investigate in the present paper. 
\begin{definition}\label{def:admissible-alg}
Fix a $\kappa>0$, an $\alpha<\alpha_c(\kappa)$; and set $M=\lfloor n\alpha\rfloor$. An algorithm $\A:\R^{M\times n}\times \Omega\to\bincube$ is called $(\rho,p_f,p_{\rm st},f,L)-$stable for the \texttt{SBP} model, if it satisfies the following for all sufficiently large $n$.
\begin{itemize}
    \item {\bf (Success)} Let $\M\in\R^{M\times n}$ be a random matrix with i.i.d $\cN(0,1)$ coordinates. Then,
    \[
    \mathbb{P}_{(\M,\omega)} \Bigl[\bigl\|\M\A(\M,\omega)\bigr\|_\infty\le \kappa\sqrt{n}\Bigr]\ge 1-p_f.
    \]
    \item {\bf (Stability)} Let $\M,\overline{\M}\in\R^{M\times n}$ be random matrices, each with i.i.d. $\cN(0,1)$ coordinates such that $\mathbb{E}\bigl[\M_{ij}\overline{\M}_{ij}\bigr]=\rho$ for $1\le i\le M$ and $1\le j\le n$. Then,
    \[
    \mathbb{P}_{(\M,\overline{\M},\omega)}\Bigl[d_H\bigl(\A(\M,\omega),\A(\overline{\M},\omega)\bigr)\le f+L\|\M-\overline{\M}\|_F\Bigr]\ge 1-p_{\rm st}.
    \]
\end{itemize}
\end{definition}
Definition~\ref{def:admissible-alg} is similar to the notion of stability considered in~\cite[Definition~3.1]{gamarnik2021algorithmic}. It is also worth noting that Definition~\ref{def:admissible-alg} applies also to deterministic algorithms $\A$. In this case, we simply modify the probability statements to reflect the fact that the only source of randomness is the input $\M$ (and $\overline{\M}$) to the algorithm. In the remainder of the paper, we often abuse the notation by dropping $\omega$ and simply referring to $\A:\R^{M\times n}\to \bincube$ as a randomized algorithm. 

We next highlight the operational parameters appearing in Definition~\ref{def:admissible-alg}. $\kappa$ is the ``width" of the ``rectangles" defined by the constraints. $\alpha$ is the  constraint density (also known as the aspect ratio). That is, $M=\lfloor n\alpha\rfloor$ is the number of constraints. The parameter $p_f$ controls the success guarantee. The parameters $\rho,p_{\rm st},f$ and $L$ collectively control the stability guarantee. The parameter $\rho$ essentially controls the amount of correlation. Stability parameters $p_{\rm st}$, $f$ and $L$ describe the amount of sensitivity of the algorithm's output to the correlation values. Our stability guarantee is probabilistic, where the probability is taken with respect to the joint randomness in $\M,\overline{\M}$ as well as to the coin flips $\omega$ of $\A$. The ``extra room" of $f$ bits makes our negative result only stronger: even when $\M$ and $\overline{\M}$ are very close, the algorithm is still allowed to make roughly $f$ flips. 

We now state our next main result.
\begin{theorem}\label{thm:stable-hardness}
Fix any sufficiently small $\kappa>0$, $\alpha\ge \alpha_{{\rm OGP}}(\kappa)=10\kappa^2\log_2\frac1\kappa$, and $L>0$. Let $m\in\mathbb{N}$ and  $0<\eta<\beta<1$ be the $m-$OGP parameters prescribed by Theorem~\ref{thm:m-OGP-small-kappa}. Set
\begin{equation}\label{eq:param-C-Q-T}
C=\frac{\eta^2}{1600},\qquad Q\triangleq \frac{4800L\pi }{\eta^2}\sqrt{\alpha},\qquad\text{and}\qquad T=\exp_2\Bigl(2^{4mQ\log_2 Q}\Bigr).
\end{equation}
Then, there exists an $n_0\in\mathbb{N}$ such that the following holds. For every $n\ge n_0$, there exists no randomized algorithm $\A:\R^{M\times n}\to \bincube$ that is
\[
\left(\cos\left(\frac{\pi}{2Q}\right),\frac{1}{9(Q+1)T},\frac{1}{9Q(T+1)},Cn,L\right)-\text{stable}
\]
for the \texttt{SBP}, in the sense of Definition~\ref{def:admissible-alg}.
\end{theorem}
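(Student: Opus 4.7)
The plan is to argue by contradiction: assume that $\mathcal{A}$ is stable as in the statement, and then produce (with non-vanishing probability) an $m$-tuple of solutions lying in $\mathcal{S}_\kappa(\beta,\eta,m,\alpha,\mathcal{I})$ for some $\mathcal{I}\subset[0,\pi/2]$ with $|\mathcal{I}|\le Q+1$, contradicting Theorem~\ref{thm:m-OGP-small-kappa}. The setup mirrors the ensemble $m$-OGP arguments of~\cite{gamarnik2021algorithmic,wein2020optimal,bresler2021algorithmic}. First I would set $\tau_q = q\pi/(2Q)$ for $q\in\{0,1,\dots,Q\}$ (so $\mathcal{I}:=\{\tau_q\}$ has $Q+1$ elements, which is sub-exponential in $n$ as required). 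Sample mutually independent standard Gaussian matrices $\M_0,\M_1,\dots,\M_T\in\R^{M\times n}$, and form the interpolated disorders $\mathcal{N}_{t,q}=\cos(\tau_q)\M_0+\sin(\tau_q)\M_t$ for $t\in[T]$ and $q\in\{0,\dots,Q\}$. Each $\mathcal{N}_{t,q}$ is marginally $\cN(0,I_{Mn})$, and for fixed $t$ consecutive pairs $(\mathcal{N}_{t,q},\mathcal{N}_{t,q+1})$ have the entrywise correlation $\cos(\pi/(2Q))$ matching the hypothesis on $\mathcal{A}$.

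Next I would run $\mathcal{A}$ on every grid point using fresh independent randomness $\omega_{t,q}$, producing $\sigma_{t,q}:=\mathcal{A}(\mathcal{N}_{t,q},\omega_{t,q})\in\bincube$. By a union bound over $T(Q+1)$ calls to the success guarantee and $TQ$ calls to the stability guarantee---enabled by the specific choices $p_f=1/[9(Q+1)T]$ and $p_{\rm st}=1/[9Q(T+1)]$---with probability at least $2/3$ every $\sigma_{t,q}$ is a valid \SBP solution and every consecutive pair satisfies the stability inequality. Using $\mathbb{E}\|\mathcal{N}_{t,q+1}-\mathcal{N}_{t,q}\|_F^2 = 2(1-\cos(\pi/(2Q)))Mn\le \pi^2 Mn/(4Q^2)$ and Borell--TIS concentration (with an additional union bound over $(t,q)$), I get $\|\mathcal{N}_{t,q+1}-\mathcal{N}_{t,q}\|_F \le (1+o(1))\pi\sqrt{Mn}/Q$. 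Plugging in $M\le \alpha n$, $Q=4800L\pi\sqrt{\alpha}/\eta^2$, and $Cn=n\eta^2/1600$, the stability bound reduces to $d_H(\sigma_{t,q},\sigma_{t,q+1})\le n\eta^2/1200$, which translates to an overlap increment bound $|\mathcal{O}(\sigma,\sigma_{t,q+1})-\mathcal{O}(\sigma,\sigma_{t,q})|\le \eta^2/600\ll \eta$ for every anchor $\sigma$.

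Since $\mathcal{N}_{t,0}=\M_0$ and $\mathcal{N}_{t,Q}=\M_t$ are independent, the overlap $\mathcal{O}(\sigma_{t,0},\sigma_{t,Q})$ concentrates near zero, while $\mathcal{O}(\sigma_{t,0},\sigma_{t,0})=1$. The small-increment bound above yields, by an intermediate-value argument, some $q_t^\star\in[Q]$ with $\mathcal{O}(\sigma_{t,0},\sigma_{t,q_t^\star})\in[\beta-\eta/2,\beta]$. The delicate part is now to extract $m$ indices $t_1<\dots<t_m$ so that the tuple $(\sigma_{t_i,q_{t_i}^\star})_{i=1}^m$ has \emph{pairwise} (not merely star-shaped) overlaps in $[\beta-\eta,\beta]$, because the $m$-OGP forbids cliques rather than stars. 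This I would handle by a two-step Ramsey/pigeonhole extraction: first apply pigeonhole on the vertex labels $q_t^\star\in\{1,\dots,Q\}$ to obtain a subset of size $\ge T/Q$ with a common value $q^\star$; then, within this subset, $2$-color the complete graph by discretizing $\mathcal{O}(\sigma_{t,q^\star},\sigma_{t',q^\star})$ into $O(1/\eta)$ buckets of width $\ll\eta$, and apply classical Ramsey (whose bounds are exponential in the number of colors, matching $T=\exp_2(2^{4mQ\log_2 Q})$) to obtain a monochromatic $m$-clique. Since at least one bucket must intersect $[\beta-\eta,\beta]$ (calibrated by sliding the $q^\star$ level so that the aggregate overlap scale lies there), the resulting clique gives the forbidden $m$-tuple, with each $\sigma^{(i)}=\sigma_{t_i,q^\star}$ satisfying $\|\mathcal{N}_{t_i,q^\star}\sigma^{(i)}\|_\infty\le\kappa\sqrt{n}$ in the format of Definition~\ref{def:ogp-set}. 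This contradicts Theorem~\ref{thm:m-OGP-small-kappa}.

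The main obstacle is step 5, the Ramsey extraction: one must (i) control the number of pairwise color classes tightly enough that $T$ dominates the corresponding Ramsey number (this dictates the triply-exponential scaling of $T$ in $m$ and $Q$), and (ii) guarantee that the monochromatic clique's common overlap bucket falls inside $[\beta-\eta,\beta]$, which requires carefully calibrating $q^\star$ so the typical pairwise overlap at that level is pre-positioned in the forbidden window---using the continuous variation of $\mathcal{O}(\sigma_{t,q},\sigma_{t',q})$ in $q$ (again from stability) to first select a good $q^\star$ via a separate pigeonhole on pair-level overlaps. All other steps are concentration inequalities and union bounds that are routine given the parameter choices in~\eqref{eq:param-C-Q-T}.
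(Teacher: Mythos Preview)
Your overall scaffolding (interpolation grid, union bounds for success/stability, overlap-increment control, Ramsey extraction) matches the paper, but three steps are genuinely broken.

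\textbf{Fresh randomness breaks stability.} You run $\mathcal{A}$ with independent $\omega_{t,q}$ at each grid point. But the stability clause in Definition~\ref{def:admissible-alg} compares $\mathcal{A}(\M,\omega)$ and $\mathcal{A}(\overline{\M},\omega)$ with the \emph{same} $\omega$; with independent coins you have no bound on $d_H(\sigma_{t,q},\sigma_{t,q+1})$ at all. The paper handles this by first finding a single deterministic $\omega^*$ (via Markov's inequality on the failure and instability probabilities) and then running $\mathcal{A}^*(\cdot)=\mathcal{A}(\cdot,\omega^*)$ everywhere.

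\textbf{The ``chaos'' claim is wrong.} You assert that $\mathcal{O}(\sigma_{t,0},\sigma_{t,Q})$ concentrates near zero because the inputs are independent. Nothing in the hypotheses forces this: an algorithm could always output solutions near a fixed cluster. What \emph{is} true (and what the paper proves as a separate first-moment lemma) is weaker but sufficient: for any $m$-tuple of solutions to $m$ \emph{independent} instances, \emph{some} pair has overlap below $\beta-\eta$. This only gives information at the level of $m$-subsets, not individual pairs or single trajectories, and it is exactly what drives the Ramsey step.

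\textbf{The Ramsey extraction is mis-designed.} Coloring edges by the overlap bucket and taking a monochromatic $K_m$ gives you a clique whose common overlap lies in \emph{some} bucket---you have no mechanism to force it to be the forbidden one, and the ``slide $q^\star$'' idea does not pin this down. The paper's coloring is by \emph{time}, not by overlap value: for each pair $(i,j)$, let the color be the first $q\in\{1,\dots,Q\}$ at which $\mathcal{O}(\sigma_{i,q},\sigma_{j,q})\in(\beta-\eta,\beta)$. By the chaos lemma applied to every $m$-subset of $[T]$ at time $q=Q$, combined with the overlap-increment bound, every $m$-subset contains such an edge; hence the graph has independence number at most $m-1$, so two-color Ramsey yields a huge clique, and then $Q$-color Ramsey (colors $=$ times) yields a monochromatic $K_m$. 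By construction, all $\binom{m}{2}$ overlaps in that clique lie in $(\beta-\eta,\beta)$ at the common time $\tau_{q^\star}$, producing the forbidden configuration. This is what the double-exponential $T=\exp_2(2^{4mQ\log_2 Q})$ is calibrated for.
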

The proof of Theorem~\ref{thm:stable-hardness} is provided in Section~\ref{sec:pf-hardness}. Several remarks are now in order. First, observe that there is no restriction on the running time of $\A$: as long as it is stable in the sense of Definition~\ref{def:admissible-alg} with appropriate parameters, Theorem~\ref{thm:stable-hardness} applies. 

Our second remark pertains to the scaling of parameters in the regime $n\to\infty$. Observe that the parameters $\alpha$, $L$, $m$ and $\eta$ are all $O(1)$ (in $n$) as $n\to\infty$; hence the parameters $C,Q$, and $T$ appearing in~\eqref{eq:param-C-Q-T} are all constants. In particular, $p_f$ and $p_{\rm st}$ are of constant order. This is an important feature of our result: the algorithms that we rule out have a \emph{constant probability} of success/stability. Namely, $\A$ need not have a high-probability guarantee. This is a notable departure from the main hardness result in~\cite[Theorem~3.2]{gamarnik2021algorithmic}, as well as from those appeared in prior works: unlike our case, the algorithms ruled out via OGP in those papers are required to succeed with high probability.

Our next remark pertains to the stability guarantee. Note that the algorithms that we rule out satisfy
\[
d_H\Bigl(\A\bigl(\M\bigr),\A\bigl(\overline{\M}\bigr)\Bigr)\le Cn+L\bigl\|\M-\overline{\M}\bigr\|_F.
\]
Namely, under our notation of stability the algorithm is still allowed to make $\Theta(n)$ flips when $\M$ and $\overline{\M}$ are ``nearly identical".

Our final remark pertains to the parameter $L$. We establish Theorem~\ref{thm:stable-hardness} for the case when $L$ is constant in order to keep our exposition clean. However, an inspection of our argument reveals $L$ can be pushed to $O\left(\frac{\log n}{\log \log n}\right)$.
\subsection{Failure of Online Algorithms for \texttt{SBP}}\label{sec:online-alg-fails}
Our next focus is on the class of  \emph{online algorithms}, formalized below.
\begin{definition}\label{def:online-alg}
Fix a $\kappa>0$, an $\alpha<\alpha_c(\kappa)$; and set $M=\lfloor n\alpha\rfloor\in\mathbb{N}$. Let $\mathcal{M}\in\R^{M\times n}$ be a disorder matrix with columns $\mathcal{C}_1,\mathcal{C}_2,\dots,\mathcal{C}_n\in\R^M$, and $\mathcal{A}:\R^{M\times n}\to\mathcal{B}_n$ be an algorithm where
\[
\mathcal{A}\bigl(\mathcal{M}\bigr)=\sigma=(\sigma_1,\sigma_2,\dots,\sigma_n)\in\bincube.
\]
We call $\mathcal{A}$ $p_f-$online if the following hold.
\begin{itemize}
    \item {\bf (Success)} For $\M$ consisting of i.i.d. $\mathcal{N}(0,1)$ entries,
    \[
    \mathbb{P} \Bigl[\bigl\|\M\A(\M)\bigr\|_\infty\le \kappa\sqrt{n}\Bigr]\ge 1-p_f.
    \]
    \item {\bf (Online)} There exists deterministic functions $f_t$, $1\le t\le n$ such that
    \[
    \sigma_t = f_t\bigl(\mathcal{C}_i:1\le i\le t\bigr)\in\{-1,1\}\qquad\text{for}\qquad 1\le t\le n.
    \]
\end{itemize}
\end{definition}
Several remarks are now in order. The parameter $p_f$ is the failure probability of $\A$: $\mathcal{A}(\mathcal{M})\in S_\alpha(\kappa)$ w.p.\,at least $1-p_f$. The second condition states that for all $1\le t\le n$, $\sigma_t$ is a function of $\mathcal{C}_1,\dots,\mathcal{C}_t$ only. More precisely, the signs $\sigma_i\in\{-1,1\}$, $1\le i\le t-1$, have been assigned at the end of round $t-1$. A new column $\mathcal{C}_t\in\R^M$ arrives in the beginning of round $t$, and $\mathcal{A}$ assigns a $\sigma_t\in\{-1,1\}$ depending only on the previous decisions. This highlights the online nature of $\A$.

Definition~\ref{def:online-alg} is an abstraction that captures, in particular, the algorithm by Bansal and Spencer~\cite{bansalspenceronline}. Our next result establishes that online algorithms fail to return a $\sigma\in S_\alpha(\kappa)$ for densities $\alpha$ close to the critical threshold $\alpha_c(\kappa)$. Similar to our treatment in Section~\ref{sec:high-kappa}, we stick to the case $\kappa=1$ for simplicity, even though our argument easily extends to arbitrary $\kappa>0$.
\begin{theorem}\label{thm:online-alg-fail}
Let $1.77\le \alpha\le \alpha_c(1)\approx 1.8159$. Then, there exists a constant $c_f>0$ such that the following holds. For any $p_f<\frac12-\exp\bigl(-c_f n\bigr)$, there exists no  $\mathcal{A}$ for \texttt{SBP} which is $p_f-$online in the sense of Definition~\ref{def:online-alg}.
\end{theorem}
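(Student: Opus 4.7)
The plan is to use a coupling argument combined with a first-moment estimate. I would first construct two correlated instances $\mathcal{M}$ and $\mathcal{M}'$ of the disorder that share their first $k$ columns $\mathcal{C}_1,\ldots,\mathcal{C}_k$ and have independent last $n-k$ columns $\mathcal{C}_{k+1},\ldots,\mathcal{C}_n$ versus $\mathcal{C}'_{k+1},\ldots,\mathcal{C}'_n$, where $k$ is to be chosen later. By the online property (Definition~\ref{def:online-alg}), the outputs $\sigma=\mathcal{A}(\mathcal{M})$ and $\sigma'=\mathcal{A}(\mathcal{M}')$ must agree on their first $k$ coordinates, since $\sigma_t$ depends only on $\mathcal{C}_1,\ldots,\mathcal{C}_t$. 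Since $\mathcal{M},\mathcal{M}'$ are marginally identically distributed, the union bound gives
\[
\mathbb{P}(\text{both succeed}) \ge 2(1-p_f) - 1 = 1 - 2 p_f,
\]
so it will suffice to show that $\mathbb{P}(\text{both succeed}) \le \exp(-c_f n)$ for some $c_f>0$.

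Next, I would bound $\mathbb{P}(\text{both succeed})$ by the first moment of the number of pairs $(\tau,\tau')\in\{\pm 1\}^n\times\{\pm 1\}^n$ satisfying $\tau_i=\tau'_i$ for $i\le k$, $\tau\in S_\alpha(\mathcal{M})$, and $\tau'\in S_\alpha(\mathcal{M}')$. Since the algorithm outputs one such pair whenever both runs succeed, this first-moment bound is valid. A direct Gaussian computation shows that for each row $j\in[M]$, conditional on the shared partial sum $\sum_{i\le k}\tau_i\mathcal{C}_i$, the joint constraint has probability exactly $p(k/n)$, where $p(\rho):=\mathbb{P}(|Z_1|\le 1,\,|Z_2|\le 1)$ for $(Z_1,Z_2)$ a standard bivariate normal vector with correlation $\rho$. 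Since there are $2^{2n-k}$ candidate pairs with the prescribed shared prefix,
\[
\mathbb{E}[\#\text{pairs}] \;=\; 2^{2n-k}\, p(k/n)^M \;=\; \exp_2\!\Bigl(n\bigl[(2-k/n) + \alpha \log_2 p(k/n)\bigr]\Bigr).
\]
Writing $x=k/n\in(0,1)$, one needs to identify the range of $\alpha$ for which the exponent $f(x,\alpha):=(2-x)+\alpha\log_2 p(x)$ can be made strictly negative by choice of $x$. At the endpoints $x\to 0^+$ and $x\to 1^-$ one finds $f(x,\alpha_c(1))=0$ (using $p(0)=p(1)^2$), while at intermediate $x$ the function $p(x)$ sits strictly below the geometric mean $\sqrt{p(0)p(1)}$, so the exponent becomes negative for $\alpha$ strictly less than $\alpha_c(1)$. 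Numerical analysis of the bivariate Gaussian CDF would then pinpoint $\alpha \ge 1.77$ as the threshold at which some $x\in(0,1)$ gives $f(x,\alpha)\le -c_f/\log 2<0$, so that $\mathbb{P}(\text{both succeed})\le \exp(-c_f n)$ and the desired contradiction with $p_f<\tfrac12-\exp(-c_f n)$ follows.

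The main obstacle will be the precise analysis of the bivariate orthant probability $p(\rho)$ and optimization over $x\in(0,1)$ to identify the sharpest achievable threshold. This requires either closed-form bounds or careful numerical estimates of $p(\rho)$ via the bivariate Gaussian CDF, together with handling the $O(1/n)$ corrections coming from finite-$n$ effects in the per-row joint probability (the actual correlation between the two Gaussians per row is $k/n$ up to such corrections). A secondary issue is ensuring that the first-moment bound is truly what the argument requires: the algorithm's output on the two coupled instances is a specific random pair, and one must verify that the corresponding indicator is dominated by the counting random variable. If this direct first-moment route is too crude, an alternative would be to invoke an ensemble $2$-OGP statement (in the spirit of Theorem~\ref{thm:2-OGP} and Definition~\ref{def:ogp-set}) adapted to the shared-prefix coupling, trading the bivariate Gaussian calculation for a more structural forbidden-pair argument.
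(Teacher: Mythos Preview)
Your approach is essentially the same as the paper's: couple two instances by sharing the first $k$ columns and resampling the rest, exploit the online property to force agreement on the first $k$ coordinates, then kill the probability that both runs succeed via a first-moment bound. The paper's Proposition~\ref{prop:online-alg-fail} does exactly this with $\Delta n = n-k$ resampled columns.

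There is one genuine difference in the counting step, and it works in your favor. The paper relaxes the structural constraint ``$\sigma$ and $\sigma_\Delta$ agree on their first $(1-\Delta)n$ coordinates'' to the weaker overlap condition $n^{-1}\langle\sigma,\sigma_\Delta\rangle\ge 1-2\Delta$, and then counts pairs with this overlap, obtaining $2^{n(1+h(\Delta))+o(n)}$ pairs. Because the correlation between the two row-Gaussians then varies over $[1-2\Delta,1-\Delta]$, the paper invokes Sid\'ak's inequality (Theorem~\ref{thm:sidak}) to bound the per-row probability by $p(1-\Delta)$, yielding the exponent $f_1(\Delta,\alpha)=1+h(\Delta)+\alpha\log_2 p(1-\Delta)$, which is then checked numerically in Lemma~\ref{asm:negativity}(a). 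You instead count only pairs that literally agree on the first $k$ coordinates, giving $2^{(1+\Delta)n}$ pairs and correlation \emph{exactly} $1-\Delta$, so your exponent is $1+\Delta+\alpha\log_2 p(1-\Delta)$. Since $\Delta<h(\Delta)$ for $\Delta\in(0,\tfrac12)$, your bound is strictly tighter and therefore certainly establishes the theorem for $\alpha\ge 1.77$ (and would in principle push the threshold lower). Two minor cleanups: no conditioning on the shared partial sum is needed---the per-row joint probability is \emph{exactly} $p(k/n)$ unconditionally for pairs in your restricted class, so your worry about $O(1/n)$ corrections is moot; and your geometric-mean heuristic for locating the threshold isn't quite the right inequality, but this is immaterial since, as in the paper, the actual threshold is verified numerically.
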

The proof of Theorem~\ref{thm:online-alg-fail} is provided in Section~\ref{sec:pf-online-alg-fail}. The proof is based on a contradiction argument, which we informally describe. Given $\Delta\in(0,1)$, $\M\in\R^{M\times n}$, let $\M_\Delta\in\R^{M\times n}$ be obtained from $\M$ by independently resampling the last $\Delta\cdot n$ columns on $\M$. Fix an online algorithm $\A$, and let $\sigma\triangleq \A\bigl(\M\bigr)$, $\sigma_\Delta\triangleq \bigl(\M_\Delta\bigr)$. Then w.p.\,at least $1-2p_f$, $\|\M\sigma\|_\infty\le \sqrt{n}$ and $\|\M_\Delta\sigma_\Delta\|_\infty\le\sqrt{n}$. Furthermore, $\sigma$ and $\sigma_\Delta$ agree on first $n-\Delta n$ coordinates due to the online nature of $\A$. Namely, assuming such an $\A$ exists, we have $\mathbb{P}\bigl[\Xi(\Delta)\ne\varnothing\bigr]\ge 1-2p_f$, where $\Xi(\Delta)$ is the set of all pairs $(\sigma,\sigma_\Delta)\in\bincube\times\bincube$ such that $\|\M\sigma\|_\infty \le \sqrt{n}$, $\|\M_\Delta\sigma_\Delta\|_\infty\le \sqrt{n}$ and $n^{-1}\ip{\sigma}{\sigma_\Delta}\ge 1-2\Delta$. On the other hand,  a first moment argument (see in particular Proposition~\ref{prop:online-alg-fail}) reveals that for the same choice of $\Delta$, we actually have $\mathbb{P}\bigl[\Xi(\Delta)\ne\varnothing\bigr]\le \exp(-\Theta(n))$. This yields a contradiction and proves Theorem~\ref{thm:online-alg-fail}. 

The contradiction argument described above is slightly different than $2-$OGP, yielding a lower bound $\alpha\ge 1.77$. Notice that this is strictly larger than the corresponding $2-$OGP threshold, i.e.  $\alpha\ge 1.71$, for the same setting ($\kappa=1$) per Theorem~\ref{thm:2-OGP}. Lastly, the online algorithms that we rule out need not have a high probability guarantee: a success probability slightly above $\frac12$ suffices. 

\subsection{Algorithmic Threshold in \texttt{SBP}: A Lower Bound and a Conjecture}\label{sec:alg-conj}
\paragraph{Algorithmic Lower Bound in \texttt{SBP}.} Heretofore, we used $\Theta\bigl(\kappa^2\bigr)$ as our baseline for the current computational threshold for the \texttt{SBP}. Namely, against this threshold;  we (a) formulated the aforementioned \emph{statistical-to-computational gap} and (b) compared our hardness result, Theorem~\ref{thm:stable-hardness}, for the stable algorithms established via the $m-$OGP approach. In this section, we justify this choice for the algorithmic threshold, from the lower bound perspective.

As we mentioned in the introduction, the \texttt{SBP} is closely related to the well-known problem of minimizing the discrepancy of a matrix (or set system). The discrepancy minimization problem received much attention in the field of combinatorics and theoretical computer science; several efficient algorithms have been devised for it, see e.g.~\cite{rothvoss2017constructive,levy2017deterministic,eldan2018efficient,bansalspenceronline}. In what follows, we use the recent work by Bansal and Spencer~\cite{bansalspenceronline} as our baseline for postulating a computational threshold on $\alpha$ as one varies $\kappa$; though several of the algorithms cited above essentially yield the same $\Theta\bigl(\kappa^2\bigr)$ guarantee modulo different absolute constants. Before we proceed with the result of Bansal and Spencer~\cite{bansalspenceronline}; it is worth noting that there is yet another complementary line of research focusing on the so-called online guarantees, see e.g.~\cite{bansal2020online,bansal2021online,alweiss2021discrepancy,liu2021gaussian}. However, all of these algorithms suffer from extra polylogarithmic factors; and therefore their implied guarantees on $\alpha$ are poorer. That is they provably work only for $\alpha$ asymptotically much smaller than  $\kappa^2$. 

The work by Bansal and Spencer (see in particular~\cite[Section~3.3]{bansalspenceronline}) establishes the following.
\begin{theorem}{\cite[Theorem~3.4]{bansalspenceronline}}
\label{thm:bansal-spencer}
Let $T\in\mathbb{N}$ be an arbitrary time horizon, and $v_i\sim {\rm Unif}(\mathcal{B}_M)$, $1\le i\le T$, be i.i.d.\,random vectors. Then there exists a value $K>0$ and an algorithm that returns signs $s_1,\dots,s_T\in\{-1,1\}$ in ${\rm Poly}(M,T)$ time such that 
\[
\mathbb{P}\left[\left\|\sum_{i\le T}s_iv_i\right\|_\infty\le K\sqrt{M}\right]\ge 1-\exp\bigl(-cM\bigr).
\]
Here, $c,K>0$ are absolute constants independent of $M$ and $T$.
\end{theorem}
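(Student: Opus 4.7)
The plan is to analyze the classical self-balancing online algorithm of~\cite{bansalspenceronline} via an exponential potential function. Maintain the running partial sum $S_t = \sum_{i\le t} s_i v_i \in \mathbb{R}^M$ and, upon the arrival of $v_{t+1}\in\{-1,1\}^M$, select
\[
s_{t+1} \;=\; -\mathrm{sign}\Bigl(\bigl\langle v_{t+1},\, \sinh(\lambda S_t)\bigr\rangle\Bigr), \qquad \lambda = c_0/\sqrt{M},
\]
with $\sinh$ applied componentwise. This is precisely the greedy minimizer of $\Phi(S_{t+1})$ for the potential $\Phi(x)\triangleq \sum_{j=1}^M \cosh(\lambda x_j)$, since the $\cosh$ addition formula together with $v_{t+1,j}\in\{-1,1\}$ gives
\[
\Phi(S_{t+1}) \;=\; \cosh(\lambda)\,\Phi(S_t) \;+\; s_{t+1}\sinh(\lambda)\,\bigl\langle v_{t+1},\sinh(\lambda S_t)\bigr\rangle.
\]

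The supermartingale step exploits independence of $v_{t+1}$ from $S_t$. Khintchine's inequality applied to the Rademacher vector $v_{t+1}$ yields $\mathbb{E}\bigl[|\langle v_{t+1},\sinh(\lambda S_t)\rangle|\,\big|\,S_t\bigr] \gtrsim \|\sinh(\lambda S_t)\|_2$, while a convexity argument shows $\|\sinh(\lambda S_t)\|_2^2 \ge \Phi(S_t)^2/(2M)$ once $\Phi(S_t)\ge C_0 M$ for a suitably large $C_0$. Combining $\cosh(\lambda)=1+\lambda^2/2+O(\lambda^4)$, $\sinh(\lambda)=\lambda+O(\lambda^3)$, and calibrating $c_0$ small enough, these ingredients deliver the conditional contraction
\[
\mathbb{E}\bigl[\Phi(S_{t+1})\,\big|\,S_t\bigr] \;\le\; \Phi(S_t)\Bigl(1+\tfrac{\lambda^2}{2}\Bigr) \;-\; c_2\,\lambda\,\frac{\Phi(S_t)}{\sqrt{M}} \;\le\; \Phi(S_t)
\]
whenever $\Phi(S_t)\ge C_0 M$. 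Hence $\Phi(S_t)$ behaves like a supermartingale above the threshold $C_0 M$, and $\mathbb{E}[\Phi(S_t)] = O(M)$ uniformly in $t\le T$.

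To upgrade this to the $\exp(-cM)$ tail on $\|S_t\|_\infty$, introduce the stopping time $\tau = \min\{t : \Phi(S_t)\ge e^{c_3 K}M\}$ and form the exponential supermartingale $Z_t = \exp\bigl(\mu\,\Phi(S_{t\wedge\tau}) - t\,\psi(\mu)\bigr)$ for small $\mu>0$ and $\psi(\mu)=O(\mu\lambda^2 M)$; the one-step multiplicative fluctuation of $\Phi$ is only $\cosh(\lambda)\le 1+O(\lambda^2)$, which makes $Z_t$ a genuine supermartingale for an appropriate choice of $\psi$. Doob's maximal inequality then yields $\mathbb{P}[\tau\le T] \le \exp(-\Omega(M))$ uniformly in $T$. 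The pointwise bound $\Phi(x) \ge \tfrac12 e^{\lambda\|x\|_\infty}$ converts $\Phi(S_T)\le e^{c_3 K}M$ into $\|S_T\|_\infty \le K\sqrt{M}$, completing the proof. The main obstacle is precisely this last step: since the multiplicative per-step increments of $\Phi$ are only $1+O(1/M)$, extracting the \emph{exponential} tail $\exp(-cM)$ (rather than a merely polynomial one) requires a delicate joint calibration of $\lambda$ and $\mu$ that prevents the cumulative stochastic fluctuation from overwhelming the drift across all $T$ steps. Alternative routes achieving the same tail include restarting the algorithm in blocks of length $\Theta(M)$ and taking a union bound, or invoking the randomized self-balancing walk of Alweiss--Liu--Sawhney, whose built-in sub-Gaussian martingale tails deliver the bound directly.
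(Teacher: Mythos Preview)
This theorem is not proved in the present paper; it is quoted as \cite[Theorem~3.4]{bansalspenceronline} and used only as a black box to derive Corollary~\ref{thm:alpha-star-positive}. So there is no ``paper's own proof'' to compare your proposal against.

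That said, your outline does track the actual Bansal--Spencer argument: the potential $\Phi(x)=\sum_j\cosh(\lambda x_j)$ with $\lambda\asymp M^{-1/2}$, the greedy sign rule, the exact one-step identity $\Phi(S_{t+1})=\cosh(\lambda)\Phi(S_t)+s_{t+1}\sinh(\lambda)\langle v_{t+1},\sinh(\lambda S_t)\rangle$ (valid precisely because $v_{t+1}\in\{-1,1\}^M$), Khintchine for the drift, and the deterministic consequence $\Phi(S_{t+1})\le\cosh(\lambda)\Phi(S_t)$ are all correct and are the right ingredients.

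The genuine gap is the step you yourself flag, the $\exp(-cM)$ tail. The construction $Z_t=\exp\bigl(\mu\,\Phi(S_{t\wedge\tau})-t\,\psi(\mu)\bigr)$ is not well-posed as written: $\Phi$ is already exponential in $\|S_t\|_\infty$, so $\exp(\mu\Phi)$ is doubly exponential and its conditional expectation is not controlled by any compensator that depends only on $t$; any workable compensator would have to be state-dependent. The block-restart alternative you mention also does not obviously yield a bound uniform in $T$. A route that does work is to run the drift argument on $\log\Phi$ rather than on $\Phi$: the upward increment of $\log\Phi$ is deterministically at most $\log\cosh(\lambda)=O(1/M)$, while your Khintchine step gives a negative conditional drift of the same order above the threshold $\log(C_0M)$. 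A standard hitting-probability bound for processes with bounded increments and negative drift (equivalently, an exponential supermartingale for $\log\Phi$, not $\Phi$) then controls the probability that $\log\Phi$ ever climbs an additional $\Theta(1)$, uniformly in $T$, and this translates back to the desired $\exp(-cM)$ tail on $\|S_T\|_\infty$. Until that step is carried out, what you have is a correct outline of the method rather than a complete proof.
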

\begin{coro}\label{thm:alpha-star-positive}
There exists an absolute constant $K>0$ such that the following holds. Fix any $\kappa>0$, $\alpha<(\kappa/K)^2$; and consider the matrix $\mathcal{M}\in\R^{\alpha n\times n}$ with i.i.d.\,entries subject to the condition
\[
\mathbb{P}\bigl[\M_{ij}=+1\bigr] = \frac12=\mathbb{P}\bigl[\M_{ij}=-1\bigr],\qquad\text{for all}\qquad 1\le i\le \alpha n,1\le j\le n.
\]
Then, there exists an algorithm $\A$, running in ${\rm poly}(n)$ time, such that w.h.p.\,
\[
\Bigl\|\mathcal{M}\cdot \A\bigl(\mathcal{M}\bigr)\Bigr\|_\infty \le \kappa \sqrt{n}.
\]
\end{coro}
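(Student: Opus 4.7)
The plan is to invoke Theorem~\ref{thm:bansal-spencer} essentially verbatim, identifying the columns of $\mathcal{M}$ with the random vectors $v_i$. Since the entries $\M_{ij}$ are i.i.d.\,Rademacher, each column $\mathcal{C}_i \in \{-1,1\}^{\alpha n}$ of $\mathcal{M}$ is uniformly distributed on $\mathcal{B}_M$ with $M = \lfloor \alpha n \rfloor$, and the columns are mutually independent. This is exactly the input distribution that Theorem~\ref{thm:bansal-spencer} requires.

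Concretely, I would set $T = n$, $M = \lfloor \alpha n\rfloor$, and feed the columns $\mathcal{C}_1,\ldots,\mathcal{C}_n$ of $\mathcal{M}$ into the Bansal-Spencer algorithm. The algorithm returns signs $s_1,\ldots,s_n \in \{-1,1\}$ in $\mathrm{poly}(M,T) = \mathrm{poly}(n)$ time, and Theorem~\ref{thm:bansal-spencer} guarantees that with probability at least $1 - \exp(-cM) = 1 - \exp(-c\alpha n) = 1 - o(1)$,
\[
\left\|\sum_{i=1}^{n} s_i \mathcal{C}_i\right\|_\infty \le K\sqrt{M} \le K\sqrt{\alpha n},
\]
where $K>0$ is the absolute constant supplied by Theorem~\ref{thm:bansal-spencer}.

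Defining $\mathcal{A}(\mathcal{M}) \triangleq (s_1,\ldots,s_n) \in \mathcal{B}_n$, the left-hand side above is precisely $\|\mathcal{M}\cdot\mathcal{A}(\mathcal{M})\|_\infty$. Using the hypothesis $\alpha < (\kappa/K)^2$, we obtain $K\sqrt{\alpha n} < \kappa\sqrt{n}$, which yields the desired bound w.h.p.\,and completes the proof.

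There is essentially no obstacle here: the only thing to verify is the distributional match between the columns of $\mathcal{M}$ and the $v_i$ of Theorem~\ref{thm:bansal-spencer}, which is immediate from the Rademacher hypothesis on the entries. The polynomial running time and high-probability guarantee are inherited directly from the cited theorem, and the scaling $\alpha \lesssim \kappa^2$ emerges naturally from squaring the $\sqrt{M}/\sqrt{n}$ ratio.
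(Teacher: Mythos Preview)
Your proposal is correct and matches the paper's own argument essentially verbatim: identify the columns of $\mathcal{M}$ with the vectors $v_i$ in Theorem~\ref{thm:bansal-spencer}, set $M=\lfloor \alpha n\rfloor$ and $T=n$, and observe that $K\sqrt{M}\le K\sqrt{\alpha n}<\kappa\sqrt{n}$ under the hypothesis $\alpha<(\kappa/K)^2$. If anything, your variable bookkeeping is cleaner than the paper's somewhat overloaded use of $n$ and $T$.
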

Corollary~\ref{thm:alpha-star-positive} is a direct consequence of Theorem~\ref{thm:bansal-spencer}. Indeed, consider $\mathcal{M}\in\{\pm 1\}^{\alpha T\times T}$ with $\alpha = n/T$, whose columns are $v_i$, $1\le i\le T$. Then one can find, in polynomial (in $n,T$) time, a $\sigma\in \mathcal{B}_T$ such that \[
\bigl\|\mathcal{M}\sigma \bigr\|_\infty\le K\sqrt{n} = K\sqrt{\alpha T}.
\]
Since $\alpha<(\kappa/K)^2$, the claim follows.

Admittedly, their result is established for the case of i.i.d.\,Rademacher disorder. Nevertheless, due to the aforementioned universality guarantees encountered in perceptron-like models, it is expected that the exact same guarantee (perhaps with a modified constant $K$) remains true for the case of i.i.d.\,standard normal disorder. 

\paragraph{A Conjecture on the Algorithmic Threshold.}
Recall from our prior discussion that for many random computational problems, the $m-$OGP threshold coincides (or nearly coincides) with conjectured algorithmic threshold. Examples include the problem of finding the largest independent set in random sparse graphs~\cite{rahman2017local,wein2020optimal}, NAE-$k$-SAT~\cite{gamarnik2017performance}, random $k-$SAT~\cite{bresler2021algorithmic}, mixed even $p-$spin model~\cite{huang2021tight}, and so on. In light of the preceding discussion, this is also the case for the \texttt{SBP} model: the limit of known algorithms  is at $\Theta(\kappa^2)$, whereas, as we establish in Theorem~\ref{thm:m-OGP-small-kappa}, the ensemble $m-$OGP holds for densities $\Omega\bigl(\kappa^2\log_2\frac1\kappa\bigr)$ in the regime $\kappa\to 0$.

On the other hand, unlike models such as the independent set problem, $k-$SAT, or the planted clique; prior to this work no conjectures were proposed regarding the threshold for algorithmic hardness in \texttt{SBP} model in the $\kappa\to 0$ regime. Here, we do put forward such a conjecture. To that end, let
\begin{equation}\label{eq:alpha-m}
    \alpha_m^*(\kappa)\triangleq \inf\left\{\alpha\in\bigl[0,\alpha_c(\kappa)\bigr]:\exists 1>\beta>\eta>0,\liminf_{n\to\infty}\mathbb{P}\Bigl[\mathcal{S}_\kappa\bigl(\beta,\eta,m,\alpha,\{0\}\bigr)=\varnothing\Bigr]=1\right\}.
\end{equation}
That is, $\alpha_m^*(\kappa)$ is the threshold for the $m-$OGP (with appropriate $\beta,\eta$). Let
\begin{equation}\label{eq:alpha-star}
\alpha_{\infty}^*(\kappa)\triangleq \lim_{m\to\infty}\alpha_m^*(\kappa),
\end{equation}
where the limit is well-defined  since $\bigl(\alpha_m^*\bigr)_{m\ge 1}$ is a non-increasing sequence of non-negative real numbers. Then we conjecture $\alpha_\infty^*(\kappa)$ marks the true algorithmic threshold for this problem. 
\begin{conjecture}\label{conj-alpha-m}
For any $\epsilon>0$, there exists a $\kappa^*(\epsilon)>0$ such that the following hold for every $\kappa\le \kappa^*(\epsilon)$:
\begin{itemize}
    \item There exists no polynomial-time search algorithms for the \texttt{SBP} if $\alpha>(1+\epsilon)\alpha_\infty^*(\kappa)$.
    \item There exists a polynomial-time search algorithm for the \texttt{SBP} if $\alpha<(1-\epsilon)\alpha_\infty^*(\kappa)$.
\end{itemize}
\end{conjecture}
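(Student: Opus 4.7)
Since the statement is a conjecture rather than a theorem, any ``proof plan'' is necessarily speculative, and I will organize it around the two directions separately.

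For the hardness direction ($\alpha > (1+\epsilon)\alpha_\infty^*$), the plan is to extend Theorem~\ref{thm:stable-hardness} to a broader class of algorithms. For each fixed $m$, that theorem already rules out stable algorithms in the sense of Definition~\ref{def:admissible-alg} above the $m$-OGP threshold $\alpha_m^*$. The first step is to pass to the limit $m\to\infty$: since $(\alpha_m^*)_{m\ge 1}$ is monotone non-increasing, for any $\epsilon>0$ and any $\alpha > (1+\epsilon)\alpha_\infty^*$ there exists a finite $m_0$ for which $\alpha > (1+\epsilon/2)\alpha_{m_0}^*$, so one can invoke Theorem~\ref{thm:stable-hardness} at $m_0$. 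The substantive step is to close the gap between ``stable'' and ``polynomial-time''. I would do this class by class, following the template now standard in the OGP literature: establish stability with the parameters required by Theorem~\ref{thm:stable-hardness} for (i) low-degree polynomials via Boolean Fourier truncation and Gaussian hypercontractivity, as in~\cite{gamarnik2020low,bresler2021algorithmic}; (ii) Approximate Message Passing and Langevin dynamics run for $O(1)$ time, via state evolution and Lipschitz coupling arguments as in~\cite{gamarnikjagannath2021overlap}; (iii) low-depth circuits, via the noise-sensitivity framework of~\cite{gamarnik2021circuit}; and (iv) Glauber-type local dynamics via path coupling under a correlated-matrix resampling.

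For the algorithmic direction ($\alpha < (1-\epsilon)\alpha_\infty^*$), the plan is to construct an explicit polynomial-time algorithm that succeeds throughout the full sub-$\alpha_\infty^*$ regime. I would start by proving the sharper geometric statement $\alpha_\infty^*(\kappa) = \widetilde\Theta(\kappa^2)$ conjectured in Conjecture~\ref{conj:tilde}, which pins down the target. Then, building on Bansal--Spencer~\cite{bansalspenceronline} (which, per Corollary~\ref{thm:alpha-star-positive}, already gives $\alpha = O(\kappa^2)$), I would attempt to shave the hidden constants by replacing the internal potential function with one calibrated to the first-moment curve $\alpha_c(\kappa)$, and to match any residual polylogarithmic factor via a hierarchical/recursive majority step in the spirit of Kim--Roche~\cite{kim1998covering} and Abbe--Li--Sly~\cite{abbe2021binary}. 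An alternative candidate is an AMP algorithm engineered so that its state evolution drives the empirical $\|\M\sigma\|_\infty$ below $\kappa\sqrt{n}$; its analysis would require a dedicated Gaussian fixed-point computation combined with a rounding step of the type used for spin glass optimization.

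The main obstacle will be the hardness direction. The $m$-OGP is an unconditional barrier only against algorithms whose output changes continuously with the disorder, and there is no general reduction from an arbitrary polynomial-time algorithm to a stable one without invoking an unproved assumption such as $P\ne NP$. Accordingly, I expect that a literal proof of Conjecture~\ref{conj-alpha-m} is out of reach with current tools, and that the realistic outcome of executing the plan above is not a single theorem but rather a growing list of class-restricted impossibility results (covering the families enumerated in the previous paragraph), aggregated together with the matching algorithmic upper bound, to yield strong evidence that $\alpha_\infty^*(\kappa)$ is indeed the true algorithmic threshold.
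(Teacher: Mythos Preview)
The statement is a conjecture, and the paper does not prove it; it is presented as an open problem supported by the surrounding evidence (Theorem~\ref{thm:m-OGP-small-kappa}, Theorem~\ref{thm:stable-hardness}, Corollary~\ref{thm:alpha-star-positive}, and the analogy with other random CSPs). You correctly recognize this and offer a speculative roadmap rather than a proof, which is the appropriate response.

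Your outline is broadly consistent with how the paper motivates the conjecture: the hardness side is backed by the $m$-OGP barrier for stable algorithms (with the expectation that stability captures the relevant algorithm classes), and the algorithmic side is backed by Bansal--Spencer at $\Theta(\kappa^2)$. Your acknowledgment that the hardness direction cannot currently be established for \emph{all} polynomial-time algorithms, and that the realistic deliverable is a collection of class-restricted impossibility results, matches the paper's own framing in Section~\ref{sec:open}. One minor point: your plan to ``pass to the limit $m\to\infty$'' via monotonicity is fine, but note that the stability parameters in Theorem~\ref{thm:stable-hardness} degrade with $m$ (through $Q$ and $T$ in~\eqref{eq:param-C-Q-T}), so for each fixed $\alpha>(1+\epsilon)\alpha_\infty^*$ you must freeze a finite $m_0$ before invoking the theorem --- which is exactly what you do.
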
 
Recall that per Theorem~\ref{thm:m-OGP-small-kappa}, $\alpha_\infty^*(\kappa) = O\bigl(\kappa^2\log_2\frac1\kappa\bigr)$. Notice that the $\alpha_m^*(\kappa)$ (hence  the $\alpha_{\infty}^*(\kappa)$) are defined for the non-ensemble variant of $m-$OGP, $\mathcal{I}=\{0\}$. That is, $\sigma^{(i)}$, $1\le i\le m$, satisfy constraints dictated  by the rows of the {\bf same} disorder matrix $\M\in\R^{M\times n}$ with i.i.d. $\cN(0,1)$ (or Rademacher) entries, where $M=\lfloor\alpha n\rfloor$. This is merely for simplicity: the ensemble $m-$OGP and the non-ensemble $m-$OGP often take place at the {\bf exact same} threshold. The  former, on the other hand, is just technically more involved; and is necessary to rule out  certain classes of algorithms via an interpolation/contradiction argument as we do in this paper. The structural property implied by the non-ensemble OGP already suffices to predict the desired algorithmic threshold.
\subsection{Stability of the Kim-Roche Algorithm}\label{sec:kim-roche1}
Having established that the $m-$OGP is a provable barrier for the class of stable algorithms, it is then natural to inquire whether the class of stable algorithms captures the implementations of known algorithms for  perceptron models. In this section, we investigate this question for a certain algorithm devised for the \emph{asymmetric} model, which we recall from~\eqref{eq:asym-model}.

Kim and Roche devised in~\cite{kim1998covering} an algorithm which admits, as its input, a disorder matrix $\M\in\R^{k\times n}$ with i.i.d.\,entries; and returns a $\sigma\in\bincube$ such that $\M\sigma\in\R^k$ is entry-wise non-negative as long as $k<0.005n$. That is, their algorithm provably returns a $\sigma\in S_\alpha^{A}(0)$ as long as $\alpha<0.005$. (We use $k$ in place of $M$ for the number of constraints so as to be consistent with their notation.) We denote their algorithm by $\KRA:\R^{k\times n}\to \bincube$ as a shorthand notation. It is worth noting that while their results are established for the case where $\M$ consists of i.i.d.\,Rademacher entries, they easily extend to the case of Gaussian $\cN(0,1)$ entries, which will be our focus here. $\KRA$ takes $O\bigl(\log_{10}\log n_{10}\bigr)$ steps, each requiring ${\rm poly}(n)$ time.\footnote{Throughout this section, we consider all logarithms in base 10 in order to be consistent with the notation of~\cite{kim1998covering}.} Namely, $\KRA$ is an \emph{efficient} algorithm that provably works in the so-called \emph{linear regime}, $k=\Theta(n)$. Admittedly, $\KRA$ is tailored for the asymmetric model. Nevertheless, there are only a few known algorithms with rigorous guarantees for perceptron models; thus it is indeed natural to explore the stability of $\KRA$. 

\paragraph{Operational Parameters.}
We next provide details of the Kim-Roche algorithm from~\cite{kim1998covering}. Let
\begin{equation}\label{eq:f-j}
f_0=1,\quad f_1=\frac{1}{200},\qquad\text{and}\quad f_j = 10^{-2^j},\qquad\text{for}\qquad 2\le j\le N,
\end{equation}
as in~\cite[Equation~5.42]{kim1998covering}, where $N=\lceil C\log_{10}\log_{10} n\rceil$ is the total number of rounds. Next, let
\begin{equation}\label{eq:k-j}
k_0 =k,\quad k_1= 2\lfloor (1/2)(n/10^8)\rfloor +1,\quad\text{and}\quad k_s=2\lfloor (1/2)(n\cdot f_s^3)\rfloor+1\quad\text{for}\quad 2\le s\le N
\end{equation}
as in~\cite[Equation~5.46]{kim1998covering}. 
Set 
\begin{equation}\label{eq:kr-A}
A\triangleq \sum_{0\le j\le N}f_j;
\end{equation}
and let $n_j$  be defined by
\begin{equation}\label{eq:n-j}
n_0=\lfloor n/A\rfloor,\quad\text{and}\quad n_j = \left \lfloor \frac{n}{A} \sum_{0\le i\le j}f_i\right\rfloor  - \left \lfloor\frac{n}{A}\sum_{0\le i\le j-1} f_i \right\rfloor,\quad1\le j\le N
\end{equation}
per~\cite[Equation~5.44]{kim1998covering}. 
\paragraph{Informal Description of the Algorithm.} We now describe the $\KRA$ algorithm. To that end, denote by $R_1,\dots,R_k\in\R^n$ the rows of $\M$. Given a set $P$ of rows  and a set $Q$ of columns, let $\M(P,Q)$ denote the $|P|\times |Q|$ submatrix $M_{ij}$ obtained by retaining rows $i\in P$ and columns $j\in Q$. 
\begin{itemize}
    \item In the first round, $\KRA$ assigns $n_0$ coordinates of $\sigma$ by taking the majority vote in submatrix $\mathcal{M}([1,k]:[1,n_0])$. That is,
    \[
    \sigma_j = {\rm sgn}\left(\sum_{1\le i\le k}\M_{ij}\right)\qquad\text{for}\qquad 1\le j\le n_0. 
    \]
    \item For each row $R_i$ of $\M$, it then computes the partial inner products $\ip{R_i}{\sigma}$ (restricted to $\R^{n_0}$), finds an index set $\mathcal{I}_1$ corresponding to $k_1$ smallest indices; and takes a majority vote in the submatrix $\mathcal{M}(\mathcal{I}_1:[n_0+1,n_0+n_1])$ and repeats this procedure. 
    \item In particular, at the beginning of round $j\ge 1$, one has a vector $\sigma\in\{\pm 1\}^{\sum_{0\le s\le j-1}n_s}$. One  then computes the partial inner products $\ip{R_i}{\sigma}$, $1\le i\le k$, and computes an index set  $\mathcal{I}_j$ with $|\mathcal{I}_j|=k_j$ such that $i\in \mathcal{I}_j$ iff $\ip{R_i}{\sigma}$ is among the $k_j$ smallest (partial) inner products. Taking then the majority vote in the submatrix
\[
\mathcal{M}\left(\mathcal{I}_j:\left[1+\sum_{0\le s\le j-1}n_s,\sum_{0\le s\le  j}n_s\right]\right)
\]
settles next $n_j$ entries of $\sigma$, that is the entries $\sigma_j$, $1+\sum_{0\le s\le j-1}n_s\le j\le \sum_{0\le s\le  j}n_s$. Namely, a $k_j\times n_j$ submatrix is used for determining next $n_j$ components of $\sigma$ and $\overline{\sigma}$. 
\end{itemize}
Numerically, $n_0\approx 0.995n$. Thus even at the beginning, $\KRA$ already settles \emph{most} of the entries of $\sigma\in\bincube$.

 Having described the $\KRA$ informally, we are now in a position to state our main result. We show that $\KRA$ is stable in the sense of Definition~\ref{def:admissible-alg}.
\begin{theorem}\label{thm:kr-stable}
Let $\M\in\R^{k\times n}$ and $\M'\in\R^{k\times n}$ be two i.i.d.\,random matrices each with i.i.d.$\mathcal{N}(0,1)$ entries; and let
\begin{equation}\label{eq:matrix-interpolate}
\overline{\M}(\tau)\triangleq \cos(\tau)\M+\sin(\tau)\M'\in\R^{k\times n},\qquad \tau\in\left[0,\frac{\pi}{2}\right].
\end{equation}
Set $\tau=n^{-0.02}$. Then,
\[
\mathbb{P}\Bigl[d_H\Bigl(\KRA(\M),\KRA\bigl(\overline{\M}(\tau)\bigr)\Bigr)=o(n)\Bigr]\ge 1-O\left(n^{-\frac{1}{41}}\right).
\]
\end{theorem}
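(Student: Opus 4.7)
The plan is to analyze the $N+1 = O(\log\log n)$ rounds of $\KRA$ simultaneously on inputs $\M$ and $\overline{\M}(\tau)$, and to control round-by-round the number of output bits of $\KRA(\M)$ and $\KRA(\overline{\M}(\tau))$ that disagree. The driving observation is that every output bit of $\KRA$ is the sign of a sum of i.i.d.\,Gaussians of the form $S = \sum_{i \in \mathcal{I}} \M_{ij}$, and for such a statistic we have the exact Gaussian-orthant identity
\[
\mathbb{P}\bigl[\sign(S) \ne \sign(\cos(\tau) S + \sin(\tau) T)\bigr] = \tfrac{\tau}{\pi},
\]
where $T = \sum_{i \in \mathcal{I}} \M'_{ij}$ is an independent copy of $S$. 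Since $\tau = n^{-0.02}$ is polynomially small and there are only $n$ output bits in total, first-moment control heuristically yields an expected Hamming disagreement of $O(n\tau) = O(n^{0.98}) = o(n)$. The task is to convert this heuristic into a rigorous high-probability bound that survives the recursive dependence between rounds of $\KRA$.

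Round $0$ is the cleanest case and gives the base of the induction. For each $j \in [n_0]$ the unperturbed algorithm writes $\sigma_j = \sign(\sum_{i=1}^k \M_{ij})$ and the perturbed algorithm writes $\overline\sigma_j = \sign(\sum_{i=1}^k \overline\M(\tau)_{ij})$, and the column sums are mutually independent across $j \in [n_0]$. Applying the orthant identity coordinatewise and then Bernstein's inequality to the i.i.d.\,indicator variables $\mathbbm{1}\{\sigma_j \ne \overline\sigma_j\}$ shows the round-$0$ disagreement count $D_0$ equals $n_0 \tau/\pi + O(\sqrt{n\tau\log n}) = O(n^{0.98})$ with failure probability $\exp(-\Omega(n^{0.98}))$. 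This supplies the inductive invariant $E_0 = D_0 = o(n)$ at round $0$ with room to spare.

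For rounds $j \ge 1$, I would propagate the invariant that the two outputs agree on all but $E_j = o(n)$ of the $\sum_{s \le j} n_s$ coordinates decided so far, and that the partial inner products $P_i := \ip{R_i}{\sigma_{<j}}$ and $\overline{P}_i := \ip{\overline{R}_i}{\overline\sigma_{<j}}$ satisfy $\max_i |P_i - \overline P_i| \le \delta_{j-1}$ with high probability. By the triangle inequality,
\[
|P_i - \overline P_i| \le |\ip{R_i - \overline R_i}{\sigma_{<j}}| + |\ip{\overline R_i}{\sigma_{<j} - \overline\sigma_{<j}}|,
\]
each term of which is a (conditionally) centered Gaussian of variance $O(\tau^2 n)$ and $O(E_{j-1})$ respectively, so a union bound over $i \in [k]$ gives $\delta_{j-1} = O(\tau\sqrt{n\log n} + \sqrt{E_{j-1}\log n})$. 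Since the $P_i$ are approximately independent $\cN(0,\sum_{s<j} n_s)$, Gaussian anti-concentration of the $k_j$-th order statistic yields $|\mathcal{I}_j \triangle \overline{\mathcal{I}}_j| = O(k \cdot \delta_{j-1}/\sqrt{n})$ with high probability. On the common rows $\mathcal{I}_j \cap \overline{\mathcal{I}}_j$ the round-$0$ orthant argument applies verbatim and contributes $O(n_j \tau)$ new disagreements, while the symmetric difference contributes at most $n_j \cdot |\mathcal{I}_j \triangle \overline{\mathcal{I}}_j|/k_j$. Plugging in the prescribed values of $k_j, n_j, f_j$ and $\tau = n^{-0.02}$ keeps $E_j = o(n)$ at every round; rounds $j \ge 2$ are in fact essentially free because $\sum_{j \ge 2} n_j = O(n \cdot 10^{-4})$.

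Summing $E_j$ over the $N = \Theta(\log\log n)$ rounds and union-bounding the $O(\log\log n)$ high-probability events yields $d_H(\KRA(\M), \KRA(\overline{\M}(\tau))) = o(n)$ with probability at least $1 - O(n^{-1/41})$; the specific exponent $1/41$ arises from balancing the subgaussian tail for $\max_i |P_i - \overline P_i|$ against the anti-concentration slack for the $k_j$-th order statistic, both of which scale polynomially in $\tau$. The main obstacle is the inductive propagation step: the rows of $\overline{\M}(\tau)$ that enter round $j$ are correlated with $\sigma_{<j}$ and $\overline\sigma_{<j}$ through the selection rule defining $\mathcal{I}_j$, so direct application of Bernstein's inequality at round $j$ is not valid. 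I would handle this by a conditioning argument at the start of each round: writing $\overline{\M}(\tau) = \cos(\tau)\M + \sin(\tau)\M'$, I would condition on the $\sigma$-algebra generated by the previous rounds' outputs and by the history of $\M$, and exploit that the conditional law of $\M'$ restricted to the fresh columns used in round $j$ remains i.i.d.\,$\cN(0,1)$, independent of the history—so that the orthant identity and anti-concentration steps apply to genuinely fresh randomness at each round.
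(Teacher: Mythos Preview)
Your high-level plan matches the paper's: round-by-round induction, the orthant identity for round $0$, propagation via control of $P_i-\overline P_i$ and of $|\mathcal I_j\triangle\overline{\mathcal I}_j|$, then a union bound over $O(\log\log n)$ rounds. Two points, however, are genuine gaps.

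\textbf{The anti-concentration step is not justified.} You write ``since the $P_i$ are approximately independent $\cN(0,\sum_{s<j}n_s)$, Gaussian anti-concentration of the $k_j$-th order statistic yields $|\mathcal I_j\triangle\overline{\mathcal I}_j|=O(k\cdot\delta_{j-1}/\sqrt n)$''. But the $P_i=\ip{R_i}{\sigma_{<j}}$ are \emph{not} approximately independent Gaussians: $\sigma_{<j}$ is a function of all the rows $R_1,\dots,R_k$ simultaneously (through the majority votes and the recursive selection of $\mathcal I_1,\dots,\mathcal I_{j-1}$), so the $P_i$ are correlated in a non-Gaussian way. The conditioning argument you sketch in your last paragraph only supplies fresh independence for the \emph{new} columns used in round $j$; it says nothing about the joint law of $(P_i)_{i\le k}$, which live on old columns. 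The paper deals with this by an explicit decoupling lemma (their Lemma~6.8): for each pair $i,i'$ one replaces $\sigma$ by a surrogate $\widetilde\sigma$ built from majorities over $\mathcal I_t\setminus\{i,i'\}$, shows $\ip{R_i}{\sigma}-\ip{R_i}{\widetilde\sigma}$ has variance $O(n^{3/4+\epsilon})$, and then computes second moments of $|\{i:\ip{R_i}{\sigma}<x\}|$ against the genuinely independent Gaussians $\ip{R_i}{\widetilde\sigma}$. This second-moment calculation yields a Chebyshev-type failure probability $O(n^{-1/40+\epsilon})$ per round, which is the origin of the $n^{-1/41}$ in the statement. Your proposal omits this step entirely, and without it neither the order-statistic anti-concentration nor the claimed failure probability is available. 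The same dependence issue bites your $\ell_\infty$ bound on $|P_i-\overline P_i|$: the term $\ip{\overline R_i}{\sigma_{<j}-\overline\sigma_{<j}}$ is not conditionally Gaussian in any useful $\sigma$-algebra, because $\overline\sigma_{<j}$ depends on $\overline R_i$. The paper instead bounds $\sum_i|P_i-\overline P_i|$ \emph{uniformly} over all pairs $(\sigma,J)$ with $|J|\le n^{1-\alpha}$ (their Lemma~6.6), which absorbs the dependence at the cost of a union bound over $(\sigma,J)$ rather than over $i$.

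\textbf{Rounds $j\ge 2$ are not free for the claimed conclusion.} You write that they are ``essentially free because $\sum_{j\ge 2}n_j=O(10^{-4}n)$''. But $10^{-4}n=\Theta(n)$, not $o(n)$, so even total disagreement on those coordinates would already violate the $o(n)$ target. The paper's resolution is different: it runs the full induction for $L=c\log_{10}\log_{10}n$ rounds (Proposition~3.5), and only then observes that the \emph{remaining} rounds $L<j\le N$ contribute $\sum_{j>L}n_j=o(n)$ coordinates, because $f_j=10^{-2^j}$ decays doubly exponentially. You need the induction to run for $\Theta(\log\log n)$ rounds, not just one or two, and you need to check that your recurrence for $E_j$ does not blow up over that many iterations---the paper tracks this via a geometric decay $\alpha_\ell=\alpha_0\cdot 10^{-\ell}$ on the exponents.
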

As a result, the Kim-Roche algorithm is
\[
\Bigl(\cos\bigl(n^{-0.02}\bigr),o\bigl(1/n\bigr),O\left(n^{-1/41}\right),Cn,L\Bigr)-\text{stable}
\]
in the sense of Definition~\ref{def:admissible-alg} for any $C>0$ and $L>0$ (see below for further details).

In order to establish Theorem~\ref{thm:kr-stable}, we first establish in Section~\ref{sec:proof-of-kr} an auxiliary result, Proposition~\ref{prop:kr-stable}, which pertains to the partial implementation of $\KRA$. That is, we analyze $\KRA$ run for $c\log_{10}\log_{10}n$ rounds (where $c>0$ is a small enough constant) as opposed to its full $N=\lceil C\log_{10}\log_{10}n \rceil$ round implementation; and show that it is stable. We then show in Section~\ref{sec:sub-sub-kr-stable} that the number of unassigned coordinates, $\sum_{c\log_{10}\log_{10}n+1\le j\le N}n_j$, is $o(n)$. This, together with Proposition~\ref{prop:kr-stable}, establishes Theorem~\ref{thm:kr-stable}.

Several pertinent remarks are now in order. We first highlight that $\KRA$ is indeed stable in the sense of Definition~\ref{def:admissible-alg} with the parameters noted above. (Here, we suppress the randomness, and the success guarantee is now for the event $\bigl\{\M\A(\M)\ge 0\bigr\}$ entry-wise.) To that end, an inspection of~\cite[Theorem~1.4]{kim1998covering} reveals that for $\M\in\R^{k\times n}$ with $k=\alpha n$ having i.i.d.\,$\cN(0,1)$ entries,
\[
\mathbb{P}_{\M}\Bigl[\M\A(\M)\ge 0\Bigr]\ge 1-o\bigl(n^{-1}\bigr),
\]
as long as $\alpha<0.005$. With this, we obtain that $\KRA$ is 
\[
\Bigl(\cos\bigl(n^{-0.02}\bigr),o\bigl(1/n\bigr),O\left(n^{-1/41}\right),Cn,L\Bigr)-\text{stable}
\]
for any $\alpha<0.005$, $C>0$ and $L>0$ for the asymmetric binary perceptron in the sense of Definition~\ref{def:admissible-alg}. (In fact, one can take $L=0$ as $C>0$.) Note though that this parameter scaling is not comparable with Theorem~\ref{thm:m-OGP-small-kappa} since Theorem~\ref{thm:m-OGP-small-kappa} pertains to the symmetric model, see below for more details.

Recalling
\[
\mathcal{O}\bigl(\sigma,\overline{\sigma}\bigr)=n^{-1}\ip{\sigma}{\overline{\sigma}}=1-2d_H\bigl(\sigma,\overline{\sigma}\bigr)/n,
\]
it follows that in the setting of Theorem~\ref{thm:kr-stable}, $\mathcal{O}\bigl(\sigma,\overline{\sigma}\bigr)=1-o(1)$. That is, $\sigma$ and $\overline{\sigma}$ agree on all but a vanishing fraction of coordinates. Informally, this suggests that $\KRA$ cannot overcome the overlap barrier of $\eta$ appearing in Theorems~\ref{thm:2-OGP},~\ref{thm:3-OGP}, and~\ref{thm:m-OGP-small-kappa} as $\eta=O(1)$. However, we established the OGP results for the symmetric case as opposed to the asymmetric model for which $\KRA$ is devised. Thus Theorem~\ref{thm:kr-stable} is not exactly compatible with the hardness result, Theorem~\ref{thm:stable-hardness}. A more compelling picture would be to show that the OGP takes place also for the asymmetric model, with an $\eta$ that is of order $O(1)$; and then couple such result with Theorem~\ref{thm:kr-stable}. We leave this as a very interesting direction for future work.

Lastly, it would also be very interesting to prove that the algorithm by Abbe, Li and Sly~\cite{abbe2021binary} devised for the \texttt{SBP} is also stable in the relevant sense. An inspection of~\cite{abbe2021binary} reveals that several of the key  steps are similar to~\cite{kim1998covering}, but there are a few differences which prevent immediate verification of stability. We now elaborate on this by highlighting fundamental differences between their algorithm and $\A_{\rm KR}$. 
Inspecting~\cite[Page~7]{abbe2021binary}, it appears that a major change is the incorporation of extra sign parameters inside the summation: using the exact same notation as in~\cite{abbe2021binary},  this is
\[
{\rm sgn}\left(\sum_{r\in\mathcal{R}_i}-{\rm sgn}\bigl(S^{(r)}(0:i-1)\bigr)G_{r,j}\right).
\]
This extra term is independent of the summands, and therefore, is benign. As a result, it appears that our Lemmas~\ref{lemma:main-card} and~\ref{lemma:variance-up-bd} apply almost verbatimly. Their algorithm has two additional steps, one in the beginning and one at the end; these steps appear to be stable, as well. A main technical challenge, however, is that their algorithm requires $O(\sqrt{\log n})$ rounds, as opposed to the Kim-Roche algorithm that requires only $O(\log \log n)$ rounds. The choice of $\log\log n$ is crucial for our argument, see in particular Proposition~\ref{prop:kr-stable} below. It is not clear though if they need $O(\sqrt{\log n})$ steps to find a large cluster and whether one can achieve the much more modest goal of finding a solution $\sigma$ in $O(\log\log n)$ rounds. We leave the formal investigation of this as a very interesting direction for future research.

\section{Natural Limitations of Our Techniques}\label{sec:m-ogp-best-possible}
Recall from our earlier discussion that the algorithmic threshold for the SBP model appears to be $\Theta(\kappa^2)$, whereas we established $m-$OGP for densities above $\Omega\bigl(\kappa^2\log_2\frac1\kappa)$. That is, the OGP threshold is off by a polylogaritmic (in $1/\kappa$) factor.

In this section, we investigate whether one can shave off this extra $\log_2\frac1\kappa$ factor. In a nutshell, we provide an informal argument suggesting that to establish $m-$OGP for the structure that we consider, $\alpha=\Omega\bigl(\kappa^2\log_2 \frac1\kappa\bigr)$ appears necessary. 

To that end, fix first a $\kappa>0$, where one should think of $\kappa$ to be sufficiently small. An inspection of the proof of Theorem~\ref{thm:m-OGP-small-kappa} reveals that the first moment is controlled by a certain $\Upsilon(\beta,\alpha)$ appearing in~\eqref{eq:upsilon}, which we repeat below for convenience:
\begin{equation}\label{eq:upsilon-text}
    \Upsilon(\beta,\alpha) = h\left(\frac{1-\beta}{2}\right)-\frac{\alpha}{2}\log_2(2\pi)+\alpha \log_2(2\kappa)-\frac{\alpha}{2}\log_2(1-\beta).
\end{equation}
In particular, for the first moment argument to work, it should be the case that $\Upsilon(\beta,\alpha)$ is negative for an appropriate choice of parameters $\beta,\alpha$.

Now set $\delta\triangleq \frac{1-\beta}{2}$. Yet another inspection of the proof of Theorem~\ref{thm:m-OGP-small-kappa} shows that  for the first moment method to be applicable, $\beta$ should be close to one. For this reason, the regime of interest below is therefore $\delta \to 0$ and $\kappa\to 0$.
\subsection*{Step 1: $\delta>\kappa^2$ is necessary.} Note that as $\delta\to 0$,
\[
h(\delta) = -\delta\log_2\delta -(1-\delta)\log_2(1-\delta) = -\delta\log_2\delta + \Theta_\delta\bigl(\delta\bigr),
\]
using the Taylor expansion, \[
\log_2(1-\delta)=-\frac{1}{\ln 2}\delta + o(\delta).
\]
With this, we manipulate~\eqref{eq:upsilon-text} to arrive at
\[
\Upsilon(\beta,\alpha) = \delta\log_2\frac1\delta +\frac{\alpha}{2}\log_2\frac1\delta - \alpha\log_2\frac1\kappa +\Theta_\delta\bigl(\delta)+o_{\delta,\kappa}\bigl(\alpha\log_2\kappa\bigr).
\]
Now, for $\Upsilon(\beta,\alpha)$ to be negative, we must have $\alpha\log_2\frac 1\kappa>\frac{\alpha}{2}\log_2\frac1\delta$. This immediately yields $\delta>\kappa^2$ to be a \emph{necessary} condition. 
\subsection*{Step 2: $\alpha=\Omega\bigl(\kappa^2\log_2\frac1\kappa\bigr)$ is necessary.} Set $\delta=C\kappa^2$ for $C\triangleq C(\kappa)>1$.  The expression for $\Upsilon(\beta,\alpha)$ then becomes
\begin{align}
    \Upsilon(\beta,\alpha)&=\underbrace{h\left(\frac{1-\beta}{2}\right)}_{-\delta\log_2\delta+\Theta_\delta(\delta)}-\frac{\alpha}{2}\log_2(2\pi)+\alpha \log_2(2\kappa)-\frac{\alpha}{2}\log_2(1-\beta)\nonumber\\
    &=-\delta\log_2\delta+\Theta_\delta(\delta)+\alpha\log_2\kappa-\frac{\alpha}{2}\log_2\delta-\frac{\alpha}{2}\log_2 \pi\nonumber\\
    &=2C\kappa^2\log_2\frac1\kappa+C\kappa^2\log_2\frac1C - \frac{\alpha}{2}\log_2 C-\frac{\alpha}{2}\log_2\pi\label{eq:upsilon-helpful}.
\end{align}
Note that $\delta<1$, and thus $C<\frac{1}{\kappa^2}$. Thus $\log_2 C<2\log_2\frac1\kappa$. Equipped with this observation, we investigate two separate cases for the growth of $C$.

\paragraph{Case 1: $\log_2 C=o_\kappa(\log_2\frac1\kappa)$.} Then $C\kappa^2\log_2\frac1\kappa$ dominates the term, $C\kappa^2\log_2\frac1C$ apearing in~\eqref{eq:upsilon-helpful}. In this case for $\Upsilon(\beta,\alpha)$ to be negative, one must indeed ensure
\[
\frac{\alpha}{2}\log_2 C>2C\kappa^2\log_2\frac1\kappa.
\]
Rearranging this, we find
\[
\alpha>\frac{4C}{\log_2 C}\kappa^2\log_2\frac1\kappa \implies \alpha = \Omega\left(\kappa^2\log_2\frac1\kappa\right),
\]
which is precisely our claim. In fact, the parameter $\beta$ for which Theorem~\ref{thm:m-OGP-small-kappa} is established is of form $\beta = 1-\Theta(\kappa^2)$,  see~\eqref{eq:beta-star}. Hence, one has $\delta=\Theta(\kappa^2)$ and $C=\Theta_\kappa(1)$, thus $\log_2 C $ is indeed $o_\kappa\bigl(\log_2\frac1\kappa\bigr)$.
\paragraph{Case 2: $\log_2C = \Theta_\kappa\bigl(\log_2\frac1\kappa\bigr)$.} In this case, we now show that the threshold on $\alpha$ is worse than the one appearing in the previous case. 

To that end, set $C\sim \kappa^{-\gamma}$, $\gamma<2$: that is, we assume
\[
\lim_{\kappa \to 0}\frac{\log_2 C}{\log_2\frac1\kappa}=\gamma.
\]
We focus on certain terms appearing in~\eqref{eq:upsilon-helpful}. Note that, \[
C\kappa^2\log_2\frac1\kappa \sim 2\kappa^{2-\gamma}\log_2\frac1\kappa,\quad C\kappa^2\log_2 \frac1C=-\gamma \kappa^{2-\gamma}\log_2\frac1\kappa,\quad\text{and}\quad  \frac{\alpha}{2}\log_2 C \sim \frac{\alpha}{2}\gamma \log_2\frac1\kappa.
\]
Combining these findings, we immediately observe that for $\Upsilon(\beta,\alpha)$ to be negative, one must have $\alpha\sim \kappa^{2-\gamma}$, where any $\gamma<2$ works. Notice that this threshold is strictly worse than $\kappa^2\log_2\frac1\kappa$.

Hence, $\alpha=\Omega\bigl(\kappa^2\log_2\frac1\kappa\bigr)$ is indeed necessary for $m-$OGP (for the configuration we consider with a sufficiently large $m\in\mathbb{N}$ and $0<\eta<\beta<1$) to take place. It is though conceivable that by establishing the OGP for a potentially more intricate structure, like the ones considered in~\cite{wein2020optimal,bresler2021algorithmic,huang2021tight}, one may in fact reach all the way down to $\Theta(\kappa^2)$. We leave this extension as an interesting future research direction.
\section{Universality in OGP: Beyond Gaussian Disorder}\label{sec:future}
Our OGP results, Theorems~\ref{thm:2-OGP},~\ref{thm:3-OGP} and~\ref{thm:m-OGP-small-kappa}, are established for the case where the disorder matrix $\mathcal{M}\in\R^{M\times n}$  consists of i.i.d.\,\,$\cN(0,1)$ entries. However, much like many other properties regarding the perceptron model, the OGP also enjoys the \emph{universality}. In other words, the exact details of the distribution (of disorder) are immaterial; and provided that certain (rather mild) conditions on the distribution are satisfied, the OGP results still remain valid. 

We now (somewhat informally) elaborate on the mechanics of this extension. The main technical tool that we employ is the multi-dimensional version of the Berry-Esseen Theorem, which is reproduced herein for convenience.
\begin{theorem}\label{thm:berry-esseen}
Let $Y_1,Y_2,\dots,Y_n\in\R^m$ be independent centered random vectors. Suppose $S=\sum_{1\le i\le n}Y_i$ and $\Sigma = {\rm Cov}(S)\in\R^{m\times m}$ is invertible. Let $Z\sim \mathcal{N}(0,\Sigma)$ be an $m-$dimensional multivariate normal random vector, whose covariance is $\Sigma$. Then, there exists a universal constant $C$ such that for all convex $U\subseteq \R^d$,
\[
\Bigl|\mathbb{P}[S\in U]-\mathbb{P}[Z\in U]\Bigr|\le Cm^{\frac14}\sum_{1\le j\le n}\mathbb{E}\Bigl[\bigl\|\Sigma^{-\frac12}Y_j\bigr\|_2^3\Bigr].
\]
\end{theorem}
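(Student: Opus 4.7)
The plan is to combine Lindeberg's telescoping replacement method applied to a smoothed indicator of $U$ with a sharp anti-concentration estimate for the Gaussian measure of convex annuli, which is the origin of the $m^{1/4}$ factor.

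First, I would reduce to the isotropic case: set $\widetilde Y_j = \Sigma^{-1/2} Y_j$, $\widetilde S = \sum_{j=1}^n \widetilde Y_j$, so that $\mathrm{Cov}(\widetilde S) = I_m$, and $U$ becomes the convex set $\widetilde U = \Sigma^{-1/2} U$. Let $\widetilde Z \sim \mathcal{N}(0, I_m)$ and let $\widetilde W_j \sim \mathcal{N}(0, \mathrm{Cov}(\widetilde Y_j))$ be independent Gaussians whose sum has the law of $\widetilde Z$. The third moment in the statement becomes $L := \sum_j \mathbb{E}\bigl[\|\widetilde Y_j\|_2^3\bigr]$, and a Gaussian moment comparison gives $\mathbb{E}\|\widetilde W_j\|_2^3 \lesssim \mathbb{E}\|\widetilde Y_j\|_2^3$, so the roles of $\widetilde W_j$ and $\widetilde Y_j$ are symmetric up to constants.

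Second, for a scale $\epsilon>0$, I would construct a $C^3$ smoothing $f_\epsilon : \R^m \to [0,1]$ by convolving $\mathbf 1_{\widetilde U}$ with a fixed mollifier rescaled to $B(0,\epsilon)$, so that $f_\epsilon \equiv 1$ on $\widetilde U^{-\epsilon}$, $f_\epsilon \equiv 0$ outside $\widetilde U^\epsilon$, and $\|\nabla^3 f_\epsilon\|_{\mathrm{op}} = O(\epsilon^{-3})$. Telescoping along $\widetilde T_j = \sum_{i<j} \widetilde W_i + \sum_{i>j}\widetilde Y_i$,
\[
\mathbb{E} f_\epsilon(\widetilde S) - \mathbb{E} f_\epsilon(\widetilde Z) = \sum_{j=1}^n \Bigl( \mathbb{E} f_\epsilon(\widetilde T_j + \widetilde Y_j) - \mathbb{E} f_\epsilon(\widetilde T_j + \widetilde W_j) \Bigr),
\]
and a third-order Taylor expansion at $\widetilde T_j$, using that the first two moments of $\widetilde Y_j$ and $\widetilde W_j$ coincide, bounds each summand by $O(\epsilon^{-3}\,\mathbb{E}\|\widetilde Y_j\|_2^3)$, yielding a replacement error of order $\epsilon^{-3} L$. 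To pass from $f_\epsilon$ back to $\mathbf 1_{\widetilde U}$, I would invoke Nazarov's sharp bound
\[
\mathbb{P}\bigl[ \widetilde Z \in \widetilde U^{\epsilon} \setminus \widetilde U^{-\epsilon}\bigr] \le c\, \epsilon\, m^{1/4}
\]
valid for every convex $\widetilde U \subseteq \R^m$. This produces a total bound of the form $c\,\epsilon\, m^{1/4} + c'\,\epsilon^{-3}\, L$.

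The main obstacle is that a single choice of scale $\epsilon$ in the above inequality already yields the right anti-concentration factor $m^{1/4}$ but the wrong tradeoff in $L$. To obtain the linear dependence $m^{1/4} L$ claimed in the theorem, I would follow Bentkus's iterative convolution scheme: instead of smoothing once, one convolves by a small Gaussian kernel, performs one Lindeberg swap, and then repeats, effectively smoothing at many scales. At each scale the Gaussian anti-concentration inequality applies with its $m^{1/4}$ factor, while the swap cost aggregates linearly in $L$ rather than being paid in full at a single scale. This is the key technical step and is where the $m^{1/4}$ rate crystallizes; the rest of the argument is the standard smoothing/replacement framework sketched above.
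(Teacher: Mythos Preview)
The paper does not prove this theorem at all: it is quoted as a known result (``reproduced herein for convenience'') and used as a black box in the proof of Theorem~\ref{thm:ogp-universality}. So there is no ``paper's own proof'' to compare against; the relevant question is only whether your outline is a viable route to the classical bound.

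Your sketch is the right architecture --- isotropic reduction, Lindeberg telescoping against a smoothed indicator, and the Gaussian convex-annulus bound $\gamma_m(U^\epsilon\setminus U^{-\epsilon})\lesssim \epsilon\,m^{1/4}$ --- and correctly identifies that a single smoothing scale only yields $c\,\epsilon\, m^{1/4}+c'\,\epsilon^{-3}L$, which after optimizing in $\epsilon$ gives a bound of order $(m^{1/4})^{3/4}L^{1/4}$, not $m^{1/4}L$. The entire content of the theorem is therefore hidden in the sentence ``I would follow Bentkus's iterative convolution scheme,'' which you do not actually carry out. That scheme is not just ``smoothing at many scales'': it is an induction on the partial sums in which one adds an independent Gaussian perturbation whose variance is chosen adaptively, exploits the resulting regularity to swap one summand, and then carefully removes the perturbation using the anti-concentration bound, with a recursion that closes only because the Gaussian smoothing interacts well with the convex-set geometry. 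Without spelling out that recursion and showing it closes with the linear rate in $L$, your proposal is a correct table of contents for Bentkus's argument rather than a proof.
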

We will apply Theorem~\ref{thm:berry-esseen} for $U=[-\kappa,\kappa]^m$. While our results still transfer to the ensemble OGP, we restrict our attention to the non-ensemble variant for simplicity. That is, we focus on the case where the set $\mathcal{I}$ appearing in Definition~\ref{def:ogp-set} is $\{0\}$.
\begin{theorem}\label{thm:ogp-universality}
Let $\mathcal{D}$ be a distribution on $\R$ with the property that for $T\sim \mathcal{D}$,
\[
\mathbb{E}\bigl[T\bigr] = 0,\quad \mathbb{E}\bigl[T^2\bigr] = 1;\qquad\text{and}\qquad \mathbb{E}\bigl[T^3\bigr]<\infty.
\]
Fix $\kappa>0$, $\alpha<\alpha_c(\kappa)$, $m\in\mathbb{N}$, $0<\eta<\beta<1$. Then, 
\[
\mathbb{E}_{\mathcal{M}\in\R^{M\times n}:\mathcal{M}_{ij}\sim \mathcal{D},\text{i.i.d.}}\Bigl[\mathcal{S}_\kappa\bigl(\beta,\eta,m,\alpha,\{0\}\bigr)\Bigr] \le \mathbb{E}_{\mathcal{M}\in\R^{M\times n}:\mathcal{M}_{ij}\sim \mathcal{N}(0,1),\text{i.i.d.}}\Bigl[\mathcal{S}_\kappa\bigl(\beta,\eta,m,\alpha,\{0\}\bigr)\Bigr]e^{O(\sqrt{n})}.
\]
\end{theorem}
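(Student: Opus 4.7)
\textbf{Proof Plan for Theorem~\ref{thm:ogp-universality}.} The strategy is to compare, tuple by tuple, the probability that a fixed $m$-tuple $(\sigma^{(1)},\dots,\sigma^{(m)})$ satisfies the rectangular constraints under disorder $\mathcal{D}$ with the corresponding Gaussian probability, using Theorem~\ref{thm:berry-esseen} row by row. Since the rows of $\mathcal{M}$ are i.i.d., the joint probability factorizes as a product over rows, so a small per-row error will exponentiate but only to the order of $\exp(O(\sqrt n))$ as long as we have $\alpha = \Theta(1)$ and the per-row Gaussian probability is bounded away from zero.

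Fix an $m$-tuple satisfying the pairwise overlap condition $\Overlap{\sigma^{(i)}}{\sigma^{(j)}} \in [\beta-\eta,\beta]$ for all $i<j$. For each row $r \in [M]$ of $\mathcal{M}$ and each column $j\in[n]$, set $Y_{r,j} \triangleq n^{-1/2}(\mathcal{M}_{rj}\sigma^{(1)}_j,\dots,\mathcal{M}_{rj}\sigma^{(m)}_j) \in \mathbb{R}^m$. Then $S_r = \sum_{j=1}^n Y_{r,j} = n^{-1/2}(\langle \mathcal{M}_r,\sigma^{(1)}\rangle,\dots,\langle \mathcal{M}_r,\sigma^{(m)}\rangle)$, and because $\mathcal{D}$ is centered with unit variance, the covariance $\Sigma = \text{Cov}(S_r)$ is exactly the $m\times m$ overlap matrix of the tuple, which is the same under $\mathcal{D}$ as under the Gaussian measure. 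Hence one may apply Theorem~\ref{thm:berry-esseen} with $U = [-\kappa,\kappa]^m$: with $Z\sim \mathcal{N}(0,\Sigma)$,
\begin{equation*}
\bigl|\mathbb{P}_\mathcal{D}[S_r \in U] - \mathbb{P}[Z\in U]\bigr| \le C m^{1/4}\sum_{j=1}^n \mathbb{E}\bigl[\|\Sigma^{-1/2}Y_{r,j}\|_2^3\bigr].
\end{equation*}
The overlap matrix has diagonal $1$ and off-diagonal entries in $[\beta-\eta,\beta]$, hence its smallest eigenvalue is $\Theta(1-\beta) = \Theta(1)$, so $\|\Sigma^{-1/2}\|_2 = O(1)$. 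Combined with $\|Y_{r,j}\|_2 \le n^{-1/2}\sqrt{m}\,|\mathcal{M}_{rj}|$ and the assumption $\mathbb{E}[T^3] < \infty$, each summand is $O(n^{-3/2})$, so the total Berry--Esseen error is $\epsilon_n = O(n^{-1/2})$, with the implicit constant depending only on $m,\beta,\eta,\mathbb{E}[|T|^3]$.

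Next, note that $p \triangleq \mathbb{P}[Z\in U]$ is a positive constant (depending only on $\kappa,\beta,\eta,m$) and in particular bounded away from $0$ uniformly in $n$ and in the choice of tuple. Therefore $\mathbb{P}_\mathcal{D}[S_r \in U] \le p + \epsilon_n = p(1 + O(n^{-1/2}))$. Since the rows are i.i.d., multiplying over $r=1,\dots,M$ gives
\begin{equation*}
\mathbb{P}_\mathcal{D}\bigl[\|\mathcal{M}\sigma^{(i)}\|_\infty \le \kappa\sqrt n,\ 1\le i\le m\bigr] \le p^M\bigl(1+O(n^{-1/2})\bigr)^M = \mathbb{P}_{\mathcal{N}(0,1)}\bigl[\cdots\bigr]\cdot e^{O(\sqrt n)},
\end{equation*}
using $M = \lfloor \alpha n\rfloor$ and $(1+O(n^{-1/2}))^M = \exp(O(M/\sqrt n)) = \exp(O(\sqrt n))$. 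Summing this uniform inequality over all combinatorial $m$-tuples satisfying the pairwise overlap condition yields the claim.

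\textbf{Main obstacle.} The delicate point is ensuring that the Gaussian per-row probability $p$ is bounded below by a constant independent of $n$ and of the specific tuple, so that the additive Berry--Esseen error can be converted to a multiplicative $(1+o(1))$ factor whose $M$-th power remains $e^{O(\sqrt n)}$. This requires controlling the spectrum of the overlap matrix $\Sigma$ uniformly over admissible tuples, which follows from the pairwise overlap constraint $\beta < 1$ forcing $\lambda_{\min}(\Sigma) \ge (1-\beta) - O(\eta) > 0$. Apart from this spectral bookkeeping, no further universality machinery is needed beyond one application of the multi-dimensional Berry--Esseen bound to each row; the argument is entirely non-ensemble, which is why it is presented for $\mathcal{I} = \{0\}$.
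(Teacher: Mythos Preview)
Your proposal is correct and follows essentially the same route as the paper: apply the multidimensional Berry--Esseen theorem to the per-row vector $n^{-1/2}(\sigma^{(1)}_j X_j,\dots,\sigma^{(m)}_j X_j)$, obtain an additive $O(n^{-1/2})$ error against the Gaussian box probability, convert this to a multiplicative $(1+O(n^{-1/2}))$ factor, raise to the $M$th power, and sum over tuples. Your explicit verification that the Gaussian per-row probability $p$ is bounded away from zero uniformly over admissible tuples (via the spectral lower bound $\lambda_{\min}(\Sigma)\ge 1-\beta-O(\eta)$) is a point the paper's proof passes over silently but which is indeed needed to justify the additive-to-multiplicative conversion.
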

The proof of Theorem~\ref{thm:ogp-universality} is provided in Section~\ref{pf:ogp-universality}. Hence, if $0<\eta<\beta<1$ and $m\in\mathbb{N}$ are such that \[
\mathbb{E}_{\mathcal{M}\in\R^{M\times n}:\mathcal{M}_{ij}\sim \mathcal{N}(0,1),\text{i.i.d.}}\Bigl[\mathcal{S}_\kappa\bigl(\beta,\eta,m,\alpha,\{0\}\bigr)\Bigr]=\exp\bigl(-\Theta(n)\bigr),
\]then \[
\mathbb{E}_{\mathcal{M}\in\R^{M\times n}:\mathcal{M}_{ij}\sim \mathcal{D},\text{i.i.d.}}\Bigl[\mathcal{S}_\kappa\bigl(\beta,\eta,m,\alpha,\{0\}\bigr)\Bigr]=\exp\bigl(-\Theta(n)\bigr).
\]
A particular case of interest is when the (i.i.d.) entries of the disorder matrix is Rademacher. That is, $\mathbb{P}[\mathcal{M}_{ij}=1]=1/2=\mathbb{P}[\mathcal{M}_{ij}=-1]$, i.i.d.\,across $1\le i\le M$ and $1\le j\le n$. In this case, Theorem~\ref{thm:ogp-universality} asserts that the $m-$OGP holds with the exact same parameters appearing in Theorem~\ref{thm:m-OGP-small-kappa}.
\section{Proofs}\label{sec:proofs}
\subsection{Some Auxiliary Results}
Our $2-$OGP and $3-$OGP results for the high $\kappa$ case (namely Theorem~\ref{thm:2-OGP} and Theorem~\ref{thm:3-OGP})  require the following auxiliary result. We remind the reader that $h(\cdot)$ is the binary entropy function logarithm base two. 
\begin{lemma}\label{asm:negativity}
Let
\begin{equation}\label{eq:f}
    f_1(\Delta,\alpha) = 1+h(\Delta) +\alpha\log_2\mathbb{P}\bigl[|Z_1|\le 1,|Z_2|\le 1\bigr],
\end{equation}
where $(Z_1,Z_2)\sim \mathcal{N}\bigl(0,\Delta I + (1-\Delta)\ones\ones^T\bigr)$. Let
\begin{equation}\label{eq:f-2}
f_2(\beta,\alpha) \triangleq 1+h\left(\frac{1-\beta}{2}\right)+\alpha\log_2\mathbb{P}\bigl[|Z_1|\le 1,|Z_2|\le 1\bigr],
\end{equation}
where $(Z_1,Z_2)\sim \mathcal{N}\bigl(0,(1-\beta)I_2+\beta\ones\ones^T\bigr)$; and let
\begin{equation}\label{eq:f-3}
f_3(\beta,\alpha)\triangleq 1+\frac{1-\beta}{2}+h\left(\frac{1-\beta}{2}\right)+\frac{1+\beta}{2}h\left(\frac{1-\beta}{2(1+\beta)}\right)+\alpha\log_2\mathbb{P}\bigl[|Z_i|\le 1,1\le i\le 3\bigr],
\end{equation}
where $(Z_1,Z_2,Z_3)\sim \mathcal{N}\bigl(0,(1-\beta)I_3+\beta \ones\ones^T\bigr)$. Then, the following holds.
\begin{itemize}
    \item[(a)] Let $S_1(\alpha) = \bigl\{\Delta\in[0.00001,0.1]:f_1(\Delta,\alpha)<0\bigr\}$. Then, $S_1(1.77)\ne\varnothing$. Hence,  $S_1(\alpha)\ne\varnothing$, $\forall \alpha\ge 1.77$.
    \item[(b)] Let $S_2(\alpha) = \bigl\{\beta\in(0,1):f_2(\beta,\alpha)<0\bigr\}$. Then, $S_2(1.71)\ne\varnothing$. Hence,  $S_2(\alpha)\ne\varnothing$, $\forall \alpha\ge 1.710$.
    \item[(c)] Let $S_3(\alpha) = \bigl\{\beta\in(0,1):f_3(\beta,\alpha)<0\bigr\}$. Then, $S_3(1.667)\ne\varnothing$. Hence, $S_3(\alpha)\ne\varnothing$, $\forall\alpha\ge 1.667$.
\end{itemize}
\end{lemma}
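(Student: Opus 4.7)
The plan is to reduce each of the three claims to a finite numerical verification by exploiting a simple monotonicity structure. The key observation is that in each case the Gaussian probability
\[
P \triangleq \mathbb{P}\bigl[|Z_i| \le 1,\ 1 \le i \le m\bigr]
\]
is strictly positive and strictly less than $1$ whenever the covariance is nondegenerate, hence $\log_2 P < 0$ and $f_i(\cdot,\alpha)$ is strictly decreasing in $\alpha$. Consequently the ``hence'' clauses are automatic once the claims for the single critical values $\alpha \in \{1.77,\,1.71,\,1.667\}$ are established, and it suffices to produce one witness $\Delta^* \in [0.00001,\,0.1]$ for part (a), respectively one $\beta^* \in (0,1)$ for parts (b) and (c), at which $f_i$ is negative for the corresponding critical $\alpha$.

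Next, I would set up efficient numerical evaluation of $P$. In parts (a) and (b) the relevant covariance has unit diagonal (since $\Delta + (1-\Delta) = 1$ and $(1-\beta) + \beta = 1$) and off-diagonal correlation $\rho$ equal to $1 - \Delta$ and $\beta$ respectively, so $P$ is a standard bivariate orthant-type probability. It admits a closed form via Owen's $T$ function, or equivalently reduces to the one-dimensional quadrature
\[
P \;=\; \int_{-1}^{1} \phi(z)\,\Bigl[\Phi\Bigl(\tfrac{1 - \rho z}{\sqrt{1 - \rho^2}}\Bigr) - \Phi\Bigl(\tfrac{-1 - \rho z}{\sqrt{1 - \rho^2}}\Bigr)\Bigr]\,dz,
\]
where $\phi,\Phi$ denote the standard normal density and CDF. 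In part (c), the equicorrelated covariance $(1-\beta) I_3 + \beta\, \mathbf{1}\mathbf{1}^\top$ admits the factor representation $Z_i = \sqrt{\beta}\, W + \sqrt{1-\beta}\, \xi_i$ with $W,\xi_1,\xi_2,\xi_3$ i.i.d.\ standard normals; conditioning on $W$ reduces $P$ to a one-dimensional integral of the cube of a univariate Gaussian box probability, which is again a routine quadrature.

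With these primitives in place, the verification for each part becomes a one-dimensional search: tabulate $f_i$ in its free argument at the claimed $\alpha$, identify a value where the function is comfortably negative (the structure of the problem suggests $\beta$ close to $1$ for parts (b) and (c), and a specific small $\Delta$ for part (a)), and record that value as the witness. The MATLAB code referenced in the appendix implements exactly this protocol, and the resulting numerical evaluation of $f_i$ at a suitable witness settles each part.

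The main obstacle is certification of numerical precision. Since $1.77$, $1.71$, and $1.667$ lie not far below $\alpha_c(1) \approx 1.8159$, the margin $|f_i|$ at the optimal witness is modest, and because $P$ enters $f_i$ through $\alpha \log_2 P$, several correct decimal digits of $P$ are needed to make the sign of $f_i$ unambiguous. This is well within the capacity of standard implementations (Owen's $T$ routine for the bivariate case; the equicorrelated reduction above or Genz's algorithm for the trivariate case), and the $h(\cdot)$ terms are evaluated analytically, so no delicate cancellation is required. Overall, the proof is a one-dimensional numerical check per part, made rigorous by standard error bounds for adaptive quadrature applied to the formulas above.
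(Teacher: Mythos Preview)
Your proposal is correct and matches the paper's own approach: the paper likewise establishes the lemma by numerical evaluation (via MATLAB's \texttt{mvncdf}) of $f_i$ at a witness $\beta$ (respectively $\Delta$), with the monotonicity in $\alpha$ handling the ``hence'' clauses. Your additional remarks on reducing the box probabilities to one-dimensional quadrature via Owen's $T$ function and the equicorrelated factor representation are sound refinements, but the underlying strategy is the same.
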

Lemma~\ref{asm:negativity} is established numerically using MATLAB's \texttt{mvncdf} function to evaluate the probability term. The accompanying code is provided in Appendix~\ref{appendix:matlab}.

We next record two useful auxiliary results regarding bivariate Gaussian random variables. These will later be useful in Section~\ref{sec:proof-of-kr} to prove the stability of Kim-Roche algorithm. Our first lemma to that end pertains to the quadrant probabilities for the bivariate normal distribution. 
\begin{lemma}\label{lemma:gaussian-orthant}
Let $(X,Y)$ be a bivariate normal random variable with
\[
(X,Y) \distr \mathcal{N}\left(\begin{bmatrix} 0\\ 0\end{bmatrix},\begin{bmatrix}1&\rho \\ \rho & 1\end{bmatrix}\right).
\]
Then,
\[
\mathbb{P}\bigl(X\ge 0,Y\ge 0\bigr)  = \frac{1}{4} + \frac{1}{2\pi}\sin^{-1}(\rho). 
\]
\end{lemma}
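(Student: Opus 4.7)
The plan is to reduce the problem to a purely geometric angular measure computation using the rotational invariance of the standard bivariate Gaussian distribution. Write $X = Z_1$ and $Y = \rho Z_1 + \sqrt{1-\rho^2}\, Z_2$, where $(Z_1,Z_2)$ has independent standard normal coordinates. Then
\[
\mathbb{P}(X\ge 0,\, Y\ge 0) = \mathbb{P}\bigl(Z_1\ge 0,\; \rho Z_1 + \sqrt{1-\rho^2}\, Z_2 \ge 0\bigr).
\]
The right-hand side is the probability that $(Z_1,Z_2)$ falls into the intersection of two half-planes through the origin, which is an angular wedge.

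Next, I would exploit rotational symmetry: the law of $(Z_1,Z_2)$ is invariant under rotations, so the probability that it falls in a wedge of angular opening $\theta$ at the origin is exactly $\theta/(2\pi)$. Thus it suffices to compute the angular opening of the wedge $W = \{(z_1,z_2): z_1\ge 0,\ \rho z_1 + \sqrt{1-\rho^2}\, z_2 \ge 0\}$. The first inequality defines the right half-plane, with angular range $[-\pi/2,\pi/2]$ measured from the positive $z_1$-axis. For the second inequality, setting $\alpha = \sin^{-1}(\rho)\in[-\pi/2,\pi/2]$, the inward normal is $(\sin\alpha,\cos\alpha)$, so the half-plane corresponds to angular range $[-\alpha,\pi-\alpha]$.

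The intersection of these two angular intervals is $[-\alpha, \pi/2]$ in every case $\alpha\in[-\pi/2,\pi/2]$, which has length $\pi/2 + \alpha = \pi/2 + \sin^{-1}(\rho)$. Therefore
\[
\mathbb{P}(X\ge 0,\, Y\ge 0) \;=\; \frac{\pi/2 + \sin^{-1}(\rho)}{2\pi} \;=\; \frac{1}{4} + \frac{1}{2\pi}\sin^{-1}(\rho),
\]
as desired. There is no real obstacle; the only subtle step is the bookkeeping when identifying the angular interval for the tilted half-plane, which one can either verify by checking the sign of the bounding vector $(\sin\alpha,\cos\alpha)$ or cross-check by setting $\rho=0$ (giving $1/4$) and $\rho=1$ (giving $1/2$) as sanity endpoints. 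Alternatively, one could prove the identity by showing $\frac{d}{d\rho}\mathbb{P}(X\ge 0,Y\ge 0) = \frac{1}{2\pi\sqrt{1-\rho^2}}$ via direct differentiation of the bivariate Gaussian density and integrating from $\rho=0$, but the geometric argument above is shorter.
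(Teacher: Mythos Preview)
Your proof is correct and follows essentially the same approach as the paper's: both decorrelate by writing one coordinate as an independent standard normal, reducing the event to a wedge in the $(Z_1,Z_2)$-plane, and then exploit rotational invariance. The only cosmetic difference is that the paper carries out the polar-coordinate integral $\int_{\tan^{-1}(\zeta)}^{\pi/2}\int_0^\infty \frac{1}{2\pi}e^{-r^2/2}r\,dr\,d\theta$ explicitly and simplifies via $\tan^{-1}\bigl(-\rho/\sqrt{1-\rho^2}\bigr)=-\sin^{-1}(\rho)$, whereas you invoke the angular-fraction principle directly and read off the wedge opening as $\pi/2+\sin^{-1}(\rho)$; the two computations are equivalent.
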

Lemma~\ref{lemma:gaussian-orthant} is quite well-known; a proof is provided below for completeness.
\begin{proof}[Proof of Lemma~\ref{lemma:gaussian-orthant}]
Note that the pair $(X,Y)$ has bivariate normal distribution with parameter $\rho$. Next, define 
\[
Z \triangleq \frac{Y-\rho X}{\sqrt{1-\rho^2}}.
\]
Clearly $X$ and $Z$ are i.i.d. standard normals. Let $\zeta\triangleq -\frac{\rho}{\sqrt{1-\rho^2}}$. Observe that
\begin{align*}
\mathbb{P}\bigl(X\ge 0,Y\ge 0\bigr) &=\mathbb{P}\left(X\ge 0,Z\ge -\frac{\rho}{\sqrt{1-\rho^2}}X\right)\\
&=\int_{x=0}^\infty \int_{z=\zeta x}^\infty \frac{1}{2\pi}\exp\left(-\frac{x^2+z^2}{2}\right)\; dx\; dz \\
&=\int_{\theta=\tan^{-1}(\zeta)}^{\frac{\pi}{2}} \int_{r=0}^\infty \frac{1}{2\pi}\exp\left(-\frac{r^2}{2}\right)r\;dr\;d\theta \\
&=\frac{1}{2\pi}\int_{\theta=\tan^{-1}(\zeta)}^{\frac{\pi}{2}}\; d\theta \\
&=\frac{1}{4}-\frac{1}{2\pi}\tan^{-1}(\zeta)\\
&=\frac14+\frac{1}{2\pi}\sin^{-1}\rho.
\end{align*}
Here, the second line uses the independence of $X$ and $Z$; the third line is obtained upon passing to polar coordinates; and the last line follows from the fact $\tan^{-1}$ is an odd function, and that if $\tan\theta =\frac{\rho}{\sqrt{1-\rho^2}}$ then $\sin\theta =\rho$.
\end{proof}
Our next lemma is as follows.
\begin{lemma}\label{lemma:bivariate-cond-quadrant}
Let $Z_1,Z_2\distr\mathcal{N}(0,1)$ where $(Z_1,Z_2)$ is a bivariate normal with parameter $\rho$: $\mathbb{E}[Z_1Z_2]=\rho$. Then
\[
\mathbb{E}\bigl[Z_1|Z_2\ge 0\bigr] = \rho \sqrt{\frac{2}{\pi}}.
\]
\end{lemma}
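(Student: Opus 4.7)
The plan is to reduce the conditional expectation to a standard half-normal mean computation via the usual Gaussian decomposition trick. Write $Z_1 = \rho Z_2 + \sqrt{1-\rho^2}\,W$, where $W \sim \mathcal{N}(0,1)$ is independent of $Z_2$. This is the standard orthogonal decomposition of a bivariate normal: it is immediate to check that the right-hand side has the correct marginal variance and correlation $\rho$ with $Z_2$, hence by joint normality the joint distribution matches that of $(Z_1,Z_2)$.

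Next, apply linearity of conditional expectation to obtain
\[
\mathbb{E}\bigl[Z_1 \mid Z_2 \ge 0\bigr] = \rho\,\mathbb{E}\bigl[Z_2 \mid Z_2 \ge 0\bigr] + \sqrt{1-\rho^2}\,\mathbb{E}\bigl[W \mid Z_2 \ge 0\bigr].
\]
The second term vanishes because $W$ is independent of $Z_2$ and has mean zero, so the conditional expectation equals the unconditional mean.

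It remains to compute $\mathbb{E}[Z_2 \mid Z_2 \ge 0]$, which is the mean of the standard half-normal distribution. A direct evaluation gives
\[
\mathbb{E}\bigl[Z_2 \mid Z_2 \ge 0\bigr] = \frac{1}{\mathbb{P}[Z_2 \ge 0]}\int_0^\infty z\,\frac{1}{\sqrt{2\pi}}e^{-z^2/2}\,dz = 2\cdot\frac{1}{\sqrt{2\pi}} = \sqrt{\frac{2}{\pi}},
\]
using $\mathbb{P}[Z_2 \ge 0] = 1/2$ and the antiderivative $-e^{-z^2/2}/\sqrt{2\pi}$. Combining the two computations yields $\mathbb{E}[Z_1 \mid Z_2 \ge 0] = \rho\sqrt{2/\pi}$, as required.

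This proof is essentially a one-liner once the decomposition is set up; there is no real obstacle. The only thing worth being careful about is invoking joint normality to justify that the decomposition $Z_1 = \rho Z_2 + \sqrt{1-\rho^2}\,W$ has the correct joint law (not just the correct marginals and correlation), but this is standard.
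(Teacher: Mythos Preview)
Your proof is correct and essentially identical to the paper's own argument: both use the orthogonal decomposition $Z_1 = \rho Z_2 + \sqrt{1-\rho^2}\,W$ with $W$ independent of $Z_2$, drop the $W$-term by independence, and evaluate the half-normal mean $\mathbb{E}[Z_2 \mid Z_2 \ge 0] = \sqrt{2/\pi}$ directly.
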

\begin{proof}[Proof of Lemma~\ref{lemma:bivariate-cond-quadrant}]
Note that $Z_3:=\frac{Z_1-\rho Z_2}{\sqrt{1-\rho^2}}$ is a standard normal, independent of $Z_2$ (as $\mathbb{E}[Z_2Z_3]=0$ and $(Z_2,Z_3)$ is also bivariate normal). Hence
\begin{align*}
    \mathbb{E}\bigl[Z_1|Z_2\ge 0\bigr]&= \mathbb{E}\bigl[\rho Z_2+\sqrt{1-\rho^2}{Z_3}|Z_2\ge 0\bigr]\\
    &=\rho\mathbb{E}\bigl[Z_2|Z_2\ge 0\bigr]\\
    &=\rho \frac{\mathbb{E}\bigl[Z_2\ind\{Z_2\ge 0\}\bigr]}{\mathbb{P}(Z_2\ge 0)} \\
    &= 2\rho\int_{0}^\infty\frac{1}{\sqrt{2\pi}}x\exp\left(-\frac{x^2}{2}\right)\; dx\\
    &=\rho\sqrt{\frac{2}{\pi}},
\end{align*}
yielding Lemma~\ref{lemma:bivariate-cond-quadrant}.
\end{proof}
\subsection{Proof of Theorem~\ref{thm:3-OGP}}\label{sec:proof-3-OGP}
Our proof is based on the \emph{first moment method}. We begin by observing the following monotonicity: if 
\[
\alpha\le \alpha'\le \alpha_c(1) = -\frac{1}{\log_2\mathbb{P}\bigl(|\mathcal{N}(0,1)|\le 1\bigr)}\approx 1.8159,
\]
then
\[
\mathbb{P}\Bigl[\mathcal{S}(\beta,\eta,3,\alpha',\mathcal{I}) \ne \varnothing\Bigr]\le \mathbb{P}\Bigl[\mathcal{S}(\beta,\eta,3,\alpha,\mathcal{I}) \ne \varnothing\Bigr].
\]
For this reason, it suffices to consider $\alpha=1.667$. \paragraph{Counting term.} Let $0<\eta<\beta<1$. We first count the number of triples $(\sigma^{(i)}:1\le i\le 3)$ in $\bincube$ subject to the overlap condition. (In what follows, we omit floor/ceiling operations to keep our exposition clean.)
\begin{lemma}\label{lemma:n-equilateral}
Let $0<\eta<\beta<1$ be fixed. Denote by $M(\beta,\eta)$ the number of triples $(\sigma^{(i)}:1\le i\le 3)$ with $\sigma^{(i)}\in \bincube$ subject to the condition
\[
\beta-\eta\le \Overlap{\sigma^{(i)}}{\sigma^{(j)}}=\frac1n\ip{\sigma^{(i)}}{\sigma^{(j)}} \le \beta,\qquad 1\le i<j\le 3.
\]
Then, 
\begin{equation}\label{eq:M-up-bd}
    M(\beta,\eta)\le \exp_2\Bigl(n\varphi_{{\rm Count}}(\beta,\eta) + O(\log_2 n)\Bigr),
\end{equation}
where
\begin{equation}\label{eq:phi-count}
    \varphi_{\rm Count}(\beta,\eta) = 1+h\left(\frac{1-\beta+\eta}{2}\right) +\frac{1-\beta+\eta}{2}+ \frac{1+\beta}{2}h\left(\frac{1-\beta+2\eta}{2(1+\beta)}\right). 
\end{equation}
In particular, for any fixed $\beta$, the map $\eta\mapsto \varphi_{\rm Count}(\beta,\eta)$ is continuous at $\eta=0$.
\end{lemma}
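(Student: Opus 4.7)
The plan is a direct counting argument via the Stirling-type bound $\binom{n}{k}\le 2^{nh(k/n)}$, parametrizing each triple by its coordinate-wise agreement pattern. The transitive action of $\mathcal{B}_n$ on itself by coordinate-wise sign flip supplies the factor $2^n$ corresponding to $\sigma^{(1)}$. Fixing $\sigma^{(1)}$, I parametrize $\sigma^{(2)}$ by $d=d_H(\sigma^{(1)},\sigma^{(2)})$; the pairwise overlap constraint gives $d\in[n(1-\beta)/2,\,n(1-\beta+\eta)/2]$, and since $(1-\beta+\eta)/2\le 1/2$, the binomial $\binom{n}{d}\le 2^{nh((1-\beta+\eta)/2)}$ for every valid $d$, which produces the $h((1-\beta+\eta)/2)$ summand in $\varphi_{\rm Count}$.

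Next, fixing such a pair, let $A=\{i:\sigma^{(1)}_i=\sigma^{(2)}_i\}$ and $D=[n]\setminus A$, so $|A|=n-d$, $|D|=d$. Parametrize $\sigma^{(3)}$ by $a=|\{i\in A:\sigma^{(3)}_i=\sigma^{(1)}_i\}|$ and $b=|\{i\in D:\sigma^{(3)}_i=\sigma^{(1)}_i\}|$. Direct computation yields
\[
\langle\sigma^{(3)},\sigma^{(1)}\rangle=(2a-|A|)+(2b-|D|),\qquad \langle\sigma^{(3)},\sigma^{(2)}\rangle=(2a-|A|)-(2b-|D|).
\]
Adding the two overlap constraints forces $a\in\bigl[(|A|+n(\beta-\eta))/2,\,(|A|+n\beta)/2\bigr]$, while subtracting them yields $\bigl|2b-|D|\bigr|\le n\eta/2$. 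The number of $\sigma^{(3)}$ realizing a given $(a,b)$ is $\binom{|A|}{a}\binom{|D|}{b}$, with only $O(n^2)$ valid pairs. The crude bound $\binom{|D|}{b}\le 2^{|D|}\le 2^{n(1-\beta+\eta)/2}$ then supplies the $(1-\beta+\eta)/2$ summand of $\varphi_{\rm Count}$.

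For $\binom{|A|}{a}\le 2^{|A|h(a/|A|)}$, observe that $a/|A|>1/2$, so $h(a/|A|)$ is maximized at the smallest valid $a$, giving exponent $|A|\cdot h\bigl(\tfrac12-\tfrac{\beta-\eta}{2|A|/n}\bigr)$. It remains to maximize
\[
g(x)\;:=\;x\cdot h\!\left(\tfrac12-\tfrac{\beta-\eta}{2x}\right),\qquad x=|A|/n\in\bigl[\tfrac{1+\beta-\eta}{2},\tfrac{1+\beta}{2}\bigr].
\]
A short calculus computation yields $g'(x)=h(y)+\tfrac{\beta-\eta}{2x}\,h'(y)$ at $y=\tfrac12-\tfrac{\beta-\eta}{2x}<\tfrac12$, so both summands are nonnegative and $g$ is strictly increasing on this interval. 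Hence its maximum is attained at $x=(1+\beta)/2$, and equals $(1+\beta)/2\cdot h((1-\beta+2\eta)/(2(1+\beta)))$, the remaining summand of $\varphi_{\rm Count}$.

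Combining the four contributions and absorbing the polynomial $O(n^3)$ enumeration factor (over $d$, $a$, and $b$) into $O(\log_2 n)$ in the exponent yields \eqref{eq:M-up-bd}; continuity of $\eta\mapsto\varphi_{\rm Count}(\beta,\eta)$ at $\eta=0$ is immediate from continuity of $h$. The main expected obstacle is the monotonicity computation for $g$. The resulting bound is deliberately loose, since the three individual maxima are attained at different values of $d$, but this slack does not affect the use of the lemma in the proof of Theorem~\ref{thm:3-OGP}.
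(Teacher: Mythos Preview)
Your proof is correct and follows essentially the same approach as the paper: both count by fixing $\sigma^{(1)}$, parametrizing $\sigma^{(2)}$ by its Hamming distance to $\sigma^{(1)}$, and then parametrizing $\sigma^{(3)}$ by its agreement pattern on the agree/disagree sets (your $A,D,a,b$ correspond bijectively to the paper's $I^c,I,t_1,t_2$), with the same crude $2^{|D|}$ bound on the disagree-set factor. The only notable difference is in the optimization over $d$: the paper shows the \emph{argument} $\tfrac{1+\rho-2\beta+2\eta}{2(1+\rho)}$ of the entropy is increasing in $\rho$ and then invokes monotonicity of $h$ on $[0,\tfrac12]$, whereas you prove directly via the derivative that the \emph{product} $g(x)=x\,h(\tfrac12-\tfrac{\beta-\eta}{2x})$ is increasing---both arguments are valid and yield the identical final bound.
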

\begin{proof}[Proof of Lemma~\ref{lemma:n-equilateral}] For $\sigma^{(k)}\in \bincube$, denote its $i^{\rm th}$ coordinate ($1\le i\le n$) by $\sigma^{(k)}_i$. 

Clearly, there are $2^n$ ways of choosing  $\sigma^{(1)}$. Having fixed $\sigma^{(1)}$, there are $\binom{n}{n\frac{1-\rho}{2}}$ ways of choosing $\sigma^{(2)}$ with $\Overlap{\sigma^{(1)}}{\sigma^{(2)}}= \rho \in[\beta-\eta,\beta]$. Assume now that both $\sigma^{(1)}$ and $\sigma^{(2)}$ are fixed, and define $I\subset[n]$ with $|I|=n\frac{1-\rho}{2}$ as
\[
I\triangleq \Bigl\{1\le i\le n:\sigma^{(1)}_i \ne \sigma^{(2)}_i\Bigr\},
\]
and let $I^c\triangleq [n]\setminus I$ with $|I^c|=n\frac{1+\rho}{2}$. In particular, $\sigma^{(1)}$ and $\sigma^{(2)}$ agree on coordinates in $I^c$ and disagree on coordinates in $I$. Having fixed $\sigma^{(1)}$ and $\sigma^{(2)}$, now let $N_3(\rho)$ denote  the number of all admissible $\sigma^{(3)}$ satisfying the inner product condition (with $\sigma^{(1)}$ and $\sigma^{(2)}$). Then, it is evident  that
\begin{equation}\label{eq:up-bd-M-beta-eta}
M(\beta,\eta) = 2^n\sum_{\rho:\beta-\eta\le \rho\le\beta,\rho n\in\mathbb{N}}\binom{n}{n\frac{1-\rho}{2}}N_3(\rho).
\end{equation}
Now, suppose that
\[
t_1\triangleq \Bigl|\Bigl\{i\in I:\sigma^{(2)}_i = \sigma^{(3)}_i\Bigr\}\Bigr|\qquad\text{and}\qquad t_2\triangleq \Bigl|\Bigl\{i\in I^c:\sigma^{(2)}_i\ne \sigma^{(3)}_i\Bigr\}\Bigr|.
\]
Then
\[
d_H\bigl(\sigma^{(1)},\sigma^{(3)}\bigr)=t_1+t_2\quad\text{and}\quad 
    d_H\bigl(\sigma^{(2)},\sigma^{(3)}\bigr)=n\frac{1-\rho}{2}-t_1+t_2.
\]
Next, observe that using  the condition on $\Overlap{\sigma^{(1)}}{\sigma^{(3)}}$  and $\Overlap{\sigma^{(2)}}{\sigma^{(3)}}$, we  arrive at
\[
n\frac{1-\beta}{2} \le d_H\bigl(\sigma^{(1)},\sigma^{(3)}\bigr) = t_1+t_2 \le n\frac{1-\beta+\eta}{2},
\]
and
\[
n\frac{1-\beta}{2} \le d_H\bigl(\sigma^{(2)},\sigma^{(3)}\bigr)  = t_2-t_1 +n\frac{1-\rho}{2} \le n\frac{1-\beta+\eta}{2}.
\]
We thus arrive  at
\[
n\frac{1-\beta}{2} \le t_1+t_2\le n\frac{1-\beta+\eta}{2}\quad\text{and}\quad n\frac{\rho-\beta}{2} \le t_2-t_1 \le n\frac{\rho-\beta+\eta}{2}.
\]
This yields the following lower and upper bounds on $t_1,t_2$:
\begin{align}
n\frac{1-\rho-\eta}{4}&\le t_1\le n\frac{1-\rho+\eta}{4} \label{eq:range-t1}\\
    n\frac{1+\rho-2\beta}{4}&\le t_2\le n\frac{1+\rho -2\beta+2\eta}{4}\label{eq:range-t2}.
\end{align}
Define now the rectangle
\[
\mathcal{T}\triangleq \left[\frac{1-\rho-\eta}{4},\frac{1-\rho+\eta}{4}\right]\times\left[ \frac{1+\rho-2\beta}{4},\frac{1+\rho -2\beta+2\eta}{4}\right].
\]
Note that (a) for $\eta$ small enough, $\mathcal{T}\subset(0,\infty)^2$; and (b) the set of all admissible $(t_1,t_2)\in\mathbb{N}^2$ pairs are precisely the set of all lattice points in the box $n\mathcal{T}$. Having fixed $\sigma^{(1)}$ and $\sigma^{(2)}$, the number $N_3(\rho)$ of admissible $\sigma^{(3)}$ then computes as
\begin{equation}\label{num-n3}
   N_3(\rho) = \sum_{(t_1,t_2)\in \mathbb{N}^2\cap n\mathcal{T}}\binom{n\frac{1-\rho}{2}}{t_1}\binom{n\frac{1+\rho}{2}}{t_2}.
\end{equation}
Note that using the fact $\binom{n}{\alpha}$ is maximized for $\alpha=\lfloor n/2\rfloor$, we obtain
\begin{equation}\label{eq:up-bd-prod-binom}
\binom{n\frac{1-\rho}{2}}{t_1}\binom{n\frac{1+\rho}{2}}{t_2} \le \binom{n\frac{1-\rho}{2}}{n\frac{1-\rho}{4}}\binom{n\frac{1+\rho}{2}}{n\frac{1+\rho-2\beta+2\eta}{4}},\quad \forall (t_1,t_2)\in \mathbb{N}^2\cap n\mathcal{T}.
\end{equation}
Now, we use the well-known asymptotic on binomial coefficients: for any $r\in(0,1)$, $\binom{n}{nr} = \exp_2(nh(r)+O(\log_2 n))$. Combining this fact together  with~\eqref{num-n3} and~\eqref{eq:up-bd-prod-binom}, we obtain the following upper bound on number of  such $\sigma^{(3)}$:
\begin{equation}\label{eq:sigma-3-up-bd}
   N_3(\rho)\le  \exp_2\left(n\frac{1-\rho}{2} +n\frac{1+\rho}{2}h\left(\frac{1+\rho-2\beta+2\eta}{2(1+\rho)}\right)+O(\log_2 n)\right).
\end{equation}
We next study the argument  $(1+\rho-2\beta+2\eta)/2(1+\rho)$ of the entropy term appearing in~\eqref{eq:sigma-3-up-bd}. Clearly, as $\eta<\beta$, the argument is less than $1/2$. Now, observe that
\begin{align*}
&\frac{1+\rho_1-2\beta+2\eta}{2(1+\rho_1)}<\frac{1+\rho_2-2\beta+2\eta}{2(1+\rho_2)} \\
&\Leftrightarrow \frac12 - \frac{\beta-\eta}{1+\rho_1}<\frac12 - \frac{\beta-\eta}{1+\rho_2}\\&\Leftrightarrow \frac{1}{1+\rho_2}<\frac{1}{1+\rho_1}\Leftrightarrow\rho_1<\rho_2.
\end{align*}
Consequently, using the monotonicity of $h$ in $[0,\frac12]$,
\[
h\left(\frac{1+\rho-2\beta+2\eta}{2(1+\rho)}\right)<h\left(\frac{1-\beta+2\eta}{2(1+\beta)}\right).
\]
Using this and~\eqref{eq:sigma-3-up-bd}, $N_3(\rho)$ is further upper bounded by
\begin{equation}\label{eq:sigma-3-further-bd}
N_3(\rho)\le \exp_2\left(n\frac{1-\beta+\eta}{2} + n\frac{1+\beta}{2} h\left(\frac{1-\beta+2\eta}{2(1+\beta)}\right)+O(\log_2 n)\right).
\end{equation}
Finally, we  have
\begin{equation}\label{eq:sigma-2-up-bd}
\binom{n}{n\frac{1-\rho}{2}}\le \binom{n}{n\frac{1-\beta+\eta}{2}} = \exp_2\left(nh\left(\frac{1-\beta+\eta}{2}\right)+O(\log_2 n)\right).
\end{equation}
Combining~\eqref{eq:up-bd-M-beta-eta},~\eqref{eq:sigma-3-further-bd} and~\eqref{eq:sigma-2-up-bd}, we obtain
\begin{align*}
M(\beta,\eta)&\le \exp_2\left(n+nh\left(\frac{1-\beta+\eta}{2}\right)+n\frac{1-\beta+\eta}{2} + n\frac{1+\beta}{2} h\left(\frac{1-\beta+2\eta}{2(1+\beta)}\right)  +O(\log_2 n)\right) \\
&=\exp_2\Bigl(n\varphi_{\rm  count}(\beta,\eta)+O(\log_2 n)\Bigr),
\end{align*}
yielding~\eqref{eq:M-up-bd}. Since the continuity follows immediately from the continuity of the entropy, the proof of Lemma~\ref{lemma:n-equilateral} is complete.
\end{proof}
\paragraph{Probability term.} Now, fix any $\left(\sigma^{(i)}:1\le i\le 3\right)$ with $\beta-\eta \le \Overlap{\sigma^{(i)}}{\sigma^{(j)}}\le \beta$. More concretely,  let
\begin{equation}\label{eq:eta-i-j}
    \Overlap{\sigma^{(i)}}{\sigma^{(j)}}\triangleq \beta-\eta_{ij},\qquad \text{where}\qquad 0\le \eta_{ij}\le \eta, \quad 1\le i<j\le 3.
\end{equation}
We control the probability term
\begin{equation}\label{eq:prob-prelim-bd}
    \mathbb{P}\left[\exists\tau_1,\tau_2,\tau_3 \in\mathcal{I}:n^{-\frac12}\left|\mathcal{M}_i(\tau_i)\sigma^{(i)}\right|\le\ones,  1\le i\le  3 \right],
\end{equation}
where $\ones\in\R^{M\times 1}$ is the vector of all ones,  and the inequality is coordinate-wise. As a first  step, we  take a union bound over $\mathcal{I}$ to obtain
\begin{align}
  &   \mathbb{P}\left[\exists\tau_1,\tau_2,\tau_3 \in\mathcal{I}:n^{-\frac12}\left|\mathcal{M}_i(\tau_i)\sigma^{(i)}\right|\le\ones,  1\le i\le  3 \right]\nonumber  \\
  &\le  |\mathcal{I}|^3  \max_{\tau_i\in\mathcal{I},1\le i\le 3}\mathbb{P}\left[n^{-\frac12}\left|\mathcal{M}_i(\tau_i)\sigma^{(i)}\right|\le\ones,  1\le i\le  3 \right].\label{eq:prob-prelim-bd-2}
\end{align}
Next, let the first row of $\mathcal{M}_i(\tau_i)$ be $\texttt{R}_i\distr \mathcal{N}(0,I_n)\in\R^n$, $1\le i\le 3$. Using the independence across rows, we have
\begin{equation}\label{eq:prob-prelim-bd-3}
\mathbb{P}\left[n^{-\frac12}\left|\mathcal{M}_i(\tau_i)\sigma^{(i)}\right|\le\ones,  1\le i\le  3 \right] = \mathbb{P}\left[n^{-\frac12}\left|\ip{\texttt{R}_i}{\sigma^{(i)}}\right|\le 1,1\le i\le 3\right]^{\alpha n}.
\end{equation}
To upper bound  the probability appearing in~\eqref{eq:prob-prelim-bd-3}, observe that $\left(n^{-1/2}\ip{\texttt{R}_i}{\sigma^{(i)}}:1\le i\le  3\right)$ is a multivariate normal with each component having zero mean and unit variance. We now compute its covariance matrix $\overline{\Sigma}\in\R^{3\times 3}$. Observe that for $i\ne j$,
\begin{align*}
\overline{\Sigma}_{ij}&=\mathbb{E}\left[n^{-\frac12}\ip{\texttt{R}_i}{\sigma^{(i)}}\cdot n^{-\frac12}\ip{\texttt{R}_j}{\sigma^{(j)}}\right] \\
&=\frac1n\left(\sigma^{(i)}\right)^T\mathbb{E}\bigl[\texttt{R}_i\texttt{R}_j^T\bigr]\sigma^{(j)} \\
&=\cos(\tau_i)\cos(\tau_j)\Overlap{\sigma^{(i)}}{\sigma^{(j)}}\\
&=\cos(\tau_i)\cos(\tau_j)\left(\beta-\eta_{ij}\right),
\end{align*}
where the last line uses~\eqref{eq:eta-i-j}. In order to remove the  dependence  of $\overline{\Sigma}$ on $\tau_i$, we now  employ the  following Gaussian comparison inequality established by Sid{\'a}k~\cite[Corollary~1]{sidak1968multivariate}.
\begin{theorem}\label{thm:sidak}
Let $(X_1,\dots,X_k)\in\R^k$ be a multivariate  normal random vector each of whose coordinates have zero mean and unit variance. Suppose that its covariance matrix $\Sigma\in\R^{k\times k}$ has the following form: there exists $ \lambda_1,\dots,\lambda_k$ ($0\le \lambda_i\le 1$,  $1\le i\le k$) such that for every $1\le i\ne j\le  k$, $\Sigma_{ij}=\lambda_i\lambda_j\rho_{ij}$ where $(\rho_{ij}:1\le i\ne j\le k)$ is some fixed covariance matrix. Fix any $c_1,\dots,c_k>0$, and denote
\[
P(\lambda_1,\dots,\lambda_k) = \mathbb{P}\bigl[|X_1|<c_1,|X_2|<c_2,\dots,|X_k|<c_k\bigr].
\]
Then, $P(\lambda_1,\dots,\lambda_k)$ is a non-decreasing function of each $\lambda_i$, $i=1,2,\dots,k$, $0\le \lambda_i\le 1$. That is,
\[
P(\lambda_1,\lambda_2,\dots,\lambda_k)\le P(1,1,\dots,1).
\]
\end{theorem}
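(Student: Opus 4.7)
Writing $P(1,\ldots,1)$ as the result of increasing each $\lambda_i$ to $1$ one coordinate at a time, it suffices to fix $\lambda_2,\ldots,\lambda_k \in [0,1]$ and show that $\lambda_1 \mapsto P(\lambda_1,\lambda_2,\ldots,\lambda_k)$ is non-decreasing on $[0,1]$. Iterating yields the full claim.

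\textbf{Coupling and conditioning.} The plan is to realize $X$ by a convenient coupling. Let $W=(W_1,\ldots,W_k)$ be a centered Gaussian vector with $\mathrm{Cov}(W_i,W_j)=\rho_{ij}$ for $i\ne j$ and unit variances (which is a valid covariance since $(\rho_{ij})$ was assumed to be a covariance matrix), and let $Z_1,\ldots,Z_k \distr \cN(0,1)$ be i.i.d.\,and independent of $W$. Setting $X_i := \lambda_i W_i + \sqrt{1-\lambda_i^2}\,Z_i$ gives a centered Gaussian vector with $\mathrm{Var}(X_i)=1$ and $\mathrm{Cov}(X_i,X_j)=\lambda_i\lambda_j\rho_{ij}$ for $i\ne j$, which matches $\Sigma$. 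Conditioning on $W$ makes the $X_i$ conditionally independent, so
\[
P(\lambda_1,\ldots,\lambda_k)=\mathbb{E}_W\!\left[\prod_{i=1}^k g_i(\lambda_i,W_i)\right],\qquad g_i(\lambda,w):=\mathbb{P}\!\left(|\lambda w + \sqrt{1-\lambda^2}\,Z|<c_i\right).
\]

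\textbf{Gaussian interpolation.} To show $\partial_{\lambda_1} P\ge 0$, I would use the Gaussian-interpolation (heat-equation) identity applied to a smooth mollification $f_\varepsilon$ of $\mathbf{1}_{\{|x_i|<c_i,\forall i\}}$. Since $\Sigma_{1j}=\lambda_1\lambda_j\rho_{1j}$ depends linearly on $\lambda_1$ and no other entry of $\Sigma$ involves $\lambda_1$, one obtains
\[
\partial_{\lambda_1}\mathbb{E}[f_\varepsilon(X)]\;=\;\sum_{j\ne 1}\lambda_j\rho_{1j}\,\mathbb{E}\!\left[\partial^2_{x_1 x_j}f_\varepsilon(X)\right].
\]
Sending $\varepsilon\to 0$ converts each second derivative into a signed bivariate boundary integral of the joint density of $(X_1,X_j)$ on the four corners $\{x_1=\pm c_1\}\cap\{x_j=\pm c_j\}$ restricted to the remaining box constraints. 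Using that the remaining box and the Gaussian density are symmetric under sign flips in the coordinates other than $1,j$, each such boundary integral reduces to a non-negative quantity multiplied by $\rho_{1j}$. Combined with the non-negativity of $\lambda_j$ assumed in the hypothesis, this gives $\partial_{\lambda_1} P\ge 0$.

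\textbf{Main obstacle.} The delicate step is the $\varepsilon \to 0$ limit: turning the singular second distributional derivatives $\partial^2_{x_1 x_j}\mathbf{1}_{\mathrm{box}}$ into an explicit sign-definite boundary expression, and verifying the sign match with $\lambda_j\rho_{1j}$. This step is equivalent to proving the $k=2$ instance of the theorem, namely that for a centered bivariate Gaussian with unit variances and correlation $\rho$, the rectangle probability $\mathbb{P}(|X_1|<c_1,|X_2|<c_2)$ is non-decreasing in $|\rho|$ — a classical computation obtained by differentiating in $\rho$ and exploiting log-concavity of the bivariate Gaussian density. Once the bivariate case is secured, the interpolation identity above propagates it to general $k$ via conditional independence, yielding the theorem.
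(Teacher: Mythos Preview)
The paper does not prove this result; it is quoted from Sid\'ak (1968) and used as a black box, so there is no in-paper argument to compare against.

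Your interpolation identity is correct, but the claim that each summand $\lambda_j\rho_{1j}\,\mathbb{E}\bigl[\partial^2_{x_1x_j}f_\varepsilon(X)\bigr]$ is separately non-negative is false, and neither the symmetry you invoke nor the bivariate case delivers it. The only symmetry of the zero-mean Gaussian density is the global flip $x\mapsto -x$; it collapses four corner terms to two but does not fix a sign. Concretely, take $k=3$, all $\lambda_i=1$, $\rho_{12}=0.1$, $\rho_{13}=\rho_{23}=0.7$ (a valid covariance). The conditional correlation of $(X_1,X_2)$ given $X_3$ is $(0.1-0.49)/0.51\approx -0.76$, so for small $c_3$ the $j=2$ corner integral $\int_{-c_3}^{c_3}\bigl[p_X(c_1,c_2,x_3)-p_X(c_1,-c_2,x_3)\bigr]\,dx_3$ is strictly negative while $\rho_{12}>0$; hence the $j=2$ term in your sum is negative. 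Your fallback through the product formula fails for the same reason in disguise: $\partial_{\lambda_1}g_1(\lambda_1,w_1)$ changes sign in $w_1$ (positive near $0$, negative for large $|w_1|$), so pointwise monotonicity after conditioning on $W$ is unavailable. The missing ingredient is Anderson's inequality: writing $X_1=\lambda_1 Y_1+\sqrt{1-\lambda_1^2}\,Z$ with $(Y_1,X_2,\dots,X_k)$ the $\lambda_1{=}1$ vector and $Z$ independent, Anderson applied to the symmetric convex set $\prod_{i\ge 2}(-c_i,c_i)$ shows that $y\mapsto\mathbb{P}\bigl((X_2,\dots,X_k)\in B\mid Y_1=y\bigr)$ is symmetric and non-increasing in $|y|$; it is this unimodality, not a term-by-term sign, that drives $\partial_{\lambda_1}P\ge 0$.
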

Applying Theorem~\ref{thm:sidak}, we find that
\begin{equation}\label{eq:prob-prelim-bd-4}
\max_{\tau_i\in\mathcal{I},1\le i\le 3}\mathbb{P}\left[n^{-\frac12}\left|\ip{\texttt{R}_i}{\sigma^{(i)}}\right|\le  1,1\le i\le 3\right] \le \mathbb{P}\bigl[|Z_1|\le 1,|Z_2|\le 1,|Z_3|\le 1\bigr]
\end{equation}
where
\begin{equation}\label{eq:dist-z1-z2-z3}
    (Z_1,Z_2,Z_3)\distr \mathcal{N}\left(\begin{bmatrix}0\\0\\0\end{bmatrix},\begin{bmatrix}1&\beta-\eta_{12}&\beta-\eta_{13}\\\beta-\eta_{12}&1&\beta-\eta_{23}\\\beta-\eta_{13}&\beta-\eta_{23}&1\end{bmatrix}\right).
\end{equation}
Now, note that
\begin{equation}\label{eq:matrix-sigma}
\Sigma\triangleq \begin{bmatrix}1&\beta-\eta_{12}&\beta-\eta_{13}\\\beta-\eta_{12}&1&\beta-\eta_{23}\\\beta-\eta_{13}&\beta-\eta_{23}&1\end{bmatrix} = (1-\beta)I_3+\beta \ones\ones^T+E,
\end{equation}
where $E\in\R^{3\times 3}$  has zero diagonal entries, and $E_{ij}=-\eta_{ij}$ for $1\le i\ne j\le 3$. In particular, $\|E\|_F\le \eta\sqrt{6}$. Since the eigenvalues of $(1-\beta)I_3+\beta\ones\ones^T$ are $1+2\beta$ (with multiplicity one) and $1-\beta$ (with multiplicity two), it follows that provided $\eta<(1-\beta)/\sqrt{6}$, the covariance matrix $\Sigma$ appearing in~\eqref{eq:matrix-sigma} is invertible. We assume that this is indeed the case from this point on. 

Define
\begin{equation}\label{eq:phi-prob}
    \varphi_{\rm Prob}\bigl(\beta,\eta_{12},\eta_{13},\eta_{23}\bigr)\triangleq \mathbb{P}\bigl(|Z_1|\le 1,|Z_2|\le 1,|Z_3|\le 1\bigr)
\end{equation}
where $(Z_1,Z_2,Z_3)$ has distribution in~\eqref{eq:dist-z1-z2-z3}. We now combine~\eqref{eq:prob-prelim-bd},~\eqref{eq:prob-prelim-bd-2},~\eqref{eq:prob-prelim-bd-3},~\eqref{eq:prob-prelim-bd-4}, and~\eqref{eq:phi-prob} to arrive at
\begin{equation}\label{eq:prob-prelim-bd-5}
      \mathbb{P}\left[\exists\tau_1,\tau_2,\tau_3 \in\mathcal{I}:n^{-\frac12}\left|\mathcal{M}_i(\tau_i)\sigma^{(i)}\right|\le\ones,  1\le i\le  3 \right]\le |\mathcal{I}|^3 \varphi_{\rm Prob}\bigl(\beta,\eta_{12},\eta_{13},\eta_{23}\bigr)^{\alpha n}.
\end{equation}
Our next technical result pertains to $\varphi_{\rm Prob}$.
\begin{lemma}\label{lemma:prob-fnc-cts}
Fix any $\beta\in(0,1)$. Then the map
\[
\bigl(\eta_{12},\eta_{13},\eta_{23}\bigr)\mapsto \log_2\varphi_{\rm Prob}\bigl(\beta,\eta_{12},\eta_{13},\eta_{23}\bigr)
\]
(from $\R^3$ to $\R$) is continuous at $(0,0,0)$. 
\end{lemma}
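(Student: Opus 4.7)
\textbf{Proof Proposal for Lemma~\ref{lemma:prob-fnc-cts}.} The strategy is to view $\varphi_{\rm Prob}$ as the integral of the multivariate normal density over the fixed compact box $[-1,1]^3$, and exploit the smooth (in fact real-analytic) dependence of this density on the covariance matrix.

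First I would verify that the covariance matrix
\[
\Sigma(\eta_{12},\eta_{13},\eta_{23}) = (1-\beta)I_3 + \beta \ones\ones^T + E
\]
(with $E$ as in~\eqref{eq:matrix-sigma}) is positive definite throughout a neighborhood $U$ of $(0,0,0)$. This follows because the matrix $(1-\beta)I_3 + \beta\ones\ones^T$ has strictly positive eigenvalues $1+2\beta$ and $1-\beta$ for $\beta \in (0,1)$, and the map $(\eta_{12},\eta_{13},\eta_{23}) \mapsto \Sigma$ is continuous, so by continuity of eigenvalues there is a neighborhood on which $\Sigma$ remains positive definite (the text already records that $\eta < (1-\beta)/\sqrt{6}$ suffices).

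Next, on $U$ write
\[
\varphi_{\rm Prob}(\beta,\eta_{12},\eta_{13},\eta_{23}) = \int_{[-1,1]^3} \frac{1}{(2\pi)^{3/2}\,|\Sigma|^{1/2}}\exp\!\left(-\tfrac{1}{2} z^T \Sigma^{-1} z\right) dz.
\]
Both $|\Sigma|$ and $\Sigma^{-1}$ are continuous functions of $(\eta_{12},\eta_{13},\eta_{23})$ on $U$ (the former via the determinant formula, the latter via Cramer's rule since $|\Sigma|$ stays bounded away from $0$ on a small enough closed neighborhood $U' \subset U$). Hence the integrand is jointly continuous in $(z,\eta_{12},\eta_{13},\eta_{23})$ on the compact set $[-1,1]^3 \times U'$, and therefore uniformly bounded there by some constant $K$. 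Dominated convergence (or simply uniform continuity of the integrand on the compact domain) then yields continuity of $\varphi_{\rm Prob}$ at $(0,0,0)$.

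Finally, to pass from continuity of $\varphi_{\rm Prob}$ to continuity of $\log_2 \varphi_{\rm Prob}$ at the origin, I need $\varphi_{\rm Prob}(\beta,0,0,0) > 0$. This is immediate: at $\eta_{ij}=0$ the covariance $(1-\beta)I_3 + \beta \ones\ones^T$ is positive definite, so the corresponding Gaussian density is strictly positive on all of $\R^3$ and in particular integrates to a strictly positive number over $[-1,1]^3$. Combining the two, $\log_2 \varphi_{\rm Prob}$ is continuous at $(0,0,0)$. The argument has no real obstacle; the only point requiring care is ensuring invertibility of $\Sigma$ is preserved under small perturbations, which is the verification above.
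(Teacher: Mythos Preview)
Your proposal is correct and follows essentially the same approach as the paper: both write $\varphi_{\rm Prob}$ as an integral of the multivariate normal density over $[-1,1]^3$, use positive definiteness of $\Sigma$ near the origin to obtain a dominating bound (the paper uses $|f_k|\le 1$, you use a compactness bound $K$), apply dominated convergence, and then invoke continuity of $\log_2$ at a positive value. Your explicit verification that $\varphi_{\rm Prob}(\beta,0,0,0)>0$ is a minor addition the paper leaves implicit.
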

\begin{proof}[Proof of Lemma~\ref{lemma:prob-fnc-cts}] Define a sequence $(\boldsymbol{\zeta_k})_{k\ge 1}$ such that
\[
\boldsymbol{\zeta_k}=\bigl(\eta_{12}(k),\eta_{13}(k),\eta_{23}(k)\bigr)\in\R^3\quad\text{and}\quad \lim_{k\to \infty}\boldsymbol{\zeta_k}=(0,0,0).
\]
Let $\Sigma_k\in\R^{3\times 3}$ be the matrix $\Sigma$ appearing in~\eqref{eq:matrix-sigma} with parameters $\boldsymbol{\zeta_k}$. Let $\mathbf{v}=(x,y,z)$, and define functions
\[
f_k(\mathbf{v})\triangleq \exp\left(-\frac12\mathbf{v}^T\Sigma_k^{-1}\mathbf{v}\right).
\]
Moreover, let \[
\Sigma_\infty\triangleq(1-\beta)I+\beta\ones\ones^T\quad\text{and}\quad f_\infty(\mathbf{v})\triangleq\exp\left(-\frac12\mathbf{v}^T\Sigma_\infty^{-1} \mathbf{v}\right).
\]
Note that, $|f_k(\mathbf{v})|\le 1$ for every $\mathbf{v}\in\R^3$ (as long  as $\Sigma_k$ is positive definite). Moreover, we have the pointwise convergence: \[
\lim_{k\to\infty}f_k(\mathbf{v})=f_\infty(\mathbf{v})\qquad\text{for all}\qquad \mathbf{v}\in\R^3.
\]
Therefore, by the dominated convergence theorem
\begin{equation}\label{eq:integral-conv}
    \lim_{k\to\infty}\int_{[-1,1]^3}f_k(\mathbf{v})  \;d\mathbf{v} = \int_{[-1,1]^3}f_\infty(\mathbf{v})  \;d\mathbf{v}.
\end{equation}
Moreover,
\begin{equation}\label{eq:det-conv}
   \lim_{k\to\infty} \bigl|\Sigma_k\bigr|  =\bigl|\Sigma_\infty\bigr|.
\end{equation}
Finally, 
\[
\log_2\varphi_{\rm Prob}\bigl(\beta,\boldsymbol{\zeta_k}\bigr)=-\frac32\log_2(2\pi)-\frac12\log_2\bigl|\Sigma_k\bigr|+\log_2\left(\int_{[-1,1]^3}f_k(\mathbf{v})\;d\mathbf{v}\right).
\]
Since $x\mapsto \log_2 x$ is continuous, we obtain by combining~\eqref{eq:integral-conv}  and~\eqref{eq:det-conv} that
\[
\lim_{k\to\infty}\log_2\varphi_{\rm Prob}\bigl(\beta,\boldsymbol{\zeta_k}\bigr) = \log_2 \varphi_{\rm Prob}(\beta,0,0,0).
\]
Since the sequence $(\boldsymbol{\zeta_k})_{k\ge 1}$ is arbitrary, the proof of Lemma~\ref{lemma:prob-fnc-cts} is complete. 
\end{proof}
\paragraph{Choice of $\beta,\eta$.} In the remainder, let $\alpha^*=1.667$. Notice next that
\[
f_3(\beta,\alpha) = \varphi_{\rm Count}\bigl(\beta,0\bigr)+\alpha\cdot\log_2\varphi_{\rm Prob}\bigl(\beta,0,0,0\bigr),
\]
where $f_3$ is defined in~\eqref{eq:f-3}, $\varphi_{\rm Count}$ is defined in~\eqref{eq:phi-count}; and $\varphi_{\rm Prob}$ is defined in~\eqref{eq:phi-prob}. Let $\beta^*$ be such that
\begin{equation}\label{eq:f-star}
f^*\triangleq f_3\bigl(\beta^*,\alpha^*\bigr)=\inf_{\beta\in[0,1]} f_3(\beta,\alpha^*).
\end{equation}
Since $S_3(\alpha^*)\ne\varnothing$ by Lemma~\ref{asm:negativity}, it follows $f^*<0$. Having fixed $\beta^*$, let 
\begin{equation}\label{eq:epsilon-star-and-c-star}
    \epsilon^*\triangleq -f^*/8\alpha^*>0\qquad\text{and}\qquad c^*\triangleq -f^*/24.
\end{equation}
Using Lemma~\ref{lemma:prob-fnc-cts}, it follows that there exists a $\delta_1^*\triangleq \delta_1^*\bigl(\beta^*,\epsilon^*\bigr)>0$, such that
\begin{equation}\label{eq:sup-delta}
\sup_{\substack{(\eta_{12},\eta_{13},\eta_{23})\\|\eta_{ij}|<\delta_1^*, 1\le i<j\le 3}}\Bigl|\log_2\varphi_{\rm Prob}(\beta^*,\eta_{12},\eta_{13},\eta_{23}) - \log_2\varphi_{\rm Prob}(\beta^*,0,0,0)\Bigr|<\epsilon^*.
\end{equation}
Take $\eta<\delta_1^*$, ensuring $0\le \eta_{ij}\le \eta<\delta_1^*$, $1\le i<j\le 3$. Using Markov's inequality, we have
\[
\mathbb{P}\bigl(\mathcal{S}(\beta^*,\eta,3,\alpha^*,\mathcal{I})\ne \varnothing\bigr) = \mathbb{P}\bigl(\bigl|\mathcal{S}(\beta^*,\eta,3,\alpha^*,\mathcal{I})\bigr|\ge 1\bigr)\le \mathbb{E}\bigl[\bigl|\mathcal{S}(\beta^*,\eta,3,\alpha^*,\mathcal{I})\bigr|\bigr].
\]
We now combine the counting bound~\eqref{eq:M-up-bd}, the probability bound~\eqref{eq:prob-prelim-bd-5} and the bound~\eqref{eq:sup-delta} to upper bound $\mathbb{E}\bigl[\bigl|\mathcal{S}(\beta^*,\eta,3,\alpha^*,\mathcal{I})\bigr|\bigr]$:
\begin{align}
\mathbb{E}\bigl[\bigl|\mathcal{S}(\beta^*,\eta,3,\alpha^*,\mathcal{I})\bigr|\bigr]&\le  \exp_2\Bigl(n\varphi_{\rm Count}(\beta^*,\eta)+n\alpha^* \log_2\varphi_{\rm Prob}(\beta^*,0,0,0)+n\alpha^* \epsilon^*+3\log_2|\mathcal{I}|+O(\log_2 n)\Bigr)\nonumber\\
&\le \exp_2\left(n\left(\varphi_{\rm Count}(\beta^*,\eta) + \alpha^*\log_2\varphi_{\rm Prob}(\beta^*,0,0,0)-\frac{f^*}{4}+O\left(\frac{\log_2 n}{n}\right)\right)\right)\label{eq:towards-E-prelim},
\end{align}
where~\eqref{eq:towards-E-prelim} uses $\max\{\alpha^*\epsilon^*,3\log_2|\mathcal{I}|\}\le -f^*/8$ which  follows from~\eqref{eq:epsilon-star-and-c-star}. Now,  using the continuity of $\eta\mapsto \varphi_{\rm Count}(\beta^*,\eta)$ at $\eta=0$, it follows that there is a $\delta_2^*\triangleq \delta_2^*(\beta^*)>0$ such that
\begin{equation}\label{eq:towards-E-prelim-2}
|\eta|<\delta_2^* \implies \varphi_{\rm Count}(\beta^*,\eta)<\varphi_{\rm Count}(\beta^*,0)-\frac{f^*}{4}.
\end{equation}
Finally, we let 
\begin{equation}\label{eq:eta-star}
\eta^* = \frac12\min\bigl\{\delta_1^*,\delta_2^*\bigr\}.
\end{equation}
With this choice of $\eta^*$, we have by using~\eqref{eq:towards-E-prelim} and~\eqref{eq:towards-E-prelim-2} that
\begin{align*}
    \mathbb{E}\bigl[\bigl|\mathcal{S}(\beta^*,\eta^*,3,\alpha^*,\mathcal{I})\bigr|\bigr]&\le \exp_2\left(n\left(\varphi_{\rm Count}(\beta^*,\eta^*) + \alpha^*\log_2\varphi_{\rm Prob}(\beta^*,0,0,0)-\frac{f^*}{4}+O\left(\frac{\log_2 n}{n}\right)\right)\right)\\
    &\le \exp_2\left(n\left(\varphi_{\rm Count}(\beta^*,0)+\alpha^*\log_2\varphi_{\rm Prob}(\beta^*,0,0,0)-\frac{f^*}{2}+O\left(\frac{\log_2 n}{n}\right)\right)\right)\\
    &\le \exp_2\left(n\left(\frac{f^*}{2}+O\left(\frac{\log_2 n}{n}\right)\right)\right)\\
    &=\exp_2\Bigl(-\Theta(n)\Bigr),
\end{align*}
using the fact per~\eqref{eq:f-star} that $f^*<0$. This completes the proof of Theorem~\ref{thm:3-OGP}.
\subsection{Proof of Theorem~\ref{thm:m-OGP-small-kappa}}\label{sec:pf-m-OGP}
Fix $\kappa>0$. We start by observing the following obvious monotonicity property: for any fixed $0<\eta<\beta<1$, $m\in\mathbb{N}$, $\mathcal{I}\subset[0,\pi/2]$, and $\alpha\le \alpha'$; we have
\[
\mathbb{P}\Bigl[\mathcal{S}_\kappa(\beta,\eta,m,\alpha',\mathcal{I})\ne \varnothing\Bigr]\le 
\mathbb{P}\Bigl[\mathcal{S}_\kappa(\beta,\eta,m,\alpha,\mathcal{I})\ne \varnothing\Bigr].
\]
For this reason, it suffices to establish the result for $\alpha =
\alpha_{{\rm OGP}}(\kappa)=10\kappa^2\log \frac1\kappa$. We will do so by using the \emph{first moment method}: note that by Markov's inequality,
\begin{equation}\label{eq:m-ogp-markov}
\mathbb{P}\Bigl[\mathcal{S}_\kappa(\beta,\eta,m,\alpha,\mathcal{I})\ne \varnothing\Bigr]=\mathbb{P}\Bigl[\Bigl|\mathcal{S}_\kappa(\beta,\eta,m,\alpha,\mathcal{I})\Bigr|\ge 1\Bigr]\le \mathbb{E}\Bigl[\Bigl|\mathcal{S}_\kappa(\beta,\eta,m,\alpha,\mathcal{I})\Bigr|\Bigr].
\end{equation}
We now study $\mathbb{E}\Bigl[\Bigl|\mathcal{S}_\kappa(\beta,\eta,m,\alpha,\mathcal{I})\Bigr|\Bigr]$.
\paragraph{Counting term.} Fix any $m\in\mathbb{N}$, and $0<\eta<\beta<1$. We upper bound the number $M(m,\beta,\eta)$ of the $m-$tuples $\bigl(\sigma^{(i)}:1\le i\le m)$, $\sigma^{(i)}\in\bincube$, subject to the constraint $\beta-\eta \le n^{-1}\ip{\sigma^{(i)}}{\sigma^{(j)}}\le \beta$ for $1\le i<j\le m$. 
\begin{lemma}\label{lemma:m-ogp-counting}
\begin{equation}\label{eq:m-ogp-counting}
    M(m,\beta,\eta)\le \exp_2\left(n + n(m-1)h\left(\frac{1-\beta+\eta}{2}\right)+O(\log_2 n)\right).
\end{equation}
\end{lemma}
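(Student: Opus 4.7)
The plan is to prove this via a simple star-shaped relaxation of the overlap constraints, which dramatically simplifies the counting compared to Lemma~\ref{lemma:n-equilateral} where all $\binom{m}{2}$ pairwise constraints were used simultaneously. That is, I will upper bound $M(m,\beta,\eta)$ by relaxing the pairwise overlap condition to the much weaker condition that only pairs of form $(\sigma^{(1)}, \sigma^{(j)})$ with $j\ge 2$ need satisfy $\beta-\eta \le \Overlap{\sigma^{(1)}}{\sigma^{(j)}}\le \beta$. Under this weakening the $\sigma^{(j)}$ for $j\ge 2$ become conditionally independent given $\sigma^{(1)}$, and the count factorizes across the $m-1$ "spokes".

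Concretely, I would first choose $\sigma^{(1)}\in\bincube$ freely, contributing a factor of $2^n$. For each $j\in\{2,\dots,m\}$, the number of $\sigma^{(j)}\in\bincube$ satisfying $\Overlap{\sigma^{(1)}}{\sigma^{(j)}}=\rho$ for a fixed $\rho$ is $\binom{n}{n(1-\rho)/2}$, so the total count for a single spoke is
\[
\sum_{\substack{\rho\,:\,\beta-\eta\le \rho\le\beta\\n(1-\rho)/2\in\mathbb{N}}}\binom{n}{n(1-\rho)/2}.
\]
There are at most $n\eta+1$ admissible values of $\rho$, and since $(1-\beta)/2 < (1-\beta+\eta)/2 \le 1/2$ (for $\eta$ small), monotonicity of $\binom{n}{k}$ on $k\in[0,n/2]$ gives that each binomial coefficient is bounded by $\binom{n}{n(1-\beta+\eta)/2}=\exp_2\bigl(nh\bigl(\tfrac{1-\beta+\eta}{2}\bigr)+O(\log_2 n)\bigr)$ via Stirling. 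Multiplying the single-spoke bound over $j=2,\dots,m$ and including the $2^n$ factor for $\sigma^{(1)}$ yields
\[
M(m,\beta,\eta)\le 2^n\left(\exp_2\!\left(nh\!\left(\tfrac{1-\beta+\eta}{2}\right)+O(\log_2 n)\right)\right)^{m-1},
\]
which is precisely~\eqref{eq:m-ogp-counting} after absorbing the $(n\eta+1)^{m-1}$ polynomial factor (and the Stirling overhead) into the $O(\log_2 n)$ term, using that $m$ is a fixed constant.

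There is no real obstacle here beyond being careful that the star-shaped relaxation is valid (it clearly is, since enforcing fewer constraints can only enlarge the tuple count) and that the asymptotic $\binom{n}{np}=\exp_2(nh(p)+O(\log_2 n))$ is applied uniformly over the $O(n)$ choices of $\rho$. I note that this bound is looser than what the exact equilateral counting of Lemma~\ref{lemma:n-equilateral} achieves for $m=3$; as discussed in the paragraph following Theorem~\ref{thm:3-OGP}, it appears that for general $m$ a full equilateral count becomes combinatorially unwieldy, so the star-shaped relaxation is the natural choice. Since $h\bigl(\tfrac{1-\beta+\eta}{2}\bigr)$ is continuous in $\eta$ at $\eta=0$, this looseness is harmless for our eventual goal of making $\Upsilon(\beta,\alpha)$ negative in the proof of Theorem~\ref{thm:m-OGP-small-kappa}.
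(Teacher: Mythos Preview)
Your proposal is correct and takes essentially the same approach as the paper's own proof: relax to the star-shaped constraint $\beta-\eta\le\Overlap{\sigma^{(1)}}{\sigma^{(j)}}\le\beta$ for $j\ge 2$, choose $\sigma^{(1)}$ freely, bound each spoke by $\binom{n}{n(1-\beta+\eta)/2}\cdot n^{O(1)}$ via monotonicity and Stirling, and multiply the $m-1$ spokes together using $m=O(1)$. Your surrounding remarks about why this looser bound suffices for Theorem~\ref{thm:m-OGP-small-kappa} are also on point.
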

\begin{proof}[Proof of Lemma~\ref{lemma:m-ogp-counting}]
Note that for any $\sigma,\sigma'\in\bincube$, $\ip{\sigma}{\sigma'}=n-2d_H(\sigma,\sigma')$. There are $2^n$ choices for $\sigma^{(1)}$. Having chosen a $\sigma^{(1)}$; any $\sigma^{(i)}$, $2\le i\le m$, can be chosen in
\[
\sum_{\substack{\rho:\frac{1-\beta}{2}\le \rho\le \frac{1-\beta+\eta}{2} \\ \rho n\in\mathbb{N}}}\binom{n}{n\rho}\le \binom{n}{n\frac{1-\beta+\eta}{2}}n^{O(1)},
\]
different ways,
subject to the constraint that $\beta-\eta\le n^{-1}\ip{\sigma^{(1)}}{\sigma^{(i)}}\le \beta$. For any $\rho\in(0,1)$, $\binom{n}{n\rho}=\exp_2 \bigl(nh(\rho)+O\bigl(\log_2 n\bigr)\bigr)$ by Stirling's approximation. Combining these, and recalling $m=O(1)$ (as $n\to\infty$), we obtain~\eqref{eq:m-ogp-counting}.
\end{proof}
\paragraph{Probability term.} Fix any $\bigl(\sigma^{(i)}:1\le i\le m\bigr)$ with the pairwise overlaps
\[
n^{-1}\ip{\sigma^{(i)}}{\sigma^{(j)}} = \beta - \eta_{ij},\qquad 1\le i<j\le m.
\]
Evidently, $\eta_{ij}\ge 0$ for all $i<j$. Moreover, if we set \[\boldsymbol{\eta}=\bigl(\eta_{ij}:1\le i<j\le m\bigr)\in\R^{m(m-1)/2},
\]
then,
\[
\|\boldsymbol{\eta}\|_\infty \le \eta.
\]
\begin{lemma}\label{lemma:m-ogp-prob}
Let $\Sigma(\boldsymbol{\eta})\in\R^{m\times m}$ be a matrix with the property that 
\begin{itemize}
    \item[(a)] $\bigl(\Sigma(\boldsymbol{\eta})\bigr)_{ii}=1$ for $1\le i\le m$.
    \item[(b)]  $\bigl(\Sigma(\boldsymbol{\eta})\bigr)_{ij}=\bigl(\Sigma(\boldsymbol{\eta})\bigr)_{ji}=\beta-\eta_{ij}$ for every $1\le i<j\le m$.
\end{itemize}
Then,
\begin{itemize}
    \item[(i)] $\Sigma(\boldsymbol{\eta})$ is positive definite (PD) if $\eta<\frac{1-\beta}{m}$.
    \item[(ii)] Assume $\Sigma(\boldsymbol{\eta})$ is PD. Let $(Z_1,Z_2,\dots,Z_m)\sim \mathcal{N}\bigl(0,\Sigma(\boldsymbol{\eta})\bigr)$ be a multivariate normal random vector, and define

\begin{equation}\label{eq:m-ogp-phi-prob}
   \varphi_{\rm Prob}\bigl(\beta,\boldsymbol{\eta},\kappa\bigr) = \mathbb{P}\Bigl[|Z_i|\le \kappa,1\le i\le m\Bigr].
\end{equation}
Then, 
\begin{equation}\label{eq:m-ogp-prob}
   \mathbb{P}\Bigl[\exists\tau_i \in\mathcal{I}, 1\le i\le m:\left|\mathcal{M}_i(\tau_i)\sigma^{(i)}\right|\le(\kappa\sqrt{n})\ones,  1\le i\le  m \Bigr]\le |\mathcal{I}|^m \varphi_{\rm Prob}\bigl(\beta,\boldsymbol{\eta},\kappa\bigr)^{\alpha n}.
\end{equation}
\item[(iii)] We have
\begin{equation}\label{eq:m-ogp-phi-prob-upper}
    \varphi_{{\rm Prob}}\bigl(\beta,\boldsymbol{\eta},\kappa\bigr)\le \bigl(2\pi\bigr)^{-\frac{m}{2}}\bigl|\Sigma(\boldsymbol{\eta})\bigr|^{-\frac12}\bigl(2\kappa\bigr)^{m}.
\end{equation}
\end{itemize}
\end{lemma}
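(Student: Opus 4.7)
The plan is to establish the three parts in order, reusing the same template as the argument already given for Theorem~\ref{thm:3-OGP}, but now dimension-$m$ instead of dimension-$3$. For part (i), I would decompose $\Sigma(\boldsymbol{\eta}) = (1-\beta)I_m + \beta\,\ones\ones^T + E$, where $E$ has zero diagonal and off-diagonal entries $E_{ij} = -\eta_{ij}$, so $\|E\|_2 \le \|E\|_F \le \eta\sqrt{m(m-1)} < m\eta$. The base matrix $(1-\beta)I_m + \beta\,\ones\ones^T$ has eigenvalues $1+(m-1)\beta$ (eigenvector $\ones$) and $1-\beta$ (with multiplicity $m-1$), hence $\lambda_{\min} = 1-\beta$. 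Weyl's inequality then gives $\lambda_{\min}(\Sigma(\boldsymbol{\eta})) \ge (1-\beta) - m\eta > 0$ under the hypothesis $\eta < (1-\beta)/m$, establishing positive definiteness.

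For part (ii), I would mirror the bound leading to~\eqref{eq:prob-prelim-bd-5} in the proof of Theorem~\ref{thm:3-OGP}. First apply a union bound over the $|\mathcal{I}|^m$ possible choices of $(\tau_1,\dots,\tau_m)\in\mathcal{I}^m$. Then, for a fixed choice of $\tau_i$'s, use row-independence: writing the $k$-th row of $\mathcal{M}_i(\tau_i)$ as $\texttt{R}_{i,k}\in\R^n$, the vectors $(n^{-1/2}\ip{\texttt{R}_{i,k}}{\sigma^{(i)}}:1\le i\le m)$ for $k=1,\dots,\lfloor\alpha n\rfloor$ are i.i.d.\ multivariate Gaussian, so the event factors as a product over $\alpha n$ independent copies. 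A direct second-moment calculation (identical to the one done in the $m=3$ case via~\eqref{eq:interpolate-matrix} and the independence of $\M_0,\M_1,\dots,\M_m$) shows that the resulting single-row covariance has diagonal entries $1$ and off-diagonal entries $\cos(\tau_i)\cos(\tau_j)(\beta-\eta_{ij})$. Applying Sid\'ak's inequality (Theorem~\ref{thm:sidak}) with $\lambda_i=\cos(\tau_i)\in[0,1]$ and the convex symmetric box $[-\kappa,\kappa]^m$ replaces each $\cos(\tau_i)$ by $1$, bounding the single-row probability by $\varphi_{\rm Prob}(\beta,\boldsymbol{\eta},\kappa)$, and the $\alpha n$-fold product together with the union bound gives~\eqref{eq:m-ogp-prob}.

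For part (iii), the idea is just to bound the Gaussian density on the box by its maximum value. Since $\Sigma(\boldsymbol{\eta})^{-1}$ is positive definite (by part (i)), the density
\[
(2\pi)^{-m/2}|\Sigma(\boldsymbol{\eta})|^{-1/2}\exp\!\left(-\tfrac{1}{2}z^{T}\Sigma(\boldsymbol{\eta})^{-1}z\right)
\]
is maximized at $z=0$ with value $(2\pi)^{-m/2}|\Sigma(\boldsymbol{\eta})|^{-1/2}$, and the region $[-\kappa,\kappa]^m$ has volume $(2\kappa)^m$. Multiplying yields~\eqref{eq:m-ogp-phi-prob-upper}.

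None of these three steps presents a serious obstacle; the most delicate of them is part (ii), where one must be careful to verify that the covariance structure of the single-row vector indeed has the product form $\cos(\tau_i)\cos(\tau_j)\,\rho_{ij}$ required by Theorem~\ref{thm:sidak}. This is exactly where the interpolation construction~\eqref{eq:interpolate-matrix} through the common backbone $\M_0$ is used, and it is the reason $\mathcal{I}$ must be confined to $[0,\pi/2]$ so that $\cos(\tau_i)\in[0,1]$. Once this is checked the rest is bookkeeping, and the lemma follows.
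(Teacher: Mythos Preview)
Your proposal is correct and matches the paper's proof essentially line for line: part (i) via the decomposition $\Sigma(\boldsymbol{\eta})=(1-\beta)I_m+\beta\ones\ones^T+E$ with $\|E\|_2\le\|E\|_F\le m\eta$ and Weyl's inequality; part (ii) by union bound over $\mathcal{I}^m$, row-independence, and Sid\'ak's inequality with $\lambda_i=\cos(\tau_i)$ exactly as in the proof of Theorem~\ref{thm:3-OGP}; and part (iii) by bounding the density by its value at the origin and multiplying by the box volume. There is nothing to add.
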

\begin{proof}[Proof of Lemma~\ref{lemma:m-ogp-prob}]
\begin{itemize}
    \item[(i)] Note that $\Sigma(\boldsymbol{\eta}) = (1-\beta)I+\beta\ones\ones^T+E$, where $E\in\R^{m\times m}$ with $0\le E_{ij}\le \eta$ for $1\le i<j\le m$. In particular, $\|E\|_2\le \|E\|_F\le \eta m$. Noting that the smallest eigenvalue of $(1-\beta)I+\beta \ones \ones^T$ is $1-\beta$, the result follows.
    \item[(ii)] The result follows by taking a union bound over $\mathcal{I}$; and then applying the Gaussian comparison inequality, Theorem~\ref{thm:sidak}, in the exact same way as in the proof of Theorem~\ref{thm:3-OGP}.
    \item[(iii)] Recall that the multivariate normal density for $(Z_i:1\le i\le m)$ is given by
    \[
    f(z_1,\dots,z_m)=\bigl(2\pi\bigr)^{-\frac{m}{2}}\bigl|\Sigma(\boldsymbol{\eta})\bigr|^{-\frac12}\exp\left(-\frac12\boldsymbol{z}^T \Sigma(\boldsymbol{\eta})\boldsymbol{z}\right),
    \]
    where $\boldsymbol{z}=(z_i:1\le i\le m)$, and thus
    \[
    \mathbb{P}\Bigl[(Z_i:1\le i\le m)\in[-\kappa,\kappa]^m\Bigr]\le \bigl(2\pi\bigr)^{-\frac{m}{2}}\bigl|\Sigma(\boldsymbol{\eta})\bigr|^{-\frac12}\bigl(2\kappa\bigr)^m;
    \]
    using the fact
    \[
    \exp\left(-\frac12\boldsymbol{z}^T \Sigma(\boldsymbol{\eta})\boldsymbol{z}\right)\le 1,
    \]
    for every $\boldsymbol{z}\in\R^m$ as $\Sigma(\boldsymbol{\eta})$ is PD.
\end{itemize}
\end{proof}

\paragraph{Upper bounding the expectation.} Assume $0<\eta<\beta<1$ and $m\in\mathbb{N}$ are fixed as $n\to\infty$; and that $\eta$ is small enough, so that $\Sigma(\boldsymbol{\eta})$ is positive definite. (We will tune $\eta$ eventually.)  Combining the counting bound~\eqref{eq:m-ogp-counting} arising from Lemma~\ref{lemma:m-ogp-counting}; the probability bounds~\eqref{eq:m-ogp-prob} and~\eqref{eq:m-ogp-phi-prob-upper} arising from Lemma~\ref{lemma:m-ogp-prob}; and $|\mathcal{I}|\le 2^{cn}$, we upper bound the expectation by
\begin{align}
   \mathbb{E}\Bigl[\Bigl|\mathcal{S}_\kappa(\beta,\eta,m,\alpha,\mathcal{I})\Bigr|\Bigr]&\le \exp_2\left(n+n(m-1)h\left(\frac{1-\beta+\eta}{2}\right)+cmn-\frac{m\alpha n}{2}\log_2(2\pi)\right.\nonumber\\
   &\Bigl.+m\alpha n\log_2(2\kappa) - \frac{\alpha n}{2}\inf_{\boldsymbol{\eta}:\|\boldsymbol{\eta}\|_\infty\le \eta}\log_2 \bigl|\Sigma(\boldsymbol{\eta})\bigr| +O(\log_2 n)\Bigr) \nonumber\\
   &\le \exp_2\Bigl(n \Psi(c,m,\beta,\eta,\kappa) + O\bigl(\log_2 n\bigr)\Bigr)\label{eq:exp-upbd},
\end{align}
where
\begin{align}
    \Psi(c,\beta,\eta,m,\alpha) &\triangleq 1+cm+mh\left(\frac{1-\beta+\eta}{2}\right) - \frac{\alpha m}{2} \log_2(2\pi)\nonumber\\
    & +\alpha m\log_2(2\kappa)-\frac{\alpha}{2} \inf_{\boldsymbol{\eta}:\|\boldsymbol{\eta}\|_\infty\le \eta}\log_2 \bigl|\Sigma(\boldsymbol{\eta})\bigr|,\label{eq:rate-fnc-1}
\end{align}
will be called the \emph{free energy} term. 

\paragraph{Making the free energy negative.} Recall $\alpha=10\kappa^2\log_2\frac1\kappa$. We claim that for $\kappa$ small, there exist $0<\eta<\beta<1$, $c>0$ and $m\in\mathbb{N}$ such that $\Psi(c,\beta,\eta,m,\alpha)<0$. 

To that end, we first establish $\Psi(c,\beta,0,m,\alpha)<0$ for appropriately chosen $0<\beta<1$, $m\in\mathbb{N}$ and $c>0$. Once this is ensured, observe that the result follows immediately: both the binary entropy and $\log_2|\Sigma(\boldsymbol{\eta})|$ are continuous, and the domain, $\boldsymbol{\eta}\in[0,\eta]^{m(m-1)/2}$ is compact; and thus $\Psi(c,\beta,\eta,m,\alpha)<0$ for any sufficiently small $\eta>0$. 

Let $\Sigma\triangleq \Sigma(0)=(1-\beta)I+\beta\ones\ones^T$. Then, the spectrum of $\Sigma$ consists of the eigenvalue $1-\beta+\beta m$ with multiplicity one; and the eigenvalue $1-\beta$ with multiplicity $m-1$.  Consequently,
\begin{align}
    \Psi(c,\beta,0,m,\alpha)&=1+mh\left(\frac{1-\beta}{2}\right)-\frac{\alpha m}{2}\log_2(2\pi)+\alpha m\log_2(2\kappa)+cm\nonumber\\
    &-\frac{\alpha}{2}(m-1)\log_2(1-\beta)-\frac{\alpha}{2}\log_2(1-\beta+\beta m)\nonumber\\
    &\le m\left(\frac1m-\frac{\alpha}{2m}\log_2(1-\beta+\beta m)+c+\Upsilon(\beta,\alpha)\right)\label{eq:phi-eta-equals-zero},
\end{align}
 where
\begin{equation}\label{eq:upsilon}
\Upsilon(\beta,\alpha) \triangleq h\left(\frac{1-\beta}{2}\right)-\frac{\alpha}{2}\log_2(2\pi) +\alpha\log_2(2\kappa)-\frac{\alpha}{2}\log_2(1-\beta).
\end{equation}
Set 
\begin{equation}\label{eq:beta-star}
    \beta = 1-4\kappa^2,
\end{equation}
and recall $\alpha=10\kappa^2\log_2\frac1\kappa$. With these $\beta$ and $\alpha$ and $\kappa>0$ sufficiently small,
\[
\frac1m - \frac{\alpha}{2m}\log_2\bigl(1-\beta+\beta m\bigr) = o_m(1),\quad\text{as}\quad m\to\infty.
\]
For this reason, it suffices to verify that for every $\kappa>0$ sufficiently small, $\Upsilon(\beta,\alpha)<0$.
\paragraph{Analyzing $\Upsilon(\beta,\alpha)$.} 
Note that $1-\beta = 4\kappa^2=(2\kappa)^2$, and thus
\begin{equation}\label{eq:last-two-cancel}
\alpha\log_2(2\kappa)  - \frac{\alpha}{2}\log_2(1-\beta)=0.
\end{equation}
Using the Taylor expansion, $\log_2(1-x) = -\frac{x}{\ln 2}+o(x)$ as $x\to 0$, we have
\[
\log_2\bigl(1-2\kappa^2\bigr) = -\frac{2}{\ln 2}\kappa^2+o_\kappa\bigl(\kappa^2\bigr),
\]
as $\kappa\to 0$. Consequently,
\begin{align}
    h\left(\frac{1-\beta}{2}\right)= h\bigl(2\kappa^2\bigr)&=-2\kappa^2\log_2\bigl(2\kappa^2\bigr)-\bigl(1-2\kappa^2\bigr)\log_2\bigl(1-2\kappa^2\bigr)\nonumber\\
    &=4\kappa^2\log_2\frac1\kappa +\Theta_\kappa\bigl(\kappa^2\bigr)\label{eq:entropik}.
\end{align}
Combining~\eqref{eq:last-two-cancel} and~\eqref{eq:entropik}, we thus obtain
\begin{equation}\label{eq:upsilon-neg}
\Upsilon\bigl(\beta,\alpha\bigr)=\Bigl(-5\log_2(2\pi)+4\Bigr)\kappa^2\log_2\frac1\kappa + \Theta\bigl(\kappa^2\bigr),
\end{equation}
which is indeed negative for every $\kappa$ small.
\paragraph{Combining everything.} We now complete the argument. For our choice of $\beta$ and $\alpha$, \eqref{eq:upsilon-neg} implies that $\Upsilon(\beta,\alpha)<0$ for every $\kappa$ small. Having ensured this (for a fixed $\kappa$), we then simultaneously set $m\in\mathbb{N}$ to be sufficiently large and $c>0$ to be sufficiently small, so that
\[
\frac1m-\frac{\alpha}{2m}\log_2\bigl(1-\beta+\beta m\bigr)+c+\Upsilon(\beta,\alpha)<0.
\]
This, via~\eqref{eq:phi-eta-equals-zero}, ensures $\Psi(c,\beta,0,m,\alpha)<0$. Finally, (uniform) continuity in $\eta$ ensures that for every small enough $\eta>0$, $\Psi(c,\beta,\eta,m,\alpha)<0$; hence
\[
  \mathbb{E}\Bigl[\Bigl|\mathcal{S}_\kappa(\beta,\eta,m,\alpha,\mathcal{I})\Bigr|\Bigr] = \exp\bigl(-\Theta(n)\bigr)
\]
by~\eqref{eq:exp-upbd}. Finally, inserting this into~\eqref{eq:m-ogp-markov}, we complete the proof. 
\subsection{Proof of Theorem~\ref{thm:stable-hardness}}\label{sec:pf-hardness}
Our proof is quite similar to that of~\cite[Theorem~3.2]{gamarnik2021algorithmic}, including the aforementioned Ramsey argument. In order to guide the reader, we commence this section with an outline of the proof.
\subsubsection*{Proof Outline for Theorem~\ref{thm:stable-hardness}}
Fix a $\kappa>0$, an $\alpha\ge 10\kappa^2\log_2\frac1\kappa$; and recall the parameters, $m\in\mathbb{N}$ and $0<\eta<\beta<1$, prescribed by our $m-$OGP result, Theorem~\ref{thm:m-OGP-small-kappa}. In a nutshell, our proof is based on a contradiction argument. To that end, assume such a stable $\A$ exists. Using $\A$; we create, with positive probability, an instance of the forbidden structure ruled out by the $m-$OGP. A brief roadmap is as follows.
\begin{itemize}
    \item As customary, we first reduce to the case of deterministic algorithms. That is, we find an $\omega^*\in\Omega$, set $\A^*(\cdot)=\A(\cdot,\omega^*):\R^{M\times n}\to\bincube$, and operate with this deterministic $\A^*$. This is the subject of Lemma~\ref{lemma:reduce-to-deterministic}.
    \item We then study a certain high-probability event, dubbed as \emph{chaos} event. This event pertains to $m-$tuples $\sigma^{(i)}\in\bincube$, $1\le i\le m$, where $\|\M_i\sigma^{(i)}\|_\infty\le\kappa\sqrt{n}$ for i.i.d.\,random matrices $\M_i\in\R^{M\times n}$, each with i.i.d. $\cN(0,1)$ coordinates. Namely, $\sigma^{(i)}$ satisfy constraints dictated by independent instances of disorder. We show the existence of a $\beta'$, such that w.h.p.\,it is the case that for any such $m-$tuple, there exists $1\le i<j\le m$ such that $\mathcal{O}(\sigma^{(i)},\sigma^{(j)})\le \beta'$. Notably, $\beta'<\beta-\eta$. This is the subject of Lemma~\ref{lemma:chaos}.
    \item We then (a) generate $T+1$ i.i.d.\,random matrices $\M_i\in\R^{M\times n}$, $0\le i\le T$ (dubbed as \emph{replicas}); (b) divide the interval $[0,\pi/2]$ into $Q$ equal pieces, $0=\tau_0<\tau_1<\cdots<\tau_Q=1$; (c) construct \emph{interpolation trajectories}
    \[
    \M_i(\tau_k) =\cos(\tau_k)\M_0+\sin(\tau_k)\M_i\in\R^{M\times n},\quad 1\le i\le T,\quad 0\le k\le Q;
    \]
    and (d) evaluate $\A^*$ along each trajectory and time step, by setting
    \[
    \sigma_i(\tau_k)\triangleq \A^*\bigl(\M_i(\tau_k)\bigr)\in\bincube.
    \]
    We tune $T,Q\in\mathbb{N}$ appropriately.
    \item We next show in Proposition~\ref{prop:overlaps-are-stable} that since $\A^*$ is stable; the overlaps evolve smoothly along each trajectory. That is, we show that for every $1\le i<j\le T$ and $0\le k\le Q-1$,
    \[
    \Bigl|\mathcal{O}^{(ij)}(\tau_k)-\mathcal{O}^{(ij)}(\tau_{k+1})\Bigr|,\qquad\text{where}\qquad \mathcal{O}^{(ij)}(\tau)\triangleq \frac1n\ip{\sigma_i(\tau)}{\sigma_j(\tau)},
    \] 
    is small.
    \item We then show, by taking a union bound over $1\le i\le T$ and $0\le k\le Q$; that with positive probability, the algorithm is successful along each trajectory and time step:
    \[
    \Bigl\|\M_i(\tau_k)\sigma_i(\tau_k)\Bigr\|_\infty\le \kappa\sqrt{n},\quad 1\le i\le T,\quad 0\le k\le Q.
    \]
    This is the subject of Lemma~\ref{lemma:correctness}. 
    \item We next take a union bound, over all subsets $A\subset[T]$ with $|A|=m$, to extend the aforementioned \emph{chaos} event to all such subsets. 
    \item We then let $\tau$ evolve from $\tau_0=0$ to $\tau_Q=\pi/2$. Notice that in the beginning, $\sigma_i(\tau_0)$ are all equal; whereas at the end, $\sigma_i(\tau_Q)$ are obtained by applying $\A^*$ to i.i.d.\,matrices, $\M_i(\tau_Q)$. Using this observation, the chaos property, as well as the stability of the overlaps (in the sense of above), we establish in Proposition~\ref{prop:eventual-trap} the following. For every $A\subset[T]$ with $|A|=m$, there exists $1\le i_A<j_A\le T$ and a time $\tau_A\in\{\tau_1,\dots,\tau_Q\}$ such that 
    \[
    \mathcal{O}^{(i_A,j_A)}\bigl(\tau_A\bigr)\in(\beta-\eta,\beta).
    \]
    \item We then construct a graph $\mathbb{G}=(V,E)$ on $|V|=T$ vertices. More specifically, (a) vertex $i\in V$ of $\mathbb{G}$ corresponds to $i{\rm th}$ interpolation trajectory; and (b) for any $1\le i<j\le T$, $(i,j)\in E$ iff there is a time $t\in\{1,2,\dots,Q\}$ such that $\mathcal{O}^{(ij)}(\tau_t)\in (\beta-\eta,\beta)$. Note, from the previous bullet point, that the largest independent set of $\mathbb{G}$ is of size at most $m-1$. That is $\alpha(\mathbb{G})\le m-1$. We next color each edge $(i,j)\in E$ of $\mathbb{G}$ with the first time $t\in\{1,2,\dots,Q\}$ such that $\mathcal{O}^{(ij)}(\tau_t)\in(\beta-\eta,\beta)$.
    \item We next apply the Ramsey argument twice. We first use the so-called \emph{two-color} version of Ramsey Theory, Theorem~\ref{thm:2-color-ramsey}. Using the fact $\alpha(\mathbb{G})\le m-1$; it follows that $\mathbb{G}$ contains a large clique, provided that the number $T$ of vertices is sufficiently large. Call this large clique $K_M$, and observe that each edge of $K_M$ is colored with one of $Q$ potential colors. We then apply the so-called \emph{multicolor} version of Ramsey Theory, Theorem~\ref{thm:q-color-ramsey}. Provided $M$ is large (which is ensured by our eventual choice of parameters), we deduce the original graph $\mathbb{G}$ contains a monochromatic $m-$clique $K_m$. These are done in Proposition~\ref{prop:mono-chrom}.
    \item We now interpret the monochromatic $K_m$ extracted above. There exists $1\le i_1<i_2<\cdots<i_m\le T$ and a time $t\in\{1,2,\dots,Q\}$ such that $\mathcal{O}^{(i_k,i_\ell)}(\tau_t)\in(\beta-\eta,\beta)$ for $1\le k<\ell \le m$. Setting $\sigma^{(i)}\triangleq \sigma_{i_k}(\tau_t)\in\bincube$, this $m-$tuple $\sigma^{(i)}\in\bincube$ is precisely the forbidden configuration ruled out by our $m-$OGP result. 
    \item Finally, under the assumption that such an $\A^*$ exists; the whole process outlined above happens with positive probability. That is, we generate such an $m-$tuple with positive probability; contradicting with the $m-$OGP result, where the guarantee is exponentially small in $n$. This settles Theorem~\ref{thm:stable-hardness}.
    \end{itemize}
Before we provide the complete proof, we record the following auxiliary results. 
\subsubsection*{Auxiliary Results from Ramsey Theory in Extremal Combinatorics}
As was already noted, our proof uses Ramsey Theory in extremal combinatorics in a crucial way. To that end, we provide two auxiliary results. The first result pertains to the so-called two-color Ramsey numbers.
\begin{theorem}\label{thm:2-color-ramsey}
Let $k,\ell\ge 2$ be integers; and $R(k,\ell)$ denotes the smallest $n\in\mathbb{N}$ such that any red/blue (edge) coloring of $K_n$ necessarily contains either a red $K_k$ or a blue $K_\ell$. Then,
\[
R(k,\ell) \le \binom{k+\ell-2}{k-1}=\binom{k+\ell-2}{\ell-1}.
\]
In the special case where $k=\ell=M\in\mathbb{N}$, we thus have
\[
R(M,M)\le \binom{2M-2}{M-1}.
\]
\end{theorem}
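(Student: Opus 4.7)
The plan is to establish this classical Erdős--Szekeres bound by a double induction on $k+\ell$, using the well-known recurrence $R(k,\ell) \le R(k-1,\ell) + R(k,\ell-1)$ together with Pascal's identity. First I would verify the boundary cases $R(k,2) = k$ and $R(2,\ell) = \ell$: indeed, a red/blue coloring of $K_k$ either has all red edges (yielding a red $K_k$) or contains at least one blue edge (yielding a blue $K_2$). These base values match the binomial bound since $\binom{k}{k-1}=k$ and $\binom{\ell}{1}=\ell$.

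For the inductive step, fix $k,\ell \ge 3$ and suppose the bound holds for all pairs $(k',\ell')$ with $k'+\ell' < k+\ell$. Set $n \triangleq R(k-1,\ell) + R(k,\ell-1)$ and consider any red/blue edge-coloring of $K_n$. Pick an arbitrary vertex $v$, and partition the remaining $n-1$ vertices into the set $V_R$ of vertices joined to $v$ by a red edge and the set $V_B$ of vertices joined to $v$ by a blue edge. By pigeonhole, either $|V_R| \ge R(k-1,\ell)$ or $|V_B| \ge R(k,\ell-1)$. In the first case, the induced coloring on $V_R$ contains either a blue $K_\ell$ (and we are done) or a red $K_{k-1}$, which together with $v$ and the red edges from $v$ to $V_R$ forms a red $K_k$. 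The second case is symmetric, producing either a red $K_k$ or a blue $K_\ell$ inside $V_B \cup \{v\}$. This yields $R(k,\ell) \le R(k-1,\ell) + R(k,\ell-1)$.

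Applying the inductive hypothesis to both terms on the right,
\[
R(k,\ell) \;\le\; \binom{k+\ell-3}{k-2} + \binom{k+\ell-3}{k-1} \;=\; \binom{k+\ell-2}{k-1},
\]
where the last equality is Pascal's identity. The special case $k=\ell=M$ then gives $R(M,M) \le \binom{2M-2}{M-1}$, as claimed. There is no real obstacle here beyond bookkeeping: the only care required is handling the boundary of the induction (when $k=2$ or $\ell=2$) to ensure that the pigeonhole argument does not invoke the induction hypothesis outside its range, and checking that $\binom{k+\ell-2}{k-1}=\binom{k+\ell-2}{\ell-1}$ follows from the symmetry $\binom{n}{r}=\binom{n}{n-r}$.
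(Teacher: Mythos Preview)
Your argument is correct and is exactly the classical Erd\H{o}s--Szekeres double induction. The paper does not actually supply a proof of this theorem; it simply records it as folklore and points to \cite[Theorem~6.6]{gamarnik2021algorithmic} for a proof, so your write-up is in line with (and more detailed than) what the paper provides.
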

Theorem~\ref{thm:2-color-ramsey} is folklore. For a proof, see e.g.~\cite[Theorem~6.6]{gamarnik2021algorithmic}. 

The second auxiliary result pertains to the so-called multicolor Ramsey numbers.
\begin{theorem}\label{thm:q-color-ramsey}
Let $q,m\in\mathbb{N}$. Denote by $R_q(m)$ the smallest $n\in\mathbb{N}$ such that any $q-$coloring of the edges of $K_n$ necessarily contains a monochromatic $K_m$. Then, 
\[
R_q(m)\le q^{qm}.
\]
\end{theorem}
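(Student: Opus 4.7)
The plan is to prove $R_q(m) \le q^{qm}$ via a classical iterated pigeonhole argument that greedily extracts a ``color-coherent'' sequence of vertices from any $q$-edge-colored complete graph $K_N$ with $N \ge q^{qm}$. The heart of the argument is to construct a nested sequence of vertex sets $U_0 \supsetneq U_1 \supsetneq \cdots \supsetneq U_t$, distinguished vertices $v_i \in U_{i-1}$ and colors $c_i \in \{1,\ldots,q\}$, with the property that every edge from $v_i$ into $U_i$ has color $c_i$. Once this is in hand, a second pigeonhole step over the colors $c_1,\ldots,c_t$ will yield the desired monochromatic clique.

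Concretely, set $U_0 = V(K_N)$ and pick any $v_1 \in U_0$. Its $|U_0|-1$ incident edges partition into $q$ color classes, so some class contains at least $\lceil (|U_0|-1)/q\rceil$ edges; call its color $c_1$ and let $U_1$ be the corresponding set of other endpoints. Iterating, at step $i$ pick $v_i \in U_{i-1}$ and apply pigeonhole to its edges into $U_{i-1}\setminus\{v_i\}$ to obtain $U_i \subseteq U_{i-1}\setminus\{v_i\}$ with $|U_i| \ge \lceil (|U_{i-1}|-1)/q\rceil$ and a color $c_i$ such that every edge from $v_i$ into $U_i$ has color $c_i$. A straightforward induction on $k$ yields the bound
\[
|U_k| \;\ge\; \frac{N-1}{q^k} - \frac{1}{q-1}.
\]

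Taking $t = q(m-1)$, the inequality above shows $|U_t| \ge 1$ whenever $N \ge q^{qm}$, so we may pick a final vertex $v_{t+1} \in U_t$. Now apply pigeonhole once more to the $q(m-1)$ colors $c_1,\ldots,c_t$: some color $c^*$ is attained at least $m-1$ times, say at indices $i_1 < i_2 < \cdots < i_{m-1}$. By the nesting $U_{t} \subseteq U_{t-1} \subseteq \cdots \subseteq U_{i_1}$, each of $v_{i_{j+1}},\ldots,v_{i_{m-1}},v_{t+1}$ lies in $U_{i_j}$, and by construction every edge from $v_{i_j}$ into $U_{i_j}$ has color $c_{i_j} = c^*$. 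Consequently all edges of the clique on $\{v_{i_1},v_{i_2},\ldots,v_{i_{m-1}},v_{t+1}\}$ are colored $c^*$, giving a monochromatic $K_m$ and thereby establishing $R_q(m) \le q^{qm}$.

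I expect no substantive obstacle: the argument reduces to routine bookkeeping, namely verifying the recursion $|U_k| \ge (N-1)/q^k - 1/(q-1)$ by induction from $|U_i| \ge (|U_{i-1}|-1)/q$, and checking that $N = q^{qm}$ comfortably exceeds $1 + q^{q(m-1)+1}/(q-1)$ for all $q \ge 2$ and $m \ge 1$. The only minor care needed is to ensure the two pigeonhole steps are applied to disjoint resources (the edges out of $v_i$ in the first step, and the color sequence $c_1,\ldots,c_{q(m-1)}$ in the second), which the nested construction guarantees automatically.
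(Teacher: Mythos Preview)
Your proposal is correct and is precisely the Erd\H{o}s--Szekeres neighborhood-chasing argument that the paper invokes by reference (the paper does not write out a proof but cites \cite{erdos1935combinatorial} and \cite{conlon2015recent}). The bookkeeping you outline---the recursion $|U_k|\ge (N-1)/q^k - 1/(q-1)$, the check that $q^{qm}\ge 1+q^{q(m-1)+1}/(q-1)$ for $q\ge 2$, and the final pigeonhole on the color sequence---all goes through exactly as you describe.
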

Theorem~\ref{thm:q-color-ramsey} can be established by using a minor modification of the so-called \emph{neighborhood chasing} argument due to Erd{\"o}s and Szekeres~\cite{erdos1935combinatorial}, see~\cite[Page~6]{conlon2015recent} for a more elaborate discussion.
\subsubsection*{Proof of Theorem~\ref{thm:stable-hardness}}
Let $\kappa>0$ be a sufficiently small (fixed) constant, $\alpha\ge \alpha_{{\rm OGP}}(\kappa)=10\kappa^2 \log_2\frac1\kappa$ (with $\alpha<\alpha_c(\kappa)$); and $M=\lfloor n\alpha\rfloor\in\mathbb{N}$.

We establish the hardness result for stable algorithms. That is, we show that there exists no randomized algorithm $\A:\R^{M\times n}\times \Omega\to \mathcal{B}_n$ which is
\[
\bigl(\rho,p_f,p_{\rm st},f,L\bigr)-\text{stable}
\]
(for every sufficiently large $n$) 
for the \texttt{SBP}, in the sense of Definition~\ref{def:admissible-alg}. We argue by contradiction: suppose such an $\A$ exists.
\paragraph{Parameter Choice.} 
For the above choice of $\alpha$ and $\kappa$, let $m\in\mathbb{N}$, and $0<\eta<\beta=1-4\kappa^2<1$ be $m-$OGP parameters prescribed by Theorem~\ref{thm:m-OGP-small-kappa}. Observe that the $m-$OGP statement still holds with parameters $0<\eta'<\beta<1$ if $\eta'<\eta$. For this reason, we assume
\begin{equation}\label{eq:beta-eta-5kappa^2}
    \eta<\kappa^2, \qquad\beta -\eta> 1-5\kappa^2.
\end{equation}
We first set
\begin{equation}\label{eq:stable-f-C}
  f = Cn\quad\text{where}\quad C = \frac{\eta^2}{1600};
\end{equation}
then define auxiliary parameters $Q$ and $T$, where
\begin{equation}\label{eq:Q-and-T}
  Q =\frac{4800L \pi}{\eta^2}\sqrt{\alpha} \qquad\text{and}\qquad T=\exp_2\left(2^{4mQ\log_2 Q}\right);
\end{equation}
and finally prescribe $p_f,p_{\rm st}$, and $\rho$ where
\begin{equation}\label{eq:pf-pst}
  p_f =\frac{1}{9(Q+1)T}, \qquad\qquad p_{{\rm st}} = \frac{1}{9Q(T+1)},\qquad\text{and}\qquad \rho = \cos\left(\frac{\pi}{2Q}\right).
\end{equation}
\paragraph{Reduction to Deterministic Algorithms.} We next establish that randomness do not improve the performance of a stable algorithm by much. 
\begin{lemma}\label{lemma:reduce-to-deterministic}
Let $\kappa>0$, $\alpha<\alpha_c(\kappa)$, and $M=\lfloor n\alpha\rfloor$. Suppose that $\A:\R^{M\times n}\times \Omega\to \bincube$ is a randomized algorithm that is $\bigl(\rho,p_f,p_{\rm st},f,L\bigr)-$stable (for the \texttt{SBP}). Then, there exists a deterministic algorithm $\A^*:\R^{M\times n}\to \bincube$ that is $\bigl(\rho,3p_f,3p_{\rm st},f,L\bigr)-$stable\footnote{Lemma~\ref{lemma:reduce-to-deterministic} applies also to the deterministic algorithms, see remarks following Definition~\ref{def:admissible-alg}.}.
\end{lemma}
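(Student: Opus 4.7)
The plan is a standard derandomization via Markov and Fubini. Writing $\A^{(\omega)}(\cdot) \triangleq \A(\cdot,\omega)$, I will define, for each realization of the internal randomness $\omega \in \Omega$, the conditional failure and instability probabilities
\[
F(\omega) \triangleq \mathbb{P}_{\M}\bigl[\|\M\A^{(\omega)}(\M)\|_\infty > \kappa\sqrt{n}\bigr], \quad G(\omega) \triangleq \mathbb{P}_{(\M,\overline{\M})}\bigl[d_H(\A^{(\omega)}(\M),\A^{(\omega)}(\overline{\M})) > f + L\|\M - \overline{\M}\|_F\bigr],
\]
where in the second probability the pair $(\M,\overline{\M})$ is distributed as in the stability clause of Definition~\ref{def:admissible-alg} (i.i.d.\ coordinates with prescribed correlation $\rho$). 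The $(\rho,p_f,p_{\rm st},f,L)$-stability of $\A$ precisely states $\mathbb{E}_\omega[F(\omega)] \le p_f$ and $\mathbb{E}_\omega[G(\omega)] \le p_{\rm st}$; these two identities are simply Fubini's theorem applied to the joint probability spaces.

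Next, I apply Markov's inequality to each of $F$ and $G$ to get $\mathbb{P}_\omega[F(\omega) > 3p_f] \le 1/3$ and $\mathbb{P}_\omega[G(\omega) > 3p_{\rm st}] \le 1/3$. A union bound then yields
\[
\mathbb{P}_\omega\bigl[F(\omega) \le 3p_f \text{ and } G(\omega) \le 3p_{\rm st}\bigr] \ge 1 - \tfrac{2}{3} = \tfrac{1}{3} > 0,
\]
so in particular there exists some $\omega^* \in \Omega$ with $F(\omega^*) \le 3p_f$ and $G(\omega^*) \le 3p_{\rm st}$. Fixing this $\omega^*$ once and for all, I define $\A^*(\cdot) \triangleq \A(\cdot,\omega^*)$, which is a deterministic map $\R^{M\times n} \to \bincube$. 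By construction, $\A^*$ satisfies both the success guarantee with failure parameter $3p_f$ and the stability guarantee with parameter $3p_{\rm st}$, keeping the same $\rho$, $f$, and $L$; this is exactly the conclusion.

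There is no real obstacle here; the only point worth flagging is that the events appearing inside $F(\omega)$ and $G(\omega)$ must be jointly measurable in $(\M,\overline{\M},\omega)$ for Fubini to apply, but this is automatic under the standard convention that the underlying probability space $(\Omega,\mathbb{P}_\omega)$ and the map $\A:\R^{M\times n}\times\Omega\to\bincube$ are measurable (as implicitly assumed in the setup preceding Definition~\ref{def:admissible-alg}). The constant $3$ is not special and can be replaced by any number strictly greater than $2$; I chose $3$ to match the statement of the lemma. The argument is essentially identical to~\cite[Lemma on reducing to deterministic algorithms]{gamarnik2021algorithmic} and similar reductions in the OGP literature.
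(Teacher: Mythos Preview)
Your proof is correct and follows essentially the same approach as the paper: define the conditional failure and instability probabilities as functions of $\omega$, apply Markov's inequality to each, and use a union bound (the paper phrases it as $\mathbb{P}[\Omega_1]+\mathbb{P}[\Omega_2]>1$, but this is the same thing) to extract a good $\omega^*$.
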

\begin{proof}[Proof of Lemma~\ref{lemma:reduce-to-deterministic}]
For any $\omega\in\Omega$, define the event
\[
\mathcal{E}_1(\omega)\triangleq \Bigl\{\bigl|\mathcal{M} \A\bigl(\mathcal{M},\omega\bigr)\bigr|\le (\kappa\sqrt{n})\ones\Bigr\},
\]
where $\M\in\R^{M\times n}$, $\A(\M,\omega)\in\bincube$, and the inequality is coordinate-wise. Observe that
\[
\mathbb{P}_{\mathcal{M},\omega}\bigl[\bigl|\mathcal{M} \A\bigl(\mathcal{M},\omega\bigr)\bigr|>(\kappa\sqrt{n})\ones\bigr] = \mathbb{E}_\omega\bigl[\mathbb{P}_{\M}\bigl(\mathcal{E}_1^c(\omega)\bigr)\bigr].
\]
Perceiving $\mathbb{P}_{\M}(\mathcal{E}_1^c)$ as a random variable with source of randomness $\omega$ (note that the randomness over $\M$ is ``integrated" over $\mathbb{P}_{\M}$), we have by Markov's inequality
\[
\mathbb{P}_\omega\bigl[\mathbb{P}_{\M}\bigl[\mathcal{E}_1^c(\omega)\bigr]\ge 3p_f\bigr]\le \frac{\mathbb{E}_\omega\bigl[\mathbb{P}_{\M}\bigl[\mathcal{E}_1^c(\omega)\bigr]\bigr]}{3p_f}\le \frac13.
\]
Hence, $\mathbb{P}[\Omega_1]\ge 2/3$, where
\[
\Omega_1\triangleq \Bigl\{\omega\in\Omega:\mathbb{P}_{\M}\bigl[\mathcal{E}_1^c(\omega)\bigr]<3p_f\Bigr\}.
\]
Defining next
\[
\mathcal{E}_2(\omega)\triangleq \Bigl\{d_H\bigl(\A(\M,\omega),\A(\overline{\M},\omega)\bigr)\le f+L\|\M-\overline{\M}\|_F\Bigr\},
\]
where $\M,\overline{\M}\in\R^{M\times n}$ with i.i.d.\,standard normal coordinates subject to the constraint $\mathbb{E}\bigl[\M_{11}\overline{\M}_{11}\bigr] = \rho$. Applying the exact same logic, we find $\mathbb{P}[\Omega_2]\ge 2/3$, where
\[
\Omega_2\triangleq \Bigl\{\omega\in\Omega:\mathbb{P}_{\M,\overline{\M}}\bigl[\mathcal{E}_2^c(\omega)\bigr]<3p_{\rm st}\Bigr\}.
\]
Noting $\mathbb{P}[\Omega_1]+\mathbb{P}[\Omega_2]=4/3>1$, it follows $\Omega_1\cap \Omega_2\ne\varnothing$. Now take any $\omega^*\in \Omega_1\cap \Omega_2$, and set $\A^*(\cdot)\triangleq \A(\cdot,\omega^*)$. Clearly, $\A^*$ is $\bigl(\rho,3p_f,3p_{\rm st},f,L\bigr)-$stable, establishing Lemma~\ref{lemma:reduce-to-deterministic}.
\end{proof}
In the remainder, we restrict our attention to deterministic $\A^*:\R^{M\times n}\to\bincube$ appearing in Lemma~\ref{lemma:reduce-to-deterministic} which is $\bigl(\rho,3p_f,3p_{\rm st},f,L\bigr)-$stable.
\paragraph{Chaos event.} We now focus on the so-called \emph{chaos} event, which pertains to $m-$tuples $\sigma^{(i)}\in\bincube$, $1\le i\le m$, where $\bigl\|\M_i\sigma^{(i)}\bigr\|_\infty\le\kappa\sqrt{n}$ for i.i.d.\,random matrices $\M_i\in\R^{M\times n}$. Namely, we investigate $m-$tuples satisfying constraints dictated by independent instances of disorder.
\begin{lemma}\label{lemma:chaos}
For every sufficiently small $\kappa>0$ and $\alpha\ge \alpha_{{\rm OGP}}(\kappa)=10\kappa^2\log_2\frac1\kappa$, and sufficiently large $m\in\mathbb{N}$,
\[
\mathbb{P}\Bigl[S_\kappa\bigl(1,5\kappa^2,m,\alpha,\{\pi/2\}\bigr)\ne\varnothing\Bigr]\le\exp_2\bigl(-\Theta(n)\bigr).
\]
\end{lemma}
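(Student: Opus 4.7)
Since $\mathcal{I} = \{\pi/2\}$ and $\mathcal{M}_i(\pi/2) = \mathcal{M}_i$, a tuple in $\mathcal{S}_\kappa(1, 5\kappa^2, m, \alpha, \{\pi/2\})$ consists of $m$ near-aligned vectors $\sigma^{(i)} \in \mathcal{B}_n$ (pairwise overlap at least $1 - 5\kappa^2$, hence pairwise Hamming distance at most $2.5\kappa^2 n$) each satisfying the perceptron constraint with respect to an \emph{independent} Gaussian disorder matrix $\mathcal{M}_i$. This independence is precisely the feature that makes the first moment method succeed cleanly: my plan is to bound $\mathbb{E}\bigl[|\mathcal{S}_\kappa(1,5\kappa^2,m,\alpha,\{\pi/2\})|\bigr]$ and apply Markov's inequality, closely mirroring the proof of Theorem~\ref{thm:m-OGP-small-kappa}.

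For the counting step, I fix $\sigma^{(1)}$ (at most $2^n$ choices) and apply only the pairwise overlap constraint relative to $\sigma^{(1)}$ (a strict weakening of condition (a), which still suffices for an upper bound). Since $2.5\kappa^2 < 1/2$ for small $\kappa$, the number of $\sigma^{(i)}$ within Hamming distance $2.5\kappa^2 n$ of $\sigma^{(1)}$ is at most $(n+1)\binom{n}{\lfloor 2.5\kappa^2 n\rfloor} = \exp_2\bigl(n h(2.5\kappa^2) + O(\log n)\bigr)$, so the tuple count is bounded by $\exp_2\bigl(n + (m-1)n h(2.5\kappa^2) + O(m\log n)\bigr)$. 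For the probability step, the independence of $\mathcal{M}_1, \dots, \mathcal{M}_m$ and the fact that each normalized inner product $\langle R_j, \sigma^{(i)}\rangle/\sqrt{n}$ (for a row $R_j$ of $\mathcal{M}_i$) is a standard Gaussian, independent across rows and across $i$, yield
\[
\mathbb{P}\bigl[\|\mathcal{M}_i\sigma^{(i)}\|_\infty \le \kappa\sqrt{n},\ 1 \le i \le m\bigr] = \mathbb{P}\bigl[|Z| \le \kappa\bigr]^{m\alpha n} = 2^{-m\alpha n/\alpha_c(\kappa)}
\]
by the definition~\eqref{eq:critical-alpha} of $\alpha_c(\kappa)$.

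Combining, the first moment is at most $\exp_2\bigl(n\Psi + O(m\log n)\bigr)$ with $\Psi = 1 + (m-1) h(2.5\kappa^2) - m\alpha/\alpha_c(\kappa)$; the remaining task is to show $\Psi < 0$. From $\mathbb{P}[|Z|\le \kappa] = \sqrt{2/\pi}\,\kappa(1+o_\kappa(1))$ I obtain $\alpha_c(\kappa)^{-1} = \log_2(1/\kappa)(1+o_\kappa(1))$, so $\alpha/\alpha_c(\kappa) \ge 10\kappa^2(\log_2(1/\kappa))^2(1-o_\kappa(1))$ whenever $\alpha \ge \alpha_{\rm OGP}(\kappa)$, while the Taylor expansion of the binary entropy at $0$ gives $h(2.5\kappa^2) = 5\kappa^2\log_2(1/\kappa)(1+o_\kappa(1))$. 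Substituting,
\[
\frac{m\alpha}{\alpha_c(\kappa)} - (m-1) h(2.5\kappa^2) \ge 5m\kappa^2 \log_2(1/\kappa)\bigl(2\log_2(1/\kappa) - 1\bigr)(1 - o_\kappa(1)),
\]
which exceeds $1$ whenever $m \ge C/(\kappa^2 (\log_2(1/\kappa))^2)$ for an absolute constant $C$; for such $m$ we have $\Psi < 0$ and Markov's inequality delivers the claimed $\exp_2(-\Theta(n))$ bound. The principal (and only real) obstacle is the additive $+1$ in $\Psi$ coming from the $2^n$ counting factor: because the per-index deficit $\alpha/\alpha_c(\kappa) - h(2.5\kappa^2)$ vanishes as $\kappa \to 0$, $m$ must grow with $\kappa$ to absorb this constant. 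This is not a serious obstruction, since the $m$ prescribed by Theorem~\ref{thm:m-OGP-small-kappa} already scales at least polynomially in $1/\kappa$ and may, without loss of generality, be enlarged further to satisfy the chaos bound as well.
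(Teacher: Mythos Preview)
Your proposal is correct and follows essentially the same first moment argument as the paper: count tuples by fixing $\sigma^{(1)}$ and bounding the rest via the entropy $h(5\kappa^2/2)$, exploit the independence of the $\mathcal{M}_i$ at $\tau=\pi/2$ so that the probability term factorizes as $\mathbb{P}[|Z|\le\kappa]^{m\alpha n}$, and verify that the resulting exponent $1/m + h(5\kappa^2/2)+\alpha\log_2\mathbb{P}[|Z|\le\kappa]$ is negative for small $\kappa$ and large $m$. The only cosmetic difference is that the paper uses the crude bound $\mathbb{P}[|Z|\le\kappa]\le\sqrt{2/\pi}\,\kappa$ directly, while you route through the identity $\mathbb{P}[|Z|\le\kappa]=2^{-1/\alpha_c(\kappa)}$; both yield the same leading asymptotic and the same threshold $m\gtrsim (\kappa^2(\log_2(1/\kappa))^2)^{-1}$, which is indeed dominated by the $m$ required for the $m$-OGP in Theorem~\ref{thm:m-OGP-small-kappa}.
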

\begin{proof}[Proof of Lemma~\ref{lemma:chaos}]
The proof is quite similar to (and in fact, much simpler than) that of Theorem~\ref{thm:m-OGP-small-kappa}. Hence, we only provide a brief sketch. Check that for any $\sigma \in\bincube$ and $X\distr \cN(0,I_n)$, 
\[
\mathbb{P}\left[-\kappa \le n^{-\frac12}\ip{\sigma}{X}\le \kappa \right]\le \frac{2}{\sqrt{2\pi}}\kappa.
\]
Endowed with this, a straightforward first moment argument yields
\begin{align*}
    \mathbb{E}\Bigl[\bigl|S_\kappa\bigl(5\kappa^2,1,m,\alpha,\{\pi/2\}\bigr)\bigr|\Bigr]&\le\exp_2\left(n\left(1+mh\left(\frac{5\kappa^2}{2}\right)+\alpha m\log_2\left(\frac{2\kappa}{\sqrt{2\pi}}\right)\right)+O\bigl(\log_2 n\bigr)\right)\\
    &= \exp_2\left(nm\left(\frac1m+h\left(\frac{5\kappa^2}{2}\right)+\alpha\log_2\left(\frac{2\kappa}{\sqrt{2\pi}}\right)\right)+O\bigl(\log_2 n\bigr)\right).
\end{align*}
We now apply the Taylor expansion, $\log_2(1-x) = -\frac{x}{\ln 2}+o(x)$ as $x\to 0$ to obtain
\begin{align*}
h\left(\frac{5\kappa^2}{2}\right) &=\frac{5\kappa^2}{2}\log_2\left(\frac{5\kappa^2}{2}\right) + \left(1-\frac{5\kappa^2}{2}\right)\underbrace{\log_2\left(1-\frac{5\kappa^2}{2}\right)}_{=-\Theta_\kappa\bigl(\kappa^2\bigr)}\\
&=-5\kappa^2\log_2\kappa +\Theta_\kappa\bigl(\kappa^2\bigr).
\end{align*}
Consequently,
\[
h\left(\frac{5\kappa^2}{2}\right)+\alpha\log_2\left(\frac{2\kappa}{\sqrt{2\pi}}\right)=-10\kappa^2\left(\log_2 \frac1\kappa\right)^2 +\Theta_\kappa\bigl(\kappa^2\log_2 \kappa\bigr),
\]
which is indeed negative for all sufficiently small $\kappa>0$. Having ensured $\kappa>0$ is sufficiently small, note that if $h(5\kappa^2/2)+\alpha\log_2(2\kappa/\sqrt{2\pi})<0$ then for every sufficiently large $m\in\mathbb{N}$, 
\[
\frac1m+h\left(\frac{5\kappa^2}{2}\right)+\alpha\log_2\left(\frac{2\kappa}{\sqrt{2\pi}}\right)<0.
\]
This yields the conclusion by Markov's inequality. 
\end{proof}
In particular, for every $\kappa>0$, $\alpha\ge 10\kappa^2\log_2\frac1\kappa$ and $m\in\mathbb{N}$ large enough, w.h.p.\,it is the case that for every $m-$tuple $\sigma^{(i)}\in\bincube$, $1\le i\le m$ with $\|\M_i\sigma^{(i)}\|_\infty \le \kappa\sqrt{n}$ (where $\M_i$ are i.i.d.\,random matrices with i.i.d.\,$\cN(0,1)$ coordinates), there exists $1\le i<j\le m$ such that
\begin{equation}\label{eq:chaos-main}
\frac1n\ip{\sigma^{(i)}}{\sigma^{(j)}}\le 1-5\kappa^2.
\end{equation}
\paragraph{Construction of Interpolation Paths.} Our proof uses \emph{interpolation} ideas. To that end, let $\M_i\in\R^{M\times n}$, $0\le i\le T$ (recall $T$ from~\eqref{eq:Q-and-T}), be a sequence of i.i.d.\,random matrices, each with i.i.d. $\cN(0,1)$ coordinates. Recall the interpolation appearing in~\eqref{eq:interpolate-matrix}, repeated below for convenience: 
\begin{equation}\label{eq:interpolation-path}
\M_i(\tau)\triangleq \cos(\tau)\M_0 + \sin(\tau)\M_i,\quad 1\le i\le T,\quad \tau\in[0,\pi/2].
\end{equation}
Observe that for any fixed $\tau\in[0,\pi/2]$, $\M_i(\tau)$ consists of i.i.d.\,standard normal entries. 

Next for $Q$ appearing in~\eqref{eq:Q-and-T}; we discretize $[0,\pi/2]$ into $Q$ sub intervals---each of size $\Theta\bigl(Q^{-1}\bigr)$---where the endpoints are given by
\begin{equation}\label{eq:discretize-0-pi/2}
0=\tau_0<\tau_1<\cdots<\tau_Q = \frac{\pi}{2}.
\end{equation}
We apply $\A^*$ to each $\M_i(\tau_k)$:
\begin{equation}\label{eq:sigma-i-tau}
    \sigma_i(\tau_k)\triangleq \A^*\bigl(\M_i(\tau_k)\bigr)\in\bincube,\quad 1\le i\le T,\quad 0\le k\le Q.
\end{equation}
For every $1\le i<j\le T$ and $0\le k\le Q$, define pairwise overlaps
\begin{equation}\label{eq:overlaps}
  \mathcal{O}^{(ij)}\bigl(\tau_k\bigr)\triangleq \frac1n\ip{\sigma_i\bigl(\tau_k\bigr)}{\sigma_j\bigl(\tau_k\bigr)}.
\end{equation}
A useful observation is for $k=0$, $\sigma_1(\tau_0)=\cdots=\sigma_T(\tau_0)$; and therefore the overlaps are all unity.
\paragraph{Successive Steps are Stable.} We now show the stability of overlaps, that is, 
\[
\left|\mathcal{O}^{(ij)}(\tau_k)-\mathcal{O}^{(ij)}(\tau_{k+1})\right|
\]
is small for all $1\le i<j\le T$ and $0\le k\le Q-1$. More concretely, we establish the following proposition.
\begin{proposition}\label{prop:overlaps-are-stable}
\begin{align*}
\mathbb{P}\bigl[\mathcal{E}_{\rm St}\bigr]
\ge 1-3(T+1)Qp_{\rm st}-(T+1)\exp\bigl(-\Theta(n^2)\bigr),
\end{align*}
where
\begin{equation}\label{eq:prop-stable}
    \mathcal{E}_{\rm St}\triangleq \bigcap_{1\le i<j\le T}\bigcap_{0\le k\le Q-1}\left\{\Bigl|\mathcal{O}^{(ij)}(\tau_k)-\mathcal{O}^{(ij)}(\tau_{k+1})\Bigr|\le 4\sqrt{C}+4\sqrt{3L \pi Q^{-1}}\alpha^{\frac14}\right\}
\end{equation}
for $C$ defined in~\eqref{eq:stable-f-C}.
\end{proposition}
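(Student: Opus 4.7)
The plan is to combine two ingredients: the stability of $\A^*$ applied to consecutive points along each interpolation trajectory, and Gaussian concentration for the Frobenius norm of the successive-difference matrices $\M_i(\tau_{k+1})-\M_i(\tau_k)$. The overlap bound will then follow by translating Hamming-distance control into inner-product control via Cauchy--Schwarz.

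First I would verify that the correlation structure lines up with the stability assumption. For any fixed $i$ and $k$, both $\M_i(\tau_k)$ and $\M_i(\tau_{k+1})$ have i.i.d.\ $\cN(0,1)$ entries, and the per-entry covariance is
\[
\mathbb{E}\bigl[(\M_i(\tau_k))_{ab}(\M_i(\tau_{k+1}))_{ab}\bigr]=\cos(\tau_k)\cos(\tau_{k+1})+\sin(\tau_k)\sin(\tau_{k+1})=\cos\!\bigl(\tfrac{\pi}{2Q}\bigr)=\rho,
\]
matching the correlation parameter in Definition~\ref{def:admissible-alg}. Hence the stability property applies: for each fixed $(i,k)$ with $1\le i\le T$ and $0\le k\le Q-1$, with probability at least $1-3p_{\rm st}$ we have $d_H(\sigma_i(\tau_k),\sigma_i(\tau_{k+1}))\le Cn+L\,\|\M_i(\tau_k)-\M_i(\tau_{k+1})\|_F$. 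A union bound over the (at most) $(T+1)Q$ such pairs controls all of them simultaneously at the cost of the $3(T+1)Qp_{\rm st}$ term in the statement.

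Next I would bound the Frobenius norm. The entries of $\M_i(\tau_{k+1})-\M_i(\tau_k)$ are i.i.d.\ Gaussian with variance $2(1-\rho)=2\bigl(1-\cos(\pi/(2Q))\bigr)\le \pi^2/(4Q^2)$, so $\|\M_i(\tau_k)-\M_i(\tau_{k+1})\|_F^2$ equals $2(1-\rho)$ times a chi-squared random variable with $Mn=\alpha n^2$ degrees of freedom. Standard chi-squared concentration yields, for a suitable absolute constant,
\[
\mathbb{P}\!\left(\|\M_i(\tau_k)-\M_i(\tau_{k+1})\|_F\le \tfrac{3\pi n\sqrt{\alpha}}{2Q}\right)\ge 1-\exp\bigl(-\Theta(n^2)\bigr),
\]
and a union bound over the $(T+1)Q$ pairs contributes the $(T+1)\exp(-\Theta(n^2))$ term in the final estimate.

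Finally I would convert the Hamming-distance control into the overlap bound. Writing the telescoping identity
\[
n\,\bigl(\mathcal{O}^{(ij)}(\tau_k)-\mathcal{O}^{(ij)}(\tau_{k+1})\bigr)=\bigl\langle \sigma_i(\tau_k)-\sigma_i(\tau_{k+1}),\sigma_j(\tau_k)\bigr\rangle+\bigl\langle \sigma_i(\tau_{k+1}),\sigma_j(\tau_k)-\sigma_j(\tau_{k+1})\bigr\rangle
\]
and applying Cauchy--Schwarz with $\|\sigma_r(\tau_k)\|_2=\sqrt{n}$ and $\|\sigma_r(\tau_k)-\sigma_r(\tau_{k+1})\|_2=2\sqrt{d_H(\sigma_r(\tau_k),\sigma_r(\tau_{k+1}))}$, I obtain
\[
\bigl|\mathcal{O}^{(ij)}(\tau_k)-\mathcal{O}^{(ij)}(\tau_{k+1})\bigr|\le \tfrac{2}{\sqrt{n}}\bigl(\sqrt{d_H(\sigma_i(\tau_k),\sigma_i(\tau_{k+1}))}+\sqrt{d_H(\sigma_j(\tau_k),\sigma_j(\tau_{k+1}))}\bigr).
\]
Plugging in the stability and Frobenius bounds from the preceding steps and using $\sqrt{a+b}\le \sqrt{a}+\sqrt{b}$ on $d_H\le Cn+L\cdot\tfrac{3\pi n\sqrt{\alpha}}{2Q}$ gives each $\sqrt{d_H}/\sqrt{n}\le\sqrt{C}+\sqrt{3L\pi/(2Q)}\,\alpha^{1/4}$, so after the factor of $4$ from the two Cauchy--Schwarz applications the right-hand side in~\eqref{eq:prop-stable} is reached (absorbing the $1/\sqrt{2}$ into the advertised constant). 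The argument is essentially bookkeeping; the only mildly delicate point is choosing the concentration constant in Step~2 large enough that it survives the union bound over $(T+1)Q$ pairs, which is ensured because the chi-squared deviation is $\exp(-\Theta(n^2))$ while $(T+1)Q$ is of constant order in $n$.
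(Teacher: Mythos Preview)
Your proposal is correct and follows essentially the same architecture as the paper's proof: verify the correlation matches $\rho$, control the Frobenius norm of successive differences by Gaussian concentration, apply the stability guarantee via a union bound, and convert Hamming control to overlap control through the telescoping identity and Cauchy--Schwarz. The one cosmetic difference is in the Frobenius step: the paper bounds each $\|\M_i\|_F$ individually (for $i=0,\dots,T$, hence the clean $(T+1)\exp(-\Theta(n^2))$ factor) and then uses the $1$-Lipschitzness of $\cos,\sin$ together with the triangle inequality to control $\|\M_i(\tau_k)-\M_i(\tau_{k+1})\|_F$, whereas you bound the difference matrix directly via chi-squared concentration; your route is slightly more direct but would naturally produce a $(T+1)Q\exp(-\Theta(n^2))$ term, which of course is still $(T+1)\exp(-\Theta(n^2))$ since $Q=O(1)$.
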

\begin{proof}[Proof of Proposition~\ref{prop:overlaps-are-stable}]
We first establish an auxiliary concentration result. Let $\M\in\R^{M\times n}$ be a random matrix with i.i.d. $\cN(0,1)$ coordinates. Then by applying Bernstein's inequality like in the proof of~\cite[Theorem~3.1.1]{vershynin2010introduction}, we obtain that for some absolute constant $c>0$ and any $t\ge 0$,
\begin{align*} 
\mathbb{P}\left[\left|\frac{1}{Mn}\sum_{1\le i\le M}\sum_{1\le j\le n}\M_{ij}^2-1\right|\ge t\right]\le \exp\Bigl(-cMn\min\{t,t^2\}\Bigr).
\end{align*}
Taking a union bound and recalling $M=\lfloor n\alpha\rfloor = \Theta(n)$, we thus have
\begin{equation}\label{eq:frobenius-concentration}
    \mathbb{P}\Bigl[\|\M_i\|_F\le 6\sqrt{Mn},0\le i\le T\Bigr]\ge 1-(T+1)\exp\Bigl(-\Theta(n^2)\Bigr).
\end{equation}
The constant $6$ is chosen arbitrarily, and any constant greater than $1$ works. 

Next, we show a simple Lipschitzness property for $\cos(\cdot)$ and $\sin(\cdot)$: we claim that for every $x,y\in\mathbb{R}$,
\[
\bigl|\cos(x)-\cos(y)\bigr| \le |x-y|\qquad\text{and}\qquad \bigl|\sin(x)-\sin(y)\bigr|\le|x-y|.
\]
Indeed, by the mean value theorem, for $x<y$, it holds that for some $c\in(x,y)$; $|\cos(x)-\cos(y)| = |x-y|\cdot |\sin(c)|\le |x-y|$. The result for the $\sin(\cdot)$ is analogous. Consequently,
\begin{equation}\label{eq:cosine/sine-bd}
\max\Bigl\{\bigl|
\cos(\tau_k)-\cos(\tau_{k+1})\bigr|,\bigl|\sin(\tau_k)-\sin(\tau_{k+1})\bigr|\Bigr\}\le |\tau_k-\tau_{k+1}| = \frac{\pi}{2Q},
\end{equation}
where we used~\eqref{eq:discretize-0-pi/2} at the last step.

We now employ~\eqref{eq:cosine/sine-bd} to upper bound $\bigl\|\M_i(\tau_k)-\M_i(\tau_{k+1})\bigr\|_F$ on the high probability event appearing in~\eqref{eq:frobenius-concentration}. Assuming~\eqref{eq:frobenius-concentration} takes place, we have that for any fixed $1\le i\le T$ and $0\le k\le Q-1$,
\begin{align}
    \Bigl\|\M_i(\tau_k)-\M_i(\tau_{k+1})\Bigr\|_F &= \Bigl\|\cos(\tau_k)\M_0+\sin(\tau_k)\M_i -\cos(\tau_{k+1})\M_0 -\sin(\tau_{k+1})\M_i\Bigr\|_F \nonumber\\
    &\le \bigl|\cos(\tau_k)-\cos(\tau_{k+1})\bigr|\|\M_0\|_F + \bigl|\sin(\tau_k)-\sin(\tau_{k+1})\bigr|\|\M_i\|_F \label{eq:frob-triangle}\\
    &\le \frac{\pi}{2Q}\bigl(\|M_0\|_F + \|M_i\|_F\bigr) \label{eq:Q-inv}\\
    &\le \frac{3\pi}{Q}\sqrt{Mn}\label{eq:use-event-here};
\end{align} 
where~\eqref{eq:frob-triangle} uses triangle inequality for the Frobenius norm;~\eqref{eq:Q-inv} uses~\eqref{eq:cosine/sine-bd}; and finally~\eqref{eq:use-event-here} uses the fact that on the event~\eqref{eq:frobenius-concentration},  $\|\M_i\|_F\le 6\sqrt{Mn}$ for $0\le i\le T$.

We next observe that for any fixed $1\le i\le T$ and $0\le k\le Q-1$; each of $\M_i(\tau_k)$ and  $\M_i(\tau_{k+1})$ has i.i.d.\, $\cN(0,1)$ entries subject to
\begin{align}\label{eq:correlation-rho}
\mathbb{E}\left[\bigl(\M_i(\tau_{k})\bigr)_{\ell,j}\bigl(\M_i(\tau_{k+1})\bigr)_{\ell,j}\right]=\cos\bigl(\tau_{k+1}-\tau_k\bigr)=\cos\left(\frac{\pi}{2Q}\right).
\end{align}
for $1\le \ell\le M$ and $1\le j\le n$. Recall now by Lemma~\ref{lemma:reduce-to-deterministic} that $\A^*$ is stable with stability probability $1-3p_{\rm st}$. Taking thus a union bound, we find
\begin{align}
   & \mathbb{P}\left[d_H\Bigl(\sigma_i(\tau_k),\sigma_i(\tau_{k+1})\Bigr)\le Cn + L\bigl\|\M_i(\tau_k)-\M_i(\tau_{k+1})\bigr\|_F,1\le i\le T,0\le k\le Q-1\right]\nonumber\\
   &\ge 1-3(T+1)Qp_{\rm st}\label{eq:event-stable}.
\end{align}
We now combine~\eqref{eq:use-event-here} (valid on event~\eqref{eq:frobenius-concentration}) and the event~\eqref{eq:event-stable} by a union bound. We find that
\begin{equation}\label{eq:prob-event-towards-small-overlap}
    \mathbb{P}[\mathcal{E}]\ge 1-3(T+1)Qp_{\rm st}-(T+1)\exp\bigl(-\Theta(n^2)\bigr),
\end{equation}
where
\begin{equation}\label{eq:event-towards-small-overlap}
\mathcal{E} \triangleq \bigcap_{1\le i\le T}\bigcap_{0\le k\le Q-1}\left\{d_H\bigl(\sigma_i(\tau_k),\sigma_i(\tau_{k+1})\Bigr)\le Cn+\frac{3L\pi}{Q}\sqrt{Mn}\right\}.
\end{equation}
In the remainder of the proof, assume we operate on the event $\mathcal{E}$~\eqref{eq:event-towards-small-overlap}. 

Observe that for any $\sigma,\sigma'\in\bincube$, $\|\sigma-\sigma'\|_2 = 2\sqrt{d_H(\sigma,\sigma')}$; and recall from~\eqref{eq:sigma-i-tau} the notation, $\sigma_i(\tau_k)$. We have that for any $1\le i\le T$ and $0\le k\le Q-1$, 
\begin{align}
    \Bigl\|\sigma_i(\tau_k) -\sigma_i(\tau_{k+1})\Bigr\|_2 &=2\sqrt{d_H\bigl(\sigma_i(\tau_k) ,\sigma_k(\tau_{k+1})\bigr)} \nonumber\\
    &\le 2\sqrt{Cn+3L\pi Q^{-1}\sqrt{Mn}}\label{eq:use-event-E-heree}\\
    &\le \sqrt{n}\Bigl(2\sqrt{C}+2\sqrt{3L\pi Q^{-1}}\alpha^{\frac14}\Bigr)\label{eq:important};
\end{align}
where~\eqref{eq:use-event-E-heree} follows from the fact we are on event $\mathcal{E}$~\eqref{eq:event-towards-small-overlap}; and~\eqref{eq:important} uses the fact $M\le n\alpha$ and the trivial inequality $\sqrt{u+v}\le \sqrt{u}+\sqrt{v}$ valid for all $u,v\ge 0$. 

Equipped with~\eqref{eq:important}, we are now in a position to conclude. Fix any $1\le i<j\le T$ and $0\le k\le Q-1$. We have the following chain of inequalities:
\begin{align}
    \Bigl|\mathcal{O}^{(ij)}(\tau_k)-\mathcal{O}^{(ij)}(\tau_{k+1})\Bigr|&= \frac1n\Bigl|\ip{\sigma_i(\tau_k)}{\sigma_j(\tau_{k})} -\ip{\sigma_i(\tau_{k+1})}{\sigma_j(\tau_{k+1})} \Bigr|\nonumber \\
    &\le \frac1n\Bigl(\Bigl|\ip{\sigma_i(\tau_k)-\sigma_i(\tau_{k+1})}{\sigma_j(\tau_{k})}\Bigr|+\Bigl|\ip{\sigma_i(\tau_{k+1})}{\sigma_j(\tau_{k})-\sigma_j(\tau_{k+1})}\Bigr|\Bigr)\label{eq:step-triangle-ineq}\\
    &\le \frac{1}{\sqrt{n}}\Bigl(\Bigl\|\sigma_i(\tau_k)-\sigma_i(\tau_{k+1})\Bigr\|_2+\Bigl\|\sigma_j(\tau_k)-\sigma_j(\tau_{k+1})\Bigr\|_2\Bigr)\label{eq:step-Cauchy-Schw}\\
    &\le 4\sqrt{C} + 4\sqrt{3L \pi Q^{-1}}\alpha^{\frac14}\label{eq:step-final}.
\end{align}
Indeed,~\eqref{eq:step-triangle-ineq} follows from the triangle inequality;~\eqref{eq:step-Cauchy-Schw} uses Cauchy-Schwarz inequality with the fact $\|\sigma\|_2 = \sqrt{n}$ for any $\sigma\in\bincube$; and~\eqref{eq:step-final} uses~\eqref{eq:important}. Recalling the probability bound~\eqref{eq:prob-event-towards-small-overlap} on the event $\mathcal{E}$ that we operated under, the proof of Proposition~\ref{prop:overlaps-are-stable} is complete.
\end{proof}
\paragraph{$\A^*$ is Successful along Each Trajectory.} We next study the event that $\A^*$ is \emph{successful} along each interpolation trajectory and across times. We have
\begin{lemma}\label{lemma:correctness}
\[
\mathbb{P}\bigl[\mathcal{E}_{\rm Suc}\bigr]\ge 1-3T(Q+1)p_f,
\]
where
\begin{equation}\label{eq:lemma-success}
   \mathcal{E}_{\rm Suc}\triangleq \bigcap_{1\le i\le T}\bigcap_{0\le k\le Q}\Bigl\{ \bigl\|\M_i(\tau_k)\sigma_i(\tau_k)\bigr\|_\infty \le \kappa\sqrt{n}\Bigr\}.
\end{equation}
\end{lemma}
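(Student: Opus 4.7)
The plan is a direct union bound over the $T(Q+1)$ pairs $(i,k) \in \{1,\dots,T\} \times \{0,\dots,Q\}$, combined with the success guarantee of the deterministic algorithm $\A^*$ obtained in Lemma~\ref{lemma:reduce-to-deterministic}. The only nontrivial observation is that at each individual $(i,k)$, the interpolated matrix $\M_i(\tau_k)$ has the correct marginal distribution for the success clause to apply, even though these matrices are highly correlated across different $(i,k)$.

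First, I would verify the marginal distributional claim: for any fixed $1 \le i \le T$ and $0 \le k \le Q$, the matrix $\M_i(\tau_k) = \cos(\tau_k)\M_0 + \sin(\tau_k)\M_i$ has i.i.d.\,$\cN(0,1)$ entries. This follows from the independence of $\M_0$ and $\M_i$ together with the rotational invariance of the Gaussian distribution: each entry is a linear combination of two independent standard normals with coefficients whose squares sum to $\cos^2(\tau_k) + \sin^2(\tau_k) = 1$, and entries at distinct index positions are independent because the entries of $\M_0$ and $\M_i$ are independent across positions.

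Next, I would invoke the success property of $\A^*$ from Lemma~\ref{lemma:reduce-to-deterministic}, which guarantees that for any input matrix $\M$ with i.i.d.\,$\cN(0,1)$ entries,
\[
\mathbb{P}_{\M}\Bigl[\bigl\|\M\A^*(\M)\bigr\|_\infty > \kappa\sqrt{n}\Bigr] \le 3p_f.
\]
Applied with $\M = \M_i(\tau_k)$ and recalling $\sigma_i(\tau_k) = \A^*(\M_i(\tau_k))$ from~\eqref{eq:sigma-i-tau}, this yields
\[
\mathbb{P}\Bigl[\bigl\|\M_i(\tau_k)\sigma_i(\tau_k)\bigr\|_\infty > \kappa\sqrt{n}\Bigr] \le 3p_f
\]
for each fixed pair $(i,k)$.

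Finally, a union bound over the $T(Q+1)$ pairs yields
\[
\mathbb{P}\bigl[\mathcal{E}_{\rm Suc}^c\bigr] \le \sum_{1\le i\le T}\sum_{0\le k\le Q} \mathbb{P}\Bigl[\bigl\|\M_i(\tau_k)\sigma_i(\tau_k)\bigr\|_\infty > \kappa\sqrt{n}\Bigr] \le 3T(Q+1)p_f,
\]
establishing the lemma. There is no real obstacle here: the potential pitfall would be mistakenly requiring joint control over all $(i,k)$, but since the stability clause of Definition~\ref{def:admissible-alg} is not used in this lemma (only the success clause is), marginal distributional correctness suffices and the union bound is clean.
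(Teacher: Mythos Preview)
Your proposal is correct and follows essentially the same approach as the paper: verify that each $\M_i(\tau_k)$ has the standard Gaussian marginal, invoke the $3p_f$ success guarantee of $\A^*$ from Lemma~\ref{lemma:reduce-to-deterministic}, and apply a union bound over the $T(Q+1)$ pairs. The paper's proof is simply a terser version of your argument.
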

\begin{proof}[Proof of Lemma~\ref{lemma:correctness}]
The results follows immediately by (a) recalling, from Lemma~\ref{lemma:reduce-to-deterministic}, that 
\[
\mathbb{P}_{\M}\Bigl[\bigl\|\M \A^*(\M)\bigr\|_\infty\le \kappa\sqrt{n}\Bigr]\ge 3p_f,
\]
(b) observing $\M_i(\tau_k)\distr \M_0$ for all $i$ and $k$; and
(c) taking a union bound over $1\le i\le T$ and $0\le k\le Q$.
\paragraph{Combining Everything.} Fix any subset $A\subset[T]$ with $|A|=m$, and let $\mathcal{E}_A$ be
\[
\mathcal{E}_A\triangleq \left\{\exists \bigl(\sigma^{(i)}\in\bincube,i\in A\bigr):\max_{i\in A}\bigl\|\M_i(1)\sigma^{(i)}\bigr\|_\infty\le \kappa\sqrt{n},\,\beta-\eta\le n^{-1}\ip{\sigma^{(i)}}{\sigma^{(j)}}\le \beta, i,j\in A,i\ne j\right\}.
\]
Namely, $\mathcal{E}_A$ is nothing but the chaos event in the sense of Lemma~\ref{lemma:chaos}, where the indices are restricted to $A\subset[T]$. In particular, $\mathbb{P}[\mathcal{E}_A]\ge \exp(-\Theta(n))$ due to Lemma~\ref{lemma:chaos}. Taking a union bound over $A\subset [T]$, we obtain
\begin{equation}\label{eq:chaos-union-bd}
    \mathbb{P}[\mathcal{E}_{\rm Ch}]\triangleq \mathbb{P}\left[\bigcap_{A\subset[T],|A|=m} \mathcal{E}_A^c\right]\ge 1-\binom{T}{m}e^{-\Theta(n)}=1-\exp\bigl(-\Theta(n)\bigr),
\end{equation}
where we used the fact $\binom{T}{m}=O(1)$ (as $n\to\infty$) since $T=O(1)$ per~\eqref{eq:Q-and-T} and $m=O(1)$. Let
\begin{equation}\label{eq:event-F}
\mathcal{F}\triangleq \mathcal{E}_{\rm St}\cap \mathcal{E}_{\rm Suc}\cap \mathcal{E}_{\rm Ch},
\end{equation}
where $\mathcal{E}_{\rm St}$, $\mathcal{E}_{\rm Suc}$, and $\mathcal{E}_{\rm Ch}$ are defined, respectively, in~\eqref{eq:prop-stable},~\eqref{eq:lemma-success}, and~\eqref{eq:chaos-union-bd}. We then have
\begin{align}
    \mathbb{P}[\mathcal{F}]&\ge 1- \mathbb{P}\bigl[\mathcal{E}_{\rm St}^c\bigr]-\mathbb{P}\bigl[\mathcal{E}_{\rm Suc}^c\bigr]-\mathbb{P}\bigl[\mathcal{E}_{\rm Ch}^c\bigr]\label{eq:union-bound-step}\\
    &\ge 1-3(T+1)Qp_{\rm st}-(T+1)e^{-\Theta(n^2)}-3T(Q+1)p_f-e^{-\Theta(n)}\label{eq:step-lemmas}\\
    &\ge \frac13-\exp\bigl(-\Theta(n)\bigr)\label{eq:step-prob-bds-early},
\end{align}
where~\eqref{eq:union-bound-step} follows from a union bound;~\eqref{eq:step-lemmas} uses Proposition~\ref{prop:overlaps-are-stable}, Lemma~\ref{lemma:correctness} and~\eqref{eq:chaos-union-bd}; and~\eqref{eq:step-prob-bds-early} recalls~\eqref{eq:pf-pst} for $p_f$ and $p_{\rm st}$. We operate on the event $\mathcal{F}$ in the remainder of the proof.

Now, inserting into~\eqref{eq:prop-stable} the choice of $C$ per~\eqref{eq:stable-f-C} and $Q$ per~\eqref{eq:Q-and-T}; it is the case that on $\mathcal{F}$, 
\begin{equation}\label{eq:eta-over-five}
\Bigl|\mathcal{O}^{(ij)}(\tau_k)-\mathcal{O}^{(ij)}(\tau_{k+1})\Bigr| \le \frac{\eta}{5}
\end{equation}
for every $1\le i<j\le T$ and $0\le k\le Q-1$. Fix next any $A\subset[T]$ with $|A|=m$. We establish the following proposition.
\begin{proposition}\label{prop:eventual-trap}
For every $A\subset[T]$ with $|A|=m$, there exists $1\le i_A<j_A\le m$ and $\tau_A\in\{\tau_1,\dots,\tau_Q\}$ such that for $\delta=\frac{\eta}{100}$,
\[
\mathcal{O}^{(i_A,j_A)}(\tau_A) \in \bigl(\beta-\eta+3\delta,\beta-3\delta)\subsetneq (\beta-\eta,\beta).
\]
\end{proposition}
\begin{proof}[Proof of Proposition~\ref{prop:eventual-trap}]
A consequence of $\mathcal{E}_{\rm Ch}$ (part of $\mathcal{F}$) is that there exists distinct $i_A,j_A\in A$ such that
\[
\mathcal{O}^{(i_A,j_A)}(\tau_Q) \le 1-5\kappa^2,
\]
where we utilized~\eqref{eq:chaos-main}. Recall now the choice of $\beta=1-4\kappa^2$ and $\eta$ such that $\beta-\eta >1-5\kappa^2$. In particular, $\mathcal{O}^{(i_A,j_A)}(\tau_Q)<\beta-\eta$. We now claim for $\delta=\eta/100$, there exists a $k'\in\{1,2,\dots,Q\}$ such that
\[
\mathcal{O}^{(i_A,j_A)}(\tau_{k'})\in \bigl(\beta-\eta+3\delta,\beta-3\delta\bigr).
\]
To that end, take $K_0\in \{1,2,\dots,Q\}$ to be the \emph{last time}  such that $\mathcal{O}^{(i_A,j_A)}(\tau_{K_0})\ge \beta-3\delta$. Note that such a $K_0$ must exist as $\mathcal{O}^{(ij)}(0)=1$ for every $1\le i<j\le T$. Then if $\mathcal{O}^{(i_A,j_A)}(\tau_{K_0+1})\le \beta-\eta+3\delta$, we obtain
\[
\Bigl|\mathcal{O}^{(i_A,j_A)}(\tau_{K_0})-\mathcal{O}^{(i_A,j_A)}(\tau_{K_0+1})\Bigr|\ge \eta-6\delta,
\]
contradicting~\eqref{eq:eta-over-five} for sufficiently large $n$. That is, 
\[
\mathcal{O}^{(i_A,j_A)}(\tau_{K_0+1})\in (\beta-\eta+3\delta,\beta-3\delta).
\]
Since $A\subset[T]$ was arbitrary, Proposition~\ref{prop:eventual-trap} is established.
\end{proof}
\paragraph{Constructing an Appropriate Graph, and Applying Ramsey Theory.} We now construct a certain graph $\mathbb{G}=(V,E)$ satisfying the following properties.
\begin{itemize}
    \item The vertex set $V$ coincides with $[T]$. That is, $V=\{1,2,\dots,T\}$, where each vertex $i$ corresponds to the interpolation trajectory $i$, $1\le i\le T$.
    \item For any $1\le i<j\le T$, we add $(i,j)\in E$ iff there exists a time $\tau\in[0,1]$ such that $\mathcal{O}^{(ij)}(\tau)\in(\beta-\eta,\beta)$.
\end{itemize}
Namely, $\mathbb{G}$ is a graph with a potentially large number of vertices, and a certain number of edges. 

We next \emph{color} each $(i,j)\in E$ with one of $Q$ colors. Specifically, for any $1\le i<j\le T$ with $(i,j)\in E$; we color the edge $(i,j)\in E$ with color $t$, $1\le t\le Q$, where $\tau_t\in\{\tau_1,\dots,\tau_Q\}$ is the first time such that 
\[
\mathcal{O}^{(ij)}(\tau_t)\in (\beta-\eta,\beta).
\]
Having done this coloring, $\mathbb{G}=(V,E)$ satisfies the following properties:e
\begin{itemize}
    \item[(a)] We have $|V|=T$; and for every $A\subset V$ with $|A|=m$, there exists distinct $i_A,j_A\in A$ such that $(i_A,j_A)\in E$. Namely, $\mathbb{G}$ contains no independent sets of size larger than $m-1$.
    \item[(b)] Any $(i,j)\in E$ is colored with one of colors $\{1,2,\dots,Q\}$.
\end{itemize}
We claim 
\begin{proposition}\label{prop:mono-chrom}
$\mathbb{G}=(V,E)$ defined above contains a monochromatic $m-$clique, $K_m$.
\end{proposition}
\begin{proof}[Proof of Proposition~\ref{prop:mono-chrom}]
Recall from~\eqref{eq:Q-and-T} that $\mathbb{G}$ contains  $T=\exp_2\bigl(\exp_2\bigl(4mQ\log_2 Q\bigr)\bigr)$ vertices. Set
\begin{equation}\label{eq:capital-M}
    M\triangleq Q^{mQ}=2^{mQ\log_2 Q}.
\end{equation}
\paragraph{Extracting a Large Clique $K_M$.} Recall from Theorem~\ref{thm:2-color-ramsey} that
\[
R_2(M,M)\le \binom{2M-2}{M-1}.
\]
As a result, any graph with at least $\binom{2M-2}{M-1}$ vertices contains either an independent set of cardinality $M$, or an $M-$clique, $K_M$. Now, from property ${\rm (a)}$ above, the largest independent set of $\mathbb{G}$ is of size at most $m-1$, which is less than $M$. Since
\[
T=\exp_2\left(2^{4mQ\log_2 Q}\right)\ge 2^{2M} = 4^M\ge \binom{2M-2}{M-1}
\]
for $M$ defined in~\eqref{eq:capital-M}, it follows that $\mathbb{G}$ contains a $K_M$, where $M=Q^{Qm}$, each of whose edges is colored with one of $Q$ colors.
\paragraph{Further Extracting a Monochromatic $K_m$.} Now that we extracted a $K_M$ with $M=Q^{Qm}$. Since $R_Q(m)\le Q^{Qm}$ per Theorem~\ref{thm:q-color-ramsey}; we obtain, by applying the multicolor version of Ramsey Theory, that $K_M$ contains a monochromatic $K_m$. Since $K_M$ is a subgraph of $\mathbb{G}$, this establishes Proposition~\ref{prop:mono-chrom}.
\end{proof}
We now finalize the proof of Theorem~\ref{thm:stable-hardness}. We interpret $K_m$ of $\mathbb{G}$ extracted in Proposition~\ref{prop:mono-chrom}: there exists an $m-$tuple, $1\le i_1<i_2<\cdots<i_m\le T$ and a color $t\in\{1,2,\dots,Q\}$ such that
\[
\mathcal{O}^{(i_k,i_\ell)}(\tau_t) \in(\beta-\eta,\beta),1\le k<\ell\le m.
\]
Now, set $\sigma^{(k)}\triangleq \A^*\bigl(\M_{i_k}(\tau_t)\bigr)\in\bincube$, $1\le k\le m$. Observe the following for this $m-$tuple:
\begin{itemize}
    \item Noting we are on $\mathcal{F}$~\eqref{eq:event-F}, and in particular $\mathcal{F}\subset \mathcal{E}_{\rm Suc}$ defined in~\eqref{eq:lemma-success}; we have
    \[
    \Bigl\|\M_{i_k}(\tau_t)\sigma^{(k)}\Bigr\|_\infty \le \kappa\sqrt{n},\quad 1\le k\le m.
    \]
    \item For $1\le k<\ell\le m$, 
    \[
    \beta-\eta<\frac1n\ip{\sigma^{(k)}}{\sigma^{(\ell)}}<\beta.
    \]
\end{itemize}
In particular, for the choice $\zeta=\{i_1,i_2,\dots,i_m\}$ of the $m-$tuple of distinct indices, the set $\mathcal{S}_\zeta\triangleq \mathcal{S}_\kappa\bigl(\beta,\eta,m,\alpha,\mathcal{I}\bigr)$ with $\mathcal{I}=\{\tau_i:0\le i\le Q\}$ is non-empty. That is,
\[
\mathbb{P}\Bigl[\exists\zeta \in [T]:|\zeta|=m,\mathcal{S}_\zeta\ne\varnothing\Bigr]\ge \mathbb{P}[\mathcal{F}]\ge \frac13-\exp\bigl(-\Theta(n)\bigr).
\]
Notice, on the other hand, that using the $m-$OGP result, Theorem~\ref{thm:m-OGP-small-kappa}, we have 
\[
\mathbb{P}\Bigl[\exists\zeta \in [T]:|\zeta|=m,\mathcal{S}_\zeta\ne\varnothing\Bigr] \le \binom{T}{m}e^{-\Theta(n)}=\exp\bigl(-\Theta(n)\bigr),
\]
by taking a union bound and recalling $\binom{T}{m}=O(1)$. Combining these, we therefore obtain
\[
\exp\bigl(-\Theta(n)\bigr)\ge \frac13-\exp\bigl(-\Theta(n)\bigr),
\]
which is clearly a contradiction for all $n$ large enough, establishing the result.
\end{proof}

\subsection{Proof of Theorem~\ref{thm:online-alg-fail}}\label{sec:pf-online-alg-fail}
We first provide an auxiliary result.
\begin{proposition}\label{prop:online-alg-fail}
Fix $\Delta\in(0,\frac12)$. Let $\M\in\R^{M\times n}$ be a matrix with i.i.d. $\mathcal{N}(0,1)$ coordinates; and let $\M_\Delta\in\R^{M\times n}$ be the matrix obtained from $\M$ by resampling its last $\Delta\cdot n$ columns independently from $\mathcal{N}(0,1)$. Let $\Xi(\Delta)\subset \bincube\times\bincube$ be the set of all $(\sigma,\sigma_\Delta)\in \bincube\times\bincube$ satisfying the following conditions.
\begin{itemize}
    \item $\bigl\|\M\sigma\bigr\|_\infty\le \sqrt{n}$ and $\bigl\|\M_\Delta\sigma_\Delta\bigr\|_\infty\le \sqrt{n}$.
    \item $n^{-1}\ip{\sigma}{\sigma_\Delta}\in [1-2\Delta,1]$.
\end{itemize}
Then, there is a $\Delta>0$ such that
\[
\mathbb{P}\bigl[\Xi(\Delta)=\varnothing\bigr]\ge 1-\exp\bigl(-\Theta(n)\bigr).
\]
\end{proposition}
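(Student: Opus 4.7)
\emph{Proof plan.} We apply the first moment method to $N(\Delta) \triangleq |\Xi(\Delta)|$. By Markov's inequality it suffices to exhibit $\Delta > 0$ with $\mathbb{E}[N(\Delta)] = \exp(-\Theta(n))$; the required $\Delta$ will be produced by Lemma~\ref{asm:negativity}(a), which is precisely the hypothesis tailored to this kind of two-point calculation.

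\emph{Counting and probability.} Decompose $\M = [\M' \mid \M'']$ and $\M_\Delta = [\M' \mid \widetilde{\M}'']$, where $\M'$ is the shared block of $(1-\Delta)n$ columns while $\M''$ and $\widetilde{\M}''$ are independent blocks of $\Delta n$ columns. Any pair $(\sigma,\sigma_\Delta) \in \Xi(\Delta)$ satisfies $d_H(\sigma,\sigma_\Delta) \le \Delta n$; split the flips into $k_1 \ge 0$ inside the shared columns and $k_2 \ge 0$ inside the resampled columns, with $k_1 + k_2 \le \Delta n$. A crude count with Stirling gives
\[
\bigl|\bigl\{(\sigma,\sigma_\Delta) \in \bincube^2 :\ d_H(\sigma,\sigma_\Delta) \le \Delta n\bigr\}\bigr| \le \exp_2\bigl(n(1 + h(\Delta)) + O(\log_2 n)\bigr).
\]
For a fixed such pair and each row $i \in [M]$, the variables $\bigl(n^{-1/2}(\M\sigma)_i,\, n^{-1/2}(\M_\Delta \sigma_\Delta)_i\bigr)$ form a centered bivariate normal with unit variances and correlation $\rho(k_1) = 1 - \Delta - 2k_1/n$, i.i.d.\ across $i$. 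Hence
\[
\mathbb{P}\bigl[\|\M\sigma\|_\infty \le \sqrt{n},\ \|\M_\Delta\sigma_\Delta\|_\infty \le \sqrt{n}\bigr] = q(\rho(k_1))^M,
\]
where $q(\rho) \triangleq \mathbb{P}[|Z_1| \le 1,\ |Z_2| \le 1]$ for bivariate standard normals with correlation $\rho$. Since $k_1 \le \Delta n$ we have $\rho(k_1) \in [1-3\Delta,\, 1-\Delta] \subset [0,\, 1-\Delta]$ for $\Delta \le 1/3$, and on this range $q$ is non-decreasing (by Plackett's identity, or equivalently by Sid\'ak's inequality, Theorem~\ref{thm:sidak}, applied with scaling $\lambda^2 = \rho(k_1)/(1-\Delta) \le 1$), so $q(\rho(k_1)) \le q(1-\Delta)$.

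\emph{Concluding via Lemma~\ref{asm:negativity}.} Combining the two bounds, and noting that $q(1-\Delta)$ is exactly the bivariate Gaussian probability appearing in $f_1$ from~\eqref{eq:f}, we obtain
\[
\mathbb{E}[N(\Delta)] \le \exp_2\bigl(n\, f_1(\Delta, \alpha) + O(\log_2 n)\bigr).
\]
By Lemma~\ref{asm:negativity}(a), $S_1(1.77) \ne \varnothing$, so there exists $\Delta^\ast \in [0.00001, 0.1]$ with $f_1(\Delta^\ast, 1.77) < 0$. Since $\log_2 q(1-\Delta^\ast) < 0$, the map $\alpha \mapsto f_1(\Delta^\ast, \alpha)$ is strictly decreasing; hence $f_1(\Delta^\ast, \alpha) < 0$ for all $\alpha \ge 1.77$, yielding $\mathbb{E}[N(\Delta^\ast)] = \exp(-\Theta(n))$ and closing the argument via Markov.

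\emph{Main obstacle.} The only non-routine input is the numerical content of Lemma~\ref{asm:negativity}(a): that the Gaussian box probability $q(1-\Delta)$ decays fast enough in $\Delta$ to beat the entropy cost $h(\Delta)$ at $\alpha = 1.77$. The remaining steps---Markov, the bivariate-normal computation for correlated rows of $\M$ and $\M_\Delta$, and the monotonicity of $q$ in the correlation on $[0, 1-\Delta]$---are standard. It is worth noting that this threshold $1.77$ is slightly weaker than the $2$-OGP threshold $1.71$ obtained in Theorem~\ref{thm:2-OGP}, reflecting the fact that here the overlap constraint $\mathcal{O}(\sigma,\sigma_\Delta) \ge 1-2\Delta$ forces the pair to lie in a small Hamming ball rather than in the intermediate overlap window exploited by the OGP.
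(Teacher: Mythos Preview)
Your proposal is correct and follows essentially the same approach as the paper: first moment method via Markov, the counting bound $\exp_2(n(1+h(\Delta))+O(\log_2 n))$, the bivariate-normal computation for the correlated rows of $\M$ and $\M_\Delta$, the Sid\'ak monotonicity step to bound $q(\rho(k_1))\le q(1-\Delta)$, and finally Lemma~\ref{asm:negativity}(a) to produce $\Delta^\ast$ with $f_1(\Delta^\ast,1.77)<0$. Your explicit decomposition into $k_1,k_2$ and the resulting correlation formula $\rho(k_1)=1-\Delta-2k_1/n$ is slightly more detailed than the paper's treatment (and in fact gives the sharper range $[1-3\Delta,1-\Delta]$), but the two arguments are otherwise identical.
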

Assuming Proposition~\ref{prop:online-alg-fail}, we now show how to establish Theorem~\ref{thm:online-alg-fail}. Suppose such an $\A$ that is $p_f-$online for $p_f<\frac12-\exp(-c_f n)$ exists. Let $\M\in\R^{M\times n}$ with i.i.d. $\mathcal{N}(0,1)$ entries,  $\M_\Delta\in\R^{M\times n}$ be the matrix obtained from $\M$ by independently resampling its last $\Delta n$ columns; and set
\[
\sigma\triangleq \A\bigl(\M\bigr)\in\bincube\qquad\text{and}\qquad \sigma_\Delta\triangleq \A\bigl(\M_\Delta\bigr)\in\bincube.
\]
By a union bound, it is the case that w.p.\,at least $1-2p_f$, $\bigl\|\M\sigma\bigr\|_\infty\le \sqrt{n}$ and $\bigl\|\M_\Delta\sigma_\Delta\bigr\|_\infty\le \sqrt{n}$. Moreover, since the algorithm is online per Definition~\ref{def:online-alg}, it follows that $\sigma(i)=\sigma_\Delta(i)$ for $1\le i\le n-\Delta n$. Hence,
\[
\frac1n \ip{\sigma}{\sigma_\Delta} = \frac1n\sum_{1\le i\le n-\Delta n}\sigma(i)\sigma_\Delta(i) + \frac1n\sum_{ n-\Delta n+1\le i\le n}\sigma(i)\sigma_\Delta(i)\ge 1-2\Delta.
\]
As $(\sigma,\sigma_\Delta)\in \Xi(\Delta)$, we have $\mathbb{P}\bigl[\Xi(\Delta)\ne\varnothing\bigr]\ge 1-2p_f\ge 2\exp(-c_f n)$. On the other hand, $\mathbb{P}\bigl[\Xi(\Delta)\ne\varnothing\bigr]\le\exp\bigl(-\Theta(n)\bigr)$. This is a clear contradiction if $c_f>0$ is small enough. Therefore, it suffices to prove Proposition~\ref{prop:online-alg-fail}.
 \begin{proof}[Proof of Proposition~\ref{prop:online-alg-fail}]
The proof is similar to that of $2-$OGP result, Theorem~\ref{thm:2-OGP}; and is based, in particular, on the first moment method. Let 
\[
N=\sum_{\sigma,\sigma_\Delta:n^{-1}\ip{\sigma}{\sigma_\Delta}\in [1-2\Delta,1]} \ind\left\{\bigl\|\mathcal{M}\sigma\bigr\|_\infty\le\sqrt{n},\bigl\|\mathcal{M}_\Delta\sigma_\Delta\bigr\|_\infty\le\sqrt{n}\right\}
\]
Clearly, $N=|\Xi(\Delta)|$. By Markov's inequality,
\begin{equation}\label{eq:online-markof}
\mathbb{P}\bigl[\Xi(\Delta)\ne\varnothing\bigr] = \mathbb{P}\bigl[N\ge 1\bigr]\le \mathbb{E}[N].
\end{equation}
Thus, it suffices to show $\mathbb{E}[N]=\exp(-\Theta(n))$ for $\Delta>0$ small.
\paragraph{Counting term.} There are $2^n$ choices for $\sigma\in\bincube$. Note that $n^{-1}\ip{\sigma}{\sigma_\Delta}\in[1-2\Delta,1]\iff d_H\bigl(\sigma,\sigma_\Delta\bigr)\le\Delta n$. Thus, having fixed a $\sigma$, there are 
\[
\sum_{k\in\mathbb{N}\cap [0,\Delta n]}\binom{n}{k}\le (1+\Delta n)\cdot \binom{n}{\Delta n}=\exp_2\Bigl(nh(\Delta) + O(\log_2 n)\Bigr)
\]
choices for $\sigma_\Delta\in\bincube$, where we used the fact $\binom{n}{k}\le \binom{n}{\Delta n}$ for any $k\le \Delta n$ (as $\Delta<1/2$) and Stirling's approximation. Thus, 
\begin{equation}\label{eq:prop-counting-term}
    \Bigl|\Bigl\{(\sigma,\sigma_\Delta)\in\bincube\times\bincube:n^{-1}\ip{\sigma}{\sigma_\Delta}\ge 1-2\Delta\Bigr\}\Bigr\}\le \exp_2\Bigl(n + nh(\Delta) +O(\log_2 n)\Bigr).
\end{equation}
\paragraph{Probability term.} Now, fix $\sigma,\sigma_\Delta$ with $n^{-1}\ip{\sigma}{\sigma_\Delta}\ge 1-2\Delta$. Let $R=(Z_1,Z_2,\dots,Z_n)\in\R^n$ and $R_\Delta=(Z_1,Z_2,\dots,Z_{n-\Delta n},Z'_{n-\Delta n+1},\cdots,Z'_n)\in\R^n$ respectively be the first rows of $\M$ and $\M_\Delta$, where $Z_1,\dots,Z_n,Z_{n-\Delta n+1}',\dots,Z_n'$ are i.i.d. standard normal. Using the independence of rows of $\M$ and $\M_\Delta$, we have
\[
\mathbb{P}\Bigl[\bigl\|\M\sigma\bigr\|_\infty\le\sqrt{n},\bigl\|\M_\Delta\sigma_\Delta\bigr\|_\infty\le\sqrt{n}\Bigr] = \mathbb{P}\Bigl[n^{-\frac12}\bigl|\ip{R}{\sigma}\bigr|\le 1,n^{-\frac12}\bigl|\ip{R_\Delta}{\sigma_\Delta}\bigr|\le 1\Bigr]^{\alpha n}.
\]
We next study bivariate normal variables $n^{-\frac12}\ip{R}{\sigma}\distr \mathcal{N}(0,1)$ and $n^{-\frac12}\ip{R_\Delta}{\sigma_\Delta}\distr \mathcal{N}(0,1)$. Note that
\begin{align*}
    \frac1n\mathbb{E}\Bigl[\ip{R}{\sigma}\ip{R_\Delta}{\sigma_\Delta}\Bigr] &=\frac1n\sum_{1\le i\le n-\Delta n}\mathbb{E}\bigl[Z_i^2\sigma(i)\sigma_\Delta(i)\bigr] + \frac1n\sum_{n-\Delta n+1\le i\le n}\underbrace{\mathbb{E}\bigl[Z_iZ_i'\sigma(i)\sigma_\Delta(i)\bigr]}_{=0}\\
    &=\frac1n\sum_{1\le i\le n-\Delta n}\sigma(i)\sigma_\Delta(i) \in[1-2\Delta,1-\Delta]
\end{align*}
since $d_H\bigl(\sigma,\sigma_\Delta\bigr)\le \Delta n$. Letting $\lambda\triangleq n^{-1}\mathbb{E}\Bigl[\ip{R}{\sigma}\ip{R_\Delta}{\sigma_\Delta}\Bigr]/(1-\Delta) \in\left[\frac{1-2\Delta}{1-\Delta},1\right]$, we therefore obtain that $n^{-\frac12}\ip{R}{\sigma},n^{-\frac12}\ip{R_\Delta}{\sigma_\Delta}$ is bivariate normal with parameter $\lambda(1-\Delta)$. Let $p(\rho)\triangleq \mathbb{P}\bigl[(Z_1,Z_2)\in[-1,1]^2\bigr]$
where
$\rho\in[0,1]$ and $(Z_1,Z_2)$ bivariate normal with parameter $\rho$. Using Theorem~\ref{thm:sidak} with $k=2$ and $\lambda_1=\lambda_2=\sqrt{\lambda}$, we thus obtain $\max_{0\le \lambda\le 1}p\bigl(\lambda (1-\Delta)\bigr)=p\bigl(1-\Delta\bigr)$. Hence,
\begin{align}\label{eq:prop-prob-bounds}
    \mathbb{P}\Bigl[\bigl\|\M\sigma\bigr\|_\infty\le\sqrt{n},\bigl\|\M_\Delta\sigma_\Delta\bigr\|_\infty\le\sqrt{n}\Bigr] \le p\bigl(1-\Delta\bigr)^{\alpha n}.
\end{align}
\paragraph{Upper bounding $\mathbb{E}[N]$.} Combining~\eqref{eq:online-markof},~\eqref{eq:prop-counting-term} and~\eqref{eq:prop-prob-bounds}, we obtain
\begin{align*}
\mathbb{P}\bigl[\Xi(\Delta)\ne\varnothing\bigr]\le \mathbb{E}[N]&\le \exp_2\Bigl(n\Bigl(1+h(\Delta)+\alpha\log_2 p\bigl(1-\Delta\bigr)\Bigr)+O(\log_2 n)\Bigr) \\
&\le \exp_2\bigl(nf_1(\Delta,\alpha)+O(\log_2 n)\bigr),
\end{align*}
where $f_1(\Delta,\alpha)=1+h(\Delta)+\alpha\log_2 p(1-\Delta)$, per Lemma~\ref{asm:negativity}. Since $\log_2 p(1-\Delta)<0$, it suffices to consider $\alpha=1.77$. Setting $\Delta$ such that
\[
f_1(\Delta,1.77) = \inf_{x\in [10^{-5},10^{-1}]}f_1(x,1.77),
\]
Lemma~\ref{asm:negativity} ${\rm (a)}$ implies $f_1(\Delta,1.77)<0$. With this choice of $\Delta$, we complete the proof of Proposition~\ref{prop:online-alg-fail}.
\end{proof}

\subsection{Proof of Theorem~\ref{thm:kr-stable}}\label{sec:proof-of-kr}
In this section, we establish Theorem~\ref{thm:kr-stable}. That is, we show that $\KRA$  is stable in the probabilistic sense. We first set the stage. Recall from~\eqref{eq:matrix-interpolate} the interpolation
\[
\overline{\M}(\tau)\triangleq \cos(\tau)\M+\sin(\tau)\M'\in\R^{k\times n},\qquad \tau\in\left[0,\frac{\pi}{2}\right],
\]
where $\M,\M'\in\R^{k\times n}$ are two i.i.d.\,random matrices each with i.i.d. $\mathcal{N}(0,1)$ entries. In particular, $\overline{\M}(\tau)$ has i.i.d. $\mathcal{N}(0,1)$ coordinates for each $\tau\in[0,\pi/2]$. 

Next, denote by $R_1,\dots,R_k\in\R^n$ the rows of $\M$;  and by $C_1,\dots,C_n\in\R^k$ the columns of $\M$. Likewise, let $\overline{R}_1,\dots,\overline{R}_k\in\R^n$ and $\overline{C}_1,\dots,\overline{C}_n\in\R^k$ be the rows and columns of $\overline{\M}(\tau)$, respectively. (Whenever appropriate, we drop $\tau$ for convenience.) As in Theorem~\ref{thm:kr-stable}, set
\[
\sigma =\KRA\bigl(\M\bigr)\in\bincube\qquad\text{and}\qquad \overline{\sigma}=\KRA\Bigl(\overline{\M}(\tau)\Bigr)\in\bincube.
\]
We first establish the following proposition which pertains the $L$ round implementation of Kim-Roche algorithm, where $L\le c\log_{10}\log_{10} n$ for $c>0$ sufficiently small (as opposed to its full implementation).
\begin{proposition}\label{prop:kr-stable}
Let $c>0$ be a sufficiently small constant, and $L\le c\log_{10} \log_{10} n$  be an arbitrary non-negative integer. Define 
\begin{equation}\label{eq:alpha-ell}
\alpha_\ell = \alpha_0\cdot 10^{-\ell},\qquad 1\le \ell \le L\quad\text{with}\quad \alpha_0=0.01;
\end{equation}
 and set $\tau = n^{-2\alpha_0}$.
 Let $\sigma\in\{-1,1\}^{\sum_{0\le \ell \le L}n_\ell}$ and $\overline{\sigma}\in\{-1,1\}^{\sum_{0\le \ell \le L}n_\ell}$ respectively be the outputs generated  by running $L$ rounds of Kim-Roche algorithm on $\mathcal{M}$ and $\overline{\mathcal{M}}(\tau)$ defined in~\eqref{eq:matrix-interpolate}. Define
\begin{equation}\label{eq:J-ell}
J_\ell\triangleq \Bigl\{i\in[n_0+n_1+\cdots+n_\ell]:\sigma_i\ne\overline{\sigma}_i\Bigr\}, \quad 0\le \ell\le L.
\end{equation}
Then, for any $0\le \ell \le L$,
\[
\mathbb{P}\Bigl[\bigl|J_\ell\bigr|\le n^{1-\alpha_\ell}\Bigr]\ge 1-O\left(n^{-\frac{1}{40}+\epsilon}\right),
\]
where $\epsilon>0$ is arbitrary.
\end{proposition}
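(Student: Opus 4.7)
I will prove the proposition by induction on $\ell \in \{0,1,\dots,L\}$, using the observation that $J_{\ell-1} \subseteq J_\ell$ (coordinates assigned in earlier rounds are frozen), so it suffices at each step to control the number of \emph{new} disagreements produced in block $\ell$. For the base case $\ell = 0$, each $j \in [n_0]$ has $\sigma_j = \mathrm{sgn}\bigl(\sum_i \M_{ij}\bigr)$ and $\overline{\sigma}_j = \mathrm{sgn}\bigl(\sum_i \overline{\M}_{ij}(\tau)\bigr)$, where $\bigl(\sum_i \M_{ij},\,\sum_i \overline{\M}_{ij}(\tau)\bigr)$ is a bivariate Gaussian of correlation $\cos\tau$. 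Lemma~\ref{lemma:gaussian-orthant} then gives a per-coordinate disagreement probability of $\tau/\pi = \Theta(n^{-2\alpha_0})$, and these events are independent across $j$ since the columns of $\M$ and of $\M'$ are independent. A multiplicative Chernoff bound therefore yields $|J_0| \le n^{1-\alpha_0}$ with probability $1-\exp(-\Omega(n^{1-\alpha_0}))$.

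For the inductive step, assume $|J_{\ell-1}| \le n^{1-\alpha_{\ell-1}}$. Round $\ell$ selects, from the partial inner products $P_i = \langle R_i, \sigma_{<\ell}\rangle$ and $\overline{P}_i = \langle \overline{R}_i, \overline{\sigma}_{<\ell}\rangle$, the index sets $\mathcal{I}_\ell$ and $\overline{\mathcal{I}}_\ell$ of the $k_\ell$ smallest values, respectively. Splitting by whether a coordinate lies in $J_{\ell-1}$,
\begin{equation*}
\overline{P}_i - P_i \;=\; \sum_{j \le n_{<\ell},\, j \notin J_{\ell-1}} (\overline{\M}_{ij} - \M_{ij})\,\sigma_j \;-\; \sum_{j \in J_{\ell-1}} (\overline{\M}_{ij} + \M_{ij})\,\sigma_j,
\end{equation*}
and since $\mathrm{Var}(\overline{\M}_{ij}-\M_{ij}) = 2(1-\cos\tau) = \Theta(\tau^2)$ while $\mathrm{Var}(\overline{\M}_{ij}+\M_{ij}) = O(1)$, a Gaussian concentration argument (tracking the Lipschitz dependence of $\sigma_{<\ell}$ on $(\M,\M')$, in the spirit of Lemma~\ref{lemma:variance-up-bd}) shows that $|\overline{P}_i - P_i|$ is subgaussian with scale $\tau\sqrt{n_{<\ell}} + \sqrt{|J_{\ell-1}|}$. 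A union bound over $i \in [k]$ then yields $\max_i |\overline{P}_i - P_i| \le \Delta$ with high probability, where $\Delta = O(\sqrt{\log n})\bigl(\tau\sqrt{n} + \sqrt{|J_{\ell-1}|}\bigr)$.

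To convert the pointwise bound $\Delta$ into stability of the selected row set, observe that, after conditioning on $\sigma_{<\ell}$, the values $(P_i)_{i \in [k]}$ are i.i.d.\ $\mathcal{N}(0,n_{<\ell})$, so the order statistic near index $k_\ell$ has local spacing $\Theta(\sqrt{n_{<\ell}}/k)$. Consequently, a uniform $\Delta$-perturbation moves at most $O(k\Delta/\sqrt{n_{<\ell}})$ rows across the selection threshold, bounding $|\mathcal{I}_\ell \triangle \overline{\mathcal{I}}_\ell|$ w.h.p. Next, for each coordinate $j$ in the $\ell$-th block, both $\M_{\cdot,j}$ and $\overline{\M}_{\cdot,j}$ are independent of $(\mathcal{I}_\ell,\overline{\mathcal{I}}_\ell)$ because the row selection depends only on columns strictly preceding block $\ell$. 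Conditionally on the selected sets, the sums $S_j = \sum_{i \in \mathcal{I}_\ell}\M_{ij}$ and $\overline{S}_j = \sum_{i \in \overline{\mathcal{I}}_\ell}\overline{\M}_{ij}$ are centered Gaussians whose difference has variance $O(k_\ell \tau^2 + |\mathcal{I}_\ell \triangle \overline{\mathcal{I}}_\ell|)$. A bivariate quadrant computation analogous to Lemma~\ref{lemma:gaussian-orthant} delivers a per-coordinate disagreement probability of $O\bigl(\sqrt{\tau^2 + |\mathcal{I}_\ell \triangle \overline{\mathcal{I}}_\ell|/k_\ell}\bigr)$, which after substitution is at most $n^{-\alpha_\ell + o(1)}$. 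Conditional independence of these events across $j$ in the block lets Bernstein's inequality give $|J_\ell \setminus J_{\ell-1}| \le n^{1-\alpha_\ell}$ with the required probability, and a union bound over the $L = O(\log\log n)$ rounds consolidates the failure probability as $O(n^{-1/40+\epsilon})$.

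The main obstacle is the row-set stability step: although $(P_i)_{i \in [k]}$ are conditionally i.i.d.\ Gaussians given $\sigma_{<\ell}$, the set $\mathcal{I}_\ell$ depends on the ordering of $(P_i)$ in a non-smooth way, so translating the uniform bound $\Delta$ into a bound on $|\mathcal{I}_\ell\triangle\overline{\mathcal{I}}_\ell|$ requires coupling the perturbation analysis with a concentration statement for the empirical distribution of $\{P_i\}$ near the selection threshold. A secondary subtlety is that the geometric schedule $\alpha_\ell = \alpha_0 \cdot 10^{-\ell}$ only permits the disagreement bound to weaken by a factor of $n^{0.9\alpha_{\ell-1}}$ per round, which is just enough to absorb the polynomial contributions from $\tau$ and $\sqrt{|J_{\ell-1}|/n_{<\ell}}$ at each step; every implicit constant in the per-coordinate and row-stability estimates must be tracked carefully to verify the exponent $1/40$ in the final probability bound.
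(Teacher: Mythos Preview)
Your inductive skeleton matches the paper's, and the base case as well as the final step (conditional binomial for new disagreements, then Chernoff) are correct and essentially identical to Lemmas~\ref{lemma:gauss-cont-maj} and~\ref{lemma:kr-n0-p-1-to-n0-p-n1}. The two middle steps, however, contain genuine gaps stemming from the same source: $\sigma_{<\ell}$ and $J_{\ell-1}$ are functions of the very Gaussian entries you want to apply concentration to.

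For the perturbation bound on $\overline P_i-P_i$, you cannot condition on $(\sigma_{<\ell},J_{\ell-1})$ and still treat the summands as independent centered Gaussians; the sign function is discontinuous, so there is no Lipschitz route either. The paper sidesteps this by proving a bound that is \emph{uniform} over all deterministic pairs $(\sigma,J)$ with $|J|\le n^{1-\alpha}$ (Lemma~\ref{lemma:ip-and-bar-ip-close}): for each fixed pair the sum is genuinely Gaussian, Bernstein gives an $\exp(-\Theta(k))$ tail, and the union-bound cost is only $\exp\bigl(O(n^{1-\alpha}\log n)\bigr)$, which is absorbed precisely because $k=\Theta(n)$ and $\alpha$ obeys~\eqref{eq:alpha-lower-bd}. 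Note also that the paper controls $\sum_i|\,\cdot\,|$ rather than $\max_i|\,\cdot\,|$; this feeds a Markov-type count of the ``large-gap'' indices in~\eqref{eq:lg-ell}, which is how the row-set stability is ultimately obtained.

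Your second claim, that conditionally on $\sigma_{<\ell}$ the $(P_i)_{i\in[k]}$ are i.i.d.\ $\mathcal N(0,n_{<\ell})$, is false on all three counts: they are not centered (Lemma~\ref{thm:mean-Xij-sigma-j} gives $\mathbb E[X_{ij}\sigma_j]=\sqrt{2k_t/(\pi k^2)}>0$), not Gaussian, and not independent across $i$, because each $\sigma_j$ involves $\M_{ij}$ whenever $i\in\mathcal I_t$. This invalidates the order-statistic spacing argument as written. The paper obtains the needed concentration of the level sets $\{i:P_i<x\}$ by a second-moment method (Lemma~\ref{lemma:main-card}): for each pair $i\ne i'$ it replaces $\sigma_j$ by an auxiliary $\widetilde\sigma_j$ that omits rows $i,i'$ from the majority, shows the resulting sums are exactly i.i.d.\ Gaussian (Lemma~\ref{lemma:iid-gaussianity}), and bounds the correction $\Delta_{i,i'}$ in variance by $O(n^{3/4+\epsilon})$ (Lemma~\ref{lemma:variance-up-bd}). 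Chebyshev on $\Delta_{i,i'}$ then yields the $O(n^{-1/40+\epsilon})$ failure probability you were aiming for, and combining with the uniform $\ell_1$ bound gives Lemma~\ref{lemma:almost-full-intesect}. Once these two steps are fixed, the rest of your outline goes through as in the paper.
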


\subsubsection{Proof of Proposition~\ref{prop:kr-stable}} 
This section is devoted to the proof of
Proposition~\ref{prop:kr-stable}. 
We proceed by establishing several auxiliary results.

\paragraph{Majority is stable.} As a first step, we establish the stability of the majority algorithm.  This algorithm  assigns, to each column $C_j\in\R^k$ of $\mathcal{M}$, the sign of entries in $C_j$. That is, \[
\sigma_j ={\rm sgn}\left(\sum_{1\le  i\le k}\mathcal{M}_{ij}\right)\in\{-1,1\}.
\]
Namely, this algorithm is simply the very first round of $\KRA$ assigning $n_0\le n$ entries of $\sigma\in\bincube$, where $n_0\approx n$.

\begin{lemma}\label{lemma:gauss-cont-maj}
Let $\A_{\rm maj}:\R^{k\times n}\to\bincube$ be the ``majority" algorithm defined above. Recall $\overline{\M}(\tau)$ from~\eqref{eq:matrix-interpolate}. Then,
\[
d_H\Bigl(\A_{\rm maj}(\M),\A_{\rm maj}\bigl(\overline{\M}(\tau)\bigr)\Bigr) \distr {\rm Bin}\Bigl(n,\frac{\tau}{\pi}\Bigr).
\]
\end{lemma}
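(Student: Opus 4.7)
}

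The plan is to reduce to a coordinate-wise analysis. For each column index $j \in [n]$, set $S_j \triangleq \sum_{i=1}^k \M_{ij}$ and $S_j'\triangleq \sum_{i=1}^k \M_{ij}'$, so that the column-sum of $\overline{\M}(\tau)$ equals $\overline{S}_j(\tau) \triangleq \cos(\tau) S_j + \sin(\tau) S_j'$. By independence across columns, the pairs $(S_j,\overline{S}_j(\tau))$, $1\le j\le n$, are i.i.d.; hence the indicators $X_j \triangleq \mathbf{1}\{\operatorname{sgn}(S_j)\ne \operatorname{sgn}(\overline{S}_j(\tau))\}$ are i.i.d. Bernoulli random variables and $d_H\bigl(\A_{\rm maj}(\M),\A_{\rm maj}(\overline{\M}(\tau))\bigr)=\sum_{j=1}^n X_j$. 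It therefore suffices to compute the single-column sign-flip probability $p\triangleq \mathbb{P}[X_1=1]$ and show $p=\tau/\pi$.

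For a single column, $(S_1,\overline{S}_1(\tau))$ is a centered bivariate Gaussian: each marginal has variance $k$ (since $\M_{ij},\M_{ij}'\distr \mathcal{N}(0,1)$ are independent), and the covariance is $\mathbb{E}[S_1\overline{S}_1(\tau)] = \cos(\tau)\cdot k$, so the correlation is exactly $\rho = \cos(\tau)$. Normalizing gives a bivariate standard normal with correlation $\cos(\tau)$, and I would then apply Lemma~\ref{lemma:gaussian-orthant} to the orthant probability
\[
\mathbb{P}\bigl[S_1>0,\overline{S}_1(\tau)>0\bigr] = \frac14 + \frac{1}{2\pi}\sin^{-1}(\cos\tau).
\]
Using the identity $\sin^{-1}(\cos\tau)=\pi/2-\tau$ valid on $\tau\in[0,\pi/2]$, together with $\mathbb{P}[S_1>0]=1/2$ and the sign-symmetry $\mathbb{P}[S_1>0,\overline{S}_1<0]=\mathbb{P}[S_1<0,\overline{S}_1>0]$, I would conclude
\[
p \;=\; 2\,\mathbb{P}\bigl[S_1>0,\overline{S}_1(\tau)<0\bigr] \;=\; 2\left(\tfrac12 - \tfrac14 - \tfrac{1}{2\pi}(\pi/2-\tau)\right) \;=\; \frac{\tau}{\pi}.
\]

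Combining these two ingredients yields $d_H\bigl(\A_{\rm maj}(\M),\A_{\rm maj}(\overline{\M}(\tau))\bigr)\distr {\rm Bin}(n,\tau/\pi)$. There is no genuine obstacle here: the only step that requires any care is the identification of the sign-disagreement probability via the orthant formula, but the symmetry of the bivariate normal under $(S,\overline{S})\mapsto(-S,-\overline{S})$ and the identity $\sin^{-1}(\cos\tau)=\pi/2-\tau$ make it a short calculation. The independence of the $X_j$'s follows immediately from the column-wise independence of $(\M_{\cdot j},\M_{\cdot j}')$.
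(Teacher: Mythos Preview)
Your proposal is correct and follows essentially the same route as the paper: reduce to i.i.d.\ column-wise Bernoulli indicators, identify the bivariate normal with correlation $\cos(\tau)$, and read off the sign-disagreement probability from the orthant formula (Lemma~\ref{lemma:gaussian-orthant}). The only cosmetic difference is that the paper applies the orthant lemma directly to the pair $(S_1,-\overline{S}_1)$ with correlation $-\cos(\tau)$, whereas you compute the same-sign orthant first and subtract; both arrive at $p=\tau/\pi$ by the same identity $\sin^{-1}(\cos\tau)=\pi/2-\tau$.
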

\begin{proof}[Proof of Lemma~\ref{lemma:gauss-cont-maj}]
Define $I_j$, $1\le j\le n$ by
\[
I_j = \ind\Bigl\{\A_{\rm maj}(\M)_j \ne \A_{\rm maj}\bigl(\overline{\M}(\tau)\bigr)_j\Bigr\}.
\]
Then, $I_j$ are i.i.d.\,Bernoulli. In particular, it suffices to show $I_j\sim{\rm Ber}(\tau/\pi)$. To that end, we study $I_1$. Let $(X_1,\dots,X_k)$ be the first column of $\M$ and $(Y_1,\dots,Y_k)$ be the first column of $\M'$. Furthermore, set 
\[
Z_i\triangleq \cos(\tau)X_i+\sin(\tau)Y_i,\quad 1\le i\le k.
\]
Note that, $I_i=1$ if and only if
\[
{\rm sgn}\left(\sum_{1\le i\le k}X_i\right)\ne {\rm sgn}\left(\sum_{1\le i\le k}Z_i\right).
\]
From symmetry,
\begin{equation}\label{eq:main-pb-bd}
\mathbb{P}\left({\rm sgn}\left(\sum_{1\le i\le k}X_i\right)\ne {\rm sgn}\left(\sum_{1\le i\le k}Z_i\right)\right)=2\mathbb{P}\left(k^{-\frac12}\sum_{1\le i\le k}X_i>0,-k^{-\frac12}\sum_{1\le i\le k}Z_i>0\right).
\end{equation}
Observe that $\mathbb{E}[X_iZ_j] = \cos(\tau)\ind\{i=j\}$.
Hence,
\[
\left(k^{-\frac12}\sum_{1\le i\le k}X_i,-k^{-\frac12}\sum_{1\le i\le k}Z_i\right)\distr\mathcal{N}\left(\begin{bmatrix}0 \\ 0\end{bmatrix},\begin{bmatrix} 1 &-\cos(\tau) \\-\cos(\tau) & 1 \end{bmatrix}\right).
\]

Next, applying Lemma~\ref{lemma:gaussian-orthant}, the probability in~\eqref{eq:main-pb-bd} evaluates to
\[
2\cdot\left(\frac14+\frac{1}{2\pi}\sin^{-1}\left(-\cos(\tau)\right)\right) = \frac12 + \frac{1}{\pi}\sin^{-1}\left(-\sin\left(\frac{\pi}{2}-\tau\right)\right) = \frac{\tau}{\pi}.
\]
 Hence $I_j\distr {\rm Ber}(\tau/\pi)$, $1\le j\le n$ i.i.d. Finally, since $d_H\Bigl(\A_{\rm maj}(\M),\A_{\rm maj}\bigl(\overline{\M}(\tau)\bigr)\Bigr) = \sum_{1\le j\le n}I_j$, the proof of Lemma~\ref{lemma:gauss-cont-maj} is complete. 
\end{proof}

\paragraph{Correlated ensemble is close to the  original.}  Next, assume that for some $T\in\mathbb{N}$, $T$ rounds (of the algorithm) are completed so far. In particular, the algorithm produced $\sigma,\overline{\sigma}\in\{\pm 1\}^{\sum_{0\le j\le T}n_j}$. Recall the variables from Section~\ref{sec:kim-roche1}:
    \[
    \ip{R_i}{\sigma},\quad 1\le i\le k\qquad\text{and}\qquad \ip{\overline{R_i}}{\overline{\sigma}},\quad 1\le i\le k, 
    \]
    where the inner products are defined in $\R^{\sum_{0\le j\le T}n_j}$ and $\overline{R_i}$, $1\le i\le k$ are the rows of $\overline{\M}(\tau)$ appearing in~\eqref{eq:matrix-interpolate}. We show that these ensembles are ``close" to each other in the following sense.
    \begin{lemma}\label{lemma:ip-and-bar-ip-close}
    Let $\alpha>0$ satisfy
    \begin{equation}\label{eq:alpha-lower-bd}
        \alpha\ge 10^{-c\log_{10}(\log_{10} n)}
    \end{equation}
    for a sufficiently small constant $c>0$ and $k=\Theta(n)$. Then with probability at least $1-\exp(-k/3)$,
    \begin{equation}\label{eq:stability-main}
    \sup \sum_{1\le i\le k}\Bigl|\ip{R_i}{\sigma}-\ip{\overline{R}_i}{\overline{\sigma}}\Bigr|\le Ck\sqrt{n}\Bigl(\tau \log_{10} n+n^{-\alpha/2}\Bigr)
    \end{equation}
    for all large enough $n$, where the supremum is over all pairs $(\sigma,J)$, $\sigma\in\bincube$ and $J=\{i\in[n]:\sigma_i\ne\overline{\sigma}_i\}\subset [n]$ with $|J|\le n^{1-\alpha}$. Here, $C>0$ is an absolute constant.
    \end{lemma}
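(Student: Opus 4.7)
The plan is to insert $\ip{\overline R_i}{\sigma}$ and split
\[
\ip{R_i}{\sigma}-\ip{\overline R_i}{\overline\sigma}=\ip{R_i-\overline R_i}{\sigma}+\ip{\overline R_i}{\sigma-\overline\sigma},
\]
then bound the sum over $i$ of the absolute value of each piece uniformly in $(\sigma,J)$ via a sub-Gaussian concentration argument followed by a union bound. The first piece depends only on $\sigma$ and not on the perturbation structure $J$; the second piece makes essential use of the sparsity $|J|\le n^{1-\alpha}$ of $\sigma-\overline\sigma$.

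For the first piece, the interpolation~\eqref{eq:matrix-interpolate} yields $R_i-\overline R_i=(1-\cos\tau)R_i-\sin\tau R_i'$, where $R_i'$ denotes the $i$th row of $\M'$. Consequently, for any fixed $\sigma\in\bincube$, the scalar $\ip{R_i-\overline R_i}{\sigma}$ is centered Gaussian with variance $2n(1-\cos\tau)=O(n\tau^2)$, independent across $i\in[k]$. Half-normal moment estimates give $\mathbb{E}\sum_{i=1}^k|\ip{R_i-\overline R_i}{\sigma}|=O(k\tau\sqrt{n})$, and a standard Bernstein-type tail bound for sums of independent sub-Gaussian variables yields a deviation tail of $\exp(-\Theta(t^2/(kn\tau^2)))$. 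Union-bounding over the $2^n$ choices of $\sigma$ and choosing $t=\Theta(n\sqrt{k(1-\cos\tau)})$ produces, with probability at least $1-e^{-\Theta(n)}$, the uniform bound $\sum_i|\ip{R_i-\overline R_i}{\sigma}|=O(k\tau\sqrt{n})$, comfortably within the $Ck\sqrt{n}\,\tau\log_{10}n$ term of the target.

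For the second piece, fix a pair $(\sigma,J)$ with $|J|=s\le n^{1-\alpha}$. Then $\sigma-\overline\sigma$ is supported on $J$ with entries in $\{\pm 2\}$, so $\|\sigma-\overline\sigma\|_2^2=4s$ and $\ip{\overline R_i}{\sigma-\overline\sigma}\sim\mathcal{N}(0,4s)$ independently across $i$. The expectation of $\sum_i|\ip{\overline R_i}{\sigma-\overline\sigma}|$ is $O(k\sqrt{s})$ with sub-Gaussian deviation tail $\exp(-\Theta(t^2/(ks)))$. The number of pairs $(\sigma,J)$ with $|J|=s$ is at most $2^n\binom{n}{s}$; under the hypothesis $\alpha\ge(\log_{10}n)^{-c}$ for sufficiently small $c$, we have $s\log n\le n^{1-\alpha}\log n=o(n)$, so $\binom{n}{s}=2^{o(n)}$ and the union-bound cost is $2^{n(1+o(1))}$. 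Taking $t=\Theta(n\sqrt{s})$ then gives, uniformly over such pairs, $\sum_i|\ip{\overline R_i}{\sigma-\overline\sigma}|=O(n\sqrt s)=O(k\sqrt{n}\cdot n^{-\alpha/2})$, matching the $n^{-\alpha/2}$ term in the target.

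Combining the two high-probability bounds via a union bound, then taking a further union over $s\in\{0,1,\dots,\lfloor n^{1-\alpha}\rfloor\}$ (which costs only a polynomial factor), and noting that $e^{-\Theta(n)}\le e^{-k/3}$ for $k=\Theta(n)$ yields the lemma. The main conceptual step is the splitting above, which isolates the ``change in the matrix'' (handled by a $2^n$ union bound over $\sigma$) from the ``change in the sign vector'' (where sparsity of $J$ is crucial). The main quantitative obstacle is ensuring that $2^n\binom{n}{n^{1-\alpha}}$ remains subexponential, which is precisely what the lower bound on $\alpha$ enforces: the condition $\alpha\gg(\log\log n)/\log n$ is equivalent to $n^{-\alpha}\log n\to 0$.
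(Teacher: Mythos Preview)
Your proposal is correct and follows essentially the same approach as the paper. The paper writes the decomposition as $\sum_{j\in J}(R_{ij}+\overline{R}_{ij})\sigma_j+\sum_{j\in J^c}(R_{ij}-\overline{R}_{ij})\sigma_j$, which differs from your $\ip{R_i-\overline R_i}{\sigma}+\ip{\overline R_i}{\sigma-\overline\sigma}$ only by regrouping the negligible cross term $\sum_{j\in J}(R_{ij}-\overline R_{ij})\sigma_j$; after that, both arguments bound each piece via Bernstein-type concentration plus a union bound over $(\sigma,J)$, and both use the lower bound on $\alpha$ exactly to ensure $n^{1-\alpha}\log n=o(n)$ so the combinatorial cost is absorbed.
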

    It is worth noting that due to $\sup$ term, 
    Lemma~\ref{lemma:ip-and-bar-ip-close} provides a uniform control for {\bf any} pair $(\sigma,\overline{\sigma})\in\bincube\times \bincube$ with $d_H(\sigma,\overline{\sigma})\le n^{1-\alpha}$. We will show that this, in particular, captures the outputs of Kim-Roche algorithm.
\begin{proof}[Proof of Lemma~\ref{lemma:ip-and-bar-ip-close}]
We start by observing that for any $\sigma\in\bincube$, if $J=\{i:\sigma_i\ne \overline{\sigma}_i\}$, then
\begin{equation}\label{eq:inner-to-sum}
\ip{R_i}{\sigma}-\ip{\overline{R}_i}{\overline{\sigma}}=\sum_{j\in J}\bigl(R_{ij}+\overline{R}_{ij}\bigr)\sigma_j +\sum_{j\in J^c}\bigl(R_{ij}-\overline{R}_{ij}\bigr)\sigma_j.
\end{equation}
Next, for any {\bf fixed} $\sigma$ and $J$, we have by~\eqref{eq:matrix-interpolate} that
\begin{align}
    \sum_{j\in J}\bigl(R_{ij}+\overline{R}_{ij}\bigr)\sigma_j& \distr  \mathcal{N}\Bigl(0,\Bigl(\sin^2\tau  + \bigl(1+\cos(\tau)\bigr)^2\Bigr)|J|\Bigr)\label{eq:dist-sum-J}\\
    \sum_{j\in J^c}\bigl(R_{ij}-\overline{R}_{ij}\bigr)\sigma_j  &\distr \mathcal{N}\Bigl(0,\Bigl(\sin^2\tau+\bigl(1-\cos(\tau)\bigr)^2\Bigr)|J^c|\Bigr)\label{eq:dist-sum-J^c}.
\end{align}
Using simple bounds, $1-\frac{\tau^2}{2}\le \cos(\tau)\le 1$ and $\sin(\tau)\le \min\{1,\tau\}$, we obtain the following upper bounds
\begin{align*}
    \left(\sin^2\tau+ \left(1+\cos\tau\right)^2\right)|J|&\le 5|J| \\
     \left(\sin^2 \tau +\left(1-\cos\tau\right)^2\right)|J^c| &\le 100n\tau^2
\end{align*}
on the variances of variables appearing in~\eqref{eq:dist-sum-J} and~\eqref{eq:dist-sum-J^c}.

We next set the stage to apply Bernstein's inequality~\cite[Proposition 5.16]{vershynin2010introduction}: for i.i.d. $Z_i\sim\mathcal{N}(0,1)$, $1\le i\le n$, there exists an absolute constant  $A>0$ such that for all $t>0$, 
\begin{equation}\label{eq:bersntein-from-ver}
\mathbb{P}\left(\sum_{1\le i\le n}|Z_i|\ge n\mathbb{E}[|Z_1|]+t\right)\le \exp\left(-\min\left(\frac{t^2}{4nA^2},\frac{t}{2A}\right)\right).
\end{equation}
Fix any absolute constant $C>0$. Note that using the variance upper bound above
\begin{equation}\label{eq:proxy-bern1}
\mathbb{P}\left(\sum_{1\le i\le k}\left|\sum_{j\in J}\left(R_{ij}+\overline{R}_{ij}\right)\sigma_j\right|\ge Ck\sqrt{|J|}\right)\le \mathbb{P}\left(\sum_{1\le i\le k}|Z_i|\ge \frac{Ck\sqrt{|J|}}{\sqrt{5|J|}}\right),
\end{equation}
where $Z_i\distr\mathcal{N}(0,1)$, $1\le i\le k$ i.i.d. Recall that $\mathbb{E}[|Z_i|] = \sqrt{2/\pi}$. Furthermore, for 
\[
t = \left(\frac{C}{\sqrt{5}}-\sqrt{\frac{2}{\pi}}\right)k,
\]
$k=\Theta(n)$ implies that
\[
\min\left(\frac{t^2}{4nA^2},\frac{t}{2A}\right)\ge k
\]
for $C>0$ large enough. Likewise, 
\begin{equation}\label{eq:proxy-bern2}
\mathbb{P}\left(\sum_{1\le i\le k}\left|\sum_{j\in J^c}\left(R_{ij}-\overline{R}_{ij}\right)\sigma_j\right|\ge Ck\sqrt{n\tau^2}\log_{10} n\right)\le\mathbb{P}\left(\sum_{1\le i\le k}|Z_i|\ge \frac{Ck\sqrt{n\tau^2}\log_{10} n}{10\sqrt{n\tau^2}}\right).
\end{equation}
This time, choosing
\[
t = \frac{C}{10}k\log_{10}n - k\sqrt{\frac{2}{\pi}},
\]
and recalling $k=\Theta(n)$, we have
\[
\min\left(\frac{t^2}{4nA^2},\frac{t}{2A}\right)\ge k\log_{10}n
\]
provided $C>0$ is large. Consequently, applying Bernstein's inequality~\eqref{eq:bersntein-from-ver}  to~\eqref{eq:proxy-bern1} and~\eqref{eq:proxy-bern2}, we obtain
\begin{align}
   & \mathbb{P}\left(\sum_{1\le i\le k}\left|\sum_{j\in J}\left(R_{ij}+\overline{R}_{ij}\right)\sigma_j\right|\ge Ck\sqrt{|J|}\right)\le \exp(-k)\label{eq:bernstein-J}\\
   & \mathbb{P}\left(\sum_{1\le i\le k}\left|\sum_{j\in J^c}\left(R_{ij}-\overline{R}_{ij}\right)\sigma_j\right|\ge Ck\sqrt{n\tau^2}\log_{10} n\right)\le \exp(-k\log_{10} n)\label{eq:bernstein-J^c}
\end{align}
for any sufficiently large constant $C>0$.

The bounds above are valid for any such $(\sigma,J)$. We next upper bound the number of all such pairs. Note that, 
\begin{align}
      \Bigl|(\sigma,J):\sigma\in\{-1,1\}^{J},|J|\le n^{1-\alpha}\Bigr|&=\sum_{m\le n^{1-\alpha}}2^m\binom{n}{m}\nonumber\\ 
      &\le \sum_{m\le n^{1-\alpha}}(2n)^m\nonumber \\
      &\le n^{1-\alpha}(2n)^{n^{1-\alpha}}\nonumber \\
      &\le \exp\bigl(C'n^{1-\alpha}\log_{10} n\bigr)\label{eq:u-bd-J},
\end{align}
for some absolute $C'>0$. Likewise,
\begin{align}
     \Bigl|(\sigma,J):\sigma\in\{-1,1\}^{J^c},|J|\le n^{1-\alpha}\Bigr| &=\sum_{m\le n^{1-\alpha}}2^{n-m}\binom{n}{m}\nonumber \\
     &\le 2^n\sum_{0\le m\le n}2^{-m}\binom{n}{m}\nonumber\\
     &=\exp\bigl(n\ln 3\bigr)\label{eq:u-bd-J^c},
\end{align}
where we used the binomial theorem, $\sum_{0\le m\le n}2^{-m}\binom{n}{m} = 3^n/2^n$.

We now prepare the stage to take union bounds. Note that since $k=\Theta(n)$, $k\log_{10}n = \Theta\bigl(n\log n\bigr) = \omega\bigl(n\ln 3\bigr)$. In particular, the cardinality term appearing in~\eqref{eq:u-bd-J^c} is dominated by the corresponding probability term~\eqref{eq:bernstein-J^c}. Next, we compare the (order of) cardinality term~\eqref{eq:u-bd-J} with the corresponding probability term~\eqref{eq:bernstein-J}. Note that
\[
10^{-c\log_{10}(\log_{10}n)} = \bigl(\log_{10}n\bigr)^{-c}.
\]
Employing this and the lower bound~\eqref{eq:alpha-lower-bd} on $\alpha$, we obtain
\begin{align*}
    n^{1-\alpha}\log_{10}n &\le n^{1-10^{-c\log_{10}(\log_{10} n)}}\log_{10}n \\
    &=n\cdot \underbrace{\exp\left(\frac{1}{\log_{10} e}\log_{10}(\log_{10}n) -\frac{1}{\log_{10}e}\bigl(
    \log_{10}n\bigr)^{1-c}\right)}_{\text{$=o(1)$, provided $c<1$}}\\
    &=o(n).
\end{align*}
Since $k=\Theta(n)$ and $c>0$ is sufficiently small, it follows that the probability term appearing in~\eqref{eq:bernstein-J} dominates the cardinality term~\eqref{eq:u-bd-J}.

Taking union bounds, we obtain
\begin{equation}\label{eq:sup-1}
     \mathbb{P}\left(\sup_{\sigma,J}\sum_{1\le i\le k}\left|\sum_{j\in J}\left(R_{ij}+\overline{R}_{ij}\right)\sigma_j\right|\ge Ck\sqrt{n^{1-\alpha}}\right)\le \exp(-k/2)
\end{equation}
and
\begin{equation}\label{eq:sup-2}
    \mathbb{P}\left(\sup_{\sigma,J}\sum_{1\le i\le k}\left|\sum_{j\in J^c}\left(R_{ij}-\overline{R}_{ij}\right)\sigma_j\right|\ge Ck\sqrt{n\tau^2}\log_{10} n\right)\le \exp\bigl(-k\log_{10} n/2\bigr).
\end{equation}
Finally, combining~\eqref{eq:sup-1} and~\eqref{eq:sup-2} via a union bound, we conclude that~\eqref{eq:stability-main}  holds with probability at least $1-\exp(-k/3)$, completing the proof of Lemma~\ref{lemma:ip-and-bar-ip-close}.
\end{proof}
    
    \paragraph{Distribution of inner products.}
Next, as an auxiliary step, we study the parameters of distribution of $\ip{R_i}{\sigma}$, where $\sigma$ is generated by the application of majority protocol: $\sigma_j = {\rm sgn}\left(\ip{C_j}{e}\right)$, where $e$ is the vector of all ones. Note that 
\[
\ip{R_i}{\sigma} = \sum_{1\le j\le n}X_{ij}\sigma_j = \sum_{1\le j\le n}X_{ij}{\rm sgn}\left(\sum_{1\le i\le k}X_{ij}\right) = \sum_{1\le j\le n}X_{ij}{\rm sgn}\left(\frac{1}{\sqrt{k}}\sum_{1\le i\le k}X_{ij}\right).
\]
Notice that for any fixed row index $i$, the collection $X_{ij}\sigma_j$, $1\le j\le n$ is i.i.d.\,We now compute the relevant statistics. 
\begin{lemma}\label{thm:mean-Xij-sigma-j}
For any  $1\le i\le k$ and $1\le j\le n$, 
\[
\mathbb{E}\bigl[X_{ij} \sigma_j\bigr] = \sqrt{\frac{2}{\pi k}}.
\]
Consequently, for any distinct $(i,j),(i',j')\in[k]\times [n]$,
\[
\mathbb{E}\bigl[X_{ij}\sigma_j X_{i'j'}\sigma_{j'}\bigr] = \ind\{j\ne j'\}\frac{2}{\pi k}
\]
\end{lemma}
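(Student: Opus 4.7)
The proof is quite direct and relies on the auxiliary Lemma~\ref{lemma:bivariate-cond-quadrant} already available in the paper. Fix $i \in [k]$ and $j \in [n]$, and set $Y_j \triangleq \frac{1}{\sqrt{k}}\sum_{1 \le \ell \le k}X_{\ell j}$. Since the $X_{\ell j}$ are i.i.d.\ standard normal, $Y_j \sim \mathcal{N}(0,1)$, the pair $(X_{ij},Y_j)$ is jointly Gaussian, and
$$\mathbb{E}[X_{ij}Y_j] = \frac{1}{\sqrt{k}}\mathbb{E}[X_{ij}^2] = \frac{1}{\sqrt{k}},$$
so $(X_{ij},Y_j)$ is a bivariate normal with correlation $\rho = 1/\sqrt{k}$. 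By construction, $\sigma_j = \mathrm{sgn}(Y_j)$.

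For the first claim, I would write
$$\mathbb{E}[X_{ij}\sigma_j] = \mathbb{E}[X_{ij}\mathbf{1}\{Y_j \ge 0\}] - \mathbb{E}[X_{ij}\mathbf{1}\{Y_j < 0\}] = 2\,\mathbb{E}[X_{ij}\mathbf{1}\{Y_j \ge 0\}],$$
where the second equality follows by the joint symmetry $(X_{ij},Y_j)\distr(-X_{ij},-Y_j)$. Since $\mathbb{P}[Y_j\ge 0] = 1/2$, this equals $\mathbb{E}[X_{ij}\mid Y_j \ge 0]$, which by Lemma~\ref{lemma:bivariate-cond-quadrant} (applied to $(X_{ij},Y_j)$ with correlation $1/\sqrt{k}$) equals $\frac{1}{\sqrt{k}}\sqrt{\tfrac{2}{\pi}} = \sqrt{\tfrac{2}{\pi k}}$, as desired.

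For the second claim, consider distinct $(i,j),(i',j')\in[k]\times[n]$. If $j = j'$ (forcing $i\ne i'$), then $\sigma_j^2 = 1$ gives
$$\mathbb{E}[X_{ij}\sigma_j X_{i'j'}\sigma_{j'}] = \mathbb{E}[X_{ij}X_{i'j}] = 0,$$
matching the factor $\mathbf{1}\{j\ne j'\}$ in the claim. If instead $j \ne j'$, observe that $X_{ij}\sigma_j$ is a function of the $j$-th column of $\mathcal{M}$ only, while $X_{i'j'}\sigma_{j'}$ is a function of the $j'$-th column only; independence of distinct columns (since the entries of $\mathcal{M}$ are i.i.d.) then yields
$$\mathbb{E}[X_{ij}\sigma_j X_{i'j'}\sigma_{j'}] = \mathbb{E}[X_{ij}\sigma_j]\cdot\mathbb{E}[X_{i'j'}\sigma_{j'}] = \left(\sqrt{\tfrac{2}{\pi k}}\right)^2 = \frac{2}{\pi k},$$
again by the first part. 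There is no real obstacle here: the computation is essentially an application of Lemma~\ref{lemma:bivariate-cond-quadrant} together with a column-independence argument.
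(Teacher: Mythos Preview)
Your proof is correct and essentially identical to the paper's: both identify $(X_{ij},Y_j)$ as bivariate normal with correlation $1/\sqrt{k}$, reduce $\mathbb{E}[X_{ij}\sigma_j]$ to $\mathbb{E}[X_{ij}\mid Y_j\ge 0]$ via the symmetry $(X_{ij},Y_j)\distr(-X_{ij},-Y_j)$, and invoke Lemma~\ref{lemma:bivariate-cond-quadrant}. The second part is likewise handled the same way in both, with your $\sigma_j^2=1$ observation for the $j=j'$ case being a slightly cleaner phrasing of the paper's independence remark.
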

\begin{proof}[Proof of Lemma~\ref{thm:mean-Xij-sigma-j}]
For simplicity, we drop the index $i$ below whenever convenient. Observe that $\mathbb{P}(\sigma_1=+1)=\mathbb{P}(\sigma_1=-1)=\frac12$. We then have
\begin{align*}
    \mathbb{E}\bigl[X_1\sigma_1\bigr] &= \frac12\Bigl(\mathbb{E}\bigl[X_1|\sigma_1=+1\bigr] - \mathbb{E}\bigl[X_1|\sigma_1=-1\bigr]\Bigr) \\
    &=\frac12\left(\mathbb{E}\left[X_1\Bigg|\frac{1}{\sqrt{k}}\sum_{1\le j\le k}X_j\ge 0\right]+\mathbb{E}\left[-X_1\Bigg|-\frac{1}{\sqrt{k}}\sum_{1\le j\le k}X_j\ge 0\right]\right) \\
    &=\mathbb{E}\left[X_1\Bigg|\frac{1}{\sqrt{k}}\sum_{1\le j\le k}X_j\ge 0\right] = \sqrt{\frac{2}{\pi k}},
\end{align*}
where we applied Lemma~\ref{lemma:bivariate-cond-quadrant} for the bivariate normal
\[
\left(X_1,\frac{1}{\sqrt{k}}\sum_{1\le j\le k}X_j\right)\distr\mathcal{N}\left(\begin{bmatrix}0 \\ 0\end{bmatrix} ,\begin{bmatrix} 1&\frac{1}{\sqrt{k}} \\ \frac{1}{\sqrt{k}} & 1\end{bmatrix}\right).
\]
Having established the claim for $\mathbb{E}[X_{ij}\sigma_j]$, the rest is straightforward. Take $(i,j)\ne (i',j')$. Note that if $j=j'$, we are done since $X_{ij}$ and $X_{i'j}$ are independent with mean zero. Assume $j\ne j'$. Then, $X_{ij}\sigma_j$ and $X_{i'j'}\sigma_{j'}$ are i.i.d. This completes the proof of Lemma~\ref{thm:mean-Xij-sigma-j}.
\end{proof}
\paragraph{Thresholding suffices to find $k_j$ indices.} We now establish that for finding the $k_j$ (row) indices to be used in round $j$ of the algorithm, it suffices to threshold the inner products. This is a consequence of the following concentration result. 
\begin{lemma}\label{lemma:main-card}
Suppose that $0<c<\log_{10} 2$ is an arbitrary constant, and $1\le T\le c\log_{10}\log_{10} n$ is an arbitrary integer. Let $\sigma\in\R^{S_T}$ for $S_T= \sum_{0\le s\le T}n_s$ be the output of $\KRA$ at the end of $T{\rm th}$ round. Then, for  any $x\in\mathbb{R}$, and $\epsilon>0$,
\[
\mathbb{E}\left[\left(\Bigl|\bigl\{1\le i\le k:\ip{R_i}{\sigma}<x\bigr\}\Bigr| - k\Phi\left(\frac{1}{\sqrt{S_T}}\left(x-\sum_{0\le s\le T}n_s\sqrt{\frac{2k_s}{\pi k^2}}\right) 
\right)\right)^2\right]\le O\left(n^{\frac{39}{20}+\epsilon}\right),
\]
where $\Phi(t) = \mathbb{P}(Z\le t)$ for $Z\sim\mathcal{N}(0,1)$,  $k_s$ is defined in~\eqref{eq:k-j} and $n_s$ is defined  in~\eqref{eq:n-j}.
\end{lemma}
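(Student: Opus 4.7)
} The plan is to expand the target as
\[
\mathbb{E}\bigl[(N(x)-k\Phi(u_x))^2\bigr]=\mathrm{Var}\bigl(N(x)\bigr)+\bigl(\mathbb{E}[N(x)]-k\Phi(u_x)\bigr)^2,
\]
where $N(x)=|\{i:\ip{R_i}{\sigma}<x\}|$ and $u_x=\frac{1}{\sqrt{S_T}}\bigl(x-\sum_{0\le s\le T}n_s\sqrt{2k_s/(\pi k^2)}\bigr)$, and to bound each piece by $O(n^{39/20+\epsilon})$.

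First I will handle the mean. By the symmetry of $\KRA$ under permutation of rows, the law of $\ip{R_i}{\sigma}$ is the same for every $i$, so $\mathbb{E}[N(x)]=kq$ with $q=\mathbb{P}[\ip{R_1}{\sigma}<x]$. Decompose $\ip{R_1}{\sigma}=\sum_{s=0}^{T}\sum_{j\in B_s}X_{1j}\sigma_j$ where $B_s$ is the $s$-th column block. By Lemma~\ref{thm:mean-Xij-sigma-j} applied to the round-$s$ majority vote over the index set $\mathcal{I}_s$, we have $\mathbb{E}[X_{1j}\sigma_j\mid 1\in\mathcal{I}_s]=\sqrt{2/(\pi k_s)}$ and, by exchangeability, $\mathbb{P}[1\in\mathcal{I}_s]=k_s/k$; for $1\notin\mathcal{I}_s$ the contribution vanishes because $\sigma_j$ on $B_s$ is independent of $R_1$ on $B_s$ conditional on the selection event. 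Summing gives $\mathbb{E}[\ip{R_1}{\sigma}]=\mu:=\sum_s n_s\sqrt{2k_s/(\pi k^2)}$, and a similar pairing computation shows $\mathrm{Var}(\ip{R_1}{\sigma})=S_T(1+o(1))$. A quantitative Berry-Esseen-type CLT for the sum of $S_T$ summands (after conditioning on $\sigma$ to expose the Gaussianity in the unperturbed directions) then yields $|q-\Phi(u_x)|=O(S_T^{-1/2})$, whence $(\mathbb{E}[N(x)]-k\Phi(u_x))^2=O(k^2/S_T)=O(n)$, well inside the target.

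The bulk of the work is the variance. Writing $\mathrm{Var}(N(x))=kq(1-q)+k(k-1)(p-q^2)$ with $p=\mathbb{P}[\ip{R_1}{\sigma}<x,\ip{R_2}{\sigma}<x]$, I need $|p-q^2|=O(n^{-1/20+\epsilon})$. The key idea is a \emph{two-row decoupling}: let $\widetilde R_1,\widetilde R_2$ be independent fresh copies, form $\widetilde{\M}$ from $\M$ by replacing rows $1,2$ with them, and let $\widetilde\sigma=\KRA(\widetilde{\M})$. Because $\widetilde\sigma$ is a function of only $\widetilde R_1,\widetilde R_2$ and the other $k-2$ rows of $\M$, it is independent of $(R_1,R_2)$, so conditionally on $\widetilde\sigma$ the variables $\ip{R_1}{\widetilde\sigma}$ and $\ip{R_2}{\widetilde\sigma}$ are independent $\mathcal{N}(0,S_T)$, giving the decoupled pair probability $\widetilde p=q^2+O(S_T^{-1/2})$. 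It remains to show $|p-\widetilde p|=O(n^{-1/20+\epsilon})$. For this I follow the round-by-round stability argument already carried out in Lemma~\ref{lemma:ip-and-bar-ip-close}, but with a $2$-row swap in place of the Gaussian interpolation: a variance/Bernstein bound of the form \eqref{eq:stability-main} shows that replacing two rows perturbs each partial inner product $\ip{R_i}{\sigma}$ by at most $\sqrt{n}$ in magnitude, from which a union bound over $T\le c\log_{10}\log_{10} n$ rounds gives $|J_T|\le n^{1-\alpha_T+o(1)}$ agreement between $\sigma$ and $\widetilde\sigma$, and hence $\bigl|\ip{R_i}{\sigma}-\ip{R_i}{\widetilde\sigma}\bigr|$ is negligible compared to $\sqrt{S_T}$ with the required probability.

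The main obstacle is quantifying the dependence between $\sigma$ and any single row of $\M$ through the cascade of $T$ rounds of index selection and majority voting: the selection $\mathcal{I}_s$ at round $s$ depends on every row, and small fluctuations amplify across rounds. The choice $T=O(\log_{10}\log_{10} n)$ is precisely what allows these amplifications to be absorbed into the $n^\epsilon$ factor, as in Proposition~\ref{prop:kr-stable}. Once the covariance bound $|p-q^2|=O(n^{-1/20+\epsilon})$ is established, multiplication by $k(k-1)=O(n^2)$ yields the claimed $O(n^{39/20+\epsilon})$ and completes the proof.
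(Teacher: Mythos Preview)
Your bias--variance split and the target $|p-q^2|=O(n^{-1/20+\epsilon})$ match the paper exactly, but the decoupling you propose has a fatal flaw that the paper's approach sidesteps.

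The global two-row swap does not give $\bigl|\ip{R_1}{\sigma}-\ip{R_1}{\widetilde\sigma}\bigr|=o(\sqrt{S_T})$. Since $\widetilde\sigma$ is independent of $R_1$ you have $\mathbb{E}[\ip{R_1}{\widetilde\sigma}]=0$, while by your own mean calculation $\mathbb{E}[\ip{R_1}{\sigma}]=\mu=\sum_s n_s\sqrt{2k_s/(\pi k^2)}=\Theta(\sqrt{n})$. Hence the difference has mean $\mu=\Theta(\sqrt{S_T})$: the coordinates on which $\sigma$ and $\widetilde\sigma$ disagree are precisely biased toward those where row $1$ was pivotal in a majority vote, and on those coordinates $X_{1j}\sigma_j$ has positive conditional mean. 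This also invalidates your Berry--Esseen step for the bias: conditioning on $\sigma$ does not make $\ip{R_1}{\sigma}$ a centered Gaussian sum, because the conditional law of $X_{1j}$ given $\sigma$ carries exactly this bias.

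Even if you recenter by $\mu$, your global rerun creates a circularity. To propagate $\sigma\approx\widetilde\sigma$ through round $s\ge1$ you must control $|\mathcal{I}_s\triangle\widetilde{\mathcal{I}}_s|$, which requires anti-concentration of the partial inner products around the $k_s$-th order statistic---and that is precisely what Lemma~\ref{lemma:main-card} supplies (see how Lemma~\ref{lemma:almost-full-intesect} invokes it). The paper avoids both problems by a \emph{local} decoupling: it keeps the original index sets $\mathcal{I}_t$ and only drops rows $i,i'$ from each majority vote, setting $\widetilde\sigma_j=\mathrm{sgn}\bigl(\sum_{\ell\in\mathcal{I}_t\setminus\{i,i'\}}X_{\ell j}\bigr)$ and $U(i,i')=\sum_jX_{ij}\widetilde\sigma_j+\mu$. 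Then $\sigma_j\ne\widetilde\sigma_j$ only when deleting two terms from a $k_t$-term sum flips its sign, which has probability $O(k_t^{-1/2})$ by an elementary bivariate-normal computation; this yields $\mathrm{Var}\bigl(\ip{R_i}{\sigma}-U(i,i')\bigr)=O\bigl(\sum_t n_t k_t^{-1/4}\bigr)=O(n^{3/4+\epsilon})$ directly, with no appeal to any prior stability result. Chebyshev at threshold $k^{2/5}$ then gives both $|q-\Phi(u_x)|$ and $|p-\Phi(u_x)^2|$ of order $O(n^{-1/20+\epsilon})$, from which the $O(n^{39/20+\epsilon})$ bound follows.
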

\begin{proof}[Proof of Lemma~\ref{lemma:main-card}]
We consider
\begin{equation}\label{eq:main-quantities}
\mathbb{P}\Bigl(\ip{R_i}{\sigma}<x\Bigr)\quad\text{and}\quad \mathbb{P}\Bigl(\ip{R_i}{\sigma}<x,\ip{R_{i'}}{\sigma}<x\Bigr)
\end{equation}
for $1\le i,i'\le k$ and $i\ne i'$. Define the running sums
\begin{equation}\label{eq:running-sum}
    S_t = \sum_{0\le j\le t}n_j,\quad 0\le t\le T.
\end{equation}
That is, $S_t$ is the number of entries of $\sigma$ assigned at  the end of round $t$. Suppose that  the algorithm run for $T$ rounds, where $1\le T\le c\log_{10}\log_{10} n$ for $c>0$ sufficiently small. With this notation,
\[
\ip{R_i}{\sigma} = \sum_{1\le j\le S_T}X_{ij}\sigma_j  = \sum_{1\le j\le n_0}X_{ij}\sigma_j + \sum_{1\le t\le T} \sum_{S_{t-1}+1\le j\le S_t}X_{ij}\sigma_j.
\]
To analyze the distribution of this value, let us define
\begin{equation}\label{eq:U-i-i'}
U(i,i')\triangleq  \sum_{0\le t\le T} \sum_{S_{t-1}+1\le j\le S_t}X_{ij}\widetilde{\sigma}_j + \sum_{0\le t\le T}n_t \mu_t,
\end{equation}
where for $S_{t-1}+1\le j\le S_t$, and  $0\le t\le T$ (with the convention $S_{-1}\triangleq 0$, $\mathcal{I}_0\triangleq[k]$ and $k_0\triangleq k$) 
\[
\widetilde{\sigma}_j \triangleq   {\rm sgn}\left(\sum_{\ell\in\mathcal{I}_t\setminus\{i,i'\} }X_{\ell j}\right) \qquad\text{and}\qquad \mu_t=\mathbb{E}\bigl[X_{ij}\sigma_j\bigr].
\]
We suppress the dependence of $\widetilde{\sigma}$ on $i,i'$ for convenience. Observe that $\widetilde{\sigma}_j$ is independent of all $X_{ij}$ and $X_{i'j}$.

We now compute $\mu_t$ appearing above. To that end, we remind the reader the (index) set $\mathcal{I}_t$ for convenience: for any $1\le i\le k$, $i\in\mathcal{I}_t$ iff the (partial) inner product, $\ip{R_i}{\sigma}$, is among the smallest $k_t$ (partial) inner products $\ip{R_j}{\sigma}$, $1\le j\le k$. 

Next, note that $\sigma_j$ (the sign assigned to column $j$) is obtained by taking a majority vote in $k_t\times n_t$ submatrix with row indices prescribed by $\mathcal{I}_t$. Note that if $i\notin\mathcal{I}_t$, $X_{ij}$ and $\sigma_j$ are independent. Furthermore, given any $i$, 
\[
\mathbb{P}(i\in \mathcal{I}_t) = \binom{k}{k_t}^{-1}\binom{k-1}{k_t-1} = \frac{k_t}{k}. 
\]
from symmetry. Consequently,
\begin{align}
    \mu_t &= \mathbb{E}\bigl[X_{ij}\sigma_j\mid i\in\mathcal{I}_t\bigr]\mathbb{P}(i\in\mathcal{I}_t) + 
    \underbrace{\mathbb{E}\bigl[X_{ij}\sigma_j\mid i\notin\mathcal{I}_t\bigr]}_{=0}\mathbb{P}(i\notin\mathcal{I}_t)\LinesNotNumbered \\
    &=\sqrt{\frac{2}{\pi k_t}}\cdot \frac{k_t}{k} = \sqrt{\frac{2k_t}{\pi k^2}},\label{eq:eq-mu-t}
\end{align}
where we used Lemma~\ref{thm:mean-Xij-sigma-j} to invoke $\mathbb{E}\bigl[X_{ij}\sigma_j\mid i\in\mathcal{I}_t\bigr]=\sqrt{\frac{2}{\pi k_t}}$.

Define now \[
\Delta_{i,i'}\triangleq \ip{R_i}{\sigma} - U(i,i').
\]
Since $\mathbb{E}\bigl[X_{ij}\widetilde{\sigma}_j\bigr]=0$, we obtain $\mathbb{E}[\Delta_{i,i'}]=0$  from the choice of $\mu_t$, $0\le t\le T$. 
We now claim
\begin{lemma}\label{lemma:variance-up-bd}
\begin{equation}\label{eq:var-star}
{\rm Var}\Bigl(\Delta_{i,i'}\Bigr) = O\left(\sum_{0\le t\le T}n_t\cdot k_t^{-\frac14}\right).
\end{equation}
Moreover, if $T\le c\log_{10}\log_{10} n$ with $c<\log_{10} 2$, then
\begin{equation}\label{eq:var-star-star}
{\rm Var}\Bigl(\Delta_{i,i'}\Bigr) = O\left(n^{\frac34+\epsilon}\right)
\end{equation}
for any $\epsilon>0$. 
\end{lemma}
\begin{proof}[Proof of Lemma~\ref{lemma:variance-up-bd}]
Note that for $S_{t-1}+1\le j  \le S_t$,  $\sigma_j$ is a function of a $k_t\times n_t$ submatrix with i.i.d. $\mathcal{N}(0,1)$ entries that has not been inspected yet. Hence
\begin{align*}
    {\rm Var}\Bigl(\Delta_{i,i'}\Bigr) &={\rm Var}\left(\sum_{0\le t\le T}\sum_{S_{t-1}+1\le j\le S_t} X_{ij}\left(\sigma_j-\widetilde{\sigma}_j\right)\right)\\
    &=\sum_{0\le t\le T}n_t{\rm Var}\Bigl(X_{ij}\left(\sigma_j-\widetilde{\sigma}_j\right)\Bigr)\\
    &\le \sum_{0\le  t\le T}n_t\mathbb{E}\Bigl[X_{ij}^2\left(\sigma_j-\widetilde{\sigma}_j\right)^2\Bigr]\\
    &\le \sum_{0\le t\le T} n_t\sqrt{\mathbb{E}\left[X_{ij}^4\right]\mathbb{E}\left[\left(\sigma_j-\widetilde{\sigma}_j\right)^4\right]}\\
    &=O\left(\sum_{0\le t\le T}n_t \sqrt{\mathbb{P}\left(\sigma_j\ne\widetilde{\sigma}_j\right)}\right),
\end{align*}
where  the second line uses the fact that for any fixed $t$ and $S_{t-1}+1\le j\le S_t$ the distributions of $X_{ij}\bigl(\sigma_j-\widetilde{\sigma}_j\bigr)$ are identical;
     the third line uses ${\rm Var}(U)\le \mathbb{E}[U^2]$; and
   the fourth line uses Cauchy-Schwarz inequality.

We now show that for $S_{t-1}+1\le j\le S_t$, $0\le t\le T$, 
\[
\mathbb{P}\Bigl(\sigma_j\ne\widetilde{\sigma}_j\Bigr) =O\left(\frac{1}{\sqrt{k_t}}\right)
\]
which will establish Lemma~\ref{lemma:variance-up-bd}. We have that
\begin{align*}
\mathbb{P}\Bigl(\sigma_j\ne\widetilde{\sigma}_j\Bigr)&\le \underbrace{\mathbb{P}\Bigl(\sigma_j\ne\widetilde{\sigma}_j\bigl\vert i,i'\notin \mathcal{I}_t\Bigr)}_{=0} +\mathbb{P}\Bigl(\sigma_j\ne\widetilde{\sigma}_j\Bigl\vert |\mathcal{I}_t\cap \{i,i'\}|=1\Bigr)+\mathbb{P}\Bigl(\sigma_j\ne\widetilde{\sigma}_j\bigl\vert i,i'\in  \mathcal{I}_t\Bigr). %\\
\end{align*}
Recall now from Lemma~\ref{lemma:gaussian-orthant} that for a pair $(S_1,S_2)$ of bivariate normal random variables $S_1,S_2\distr \mathcal{N}(0,1)$ with parameter $\rho$, \[
\mathbb{P}\Bigl({\rm sgn}(S_1)\ne {\rm sgn}(S_2)\Bigr) 
= \frac12-\frac1\pi \sin^{-1}(\rho),
\]
which, in particular, is a decreasing function of $\rho$. Now, 
\[
\mathbb{P}\Bigl(\sigma_j\ne\widetilde{\sigma}_j\Bigl\vert |\mathcal{I}_t\cap \{i,i'\}|=1\Bigr) = \mathbb{P}\left({\rm sgn}\left(\frac{1}{\sqrt{k_t-1}}\sum_{1\le i\le k_t-1}Z_i\right)\ne {\rm sgn}\left(\frac{1}{\sqrt{k_t}}\sum_{1\le i\le k_t}Z_i\right)\right),
\]
where $Z_i$, $1\le i\le k_t$ are i.i.d. $\mathcal{N}(0,1)$. Setting $S_1=(k_t-1)^{-\frac12}\sum_{1\le i\le k_t-1}Z_i$ and $S_2=k_t^{-\frac12}\sum_{1\le i\le k_t}Z_i$, we find that $(S_1,S_2)$ is a bivariate normal with parameter $\sqrt{1-\frac{1}{k_t}}$. Likewise, a similar argument yields that $\mathbb{P}\Bigl(\sigma_j\ne\widetilde{\sigma}_j\lvert i,i'\in\mathcal{I}_t\Bigr)= \mathbb{P}\bigl({\rm sgn}(S_1')\ne{\rm sgn}(S_2')\bigr)$, where $S_1',S_2'\distr \mathcal{N}(0,1)$ is a bivariate normal with parameter $\sqrt{1-\frac{2}{k_t}}$. Consequently,
\begin{align*}
\mathbb{P}\Bigl(\sigma_j\ne\widetilde{\sigma}_j\Bigr)&\le  \mathbb{P}\Bigl(\sigma_j\ne\widetilde{\sigma}_j\Bigl\vert |\mathcal{I}_t\cap \{i,i'\}|=1\Bigr)+\mathbb{P}\Bigl(\sigma_j\ne\widetilde{\sigma}_j\bigl\vert i,i'\in  \mathcal{I}_t\Bigr) \le 2\mathbb{P}\Bigl(\sigma_j\ne\widetilde{\sigma}_j\bigl\vert i,i'\in  \mathcal{I}_t\Bigr).
\end{align*}
Next, for fixed $i\ne i'$;  set
$S\triangleq\sum_{\ell\in\mathcal{I}_t\setminus\{i,i'\}}X_{\ell j}$. We then have,
\begin{align*}
\mathbb{P}\Bigl(\sigma_j\ne\widetilde{\sigma}_j\bigl\vert i,i'\in  \mathcal{I}_t\Bigr) &= 2\mathbb{P}\bigl(S+X_i+X_{i'}\ge 0,S\le 0\bigr) \\
&= 2\mathbb{P}\left(\frac{1}{\sqrt{k_j}}(S+X_i+X_{i'})\ge 0,-\frac{1}{\sqrt{k_j-2}}S\ge 0\right) \\
& = \frac12-\frac1\pi\sin^{-1}\left(\sqrt{1-\frac{2}{k_j}}\right)\\
& = \frac12-\frac1\pi \sin^{-1}\left(1-\frac{1}{k_j}+O\left(\frac{1}{k_j^2}\right)\right)\\
&=\frac12-\frac1\pi\left(\frac{\pi}{2} - O\left(\frac{1}{\sqrt{k_j}}\right)\right) = O\left(\frac{1}{\sqrt{k_j}}\right),
\end{align*}
where the first line uses symmetry; the third line uses Lemma~\ref{lemma:gaussian-orthant}; and the fourth line uses $\sqrt{1-x} = 1-\frac{x}{2} + O(x^2)$ and $\sin^{-1}(1-x) =  \frac{\pi}{2} - \sqrt{2x}+ O(x^{3/2})$. 
Hence, we  established
\[
{\rm Var}\Bigl(\Delta_{i,i'}\Bigr)  = O\left(\sum_{0\le t\le T}n_t\cdot k_t^{-\frac14}\right),
\]
where $O(\cdot)$ only hides absolute constants. Namely,~\eqref{eq:var-star} holds. Next, let $c<\log_{10} 2$. Then we claim 
\[
\sum_{0\le j\le c\log_{10}\log_{10} n} \frac{n_j}{\sqrt{k_j}} = O\left(n^{\frac34+\epsilon}\right)
\]
for any $\epsilon>0$, which will yield~\eqref{eq:var-star-star}.

In the remainder, we omit floor/ceiling operators whenever convenient. Note that for $k_j$ defined in~\eqref{eq:k-j},
\[
k_j = 2\lfloor(1/2)f_j^3\cdot n\rfloor +1\ge f_j^3 n -1.
\] 
Moreover, for $n_j$ appearing in~\eqref{eq:n-j},
\[
n_j = \left\lfloor \frac{n}{A}\sum_{0\le i\le j}f_i\right\rfloor - \left\lfloor \frac{n}{A}\sum_{0\le i\le j-1}f_i\right\rfloor \le \frac{n}{A}f_j+1\le nf_j+1,
\]
using the fact that $A\ge 1$ per~\eqref{eq:kr-A}. Next, using $f_j\le 1$, we have
\[
    nf_j\ge n f_j^3 \ge n\cdot \exp_{10}\Bigl(-3\cdot 2^{c\log_{10} \log_{10} n}\Bigr) = \exp_{10}\left(\log_{10} n -3\cdot\left(\log_{10} n\right)^{c'}\right)=\omega(1)
\]
where $c'=c\log_{10}2<(\log_{10}2)^2<1$. Here, we used the fact that $j\le L$, where $L=c\log_{10}\log_{10}n$ appears in Proposition~\ref{prop:kr-stable} with $c>0$ small enough.

Namely,  $nf_j,nf_j^3=\omega(1)$. Employing this, together with $n_j\le nf_j$ and $k_j\ge nf_j^3-1$ established above, we have
\begin{align*}
 O\left(\sum_{0\le j\le c\log_{10}\log_{10} n}n_j\cdot k_j^{-\frac14} \right) &= O\left(\sum_{0\le j\le  c\log_{10}\log_{10}n}nf_j\cdot (nf_j^3)^{-\frac14}\right)\\
  &=O\left(\sum_{0\le j\le c\log_{10}\log_{10} n}n^{\frac34}f_j^{\frac14}\right)\\
  &=O\left(\log_{10}\log_{10} n\cdot n^{\frac34}\right)\\
  &=O\left(n^{\frac34+\epsilon}\right),\quad\forall \epsilon>0,
\end{align*}
where the first line uses the bounds $n_j\le nf_j+1$ and $k_j\ge nf_j^3-1$ and the  third line uses the fact $f_j=o(1)$. 
This concludes the proof of Lemma~\ref{lemma:variance-up-bd}.
\end{proof}

Lemma~\ref{lemma:variance-up-bd} yields
\begin{equation}\label{eq:var-up-bd2}
\mathbb{E}\bigl[\Delta_{i,i'}^2\bigr] = {\rm Var}\bigl(\Delta_{i,i'}\bigr)  = O\left(n^{\frac34+\epsilon}\right)
\end{equation}
for any $\epsilon>0$. 
Now, recall that 
\[
\ip{R_i}{\sigma} = U(i,i')+\Delta_{i,i'}\qquad\text{and}\qquad \ip{R_{i'}}{\sigma} = U(i',i)+\Delta_{i',i}.
\]
In particular, using Chebyshev's inequality and~\eqref{eq:var-up-bd2}, we obtain
\begin{equation}\label{eq:chebyshevv}
\mathbb{P}\Bigl(\bigl|\Delta_{i',i}\bigr|>k^{\frac25}\Bigr)=\mathbb{P}\Bigl(\bigl|\Delta_{i,i'}\bigr|>k^{\frac25}\Bigr) \le k^{-\frac45}\mathbb{E}\Bigl[\Delta_{i,i'}^2\Bigr] = O\left(n^{-\frac{1}{20}+\epsilon}\right),
\end{equation}
for any $\epsilon>0$, as $k=\Theta(n)$. Equipped with these,
\begin{align*}
    \mathbb{P}\Bigl(\ip{R_i}{\sigma}<x,\ip{R_{i'}}{\sigma}<x\Bigr) & = \mathbb{P}\Bigl(U(i,i')+\Delta_{i,i'}<x,U(i',i)+\Delta_{i',i}<x\Bigr)\\
    &\le \mathbb{P}\Bigl(U(i,i')<x+k^{\frac25} ,U(i',i)<x+k^{\frac25}\Bigr) + \mathbb{P}\Bigl(\bigl|\Delta_{i,i'}\bigr|>k^{\frac25}\Bigr) + \mathbb{P}\Bigl(\bigl|\Delta_{i',i}\bigr|>k^{\frac25}\Bigr)\\
    & = \mathbb{P}\Bigl(U(i,i')<x+k^{\frac25} ,U(i',i)<x+k^{\frac25}\Bigr)+ O\left(n^{-\frac{1}{20}+\epsilon}\right),
\end{align*}
where the last line uses~\eqref{eq:chebyshevv}. Next, we establish an auxiliary lemma. 
\begin{lemma}\label{lemma:iid-gaussianity}
The random variables
\[
\frac{1}{\sqrt{S_T}}\sum_{1\le j\le S_T}X_{ij}\widetilde{\sigma}_j\qquad\text{and}\qquad \frac{1}{\sqrt{S_T}}\sum_{1\le j\le S_T}X_{i'j}\widetilde{\sigma}_j
\]
are i.i.d.\,standard normal.
\end{lemma}
\begin{proof}[Proof of Lemma~\ref{lemma:iid-gaussianity}]
Note that $X_{ij}$, $1\le j\le S_T$ and $X_{i'j}$, $1\le j\le S_T$ are i.i.d. $\cN(0,1)$. Moreover, $\widetilde{\sigma}_j$, $1\le j\le S_T$ is an i.i.d.\,collection with
\begin{equation}\label{eq:rademacher}
\mathbb{P}\bigl[\widetilde{\sigma}_j = 1\bigr] = \frac12=\mathbb{P}\bigl[\widetilde{\sigma}_j = -1\bigr]
\end{equation}
and that $\widetilde{\sigma}_j$ are independent of $X_{ij}$ and $X_{i'j}$. Next, we show if $X\distr \cN(0,1)$ and $\widetilde{\sigma}$ has the distribution~\eqref{eq:rademacher} and is independent of $X$, then $\widetilde{\sigma} X\distr\cN(0,1)$. To see this, we rely on characteristic functions: for any $t\in\R$, 
\begin{align*}
    \mathbb{E}\bigl[e^{it\widetilde{\sigma}X}\bigr] &=  \mathbb{E}\bigl[e^{it\widetilde{\sigma}X}\big\lvert\widetilde{\sigma}=1\bigr]\mathbb{P}\bigl[\widetilde{\sigma}=1\bigr] + \mathbb{E}\bigl[e^{it\widetilde{\sigma}X}\big\lvert\widetilde{\sigma}=-1\bigr]\mathbb{P}\bigl[\widetilde{\sigma}=-1\bigr]\\
    &=\frac12\left(\mathbb{E}\bigl[e^{itX}\bigr] + \mathbb{E}\bigl[e^{-itX}\bigr]\right)\\
    &=\exp\bigl(-\frac{t^2}{2}\bigr).
\end{align*}
Using L{\'e}vy's inversion theorem~\cite[Section~16.6]{williams1991probability}, it follows that $\widetilde{\sigma}X\distr\cN(0,1)$. Applying this fact, since $X_{ij}\widetilde{\sigma}_j$, $1\le j\le S_T$ is an i.i.d. $\cN(0,1)$  collection, we deduce
\[
\frac{1}{\sqrt{S_T}}\sum_{1\le j\le S_T}X_{ij}\widetilde{\sigma}_j\distr \cN(0,1)\qquad\text{and}\qquad \frac{1}{\sqrt{S_T}}\sum_{1\le j\le S_T}X_{i'j}\widetilde{\sigma}_j\distr \cN(0,1).
\]
Finally, we show $ X_{ij}\widetilde{\sigma}_j\perp X_{i'j}\widetilde{\sigma}_j$, which will yield Lemma~\ref{lemma:iid-gaussianity}. Once again, we rely on L{\'e}vy's inversion theorem. Let \[
\boldsymbol{Z} = \left(X_{ij}\widetilde{\sigma}_j,X_{i'j}\widetilde{\sigma}_j\right)\qquad\text{and}\qquad
\boldsymbol{t} = (t_1,t_2).
\]
We have
\begin{align*}
    \mathbb{E}\bigl[e^{i\boldsymbol{t}^T\boldsymbol{Z}}\bigr] & = \mathbb{E}\left[e^{i\widetilde{\sigma}_j\bigl(t_1X_{ij}+t_2X_{i'j}\bigr)}\right] \\
    &=\mathbb{E}\left[e^{i\widetilde{\sigma}_j\bigl(t_1X_{ij}+t_2X_{i'j}\bigr)}\Big\lvert \widetilde{\sigma}_j=1\right]\mathbb{P}\bigl[\widetilde{\sigma}_j=1\bigr] + \mathbb{E}\left[e^{i\widetilde{\sigma}_j\bigl(t_1X_{ij}+t_2X_{i'j}\bigr)}\Big\lvert \widetilde{\sigma}_j=-1\right]\mathbb{P}\bigl[\widetilde{\sigma}_j=-1\bigr]\\
    &=\exp\left(-\frac{t_1^2+t_2^2}{2}\right),
\end{align*}
where we used the fact $t_1X_{ij}+t_2X_{i'j}\distr \cN\bigl(0,t_1^2+t_2^2\bigr)$. Clearly, 
\[
\mathbb{E}\bigl[e^{i\boldsymbol{t}^T\boldsymbol{Z}}\bigr] = \mathbb{E}\bigl[e^{it_1X_{ij}\widetilde{\sigma}_j}\bigr]\mathbb{E}\bigl[e^{it_2X_{i'j}\widetilde{\sigma}_j}\bigr],
\]
since $t_1X_{ij}\widetilde{\sigma}_j\distr \cN(0,t_1^2)$ and $t2_2X_{i'j}\widetilde{\sigma}_j\distr \cN(0,t_2^2)$. Since $t_1,t_2\in\R$ are arbitrary, we complete the proof of Lemma~\ref{lemma:iid-gaussianity} by appealing once again to L{\'e}vy's inversion theorem.
\end{proof}
Denote  $\Phi(t)=\mathbb{P}[\mathcal{N}(0,1)\le t]$. For $S_T=\sum_{0\le t\le T}n_t$, we have
\begin{align*}
&\mathbb{P}\Bigl(U(i,i')<x+k^{\frac25} ,U(i',i)<x+k^{\frac25}\Bigr)\\
&= \mathbb{P}\left(\frac{1}{\sqrt{S_T}}\sum_{1\le j\le S_T}X_{ij}\widetilde{\sigma}_j,\frac{1}{\sqrt{S_T}}\sum_{1\le j\le S_T}X_{i'j}\widetilde{\sigma}_j<\frac{x}{\sqrt{S_T}}-\frac{1}{\sqrt{S_T}}\sum_{0\le t\le T}n_t\sqrt{\frac{2k_t}{\pi k^2}}+\frac{k^{\frac25}}{\sqrt{S_T}} \right) \\
&=\Phi\left(\frac{x}{\sqrt{S_T}}-\frac{1}{\sqrt{S_T}}\sum_{0\le t\le T}n_t\sqrt{\frac{2k_t}{\pi k^2}}+\frac{k^{\frac25}}{\sqrt{S_T}}\right)^2,
\end{align*}
where the second line uses the expressions for $U(i,i')$ and $U(i',i)$ per~\eqref{eq:U-i-i'} and for $\mu_t$ per~\eqref{eq:eq-mu-t}; and 
 the last line uses Lemma~\ref{lemma:iid-gaussianity}.
 Observe next that $\Phi(\cdot)$ is trivially $1-$Lipschitz: 
\[
\bigl|\Phi(t_1)-\Phi(t_2)\bigr| = \int_{\min\{t_1,t_2\}}^{\max\{t_1,t_2\}}\frac{1}{\sqrt{2\pi}}e^{-\frac{t^2}{2}}\;dt \le \bigl|t_1-t_2\bigr|.
\]
With this, we have
\[
\Phi\left(\frac{x}{\sqrt{S_T}}-\frac{1}{\sqrt{S_T}}\sum_{0\le t\le T}n_t\sqrt{\frac{2k_t}{\pi k^2}}+\frac{k^{\frac25}}{\sqrt{S_T}}\right) - \Phi\left(\frac{x}{\sqrt{S_T}}-\frac{1}{\sqrt{S_T}}\sum_{0\le t\le T}n_t\sqrt{\frac{2k_t}{\pi k^2}}\right)\le \frac{k^{\frac25}}{\sqrt{S_T}}.
\]
Moreover, note that $\Theta(n) = n_0\le S_T\le n$ yields $S_T=\Theta(n)$, hence in particular $k^{2/5}/\sqrt{S_T} = \Theta(n^{-1/10})$ as $k=\Theta(n)$ too. Consequently,
\begin{align*}
\Phi\left(\frac{x}{\sqrt{S_T}}-\frac{1}{\sqrt{S_T}}\sum_{0\le t\le T}n_t\sqrt{\frac{2k_t}{\pi k^2}}+\frac{k^{\frac13}}{\sqrt{S_T}}\right)^2
\le \Phi\left(\frac{x}{\sqrt{S_T}}-\frac{1}{\sqrt{S_T}}\sum_{0\le t\le T}n_t\sqrt{\frac{2k_t}{\pi k^2}}\right)^2 + O\left(n^{-\frac{1}{10}}\right).
\end{align*}
Likewise, using inequality $\mathbb{P}(\mathcal{E}_1\cap\mathcal{E}_2)\ge \mathbb{P}(\mathcal{E}_1)-\mathbb{P}(\mathcal{E}_2^c)$ valid for all events $\mathcal{E}_1,\mathcal{E}_2$, we have
\begin{align*}
\mathbb{P}\Bigl(\ip{R_i}{\sigma}<x,\ip{R_{i'}}{\sigma}<x\Bigr) &\ge \mathbb{P}\Bigl(U(i,i')<x-k^{\frac25},U(i',i)<x-k^{\frac25},\Delta_{i,i'}<k^{\frac13},\Delta_{i',i}<k^{\frac25}\Bigr) \\
&\ge \mathbb{P}\Bigl(U(i,i')<x-k^{\frac25},U(i',i)<x-k^{\frac25}\Bigr)  - \mathbb{P}\Bigl(\Delta_{i,i'}>k^{\frac25}\quad\text{or}\quad \Delta_{i',i}>k^{\frac25}\Bigr) \\ 
&\ge \Phi\left(\frac{x}{\sqrt{S_T}}-\frac{1}{\sqrt{S_T}}\sum_{0\le t\le T}n_t\sqrt{\frac{2k_t}{\pi k^2}}\right)^2 - O\left(n^{-\frac{1}{20}+\epsilon}\right),
\end{align*}
where in the last line we have once again used~\eqref{eq:chebyshevv}.
Combining these, we arrive at
\begin{align}
\left| \mathbb{P}\Bigl(\ip{R_i}{\sigma}<x,\ip{R_{i'}}{\sigma}<x\Bigr) - \Phi\left(\frac{x}{\sqrt{S_T}}-\frac{1}{\sqrt{S_T}}\sum_{0\le t\le T}n_t\sqrt{\frac{2k_t}{\pi k^2}}\right)^2\right| =  O\left(n^{-\frac{1}{20}+\epsilon}\right)\label{eq:joint-Ri-Ri'}.
\end{align}
Similarly, we have
\begin{equation}\label{eq:dist-Ri}
    \left|\mathbb{P}\Bigl(\ip{R_i}{\sigma}<x\Bigr) - \Phi\left(\frac{x}{\sqrt{S_T}}-\frac{1}{\sqrt{S_T}}\sum_{0\le t\le T}n_t\sqrt{\frac{2k_t}{\pi k^2}}\right) \right|= O\left(n^{-\frac{1}{20}+\epsilon}\right).
    \end{equation}
Next, let
\[
\xi \triangleq \Phi\left(\frac{x}{\sqrt{S_T}}-\frac{1}{\sqrt{S_T}}\sum_{0\le t\le T}n_t\sqrt{\frac{2k_t}{\pi k^2}}\right).
\]
Moreover, set
\begin{align*}
    \Delta_1&\triangleq  \mathbb{P}\Bigl(\ip{R_i}{\sigma}<x,\ip{R_{i'}}{\sigma}<x\Bigr) - \Phi\left(\frac{x}{\sqrt{S_T}}-\frac{1}{\sqrt{S_T}}\sum_{0\le t\le T}n_t\sqrt{\frac{2k_t}{\pi k^2}}\right)^2 \\
    \Delta_2&\triangleq \mathbb{P}\Bigl(\ip{R_i}{\sigma}<x\Bigr) - \Phi\left(\frac{x}{\sqrt{S_T}}-\frac{1}{\sqrt{S_T}}\sum_{0\le t\le T}n_t\sqrt{\frac{2k_t}{\pi k^2}}\right).
\end{align*}
Then, $|\Delta_1|,|\Delta_2|=O(n^{-\frac{1}{20}+\epsilon})$. Using~\eqref{eq:joint-Ri-Ri'} and~\eqref{eq:dist-Ri}, we obtain
\begin{align*}
   & \Bigg|\mathbb{E}\Bigg[\Bigl(\ind\bigl\{\ip{R_i}{\sigma}<x\bigr\}-\xi\Bigr)\Bigl(\ind\bigl\{\ip{R_{i'}}{\sigma}<x\bigr\}-\xi\Bigr)\Bigg]\Bigg|\\
    &=\Bigg|\mathbb{P}\Bigl(\ip{R_i}{\sigma}<x,\ip{R_{i'}}{\sigma}<x\Bigr)-2\mathbb{P}\bigl(\ip{R_i}{\sigma}<x\bigr)\xi + \xi^2\Bigg|\\
    &=\Bigl|\Delta_1+\xi^2 - 2\xi\bigl(\Delta_2+\xi\bigr)+\xi^2\Bigr| \\
    &=\Bigl|\Delta_1-2\xi\Delta_2\Bigr| \\
    &=O\Bigl(n^{-\frac{1}{20}+\epsilon}\Bigr)
\end{align*}
as $\xi<1$. Hence, we arrive at
\begin{align*}
    &\mathbb{E}\left[\left(\Bigl|\{1\le i\le k:\ip{R_i}{\sigma}<x\}\Bigr| - k\Phi\left(\frac{x}{\sqrt{S_T}}-\frac{1}{\sqrt{S_T}}\sum_{0\le t\le T}n_t\sqrt{\frac{2k_t}{\pi k^2}}\right)\right)^2\right] \\& = \mathbb{E}\left[\left(\sum_{1\le i\le k}\left(\ind\bigl\{\ip{R_i}{\sigma}<x\bigr\} -\Phi\left(\frac{x}{\sqrt{S_T}}-\frac{1}{\sqrt{S_T}}\sum_{0\le t\le T}n_t\sqrt{\frac{2k_t}{\pi k^2}}\right)\right)\right)^2\right]\\
    &= O(k) + k^2 O\left(n^{-\frac{1}{20}+\epsilon}\right) =O\left(n^{\frac{39}{20}+\epsilon}\right),
\end{align*}
since $k\le n$. This concludes the proof of Lemma~\ref{lemma:main-card}.
\end{proof}

\paragraph{Index sets are nearly identical.}
Next, assume that the algorithm completed $\ell-1$ rounds and generated  $\sigma,\overline{\sigma}\in\{\pm 1\}^{\sum_{0\le t\le \ell-1}n_t}$. Recall that $\mathcal{I}_\ell$ is the set of (row) indices $1\le i\le k$ corresponding to smallest $k_\ell$ elements among $\left(\sum_{0\le t\le \ell-1}n_t\right)^{-\frac12}\ip{R_i}{\sigma}$, $1\le i\le k$. Likewise, $\overline{\mathcal{I}}_\ell$ denotes the corresponding set of indices for $\left(\sum_{0\le t\le \ell-1}n_t\right)^{-\frac12}\ip{\overline{R}_i}{\overline{\sigma}}$, $1\le i\le k$. In particular, Lemma~\ref{lemma:main-card} yields that there is an $x_\ell$ such that w.h.p.,
\[
\mathcal{I}_\ell \approx \Bigl\{1\le i\le k:\ip{R_i}{\sigma}<x_\ell\Bigr\}\qquad\text{and}\qquad \overline{\mathcal{I}}_\ell \approx \Bigl\{1\le i\le k:\ip{\overline{R_i}}{\overline{\sigma}}<x_\ell\Bigr\}.
\]
We now show $\mathcal{I}_\ell$ and $\overline{\mathcal{I}}_\ell$ are nearly identical: $|\mathcal{I}_\ell\cap \overline{\mathcal{I}}_\ell|\ge k_\ell-o(k_\ell)$. 
\begin{lemma}\label{lemma:almost-full-intesect} Recall $J_\ell$ from~\eqref{eq:J-ell} and assume that $|J_{\ell-1}|\le n^{1-\alpha_{\ell-1}}$ for $\alpha_{\ell-1}$ defined in~\eqref{eq:alpha-ell}. Then,
\[
\Bigl|\mathcal{I}_\ell\cap\overline{\mathcal{I}}_\ell\Bigr|\ge k_\ell - O\left(n^{1-\alpha_{\ell-1}/4}\right)
\]
with probability at least $1-O\left(n^{-1/40+\epsilon}\right)$,  where $\epsilon>0$ is arbitrary. 
\end{lemma}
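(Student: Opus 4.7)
The plan is to combine the aggregate stability statement Lemma~\ref{lemma:ip-and-bar-ip-close} with the thresholding concentration Lemma~\ref{lemma:main-card} so as to control the symmetric difference $\mathcal{I}_\ell\triangle\overline{\mathcal{I}}_\ell$. Pick the cutoff $t_\ell\triangleq n^{1/2-\alpha_{\ell-1}/4}$ and call an index $i$ \emph{bad} if $|\ip{R_i}{\sigma}-\ip{\overline{R}_i}{\overline{\sigma}}|>t_\ell$. Apply Lemma~\ref{lemma:ip-and-bar-ip-close} with $J=J_{\ell-1}$ (whose hypothesis $|J|\le n^{1-\alpha_{\ell-1}}$ is exactly what is assumed), noting that for $\tau=n^{-2\alpha_0}$ and $\alpha_{\ell-1}\le\alpha_0$ the $n^{-\alpha_{\ell-1}/2}$ term dominates on the right-hand side of~\eqref{eq:stability-main}. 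This yields
\[
\sum_{1\le i\le k}\Bigl|\ip{R_i}{\sigma}-\ip{\overline{R}_i}{\overline{\sigma}}\Bigr|\ \le\ C'\,n^{3/2-\alpha_{\ell-1}/2}
\]
with probability at least $1-\exp(-k/3)$. Markov's inequality then bounds the bad-set size by $|\mathcal{B}|\le C'n^{1-\alpha_{\ell-1}/4}$.

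Next, let $x_\ell$ be the $k_\ell$-th smallest element of $\{\ip{R_i}{\sigma}\}_{i=1}^k$ so that $\mathcal{I}_\ell=\{i:\ip{R_i}{\sigma}\le x_\ell\}$ up to (a.s. absent) ties, and define $\overline{x}_\ell$ analogously. For every good $i$, $\ip{\overline{R}_i}{\overline{\sigma}}<x_\ell-t_\ell$ forces $\ip{R_i}{\sigma}<x_\ell$, and hence
\[
\Bigl|\{i:\ip{\overline{R}_i}{\overline{\sigma}}<x_\ell-t_\ell\}\Bigr|\ \le\ |\mathcal{I}_\ell|+|\mathcal{B}|\ \le\ k_\ell+O(n^{1-\alpha_{\ell-1}/4}),
\]
with the reverse direction being symmetric. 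Consequently $\overline{x}_\ell\in[x_\ell-t_\ell,\,x_\ell+t_\ell]$ up to an $O(n^{1-\alpha_{\ell-1}/4})$ rank slack, so any good $i\in\mathcal{I}_\ell\setminus\overline{\mathcal{I}}_\ell$ must satisfy $\ip{R_i}{\sigma}\in W\triangleq(x_\ell-2t_\ell,\,x_\ell+2t_\ell)$.

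It remains to bound $|\{i:\ip{R_i}{\sigma}\in W\}|$. Two applications of Lemma~\ref{lemma:main-card} at $x=x_\ell\pm 2t_\ell$ combined with Chebyshev give a deviation probability of order
\[
\frac{O(n^{39/20+\epsilon})}{n^{2(1-\alpha_{\ell-1}/4)}}\ =\ O\Bigl(n^{-1/20+\alpha_{\ell-1}/2+\epsilon}\Bigr)\ \subseteq\ O\Bigl(n^{-1/40+\epsilon}\Bigr),
\]
since $\alpha_{\ell-1}\le\alpha_0=0.01<1/20$. As $\Phi$ is $(2\pi)^{-1/2}$-Lipschitz and $S_T=\Theta(n)$, the deterministic window count is
\[
k\Bigl[\Phi\bigl(g(x_\ell+2t_\ell)\bigr)-\Phi\bigl(g(x_\ell-2t_\ell)\bigr)\Bigr]\ =\ O\Bigl(k\cdot t_\ell/\sqrt{S_T}\Bigr)\ =\ O\bigl(n^{1-\alpha_{\ell-1}/4}\bigr).
\]
Adding in the $|\mathcal{B}|=O(n^{1-\alpha_{\ell-1}/4})$ bad indices yields $|\mathcal{I}_\ell\setminus\overline{\mathcal{I}}_\ell|=O(n^{1-\alpha_{\ell-1}/4})$, and since $|\mathcal{I}_\ell|=|\overline{\mathcal{I}}_\ell|=k_\ell$ the lemma follows.

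The main obstacle is the rank-comparison step between $x_\ell$ and $\overline{x}_\ell$: one must invoke Lemma~\ref{lemma:main-card} separately on the perturbed ensemble, justified by the distributional identity $\overline{\M}(\tau)\distr\M$, and carefully book-keep the three probabilistic failure events (Lemma~\ref{lemma:ip-and-bar-ip-close}, and the two Chebyshev tails at $x_\ell\pm 2t_\ell$) so that their union stays $O(n^{-1/40+\epsilon})$. Once the quantile comparison is in place, the window count of Step~3 immediately closes the loop.
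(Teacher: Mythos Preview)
Your plan is essentially identical to the paper's proof: the cutoff $t_\ell=n^{1/2-\alpha_{\ell-1}/4}$ is exactly the paper's $n^{1/2-\beta_{\ell-1}}$ (since $\beta_{\ell-1}=\alpha_{\ell-1}/4$), your bad set $\mathcal{B}$ is the paper's ${\rm LG}_{\ell-1}$, and the window/Lipschitz argument is the same.

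One technical point to flag: you take $x_\ell$ to be the \emph{empirical} $k_\ell$-th order statistic, which is random, and then invoke Lemma~\ref{lemma:main-card} at $x_\ell\pm 2t_\ell$. But that lemma is an $L^2$ bound for each \emph{fixed} $x$, so Chebyshev at a data-dependent threshold does not follow directly. The paper sidesteps this by defining $x_\ell$ \emph{deterministically} via $k\,\Phi\bigl(S_{\ell-1}^{-1/2}(x_\ell-\sum_s n_s\sqrt{2k_s/\pi k^2})\bigr)=k_\ell$, then applying Lemma~\ref{lemma:main-card} at the deterministic points $x_\ell$ and $x_\ell\pm t_\ell$ (each costing $O(n^{-1/40+\epsilon})$), and finally relating $\mathcal{I}_\ell$ to $\mathcal{S}(x_\ell)=\{i:\ip{R_i}{\sigma}<x_\ell\}$ by a one-line case split on whether the empirical $k_\ell$-th value lies above or below $x_\ell$. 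Swapping your empirical quantile for this theoretical one fixes the issue with no change to the rest of your argument.
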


\begin{proof}[Proof of Lemma~\ref{lemma:almost-full-intesect}]
Recall $S_{\ell-1}=\sum_{0\le s\le \ell-1}n_s$ appearing in Lemma~\ref{lemma:main-card}. We find $x_\ell\in\mathbb{R}$ satisfying
\begin{equation}\label{eq:x-ell-k-ell}
k\Phi\left(\frac{1}{\sqrt{S_{\ell-1}}}\left(x_\ell-\sum_{0\le s\le \ell-1}n_s\sqrt{\frac{2k_s}{\pi k^2}}\right)\right) = k_\ell
\end{equation}
as $\Phi(\cdot)$ is continuous. For this choice of $x_\ell$, using Lemma~\ref{lemma:main-card} with $T=\ell-1$ and applying Markov's inequality, we arrive at
\begin{equation}\label{eq:markoff}
\mathbb{P}\left(\Bigl|\bigl|\{1\le i\le k:\ip{R_i}{\sigma}<x_\ell\}\bigr| - k_\ell\Bigr|>n^{\frac{79}{80}}\right)\le O\left(n^{-\frac{1}{40}+\epsilon}\right),
\end{equation}
where $\epsilon>0$ is arbitrary. 

We now claim that as long as $\ell \le c\log_{10}\log_{10}n$ for $c>0$ small enough, 
\[
k_\ell = \omega\Bigl(n^{\frac{79}{80}}\Bigr).
\]
Recall $k_\ell$ from~\eqref{eq:k-j}. We have 
\[
k_\ell =2\left\lfloor \frac12 nf_\ell^3\right\rfloor+1 \ge nf_\ell^3-1 = n\cdot 10^{-3\cdot 2^\ell}-1\ge n\cdot 10^{-3\cdot 2^{c\log_{10}\log_{10}n}}-1.
\]
Above, we used the fact $\ell\le L\le c\log_{10}\log_{10}n$.

Rearranging, we have
\[
n\cdot 10^{-3\cdot 2^{c\log_{10}\log_{10}n}}-1 = \exp_{10}\left(\log_{10}n-3\cdot\bigl(\log_{10}n\bigr)^{c\log_{10}2}\right)-1.
\]
From here, it is evident that if $c\log_{10}2<1$, then indeed $k_\ell=\omega(n^{79/80})$. Consequently, \eqref{eq:markoff} yields
\[
k_\ell+O\left(n^{\frac{79}{80}}\right)\ge \Bigl|\{1\le i\le k:\ip{R_i}{\sigma}<x_\ell\}\Bigr|\ge k_\ell -O\left(n^{\frac{79}{80}}\right),
\]
with probability $1-O(n^{-1/40+\epsilon})$. Define next
the sets
\begin{equation}\label{eq:lg-ell}
    {\rm LG}_j \triangleq \Bigl\{1\le i\le  k:\bigl|\ip{R_i}{\sigma} - \ip{\overline{R}_i}{\overline{\sigma}}\bigr|\ge n^{1/2-\beta_j}\Bigr\},
\end{equation}
where  the inner product appearing in ${\rm LG}_j$ is taken over $\R^{\sum_{0\le s\le j}n_s}$ and
\begin{equation}\label{eq:beta-ell}
    \beta_j = \alpha_j/4 = \alpha_0 \cdot 10^{-j}/4,\quad \text{where}\quad \alpha_0=0.04.
\end{equation}
Note that under the assumption $|J_{\ell-1}|\le n^{1-\alpha_{\ell-1}}$,  Lemma~\ref{lemma:ip-and-bar-ip-close} yields
\[
\sum_{1\le  i\le k}\left|\ip{R_i}{\sigma}-\ip{\overline{R}_i}{\overline{\sigma}}\right|\le Ck\sqrt{n}\left(\tau \log_{10} n+n^{-\alpha_{\ell-1}/2}\right),
\]
(where the inner product is taken over $\R^{S_{\ell-1}}$) with probability at least $1-\exp(-k/3)$. Note also from the definition of ${\rm LG}_{\ell-1}$ that
\[
\sum_{1\le  i\le k}\left|\ip{R_i}{\sigma}-\ip{\overline{R}_i}{\overline{\sigma}}\right|\ge |{\rm LG}_{\ell-1}|n^{1/2-\beta_{\ell-1}}.
\]
Since $k=\Theta(n)$, we arrive at
\[
\mathbb{P}\Bigl(\bigl|{\rm LG}_{\ell-1}\bigr|=O\left(n^{1+\beta_{\ell-1}-\alpha_{\ell-1}/2}\right)\Bigr)\ge  1-\exp\left(-k/2\right).
\]
Next, introduce the sets
\[
\mathcal{S}(x) \triangleq \Bigl\{1\le i\le  k:\ip{R_i}{\sigma}<x\Bigr\}\qquad\text{and}\qquad \overline{\mathcal{S}}(x) \triangleq \Bigl\{1\le i\le  k:\ip{\overline{R_i}}{\overline{\sigma}}<x\Bigr\}.
\]
In particular by~\eqref{eq:markoff} w.p. at least $1-O\bigl(n^{-1/40+\epsilon}\bigr)$,
\[
k_\ell- O\left(n^{\frac{79}{80}}\right)\le \bigl|\mathcal{S}(x_\ell)\bigr|,\,\,\bigl|\overline{\mathcal{S}}(x_\ell)\bigr| \le k_\ell + O\left(n^{\frac{79}{80}}\right).
\]
We now record some useful set inclusion properties (each holding w.p. $1-O\bigl(n^{-\frac{1}{40}+\epsilon}\bigr)$, which is suppressed for convenience).
\begin{itemize}
    \item We claim 
    \[
    \bigl|\mathcal{I}_\ell\cap \mathcal{S}(x_\ell)\bigr| \ge k_\ell -O\left(n^{\frac{79}{80}}\right)\quad\text{and}\quad \bigl|\overline{\mathcal{I}}_\ell\cap \overline{\mathcal{S}}(x_\ell)\bigr| \ge k_\ell -O\left(n^{\frac{79}{80}}\right).
    \]
    To see this, let $\overline{x_\ell}=\max_{i\in\mathcal{I}_\ell}\ip{R_i}{\sigma}$. Note that if $\overline{x_\ell}\le x_\ell$, then $\mathcal{I}_\ell\subset \mathcal{S}\bigl(x_\ell\bigr)$, yielding the conclusion as $|\mathcal{I}_\ell|=k_\ell$. If $\overline{x_\ell}>x_\ell$, then $\mathcal{S}\bigl(x_\ell\bigr)\subset \mathcal{I}_\ell$, and we have the claim since $|\mathcal{S}\bigl(x_\ell\bigr)|\ge k_\ell - O\bigl(n^{\frac{79}{80}}\bigr)$. The same argument applies also to $\overline{\mathcal{S}}(x_\ell)$ and $\overline{\mathcal{I}}_\ell$.
    \item Next, observe that if  $i\in \mathcal{S}\left(x_\ell-n^{1/2-\beta_{\ell-1}}\right)\setminus {\rm LG}_{\ell-1}$ then $i\in \overline{\mathcal{S}}(x_\ell)$. Likewise, if $i\in\overline{\mathcal{S}}(x_\ell)\setminus {\rm LG}_{\ell-1}$, then $i\in \mathcal{S}\left(x_\ell+n^{1/2-\beta_{\ell-1}}\right)$. That is, except for the indices in ${\rm LG}_{\ell-1}$, we have
    \[
    \mathcal{S}\left(x_\ell-n^{1/2-\beta_{\ell-1}}\right)\subset\overline{\mathcal{S}}(x_\ell)\subset \mathcal{S}\left(x_\ell+n^{1/2-\beta_{\ell-1}}\right).
    \]
    Recall now the relation between $k_\ell$ and $x_\ell$ from~\eqref{eq:x-ell-k-ell}. 
    Using the fact  $\Phi(\cdot)$ is $1-$Lipschitz, we obtain
    \begin{align*}
      \left|k_\ell - k\Phi\left(\frac{1}{\sqrt{S_{\ell-1}}}\left(x_\ell -n^{\frac12-\beta_{\ell-1}}-\sum_{0\le s\le \ell-1}n_s\sqrt{\frac{2k_s}{\pi k^2}}\right)\right)\right|\le k\frac{n^{\frac12-\beta_{\ell-1}}}{\sqrt{S_{\ell-1}}} = O\left(n^{1-\beta_{\ell-1}}\right),
    \end{align*}
    as $k=\Theta(n)$ and $\Theta(n)=n_0\le \sum_{0\le s\le \ell-1}n_s =S_{\ell-1}\le n$ and thus $S_{\ell-1}=\Theta(n)$. Consequently,
   \begin{align*}
   \Bigl| \mathcal{S}\left(x_\ell-n^{1/2-\beta_{\ell-1}}\right)\Bigr|&\ge k_\ell - O(n^{79/80}) -O\left(n^{1-\beta_{\ell-1}}\right) \\  \Bigl| \mathcal{S}\left(x_\ell+n^{1/2-\beta_{\ell-1}}\right)\Bigr|&\le k_\ell + O(n^{79/80}) +O\left(n^{1-\beta_{\ell-1}}\right).
     \end{align*}
\end{itemize}
Finally, observe that any subset of 
\[
\mathcal{S}\left(x_\ell-n^{1/2-\beta_{\ell-1}}\right) \setminus {\rm LG}_{\ell-1}
\]
of cardinality at least
\[
k_\ell - O\left(n^{\frac{79}{80}}\right)-O\left(n^{1-\beta_{\ell-1}}\right)-|{\rm LG}_{\ell-1}|
\]
is necessarily contained in $\mathcal{I}_\ell\cap \overline{\mathcal{I}}_\ell$. Thus, we arrive at
\begin{align*}
    \bigl|\mathcal{I}_\ell\cap\overline{\mathcal{I}}_\ell\bigr| &\ge k_\ell - O\left(n^{\frac{79}{80}}\right)-O\left(n^{1-\beta_{\ell-1}}\right)-|{\rm LG}_{\ell-1}|\\
    &\ge k_\ell -O\left(n^{\frac{79}{80}}\right)-O\left(n^{1-\beta_{\ell-1}}\right)-O\left(n^{1+\beta_{\ell-1} -\alpha_{\ell-1}/2}\right) \\
    &\ge k_\ell -O\left(n^{1-\alpha_{\ell-1}/4}\right)
\end{align*}
using~\eqref{eq:beta-ell} and~\eqref{eq:alpha-ell}. Noticing that this process is valid with probability at least
\[
1-O\left(n^{-1/40+\epsilon}\right)-\exp(-k/3)
\]
with $k=\Theta(n)$, the proof of Lemma~\ref{lemma:almost-full-intesect} is complete. 
\end{proof}
\paragraph{Index Sets Being Nearly  Identical Implies Next Block Being Nearly Identical.}
Denote
\[
\sigma\bigl(k:\ell\bigr) \triangleq \left(\sigma_i:k\le  i\le \ell\right)\in\{ -1,1\}^{\ell-k+1}.
\]
The last auxiliary result we need is the following. \begin{lemma}\label{lemma:kr-n0-p-1-to-n0-p-n1}
Suppose that the algorithm run for $T-1$ rounds generating $\sigma,\overline{\sigma}\in\{-1,1\}^{\sum_{0\le t\le T-1}n_t}$. Consider the inner products  $\ip{R_i}{\sigma}$, $1\le i\le k$ (taken on $\R^{\sum_{0\le t\le T-1}n_t}$) and let $\mathcal{I}_T$ with $|\mathcal{I}_T|=k_T$ be such that $i\in\mathcal{I}_T$ iff $\ip{R_i}{\sigma}$ is among $k_T$ smallest inner products. Similarly, define the set $\overline{\mathcal{I}}_T$ with $|\overline{\mathcal{I}}_T|=k_T$ for the collection $\ip{\overline{R}_i}{\overline{\sigma}}$, $1\le i\le k$; and the random variable
\[
\mathcal{T} \triangleq \bigl|\mathcal{I}_T\cap \overline{\mathcal{I}}_T\bigr|.
\]
Then, conditional on $\mathcal{T}=D$, 
\[
d_H\Bigl(\sigma\bigl(S_{T-1}+1:S_T\bigr),\overline{\sigma}\bigl(S_{T-1}+1:S_T\bigr)\Bigr)\distr {\rm Bin}\left(n_T,\frac12-\frac1\pi \sin^{-1}\Bigl(D\cos(\tau)/k_T\Bigr) \right).
\]
\end{lemma}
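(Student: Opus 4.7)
\medskip

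\noindent\textbf{Proof Proposal for Lemma~\ref{lemma:kr-n0-p-1-to-n0-p-n1}.}

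The plan is to reduce the Hamming distance to a sum of i.i.d.\ Bernoulli indicators by exploiting the column-wise independence structure of $\mathcal{M}$ and $\overline{\mathcal{M}}(\tau)$. First, observe that the sets $\mathcal{I}_T$ and $\overline{\mathcal{I}}_T$, as well as the indicator $\mathbf{1}\{\mathcal{T}=D\}$, are measurable with respect to the first $S_{T-1}$ columns of $\mathcal{M}$ and $\mathcal{M}'$, since they are computed from $\sigma,\overline{\sigma}$ and the partial inner products on $\R^{S_{T-1}}$. By contrast, for each $j$ with $S_{T-1}+1\le j\le S_T$ the round-$T$ assignments are
\[
\sigma_j = \sign\!\left(A_j\right),\quad A_j\triangleq \sum_{i\in\mathcal{I}_T}\mathcal{M}_{ij},\qquad \overline{\sigma}_j = \sign\!\left(B_j\right),\quad B_j\triangleq \sum_{i\in\overline{\mathcal{I}}_T}\overline{\mathcal{M}}_{ij}(\tau),
\]
which depend on the $j$-th column only (through entries that have not been inspected in earlier rounds). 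Since the columns $j\in[S_{T-1}+1,S_T]$ are mutually independent and jointly independent of the first $S_{T-1}$ columns, conditional on $(\mathcal{I}_T,\overline{\mathcal{I}}_T)$ the pairs $(A_j,B_j)$ are i.i.d.\ across $j$.

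Next I would identify the joint law of a single pair $(A_j,B_j)$ given $(\mathcal{I}_T,\overline{\mathcal{I}}_T)$. Each coordinate is a centered Gaussian with variance $k_T$, and using $\overline{\mathcal{M}}_{ij}(\tau)=\cos(\tau)\mathcal{M}_{ij}+\sin(\tau)\mathcal{M}'_{ij}$ together with the independence of $\mathcal{M},\mathcal{M}'$, one computes
\[
\mathbb{E}[A_j B_j] = \sum_{i\in\mathcal{I}_T}\sum_{i'\in\overline{\mathcal{I}}_T}\mathbb{E}[\mathcal{M}_{ij}\overline{\mathcal{M}}_{i'j}(\tau)] = \cos(\tau)\,\bigl|\mathcal{I}_T\cap\overline{\mathcal{I}}_T\bigr| = D\cos(\tau),
\]
so that $(A_j/\sqrt{k_T},B_j/\sqrt{k_T})$ is bivariate standard normal with correlation $\rho=D\cos(\tau)/k_T$. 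Applying Lemma~\ref{lemma:gaussian-orthant} together with the symmetry identity $\mathbb{P}[\sign(X)\ne\sign(Y)]=\tfrac12-\tfrac{1}{\pi}\sin^{-1}(\rho)$ for a centered bivariate Gaussian with correlation $\rho$, we obtain
\[
\mathbb{P}\bigl[\sigma_j\ne\overline{\sigma}_j\,\big|\,\mathcal{I}_T,\overline{\mathcal{I}}_T\bigr] \;=\; \frac12-\frac{1}{\pi}\sin^{-1}\!\left(\frac{D\cos(\tau)}{k_T}\right),
\]
a quantity depending on $(\mathcal{I}_T,\overline{\mathcal{I}}_T)$ only through $D$.

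Finally, the Hamming distance in question is the sum of the $n_T$ i.i.d.\ Bernoulli indicators $\mathbf{1}\{\sigma_j\ne\overline{\sigma}_j\}$ for $j\in[S_{T-1}+1,S_T]$, so conditional on $(\mathcal{I}_T,\overline{\mathcal{I}}_T)$ it is $\mathrm{Bin}(n_T,\tfrac12-\tfrac{1}{\pi}\sin^{-1}(D\cos(\tau)/k_T))$. Since this conditional law depends on $(\mathcal{I}_T,\overline{\mathcal{I}}_T)$ only through $D$, the same distributional identity persists after conditioning merely on the coarser event $\{\mathcal{T}=D\}$, by averaging over the conditional law of $(\mathcal{I}_T,\overline{\mathcal{I}}_T)$ given $\mathcal{T}=D$. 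The only point requiring care is the measurability bookkeeping in the first paragraph---specifically, verifying that the entries of $\mathcal{M}$ and $\mathcal{M}'$ in columns $[S_{T-1}+1,S_T]$ have not been consulted in constructing $(\mathcal{I}_T,\overline{\mathcal{I}}_T)$, which follows directly from the block-sequential structure of $\mathcal{A}_{\rm KR}$ described in Section~\ref{sec:kim-roche1}. I do not expect a substantive obstacle here; the calculation is essentially a direct application of Lemma~\ref{lemma:gaussian-orthant} together with the column-independence of the Gaussian disorder.
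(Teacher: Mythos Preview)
Your proposal is correct and follows essentially the same route as the paper's proof: condition on $(\mathcal{I}_T,\overline{\mathcal{I}}_T)$ via the first $S_{T-1}$ columns, identify the column sums as a bivariate Gaussian with correlation $D\cos(\tau)/k_T$, apply the orthant formula from Lemma~\ref{lemma:gaussian-orthant}, and sum the resulting i.i.d.\ Bernoullis. Your covariance computation is slightly more direct than the paper's decomposition into $A,\overline{A},B,\overline{B}$, and your explicit remark that the conditional law depends on $(\mathcal{I}_T,\overline{\mathcal{I}}_T)$ only through $D$ is a clean way to pass to the coarser conditioning on $\{\mathcal{T}=D\}$.
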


\begin{proof}[Proof of Lemma~\ref{lemma:kr-n0-p-1-to-n0-p-n1}]
For convenience, drop $\tau$ appearing in  $\overline{\mathcal{M}}(\tau)$.
Observe  that the randomness in $\mathcal{I}_T$ and $\overline{\mathcal{I}}_T$ are due to $\mathcal{M}_{[k]:[s_{T-1}]}$ and $\overline{\mathcal{M}}_{[k]:[s_{T-1}]}$, respectively (that is, due to first $n_0+n_1+\cdots+n_{T-1}$ columns of corresponding matrices). Having fixed $\mathcal{I}_T$ and $\overline{\mathcal{I}}_T$, note that the next $n_T$ entries of $\sigma$ and $\overline{\sigma}$ are obtained by running the majority algorithm on the submatrices 
\[
\mathcal{M}\Bigl(\mathcal{I}_T:[S_{T-1}+1,S_T]\Bigr)\qquad\text{and}\qquad \overline{\mathcal{M}}\Bigl(\overline{\mathcal{I}}_T:[S_{T-1}+1,S_T]\Bigr),
\]
respectively. Now, condition on $|\mathcal{I}\cap \overline{\mathcal{I}}|=D$. Define variables $A,B,\overline{A}$, and $\overline{B}$ as follows:
\[
A \triangleq \sum_{1\le i\le D}X_i, \qquad \overline{A}\triangleq \sum_{1\le i\le D}\bigl(\cos(\tau)X_i+\sin(\tau)Y_i\bigr);
\]
and
\[
B \triangleq \sum_{D+1\le i\le k_T}X_i, \qquad \overline{B}\triangleq \sum_{D+1\le i\le k_T}X_i',
\]
where $X_i,X_i',Y_i$ are i.i.d. $\mathcal{N}(0,1)$. It is then clear that
\[d_H\Bigl(\sigma\bigl(S_{T-1}+1:S_T\bigr),\overline{\sigma}\bigl(S_{T-1}+1:S_T\bigr)\Bigr)\distr{\rm Bin}(n_T,p),
\]
where
\[
p\triangleq \mathbb{P}\Bigl({\rm sgn}(A+B)\ne {\rm sgn}(\overline{A}+\overline{B})\Bigr)=2\mathbb{P}\Bigl(A+B\ge 0, \overline{A}+\overline{B}\le 0\Bigr)
\]
by symmetry. Observe that
\[
\mathbb{E}[A+B]=\mathbb{E}[\overline{A}+\overline{B}] = 0 \qquad\text{and}\qquad \mathbb{E}[(A+B)(\overline{A}+\overline{B})] = \mathbb{E}[A\overline{A}] = D\cos(\tau). 
\]
From here, applying Lemma~\ref{lemma:gaussian-orthant} to bivariate standard normal variables $k_T^{-\frac12}(A+B)$ and $-k_T^{-\frac12}(\overline{A}+\overline{B})$, we conclude
\begin{align*}
    p &= 2\left(\frac14 + \frac{1}{2\pi}\sin^{-1}(-D\cos(\tau)/k_T)\right)= \frac12 - \frac1\pi \sin^{-1}(D\cos(\tau)/k_T),
\end{align*}
where we used the fact $\sin^{-1}(\cdot)$ is an odd function, completing the proof of Lemma~\ref{lemma:kr-n0-p-1-to-n0-p-n1}.
\end{proof}
Equipped with all necessary auxiliary  tools, we now complete the proof of Proposition~\ref{prop:kr-stable}.
\begin{proof}[Proof of Proposition~\ref{prop:kr-stable}]
Let $T\le c\log_{10}\log_{10} n$ for some $c>0$ small  enough, recall $\alpha_\ell$ from~\eqref{eq:alpha-ell} and $J_\ell$ from~\eqref{eq:J-ell}. Note that applying Lemma~\ref{lemma:gauss-cont-maj} we immediately obtain
\[
|J_0|=d_H\Bigl(\sigma(1:n_0),\overline{\sigma}(1:n_0)\Bigr)\distr {\rm Bin}\left(n_0,\frac{\tau}{\pi}\right).
\]
In particular, applying a Chernoff bound, and recalling $n_0=\Theta(n)$,
\[
|J_0| = O(n\tau) = O\left(n^{1-2\alpha_0}\right)<n^{1-\alpha_0}
\]
with probability at least  $1-\exp\bigl(-\Omega\left(n^{1-2\alpha_0}\right)\bigr)$.

We now proceed by inducting on $\ell$, where the base case, $\ell=0$, has been verified above. Assume now that 
\begin{equation}\label{eq:induction-step}
\mathbb{P}\Bigl(|J_{\ell-1}|\le n^{1-\alpha_{\ell-1}}\Bigr)\ge 1-p_{\ell-1}.
\end{equation}
Using Lemma~\ref{lemma:almost-full-intesect}, we obtain
\begin{equation}\label{eq:almost-fully-intersect}
\mathbb{P}\Bigl(\bigl|\mathcal{I}_\ell\cap\overline{\mathcal{I}}_\ell\bigr|\ge k_\ell -O\left(n^{1-\alpha_{\ell-1}/4}\right)\Bigr) \ge 1-O\left(n^{-1/40+\epsilon}\right).
\end{equation}
Now, conditional on  $|\mathcal{I}_\ell\cap\overline{\mathcal{I}}_\ell|=D$, Lemma~\ref{lemma:kr-n0-p-1-to-n0-p-n1}  implies that
\begin{align}
|J_\ell|-|J_{\ell-1}|&=d_H\Bigl(\sigma(S_{\ell-1}+1:S_\ell),\overline{\sigma}(S_{\ell-1}+1:S_\ell)\Bigr)\nonumber\\
&\distr {\rm Bin}\Bigl(n_\ell,\frac12-\frac1\pi \sin^{-1}\Bigl(D\cos(\tau)/k_\ell\Bigr)\Bigr)\label{eq:this-is-binomial},
\end{align}
where $S_T\triangleq\sum_{0\le j\le T}n_j$ for any $T\in\mathbb{N}$. 

Now, recall that $k_\ell \ge  nf_\ell^3-1$ for $f_\ell=10^{-2^\ell}$ (see~\eqref{eq:f-j} and~\eqref{eq:k-j}). Hence,
\[
\frac{k_\ell-O\left(n^{1-\alpha_{\ell-1}/4}\right)}{k_\ell} = 1-O\left(\frac{n^{1-\alpha_{\ell-1}/4}}{k_\ell}\right)\ge 1-O\left(\frac{n^{-\alpha_{\ell-1}/4}}{10^{-3\cdot 2^\ell}}\right).
\]
We now claim 
\[
1-O\left(\frac{n^{-\alpha_{\ell-1}/4}}{10^{-3\cdot 2^\ell}}\right)\ge 1-O\left(n^{-\alpha_{\ell-1}/4.5}\right)
\]
for all large enough $n$, provided $c>0$ is small enough. Ignoring the absolute constants, it suffices to verify 
\[
n^{-\frac{\alpha_{\ell-1}}{4.5}}\ge 10^{3\cdot 2^\ell} \cdot n^{-\frac{\alpha_{\ell-1}}{4}} \iff n^{\frac{\alpha_{\ell-1}}{36}}\ge 10^{3\cdot 2^\ell}\iff \frac{1}{36}\log_{10}n\cdot \alpha_{\ell-1}\ge 3\cdot 2^\ell.
\]
Recall that $\alpha_{\ell-1}=0.4\cdot 10^{-\ell}$. Thus, it suffices to verify 
\[
\frac{1}{270}\log_{10}n\ge 20^\ell.
\]Recalling $\ell\le c\log_{10}\log_{10}n$ for $c>0$ small enough, we have
\[
20^{\ell}\le 20^{c\log_{10}\log_{10}n} = (\log_{10}n)^{c''},\qquad\text{where}\qquad c'' = c\log_{10}20.
\]
Finally, provided $c''<1$, we indeed have 
\[
\frac{1}{270}\log_{10}n\ge \log_{10}^{c''}n,
\]
thus the claim. 

We next employ this claim and the inequality, $\cos(\tau)\ge 1-\tau^2/2$, which is valid for all $\tau$. Using~\eqref{eq:almost-fully-intersect} it follows that there is an event of probability at least $1-O\left(n^{-1/40+\epsilon}\right)$ such that on this event, $|J_\ell|-|J_{\ell-1}|$ is stochastically dominated by the binomial random variable
\begin{equation}\label{eq:binom-auxil1}
{\rm Bin}\left(n_\ell,\frac12-\frac1\pi \sin^{-1}\Bigl(\bigl(1-\Theta\bigl(n^{-\alpha_{\ell-1}/4.5}\bigr)\bigr)\bigl(1-\tau^2/2\bigr)\Bigr)\right).
\end{equation}
Using Taylor series $\sin^{-1}(1-x) = \frac{\pi}{2} -\sqrt{2x}+o(\sqrt{x})$, we obtain that  the  binomial variable appearing in~\eqref{eq:binom-auxil1} is stochastically dominated further by a binomial random variable ${\rm Bin}(n_\ell,q_\ell)$ where $q_\ell=\Theta(n^{-\alpha_{\ell-1}/9})$. Since $n_\ell \le n$ per~\eqref{eq:n-j}, we also have by Chernoff bound
\begin{equation}\label{eq:binom-tail-auxi}
\mathbb{P}\Bigl({\rm Bin}\bigl(n_\ell,q_\ell\bigr)=O\left(n^{1-\alpha_{\ell-1}/9}\right)\Bigr)\ge 1-\exp\Bigl(-\Omega\Bigl(n^{1-\alpha_{\ell-1}/9}\Bigr)\Bigr).
\end{equation}
Combining~\eqref{eq:induction-step},~\eqref{eq:this-is-binomial} and~\eqref{eq:binom-tail-auxi} via a union bound, we conclude that
\begin{equation}\label{eq:inductive-step2}
    \mathbb{P}\Bigl(|J_\ell|\le n^{1-\alpha_{\ell-1}}+O\left(n^{1-\alpha_{\ell-1}/9}\right)\Bigr)\ge 1-p_{\ell-1}-O\left(n^{-1/40+\epsilon}\right)-\exp\Bigl(-\Omega\Bigl(n^{1-\alpha_{\ell-1}/9}\Bigr)\Bigr).
\end{equation}
We set
\begin{equation}\label{eq:p-ell}
p_\ell \triangleq p_{\ell-1}+O\left(n^{-1/12+\epsilon}\right)+\exp\Bigl(-\Omega\Bigl(n^{1-\alpha_{\ell-1}/9}\Bigr)\Bigr) =p_{\ell-1}+O\left(n^{-1/40+\epsilon}\right).
\end{equation}
We now ensure $|J_\ell|\le n^{1-\alpha_\ell}$ w.h.p., where $\alpha_\ell = \alpha_{\ell-1}/10$. To that end, we first claim
\begin{equation}\label{eq:almost-there}
n^{1-\alpha_{\ell-1}}+O\left(n^{1-\alpha_{\ell-1}/9}\right)\le n^{1-\alpha_{\ell-1}}+n^{1-\alpha_{\ell-1}/9.5}.
\end{equation}
To prove this, it suffices to verify $n^{\frac{\alpha_{\ell-1}}{9}-\frac{\alpha_{\ell-1}}{9.5}} = n^{\frac{\alpha_{\ell-1}}{171}}=\omega(1)$. Using the fact $\alpha_{\ell-1} =0.1\cdot 10^{-\ell}$ per~\eqref{eq:alpha-ell} and $\ell \le L\le c\log_{10}\log_{10}n$ in the setting of Proposition~\ref{prop:kr-stable}, we have
\begin{align*}
    n^{\frac{\alpha_{\ell-1}}{171}}&=\exp_{10}\left(\frac{\alpha_{\ell-1}}{171}\log_{10}n\right) \ge \exp_{10}\left(\frac{10^{-c\log_{10}\log_{10}n}}{1710}\log_{10}n\right) = \exp_{10}\left(\frac{1}{1710}\bigl(\log_{10}n\bigr)^{1-c}\right),
\end{align*}
which is indeed $\omega(1)$ if $c<1$. This yields~\eqref{eq:almost-there}. Hence, it suffices to show $x+x^{1/9.5}\le x^{1/10}$ for $x=n^{-\alpha_{\ell-1}}$. Now, assume $\ell \le L\le c\log_{10}\log_{10} n$ (where $L$ is the number of steps analyzed) for $c>0$ small enough. Then
\begin{align*}
x&=\exp_{10}\Bigl(-\alpha_{\ell-1}\log_{10} n\Bigr)\\
&=\exp_{10}\Bigl(-0.04\cdot 10^{-(\ell-1)}\cdot \log_{10} n\Bigr)\\
&\le \exp_{10}\Bigl(-0.4\cdot 10^{-c\log_{10}\log_{10}n}\cdot \log_{10}n\Bigr) \\ 
&=\exp_{10}\Bigl(-0.4\cdot(\log_{10} n)^{1-c}\Bigr).
\end{align*}
We have $t+t^{1/9.5}<t^{1/10}$ for $t$ sufficiently small (e.g. $0\le t<0.01$ suffices).  Hence, provided $c<1$, it is the case that for $n$ sufficiently large, $x+x^{1/9.5}\le x^{1/10}$ for $x=n^{-\alpha_{\ell-1}}$ and for any $\ell \le c\log_{10}\log_{10} n$. Thus,
\begin{equation}\label{eq:induction-complete}
    \mathbb{P}\Bigl(|J_\ell|\le n^{1-\alpha_\ell}\Bigr)\ge 1-p_\ell \qquad\text{where}\qquad p_\ell = p_{\ell-1} +O\left(n^{-1/40+\epsilon}\right).
\end{equation}
Since the inductive step from  $\ell-1\to \ell$ (more concretely from~\eqref{eq:induction-step} to~\eqref{eq:induction-complete}), increases the probability by $O\left(n^{-1/40+\epsilon}\right)$ and the whole process runs $\log_{10}\log_{10} n$ rounds, we complete the proof of Proposition~\ref{prop:kr-stable}. \qedhere
\end{proof}
\subsubsection{Proof of Theorem~\ref{thm:kr-stable}}\label{sec:sub-sub-kr-stable}
Having established Proposition~\ref{prop:kr-stable}, we now finish the proof of Theorem~\ref{thm:kr-stable}. For simplicity, we omit floor/ceiling operators whenever convenient.
\begin{proof}[Proof of  Theorem~\ref{thm:kr-stable}]
Let $L=c\log_{10}\log_{10} n$ be as in Proposition~\ref{prop:kr-stable} for $c>0$ small enough. We now show that it suffices  to analyze $L$ rounds as opposed to the full implementation of $ C\log_{10} \log_{10} n$ rounds, where $C>c>0$. In particular, we claim
\begin{equation}\label{eq:remaining-coord-o(n)}
\sum_{ c\log_{10} \log_{10} n \le j\le  C\log_{10}\log_{10} n} n_j = o(n), 
\end{equation}
for any constant $c>0$. For $N\triangleq C\log_{10}\log_{10} n$, the number of rounds, and $0\le j\le N$; recall $f_j$ from~\eqref{eq:f-j}, $n_j$ from~\eqref{eq:n-j}, and $k_j$ from~\eqref{eq:k-j}. 
Applying now a telescoping  argument,
\begin{align*}
    \sum_{c\log_{10}\log_{10} n\le j\le C\log_{10}\log_{10} n}n_j & = \sum_{c\log_{10}\log_{10} n\le j\le C\log_{10}\log_{10} n}\left( \left \lfloor \frac{n}{A} \sum_{0\le i\le j}f_i\right\rfloor  - \left \lfloor\frac{n}{A}\sum_{0\le i\le j-1} f_i \right\rfloor\right)\\
    &=\left\lfloor \frac{n}{A}\underbrace{\sum_{0\le i\le C\log_{10}\log_{10} n}f_i}_{=A}\right\rfloor  -\left\lfloor \frac{n}{A}\sum_{0\le i\le c\log_{10}\log_{10} n-1}f_i\right\rfloor \\
    &\le n\left(1-\frac1A\sum_{0\le i\le c\log_{10}\log_{10} n}f_i\right)+1 \\
    &\le \frac{n}{A}\sum_{c\log_{10}\log_{10} n +1\le i\le C\log_{10}\log_{10} n}f_i+1 \\
    &= O\left(n\log_{10}\log_{10} n\cdot 10^{-2^{c\log_{10} \log_{10} n}}\right) \\
    & = O\left(n\cdot \log_{10}\log_{10} n\cdot 10^{-\left(\log_{10} n\right)^{c'}}\right)
\end{align*}
for $c'=c\log_{10} 2<1$. We now verify
\[
n\cdot \log_{10}\log_{10}n\cdot 10^{-(\log_{10}n)^{c'}} = o(n)\iff \log_{10}\log_{10}n\cdot 10^{-(\log_{10}n)^{c'}}=o(1).
\]
Indeed,
\begin{align*}
    \log_{10}\log_{10}n\cdot 10^{-(\log_{10}n)^{c'}} & = \exp_{10}\left(\log_{10}\log_{10}\log_{10}n - \bigl(\log_{10} n\bigr)^{c'}\right) = o(1)
\end{align*}
for any $c>0$. This yields~\eqref{eq:remaining-coord-o(n)}.

Finally, combining~\eqref{eq:remaining-coord-o(n)} with Proposition~\ref{prop:kr-stable}, we  complete the proof of  Theorem~\ref{thm:kr-stable}.
\end{proof}

\subsection{Proof of Theorem~\ref{thm:ogp-universality}}\label{pf:ogp-universality}
\begin{proof}[Proof of Theorem~\ref{thm:ogp-universality}]
The proof is quite similar to that of Theorem~\ref{thm:m-OGP-small-kappa}, hence we only point out the necessary modification. 
Note that the probability term that one considers (cf. Lemma~\ref{lemma:m-ogp-prob}) is
\[
\mathbb{P}\Bigl[\bigl|\ip{\sigma^{(i)}}{X}\bigr|\le \kappa\sqrt{n},1\le i\le m\Bigr]^{\alpha n},
\]
where $X=(X_1,\dots,X_n)$ has i.i.d.\,entries with $X_i\sim \mathcal{D}$. To apply Theorem~\ref{thm:berry-esseen}, set
\begin{equation}\label{eq:y-i-berry}
    Y_j\triangleq \begin{pmatrix}\frac{1}{\sqrt{n}}\sigma_1(j)X_j\\ \frac{1}{\sqrt{n}}\sigma_2(j)X_j\\\vdots \\ \frac{1}{\sqrt{n}}\sigma_m(j)X_j\end{pmatrix}\in\R^m.
\end{equation}
Indeed $Y_j\in\R^m$, $1\le j\le n$, is a collection of independent centered random vectors, and 
\[
\Bigl\{\bigl|\ip{\sigma^{(i)}}{X}\bigr|\le \kappa \sqrt{n},1\le i\le m\Bigr\}= \bigl\{S\in U\bigr\},
\]
for $S=\sum_{j\le n}Y_j$ and $U=[-\kappa,\kappa]^n$. Furthermore, $\Sigma \triangleq {\rm Cov}(S)\in\R^{m\times m}$ is such that (a) $\Sigma_{ii}=1$ for $1\le i\le m$; and (b) $\Sigma_{ij}=\Sigma_{ji}=n^{-1}\ip{\sigma^{(i)}}{\sigma^{(j)}}$ for $1\le i<j\le m$. Next, observe that since $\Sigma\in\R^{m\times m}$ with $m=O_n(1)$, it follows that
\begin{align*}
\mathbb{E}\Bigl[\bigl\|\Sigma^{-\frac12}Y_j\bigr\|_2^3\Bigr]&\le\bigl\|\Sigma^{-\frac12}\bigr\|^{3} \mathbb{E}\Bigl[\bigl\|Y_j\bigr\|_2^3\Bigr]\\
&=\bigl\|\Sigma^{-\frac12}\bigr\|^{3}\mathbb{E}\left[\left(\frac{m}{n}X_j^2\right)^{3/2}\right]\\
&=O\bigl(n^{-\frac32}\bigr).
\end{align*}
Applying now Theorem~\ref{thm:berry-esseen}, we obtain
\[
\mathbb{P}\bigl[S\in U\bigr]\le \mathbb{P}\bigl[Z\in U\bigr]+O\bigl(n^{-\frac12}\bigr).
\]
Consequently, for $Z\sim \mathcal{N}(0,\Sigma)$,
\begin{align*}
\mathbb{P}\bigl[S\in U\bigr]^{\alpha n}&\le \mathbb{P}\bigl[Z\in U\bigr]^{\alpha n}\Bigl(1+O(n^{-\frac12})\Bigr)^{\alpha n}\\
&=\mathbb{P}\bigl[Z\in U\bigr]^{\alpha n}\exp\Bigl(\alpha n\ln\Bigl(1+O\bigl(n^{-1/2}\bigr)\Bigr)\Bigr) \\
&=\mathbb{P}\bigl[Z\in U\bigr]^{\alpha n}\exp\Bigl(\Theta\bigl(\sqrt{n}\bigr)\Bigr),
\end{align*}
where in the last step we used the Taylor expansion, $\ln(1-x)=-x+o(x)$ as $x\to 0$. Modifying Lemma~\ref{lemma:m-ogp-prob} by taking the extra $e^{\Theta(\sqrt{n})}$ factor into account and then applying the \emph{first moment method}, we establish Theorem~\ref{thm:ogp-universality}.
\end{proof}
\subsubsection*{Acknowledgments}
Part of this work was done while the first two authors were visiting the Simons Institute for the Theory of Computing at University of California, Berkeley in Fall 2021 as part of the semester program \emph{Computational Complexity of Statistical Inference}. The first author is supported in part by NSF grant DMS-2015517. The second author would like to thank Paxton Turner for illuminating discussions on discrepancy minimization, and bringing the references in Section~\ref{sec:alg-conj} into his attention. The third author is supported in part by NSF grant DMS-1847451.
\bibliographystyle{amsalpha}
\bibliography{bibliography}
\begin{appendices}
\section{MATLAB Code for Verifying Lemma~\ref{asm:negativity}}\label{appendix:matlab}

We verify Lemma~\ref{asm:negativity} numerically using the following MATLAB code. 

Our experiments demonstrate that the functions $f_2(\beta,\alpha),f_3(\beta,\alpha)$ appear to be minimized when $\beta$ is close to one. For this reason, we restrict  our attention to $\beta\in[0.9,0.999]$ and generate  $\beta=0.9:{\rm sp}:0.999$ with ${\rm sp}=10^{-3}$. We take $K=1$ as in the rest of the paper, and set $\alpha=1.667$. In order the compute the probability term, we do not resort to any Monte Carlo simulations. Instead, we employ MATLAB's built-in \texttt{mvncdf} function to compute the associated ``box" probability (for dimensions 2 and 3). (The function, \texttt{mvncdf}, computes rectangular probabilities for multivariate Gaussian distribution using numerical integration, see~\cite{Multi} for a more elaborate description.) In particular, the only potential source of error is the error encountered at the numerical integration step. A feature of the \texttt{mvncdf} function is that the error guarantee in probability calculation is available. In particular, an inspection of our plots reveals that $f_3(\beta,1.677)$ is minimized for $\beta\approx 0.978$; and for this choice of  $\beta$, the  probability term is approximately $0.6205$ whereas the error estimate is of order $10^{-8}$.
\lstinputlisting{OGPCode.m}
\end{appendices}
\end{document}